\newtheorem  {theorem} {Theorem} [section]
\newtheorem  {lemma}       [theorem] {Lemma}
\newtheoremstyle{myremark}% <name>
{3pt}% <Space above>
{3pt}% <Space below>
{}% <Body font>
{}% <Indent amount>
{\bf}% <Theorem head font>
{.}% <Punctuation after theorem head>
{.5em}% <Space after theorem headi>
{}% <Theorem head spec (can be left empty, meaning `normal')>
\theoremstyle{myremark}
\newtheorem{remark}{Remark}[section]
\definecolor{darkblue}{rgb}{0.1,0.1,.7}
\renewcommand*{\backref}[1]{}
\renewcommand*{\backrefalt}[4]{%
	\ifcase #1 (Not cited.)%
	\or        (Cited on p.~#2.)%
	\else      (Cited on pp.~#2.)%
	\fi}
\newcommand{\assign}{:=}
\newcommand{\asterisk}{\mathord{*}}
\newcommand{\cdummy}{\cdot}
\newcommand{\comma}{{,}}
\newcommand{\nin}{\not\in}
\newcommand{\nobracket}{}
\newcommand{\tmem}[1]{{\em #1\/}}
\newcommand{\tmmathbf}[1]{\ensuremath{\boldsymbol{#1}}}
\newcommand{\tmop}[1]{\ensuremath{\operatorname{#1}}}
\newcommand{\tmtextbf}[1]{{\bfseries{#1}}}
\newcommand{\tmtextit}[1]{{\itshape{#1}}}
\newcommand{\infixor}{\mathrm{or}}
\newcommand{\infixand}{\mathrm{and}}
\numberwithin{equation}{section}
\apptocmd{\thebibliography}{\setlength{\itemsep}{0em}}{}{}
\begin{document}
\vspace*{-.6in} \thispagestyle{empty}
\begin{flushright}
\end{flushright}
\vspace{1cm} 
{\Large
	\begin{center}
		{\bf Distributions in CFT II. Minkowski Space}
	\end{center}
}
\vspace{1cm}
\begin{center}
	{\bf Petr Kravchuk$^a$, Jiaxin Qiao$^{b,c}$,  Slava Rychkov$^{c,b}$}\\[2cm] 
	{
		$^{a}$ Institute for Advanced Study, Princeton, New Jersey 08540, U.S.A. \\
		$^b$  
		Laboratoire de Physique de l'Ecole normale sup\'erieure, ENS,\\ 
		{\small Universit\'e PSL, CNRS, Sorbonne Universit\'e, Universit\'e de Paris,} F-75005 Paris, France
		% Ecole Normale Superieure [Future INSPIRE ID: ENS, Paris, LPT] 
		\\
		$^c$  Institut des Hautes \'Etudes Scientifiques, 91440 Bures-sur-Yvette, France\\
	}
	\vspace{1cm}%\today
\end{center}

\setlength\epigraphrule{0pt}
\setlength\epigraphwidth{.4\textwidth}
\epigraph{\emph{I wanted to learn about elementary particles\\ by studying boiling water.}}{Alexander Polyakov \cite{Polyakov}}

\vspace{4mm}
\begin{abstract}
  CFTs in Euclidean signature satisfy well-accepted rules, such as the
  convergent Euclidean OPE. It is nowadays common to assume that CFT
  correlators exist and have various properties also in Lorentzian signature.
  Some of these properties may represent extra assumptions, and it is an open
  question if they hold for familiar statistical-physics CFTs such as the
  critical 3d Ising model. Here we consider Wightman 4-point functions of
  scalar primaries in Lorentzian signature. We derive a minimal set of their
  properties solely from the Euclidean unitary CFT axioms, without using extra
  assumptions. We establish all Wightman axioms (temperedness, spectral
  property, local commutativity, clustering), Lorentzian conformal invariance,
  and distributional convergence of the s-channel Lorentzian OPE. This is done
  constructively, by analytically continuing the 4-point functions using
  the s-channel OPE expansion in the radial cross-ratios $\rho, \bar{\rho}$.
  We prove a key fact that $| \rho |, | \bar{\rho} | < 1$ inside the forward
  tube, and set bounds on how fast $| \rho |, | \bar{\rho} |$ may tend to 1
  when approaching the Minkowski space.
  
  We also provide a guide to the axiomatic QFT literature for the modern CFT
  audience. We review the Wightman and Osterwalder-Schrader (OS) axioms for
  Lorentzian and Euclidean QFTs, and the celebrated OS theorem connecting
  them. We also review a classic result of Mack about the distributional OPE
  convergence. Some of the classic arguments turn out useful in our setup.
  Others fall short of our needs due to Lorentzian assumptions (Mack) or
  unverifiable Euclidean assumptions (OS theorem). 
\end{abstract}

\vspace{.2in}
\vspace{.3in}
\hspace{0.2cm} April 2021

\newpage

{
	\tableofcontents
}

\section{Introduction}

Quantum Field Theory (QFT) can be studied via constructive or axiomatic
approaches. The former use microscopic formulations, while the latter rely on
axioms. There are many constructive approaches, e.g.\ using Hamiltonian, path
integral, lattice, etc. There are also many axiomatic approaches, corresponding
to various sets of axioms (Wightman {\cite{Streater:1989vi}},
Osterwalder-Schrader {\cite{osterwalder1973,osterwalder1975}}, Haag-Kastler
{\cite{Haag:1963dh,Haag:1992hx}}, etc.). Historically, axiomatic approaches
played an important role in clarifying general QFT properties, but they did
not have a tremendous success in making predictions about concrete theories in $d>2$ dimensions.\footnote{In $d=2$ significant progress has been achieved axiomatically for massive integrable models using the S-matrix bootstrap \cite{Zamolodchikov:1978xm} as well as for rational CFTs \cite{Belavin:1984vu}.}
This started to change recently, with the revival of the bootstrap philosophy
{\cite{Rattazzi:2008pe}}. Our focus here will be on conformal field theories
(CFTs) in dimension $d \geqslant 2$, i.e.\ QFTs invariant under the action of
conformal group, which are nowadays studied via the \tmtextit{conformal
bootstrap}. This axiomatic approach led to precise determinations of many
experimentally measurable quantities, such as the critical exponents of the 3d
Ising
{\cite{ElShowk:2012ht,El-Showk:2014dwa,Kos:2014bka,Simmons-Duffin:2015qma,Kos:2016ysd}},
$O (N)$ {\cite{Kos:2013tga,Kos:2015mba,Kos:2016ysd,Chester:2019ifh,Chester:2020iyt}} and other
critical points (see review~{\cite{RMP}}).\footnote{There is also an ongoing
revival of the S-matrix bootstrap applicable to nonintegrable massive QFTs in $d\ge 2$
{\cite{Paulos:2016fap,Paulos:2016but,Paulos:2017fhb,Cordova:2018uop,Guerrieri:2019rwp,EliasMiro:2019kyf,Cordova:2019lot,Karateev:2019ymz,Correia:2020xtr,Guerrieri:2020bto,Hebbar:2020ukp,Tourkine:2021fqh,Sinha:2020win,Haldar:2021rri,He:2021eqn}}.}

The numerical conformal bootstrap relies on the ``Euclidean CFT
axioms'',\footnote{The term ``Euclidean bootstrap axioms'' is also sometimes
used.} which specify properties of correlation functions in any unitary CFT in
$\mathbb{R}^d$ via a set of simple and commonly accepted rules, such as the
unitarity bounds on primary operator dimensions, conformally invariant and
convergent Operator Product Expansion (OPE), and reality constraints on OPE
coefficients.

On the other hand, correlation functions in a general unitary QFT (and in
particular in a CFT) should satisfy Osterwalder-Schrader (OS) and Wightman
axioms. It is then interesting and important to inquire what is the relation
of Euclidean CFT axioms to these other sets of axioms.\footnote{Clarifying the
relation to the Haag-Kastler axioms appears more challenging as those axioms
do not deal with correlation functions but with operator algebras.} To carry
out this analysis will be the main goal of our paper. Our main conclusion will
be that the Euclidean CFT axioms imply OS axioms and Wightman axioms for $2,
3$ and $4$-point functions. In this paper we only consider bosonic operators.

The relation of Euclidean CFT and OS axioms is perhaps not so surprising since
they both deal with the Euclidean correlation functions. It is more
interesting that we are able to construct Minkowski $n$-point functions (for
$n = 2, 3, 4$), and show that they satisfy Wightman axioms, such as
temperedness, spectral condition, and unitarity. Temperedness (being a
tempered distribution) is a crucial property of Minkowski correlation
functions: it shows that in a certain averaged sense they are meaningful
everywhere including the light-cone and double light-cone singularities. One
might be tempted to think that in CFT this question is relatively trivial due
to OPE. However, as discussed in {\cite{paper1,Qiao:2020bcs}}, already for
4-point functions there exist causal configurations of points in Minkowski
space, away from the null cones, for which no OPE channel is convergent in the
conventional sense. We briefly discuss one such example in the conclusions
(Sec.\ \ref{conclusions}).

A theorem of Osterwalder and Schrader
{\cite{osterwalder1973,osterwalder1975}} says that, under some extra
assumption, OS axioms imply Wightman axioms. Unfortunately this extra
assumption, the so called ``linear growth condition'', which involves the
Euclidean $n$-point functions with arbitrarily high $n$, appears impossible to
verify from the Euclidean CFT axioms (see Sec.\ \ref{OS}). For this reason
we cannot appeal to the OS theorem, and we will give an independent derivation
of Wightman axioms for CFT correlators.

The study of distributional properties of CFT correlators started in our
recent work {\cite{paper1}}. There, we considered expansions of the CFT 4-point
function $g (\rho, \bar{\rho})$ in terms of conformally invariant cross-ratios
$\rho$, $\bar{\rho}$. While such expansions converge in the usual sense for $|
\rho |, | \bar{\rho} | < 1$, in {\cite{paper1}} we showed that they also
converge for $| \rho |, | \bar{\rho} | = 1$ in the sense of distributions. As
explained in {\cite{paper1}}, results of this sort follow naturally from the
theory of functions of several complex variables (namely Vladimirov's
theorem), given some apriori information about the growth of the analytically
continued correlator. That key insight of {\cite{paper1}}, \emph{``Look for a
powerlaw bound!''}, will be transported here from the cross-ratio to the
position space.

The readers interested in our main technical result---analytic continuation of
a scalar Euclidean CFT 4-point function to the forward tube and showing that the
Minkowski 4-point function is a tempered distribution---may follow the
\tmtextbf{fast track:} start with the executive summary in Sec.\ \ref{executive}, proceed to Secs.\ \ref{strategy} and \ref{23-point} (skipping
\ref{MinkFromEucl} and \ref{comparison}), then continue with Secs.\ \ref{sec:informal}-\ref{power4-point} (optionally including Sec.\ \ref{secondpass}) and finish with the discussion in Sec.\ \ref{conclusions}.
This is only about 20-25 pages.

On the other hand, we made an effort to make the exposition self-contained
and to review main ideas and results of the axiomatic quantum field
literature, directly or tangentially related to our discussion. This explains
the great total length of our work. The reader will find here:
\begin{itemize}
  \item A review of classic QFT axioms: Wightman (Sec.\ \ref{Waxioms}), OS
  (Sec.\ \ref{OSaxioms}). A review of main implications among these axioms:
  how OS reflection positivity robustly implies Wightman positivity (Sec.\ \ref{sec:Wpos}). A review of the Osterwalder-Schrader theorem about how OS
  axioms imply Wightman axioms under the additional assumptions of the linear
  growth condition (Sec.\ \ref{OS}).
  
  \item A \ formulation of `Euclidean CFT axioms' for unitary CFT in
  Euclidean space $\mathbb{R}^d$ (Sec.\ \ref{ECFTax}). We consider bosonic
  fields in arbitrary tensor representations. Our axioms encode in a
  consistent and non-redundant manner the main properties used in the
  numerical conformal bootstrap calculations.\footnote{See also
  {\cite{Rychkov:2020rcd}} for a recent informal exposition of Euclidean CFT
  axioms (incomplete as it omits tensor fields) for mathematical physics
  audience. Ref.\ {\cite{Schwarz:2015fva}} attempted the axiomatization of
  Euclidean CFT in $d > 2$ dimensions similar to Segal's axioms in $d = 2$
  {\cite{tillmann2004}}. It is not immediately obvious if the axioms of Ref.\ {\cite{Schwarz:2015fva}} are equivalent to ours, or how to connect them to
  practical CFT calculations.} They are applicable to any globally conformally
  invariant theory in $d \geqslant 2$. We do not include the axioms involving
  the local stress tensor and the conserved currents. In particular our axioms
  would be too weak (but valid) when applied to local 2d CFTs, as they know
  nothing about the Virasoro algebra.\footnote{Recall that while in $d=2$
  assuming the existence of a local stress tensor immediately implies Virasoro
 symmetry, no such dramatic statements are currently known in $d>2$.} See Remark \ref{ECFTax-comp} for a
  comparison between our axioms and the CFT rules gathered in the conformal
  bootstrap review {\cite{RMP}}.
  
  \item A derivation of OS axioms from Euclidean CFT axioms for 4-point function
  (Sec.\ \ref{CFTtoOS}). A notable result is a rigorous proof that the state
  produced by two operators in lower Euclidean half-space belongs to the CFT
  Hilbert space generated by single-point operator insertions. The higher-point
  case is discussed in App.\ \ref{OShigher}, where we need a somewhat
  stronger form of the OPE axiom than in Sec.\ \ref{ECFTax}.
  
  \item A derivation of Wightman axioms from Euclidean CFT axioms for scalar
  4-point functions (Sec.\ \ref{sec:4-point}). As mentioned above, this is the main
  technical result of the paper. The key observation is that the analytic
  continuation from Euclidean to Minkowski can be done in a way which keeps
  the s-channel $\rho, \bar{\rho}$ less than 1 in absolute value along the
  continuation contour. When we take the Minkowski limit $| \rho |, |
  \bar{\rho} |$ stay less than 1 for some causal orderings and approaches 1
  for others (see {\cite{Qiao:2020bcs}} for a classification) but even if $|
  \rho |, | \bar{\rho} | \rightarrow 1$ they approach this limit sufficiently
  slowly (``powerlaw bound'') which guarantees that the Minkowski 4-point
  function is a tempered distribution everywhere. E.g.\ using this argument we
  can show for the first time that the CFT 4-point function is a tempered
  distribution on the double light-cone singularity.
  
  \item We include also a derivation of other expected properties of Minkowski
  4-point functions, such as conformal invariance, unitarity, clustering, and
  local commutativity (Secs.\ \ref{ConfMink}-\ref{local-comm}). The reader
  may find it curious how some of the steps do not use conformal invariance at
  all but follow simply from analyticity and/or OS positivity.
  
  \item Sec.\ \ref{secondpass} proves a curious geometric ``Cauchy-Schwarz''
  inequality for $\rho, \bar{\rho}$ variables which provides an alternative
  way of understanding why $| \rho |, | \bar{\rho} | < 1$ in the forward tube.
  It bounds $\rho, \bar{\rho}$ for a generic configuration by $\rho,
  \bar{\rho}$ for reflection-symmetric configurations. It would be interesting
  to find an elementary proof of this inequality (our proof uses some facts
  about conformal blocks).
  
  \item Sec.\ \ref{OPEconvMink} shows that the (s-channel) conformal block
  expansion of 4-point Wightman functions converges in the sense of
  distributions for all configurations of points in Minkowski space. It is
  also shown that the OPE for the state-valued distributions $| \mathcal{O}
  (x_1) \mathcal{O} (x_2) \rangle$ with $x_1, x_2 \in \mathbb{R}^{d - 1, 1}$
  converges in the sense of distributions. We discuss the relationship of
  these results to the classic work of Mack {\cite{Mack:1976pa}} and prove
  estimates for the convergence rates of these expansions.
  
  \item Sec.\ \ref{OS} contains a review of the papers
  {\cite{osterwalder1973,osterwalder1975}} by Osterwalder and Schrader. In
  particular, we discuss the gap in the arguments of {\cite{osterwalder1973}}
  which precludes the derivation of Wightman axioms from the OS axioms of
  {\cite{osterwalder1973}}, and explain in detail how this gap is filled in
  {\cite{osterwalder1975}} with the addition of new axioms. 
  
  \item {App.~\ref{literature} is a guide to the modern Lorentzian
  CFT literature: conformal collider bounds, light-cone bootstrap, causality constraints, the Lorentzian OPE inversion formula, light-ray operators, etc. Our results help put some of these considerations on a firmer footing. We indicate the most critical remaining steps, which still wait to be rigorously derived from the Euclidean CFT axioms.}
  
\end{itemize}
We conclude in Sec.~\ref{conclusions}. Some additional technical details are given in
Apps.~\ref{OShigher}-\ref{IntLem1}. 

\subsection{Executive summary of results for CFT experts}\label{executive}

This paper is rather lengthy as a result of our attempt to make it
self-contained. In this section we give a brief summary of the main technical
results, aimed at the more expert readers who may not wish to read the
expository parts of this paper. Note, however, that here we omit many
secondary results, some of which are mentioned above.

The basic question we address in this paper is the question of the
distributional properties of Wightman 4-point functions in CFTs. As is
well-known, Wightman $n$-point distributions are recovered from the boundary
values of functions holomorphic in the forward tube $\mathcal{T}_n$. {For an $n$-point
function 
\be
	\<0|\cO_1(x_1)\cdots \cO_n(x_n)|0\>
\ee
the forward tube is defined as the set of $x_i\in \C^{1,d-1}$ subject to 
\be
\mathrm{Im}\, x_1\prec \mathrm{Im}\, x_2\prec\cdots \prec \mathrm{Im}\, x_n,
\label{forward-tube-Mink}
\ee
where $\prec$ denotes the causal ordering in $\R^{1,d-1}$. Analyticity in $\cT_n$ and existence of
the boundary value as $\mathrm{Im}\, x_i\to 0$ is usually derived 
from Wightman or OS axioms (with extra assumptions in the latter case). 
In this paper we want to understand this question from the point of view
of CFT axioms.} 

With the
cases $n = 2, 3$ being relatively trivial in a CFT, our main observation is
that a particular OPE channel for 4-point functions converges everywhere in
the interior of $\mathcal{T}_4$. Specifically, we take the OPE $\mathcal{O}
(x_1) \times \mathcal{O} (x_2)$ in the Wightman function of identical scalar operators
\begin{equation}
  \langle 0 | \mathcal{O} (x_1) \mathcal{O} (x_2) \mathcal{O} (x_3)
  \mathcal{O} (x_4) | 0 \rangle . \label{introcrr}
\end{equation}
This OPE is expected to converge, at least distributionally for real $x_i$,
from the results of Mack {\cite{Mack:1976pa}}. However, his work assumes
Wightman axioms from the beginning, and our goal here is to clarify the
implications of Euclidean CFT axioms, which only assume convergence of the
Euclidean OPE.

To see that this OPE channel converges, we show in Lemma \ref{bound} that for
any configuration of $x_i$ in $\mathcal{T}_4$ the radial cross-ratios $\rho$
and $\bar{\rho}$  for this OPE belong to the open unit disk (for the definition of radial cross-ratios, see Sec.\ \ref{Eucl4-point}):
\begin{equation}
  | \rho |, | \bar{\rho} | < 1.
\end{equation}
This implies convergence of the conformal block expansion in $\cO (x_1)
\times \cO (x_2)$ channel in the interior of $\mathcal{T}_4$. A technical
way to see this is to note that the expansion in descendants
\begin{equation}\label{eq:introexpansion}
  g (\rho, \bar{\rho}) = \sum_{h, \bar{h} > 0} p_{h, \bar{h}} \rho^h
  \bar{\rho}^{\bar{h}},
\end{equation}
where $g$ is the conformally-invariant part of the 4-point function and
$p_{h, \bar{h}} > 0$, can be bounded term-by-term by
\begin{equation}
  | p_{h, \bar{h}} \rho^h \bar{\rho}^{\bar{h}} | \leqslant p_{h, \bar{h}} r^h
  r^{\bar{h}}, \label{introtermbyterm}
\end{equation}
where $r = \max (| \rho |, | \bar{\rho} |) < 1$. The right-hand side of this
inequality is a term in the expansion of $g (r, r)$, a Euclidean configuration
in which the OPE is known to converge, so~\eqref{eq:introexpansion} is dominated
by a convergent series. Therefore, \eqref{eq:introexpansion} is convergent for $r<1$, and moreover
uniformly so on compact subsets, since each term $p_{h, \bar{h}} r^h
r^{\bar{h}}$ is monotonic. We can then conclude that the sum $g(\rho,\bar \rho)$ is a holomorphic function.

This reasoning also gives us the inequality
\begin{equation}
  | g (\rho, \bar{\rho}) | \leqslant g (r, r) .
\end{equation}
So, we find that the correlator can be recovered inside of $\mathcal{T}_4$ from
the $\cO (x_1) \times \cO (x_2)$ OPE, is analytic there, and is bounded
by a Euclidean configuration.

In Sec.\ \ref{section:1-r} we establish a stronger form of Lemma
\ref{bound}, schematically,
\begin{equation}
  1 - r (c) \geqslant \tmop{dist} (c, \partial \mathcal{T}_4)^a \label{rbound}
\end{equation}
for some $a > 0$, where $c \in \mathcal{T}_4$ is a configuration of 4 points
in $\mathcal{T}_4$. (The more precise form also bounds $1 - r (c)$ as $c$ goes
to infinity.) This immediately implies a powerlaw bound on $g (r, r)$ near the
boundary of $\mathcal{T}_4$. Indeed, near $r \rightarrow 1$ the correlator is
dominated by the identity operator in the crossed channel, and so
\begin{equation}
  g (r (c), r (c)) \leqslant C (1 - r (c))^{- 4 \Delta_{\varphi}},
\end{equation}
and thus
\begin{equation}
  | g (c) | \leqslant C \tmop{dist} (c, \partial \mathcal{T}_4)^b
\end{equation}
for some real $b$. This allows us to use Vladimirov's Theorem \ref{ThVlad},
which implies that the boundary limit (as $x_i$ approach real Minkowski
values) of {\eqref{introcrr}} exists in the space of tempered distribution.
(We establish a more refined bound for $x_i \rightarrow \infty$ to claim
temperedness.)

The above bounds hold just as well for the truncated conformal block expansion
as for the full correlator. A variant of Theorem \ref{ThVlad} then allows us
to conclude that the conformal block expansion, while converging in the sense
of functions in the interior of $\mathcal{T}_4$, converges in the space of
tempered distributions on the Minkowski boundary.

We extend the above results to correlators of non-identical scalars by
replacing the term-by-term bound {\eqref{introtermbyterm}} with a standard
Cauchy-Schwarz argument, bounding the correlator in terms of a product of two
reflection-symmetric Euclidean correlators. While it is intuitively obvious
that similar arguments should also work for operators with spin, we found that
the extension to spinning operators, due to the complexity of tensor
structures, requires enough additional work to warrant a separate paper
{\cite{paper2a}}.

Finally, in Sec.\ \ref{secondpass} we prove Theorem \ref{boundThm}, which
gives an optimal bound of the form {\eqref{rbound}}. Specifically, it is
\begin{equation}
  r (c)^2 \leqslant r (c_{12}) r (c_{34}), \label{introoptimal}
\end{equation}
together with a bound for the right-hand side. Here, if $c = (x_1, x_2, x_3,
x_4)$ (where $x_i$ are real in Minkowski space), then $c_{12} \equiv (x_1,
x_2, x_2^{\ast}, x_1^{\ast})$ and $c_{34} \equiv (x_4^{\ast}, x_3^{\ast}, x_3,
x_4)$. The bound for the right-hand side is easier to obtain because the
configurations $c_{12}$ and $c_{34}$ are reflection symmetric. This is done in
Sec.\ \ref{rhsbound}. The bound {\eqref{introoptimal}} looks like a
Cauchy-Schwarz-type inequality, and is indeed derived from the
Cauchy-Schwarz inequality for unitary conformal blocks {\eqref{CB-CS}}. The
latter is true because of the unitarity of conformal representations
corresponding to these blocks. In the limit $\Delta + \ell \rightarrow
\infty,$ $\Delta - \ell$ fixed, conformal blocks are dominated by $r^{(\Delta
+ \ell) / 2}$, which reduces the conformal block Cauchy-Schwarz inequality to
{\eqref{introoptimal}}.

\section{Axioms}

\subsection{Wightman axioms}\label{Waxioms}

In this section we will state the properties of Wightman correlation functions in
a unitary QFT, to which we will refer here as ``Wightman axioms.'' These
axioms appear as as ``properties of
the vacuum expectation values'' in {\cite{Streater:1989vi}}, Sec.\ 3-3, and as (W1)-(W5) in \cite{simon1974}, Theorem II.1. Refs.\ {\cite{Streater:1989vi,simon1974}} give in
addition another set of axioms (called G{\aa}rding-Wightman axioms in \cite{simon1974}) saying that fields are operator-valued
distributions in the Hilbert space on which the Lorentz group acts, etc. This
other set of axioms will not be used in this work. In any case, the Wightman
reconstruction theorem {\cite{Streater:1989vi}} says that the two sets of
axioms are equivalent.

A unitary QFT in Minkowski space studies $n$-point correlators
\begin{equation}
  \langle \varphi_1 (x_1) \ldots \varphi_n (x_n) \rangle, \label{nptMink}
\end{equation}
(Wightman functions) of local operators $\varphi_i (x)$, $x \in
\mathbb{R}^{1, d - 1}$.\footnote{{In this paper, we use the term ``operator" according to CFT terminology, where the term ``field" is used in Wightman and Osterwalder-Schrader terminology. For example, in Refs.\ \cite{Streater:1989vi,simon1974}, the term ``operator" is reserved for the actual unbounded Hilbert space operator $\phi(f)$ defined on a dense domain of the Hilbert space (which justifies the terminology ``operator-valued distribution" $\phi$ in the G{\aa}rding-Wightman formulation).}} For simplicity in this paper we will only consider
bosonic operators, although more generally one should allow fermionic
operators and spinor representations. Wightman functions are translation and
$\tmop{SO} (1, d - 1)$ invariant.\footnote{Here and everywhere in this paper, $\tmop{SO}(1,d-1)$ means the connected component of the group. E.g. ${\rm diag}(-1,-1,1,\dots,1)$ is not included. Of course, it is possible for the theory to also possess discrete space-time symmetries such as time inversion or space reflection. The consequences of such symmetries are straightforward to figure out and are not discussed in this paper.} We will choose a basis of local operators
$\mathcal{O}_i$ transforming in irreducible $\tmop{SO} (1, d - 1)$
representations $\rho_i$. Then, Wightman functions remain invariant when
\begin{equation}
  \mathcal{O}_i^{\alpha} (x) \rightarrow {\rho_i (g)^{\alpha}_{\ \beta}}\,
  \mathcal{O}_i^{\beta} (g^{- 1} x), \label{transform}
\end{equation}
where $g \in \tmop{SO} (1, d - 1)$, and $\rho_i (g)$ are finite-dimensional
matrices of the representation $\rho_i$ $(\alpha, \beta = 1 \ldots \dim
\rho_i)$. Let $\mathcal{C}$ be the complex vector space whose elements are
arbitrary components of $\mathcal{O}_i$'s, and their \tmtextit{finite} linear
combinations with constant complex coefficients. Operators $\varphi
\mathcal{}_i$ in {\eqref{nptMink}} can be arbitrary elements of $\mathcal{C}$,
and Wightman function {\eqref{nptMink}} is multi-linear in $\varphi
\mathcal{}_i$. Note that in this and the next section derivatives of local
operators (of any order) are counted as independent operators, while in the
CFT Sec.\ \ref{ECFTax} we will start making distinction between primaries
and their derivatives.

Wightman functions {\eqref{nptMink}} are required to be tempered
distributions, i.e.\ can be paired with Schwartz class test functions $f (x_1,
\ldots, x_n)$. For this reason they are sometimes referred to as ``Wightman
distributions''. Note that the test functions $f (x_1, \ldots, x_n)$ with
which Wightman functions are paired do not have to vanish at coincident points
(unlike for the Schwinger functions discussion in the next section). This
means that, in a distributional sense, Wightman functions have meaning for all
configurations, including coincident points and light-cone singularities.
Translation and Lorentz invariance of Wightman functions are also understood
not pointwise but in the sense of distributions (i.e.\ that the pairing should
remain invariant if the test function is transformed in the dual
way).\footnote{Although Wightman functions can be shown to be real-analytic at
some totally spacelike-separated configurations (Jost points), in general they
may be singular even away from light cones (in particular when there are
timelike separations).}

We will not consider here other Minkowski correlators, such as retarded,
advanced, or time ordered, which are obtained from Wightman functions
multiplying by theta-functions of time coordinate differences, and whose
distributional properties require a separate discussion.

Limiting to the bosonic case as we are, \tmtextbf{local commutativity} (also
called microcausality) morally says that operators commute at spacelike
separation. Wightman axioms impose this as a constraint on Wightman functions:
\begin{equation}
  \langle \varphi_1 (x_1) \ldots \mathcal{} \varphi_p (x_p) \mathcal{}
  \varphi_{p + 1} (x_{p + 1}) \ldots \varphi_n (x_n) \rangle = \langle
  \varphi_1 (x_1) \ldots \mathcal{} \varphi_{p + 1} (x_{p + 1}) \mathcal{}
  \varphi_p (x_p) \ldots \varphi_n (x_n) \rangle \label{Wightman:causality}
\end{equation}
whenever $x_p - x_{p + 1}$ is spacelike: $(x_p - x_{p + 1})^2 > 0$ (in this paper we use the $-+\ldots+$ convention). Since we
are talking about distributions, this constraint means that
{\eqref{Wightman:causality}} holds when paired with any test function whose
support is contained in $(x_p - x_{p + 1})^2 > 0$.

\tmtextbf{Clustering} says that correlators should factorize if two groups of
points are far separated in a spacelike direction:
\begin{equation}
  \langle \varphi_1 (x_1) \ldots \mathcal{} \varphi_p (x_p) \mathcal{}
  \varphi_{p + 1} (x_{p + 1} + \lambda a) \ldots \varphi_n (x_n + \lambda a)
  \rangle \rightarrow \langle \varphi_1 (x_1) \ldots \mathcal{} \varphi_p
  (x_p) \rangle \langle \varphi_{p + 1} (x_{p + 1}) \ldots \varphi_n (x_n)
  \rangle \label{Wightman:cluster}
\end{equation}
as $\lambda \rightarrow \infty$ for any spacelike vector $a$, limit
understood in the sense of distributions.

We next discuss the spectral condition. By translation invariance we can
write
\begin{equation}
  \langle \varphi_1 (x_1) \ldots \varphi_n (x_n) \rangle = W (\xi_1,
  \ldots, \xi_{n - 1}), \hspace{3em} \xi_k = x_k - x_{k + 1},
\end{equation}
where $W$ is a tempered distribution in one less variable. Consider its
Fourier transform:
\begin{eqnarray}
  \hat{W} (q_1, \ldots, q_{n - 1}) & = & \int W (\xi_1, \ldots,
  \xi_{n - 1}) e^{i \underset{k = 1}{\overset{n - 1}{\sum}} q_k \cdummy \xi_k}\,
  d \xi_1 \ldots d \xi_n, 
\end{eqnarray}
where $q_k = (E_k, \mathbf{q}_k)$, $\xi_k = (t_k, \mathbf{\xi}_k)$, $q_k
\cdummy \xi_k = - E_k t_k +\mathbf{q}_k \cdummy \mathbf{\xi}_k$. Since $W$
is a tempered distribution, the Fourier transform $\hat{W}$ is well
defined and is also a tempered distribution. The \tmtextbf{spectral condition}
then says that $\hat{W}$ must be supported in the product of closed
forward light cones, i.e.\ in the region
\begin{equation}
  E_k \geqslant \mathbf{q}_k, \hspace{3em} k = 1, 2, \ldots, n - 1.
  \label{Wightman:spectral}
\end{equation}
For the two remaining conditions we need to discuss conjugation. Physically,
each operator $\varphi$ should have a conjugate $\varphi^{\dagger}$. In the
discussed framework we cannot define $\varphi^{\dagger}$ as an adjoint of an
operator acting on a Hilbert space, since we do not have a Hilbert space.
Instead, we will simply assume that there is a rule which associates
$\varphi^{\dagger}$ to $\varphi$, and impose the expected relations at the
level of correlation functions (Eq.\ {\eqref{Hermiticity}} below). This rule,
conjugation map $\dagger : \mathcal{C} \rightarrow \mathcal{C}$, associates to
each independent component $\mathcal{O}_i^{\alpha}  (\alpha = 1 \ldots \dim
\rho_i)$ of the above-mentioned basis of $\mathcal{C}$ a conjugate operator
$(\mathcal{O}_i^{\alpha})^{\dagger}$. This map is required to be an
involution, i.e.\ $\dagger \dagger = 1$. Furthermore, it is extended to
the whole of $\mathcal{C}$ by anti-linearity, i.e.\ $(c_1 \varphi_1 + c_2
\varphi_2 \mathcal{})^{\dagger} = c_1^{\ast} \varphi_1^{\dagger} + c_2^{\ast}
\varphi_2^{\dagger}$.\footnote{The $\dagger$ operation is denoted by $*$ in \cite{Streater:1989vi}.}  

Let us group operators $(\mathcal{O}_i^{\alpha})^{\dagger}$ in a multiplet
which we denote by $\mathcal{O}_i^{\dagger}$, i.e.\ $(\mathcal{O}_i^{\dagger})^{\alpha} =(\mathcal{O}_i^{\alpha})^{\dagger}$.\quad We will see below that
$(\mathcal{O}_i^{\dagger})^{\alpha}$ transform under $g \in \tmop{SO} (1, d -
1)$ with matrices complex-conjugate to those of $\mathcal{O}_i^{\alpha}$:
\begin{equation}
  \mathcal{O}_i^{\alpha} \rightarrow \rho_i (g)^{\alpha}_{\ \beta} \,
  \mathcal{O}_i^{\beta} \quad \Rightarrow \quad
  (\mathcal{O}^{\dagger}_i)^{\alpha} \rightarrow \overline{\rho_i
  (g)^{\alpha}_{\ \beta}} \, (\mathcal{O}^{\dagger}_i)^{\beta} .
  \label{R*}
\end{equation}
In other words, $\mathcal{O}^{\dagger}_i$ transforms in the conjugate
representation $\overline{\rho_i}$.

Since we are considering only bosonic operators, the relevant representations
$\rho_i$ are tensors $T^{\mu_1 \ldots \mu_l}$, on which $g \in \tmop{SO} (1, d
- 1)$ act as:
\begin{equation}
  \begin{array}{lll}
    T^{\mu_1 \ldots \mu_l} & \rightarrow & (\rho_i (g) T)^{\mu_1 \ldots
    \mu_l} = {g^{\mu_1}_{\ \nu_1} \ldots g^{\mu_l}_{\ \nu_l}}
    T^{\nu_1 \ldots \nu_l} .
  \end{array} \label{rhoi}
\end{equation}
Depending on $\rho_i$, these tensors have some fixed rank and mixed symmetry
properties. In addition, in even $d$, for tensors with $d / 2$ antisymmetric
indices, (anti-)chirality\footnote{Chiral and anti-chiral representations are sometimes also called ``self-dual'' and ``anti-self-dual''. We use
	``chiral'' and ``anti-chiral'' to avoid the clash with ``dual representation'' in mathematician's sense.} constraints must be imposed. All tensor
representations of $\tmop{SO} (1, d - 1)$ are real (i.e.\ matrices $\rho_i
(g)^{\alpha}_{\ \beta}$ in {\eqref{R*}} can be chosen real), except for
(anti-)chiral representations in $d = 0 \tmop{mod} 4$ which are
complex-conjugate to each other. For operators in real representations we can
choose a basis such that $\mathcal{O}_i =\mathcal{O}^{\dagger}_i$.

After this intermezzo we are ready to formulate hermiticity and positivity
conditions. \tmtextbf{Hermiticity} says that complex conjugate correlators
equal correlators of conjugated operators in inverted order:
\begin{equation}
  \overline{\langle \varphi \nobracket_1 (x_1) \ldots \varphi_n (x_n)
  \rangle} = \langle \varphi_n^{\dagger} (x_n) \ldots \varphi_1^{\dagger}
  (x_1) \rangle . \label{Hermiticity}
\end{equation}
This would be true of course if $\varphi$'s were operators acting on a Hilbert
space, with $\varphi^{\dagger}$'s their adjoints. In the present framework
without Hilbert space it is imposed as an axiom. This axiom implies in
particular {\eqref{R*}}, i.e.\ that $\mathcal{O}^{\dagger}_i$ transforms in
the conjugate irrep $\overline{\rho_i}$.

The last Wightman axiom, \tmtextbf{positivity}, is most conveniently written
down using the language of states. One considers basic ket states $| \psi (f,
\varphi_1, \ldots, \varphi_n) \rangle$, associated with $n$ local operators
$\varphi_1, \ldots, \varphi_n \in V$ and a complex Schwartz test function of
$n$ variables $f$. One defines the inner product on basic ket states by
\begin{eqnarray}
  \langle \psi (g, \chi_1, \ldots, \chi_m) | \psi (f, \varphi_1, \ldots,
  \varphi_n) \rangle & \assign & \int d x\, d y\, \overline{g (x_1, \ldots,
  x_m)} f (y_1, \ldots, y_n) \\
  &  & \qquad \times \langle \chi^{\dagger}_m (x_m) \ldots
  \chi_1^{\dagger} (x_1) \varphi_1 (y_1) \ldots \varphi_n (y_n) \rangle .
  \nonumber
\end{eqnarray}
The vector space of ket states $\mathcal{H}_0$ consists of \tmtextit{finite}
linear combinations $| \Psi \rangle$ of basic ket states, with the inner
product extended to it by (anti)linearity. \tmtextbf{Positivity} then says
that the so defined inner product is positive semidefinite:
\begin{equation}
  \langle \Psi | \Psi \rangle \geqslant 0 \quad \forall | \Psi \rangle \in
  \mathcal{H}_0 . \label{Wightman:positivity}
\end{equation}
\begin{remark}
  \label{rem:states}A comment is in order concerning the meaning of these
  states. They may be seen as just a convenient notation, since Eq.
  {\eqref{Wightman:positivity}} can be rewritten without ever using the word
  ``state'' (see {\cite{Streater:1989vi}}, Eq.\ (3-35)). But they are more than
  that: \ the vector space of states $\mathcal{H}_0$ is ``almost'' the Hilbert
  space $\mathcal{H}$ of our QFT. The only difference between $\mathcal{H}_0$
  and $\mathcal{H}$ is that $\mathcal{H}_0$ is not complete and may contain
  some states of zero norm. However, since $\mathcal{H}_0$ has a positive
  semidefinite inner product, as expressed by Eq.
  {\eqref{Wightman:positivity}}, we can obtain from it a Hilbert space
  $\mathcal{H}$ via a standard procedure of completion and
  modding out by states of zero norm. This is the first step of the Wightman
  reconstruction theorem {\cite{Streater:1989vi}}, and the resulting Hilbert
  space $\mathcal{H}$ turns out to be (possibly a superselection sector of)
  \tmtextit{the} Hilbert space of the QFT, on which fields can then be
  realized as operator valued distributions.
\end{remark}

\begin{remark}
  \label{hermfrompos}Although we included hermiticity as a separate axiom
  because of its suggestive form, it can be derived from positivity,
  considering the states of the form $| \Psi \rangle = | \psi (f_0, 1) \rangle
  \nobracket + | \psi (f, \varphi_1, \ldots, \varphi_n) \rangle$ where $f_0
  \in \mathbb{C}$ and $1$ is the unit operator.
\end{remark}

\begin{remark}
  Another interesting positivity property of Wightman functions is called Rindler Reflection positivity,
  or Wedge Reflection positivity~\cite{Casini:2010bf}. A restricted version of this property (with wedge-ordered points)
  can be derived from Wightman axioms, while a stronger version (no wedge-ordering)
  follows from Tomita-Takesaki theory which relies on Haag-Kastler axioms~\cite{Casini:2010bf}. In CFT context this property has been discussed, e.g.,
  in~\cite{Hartman:2016lgu}. We will not discuss these properties in this paper. However, it would be interesting to
  see whether the stronger form of Rindler positivity (including distributional information) can be derived from CFT axioms without the appeal to Tomita-Takesaki theory (the weaker version following from our results on Wightman axioms and~\cite{Casini:2010bf}). {We believe this can be done, and it could be a nice exercise for someone wishing to master our techniques.} 
\end{remark}

\subsection{Osterwalder-Schrader axioms}\label{OSaxioms}

We next describe a version of the Osterwalder and Schrader axioms
{\cite{osterwalder1973,osterwalder1975}} of Euclidean unitary QFT (see the end
of the section about the relation to the original OS axioms). The setup is
similar to Wightman axioms with $\tmop{SO} (d)$ replacing $\tmop{SO} (1, d -
1)$. We consider a basis of local bosonic operators $\mathcal{O}_i$
transforming\footnote{In the sense of Eq.\ {\eqref{transform}} where now $g \in
\tmop{SO} (d)$.} in $\tmop{SO} (d)$ irreps $\rho_i$, counting derivatives as
independent operators. Finite linear combinations of their components span a
complex vector space $\mathcal{C}$ of local operators. The axioms specify
properties of translation and $\tmop{SO} (d)$ invariant $n$-point correlators {(often called Schwinger functions)}
\begin{equation}
  \langle \varphi_1 (x_1) \ldots \varphi_n (x_n) \rangle, \quad \varphi_i \in
  \mathcal{C}, \quad x_i \in \mathbb{R}^d \label{nptEucl} .
\end{equation}
These correlators are defined away from coincident points (i.e.\ whenever $x_i
\neq x_j$ for each $i, j$). We will assume
that\footnote{\label{real-anal}Recall that a $C^{\infty}$ function of $m$ real
variables is called real-analytic in a domain $D \subset \mathbb{R}^m$ if it
has a convergent Taylor series expansion in a small ball around every point of
this domain. Equivalently, such a function has an analytic extension to a
small open neighborhood of this domain inside $\mathbb{C}^m$.}
\begin{equation}
  \text{correlators are real-analytic,} \label{real-anal1}
\end{equation}
and grow not faster than some power when some points approach each other or go
to infinity, i.e.
\begin{equation}
  | \langle \varphi_1 (x_1) \ldots \varphi_n (x_n) \rangle | \leqslant C
  \left( 1 + \max_{i \neq j} \left( \frac{1}{| x_i - x_j |}, | x_i - x_j |
  \right) \right)^p \label{OSmod}
\end{equation}
with some correlator-dependent positive constants $C, p$. Unlike Wightman
axioms, OS axioms do not bother what happens \tmtextit{precisely} at
coincident points (not even in the sense of distributions).

As we are limiting to the bosonic case, correlators remain invariant when
operators are permuted:\footnote{In particular one can sort all operators so
that the Euclidean time coordinates are ordered $x^0_1 \geqslant x_2^0
\geqslant \cdots \geqslant x_n^0$, and Euclidean correlator for any other
ordering can be obtained by trivially permuting field labels.}
\begin{equation}
  \langle \varphi_1 (x_1) \ldots \varphi_n (x_n) \rangle = \langle
  \varphi_{\pi (1)} (x_{\pi (1)}) \ldots \varphi_{\pi (n)} (x_{\pi (n)})
  \rangle . \label{perminv}
\end{equation}
To formulate the Euclidean version of hermiticity and positivity, we will need
some simple facts about $\tmop{SO} (d)$ representations. Abstractly, for any
irrep $\rho$ acting $T^{\alpha} \rightarrow \rho (g)^{\alpha}_{\ \beta}
T^{\beta}$, the conjugate representation $\bar{\rho}$ acts with complex
conjugate matrices $\overline{\rho (g)^{\alpha}_{\ \beta}}$. Since
$\tmop{SO} (d)$ is compact, we have $\bar{\rho} \simeq \rho^{\ast}$, the dual
representation. The $\tmop{SO} (d)$ irreps $\rho$ are again tensors
$T^{\mu_1 \ldots \mu_l}$ like in {\eqref{rhoi}}, of in general mixed symmetry,
and with (anti)-chirality constraints if having $d / 2$ antisymmetric
indices in even $d$. All of them are real, except for chiral
representations in $d = 2 \tmop{mod} 4$ which are complex-conjugate to the
anti-chiral ones.\footnote{This well-known shift from $d = 0 \tmop{mod} 4$
for $\tmop{SO} (1, d - 1)$ is induced by raising the indices of the
$\varepsilon$-tensor. E.g.\ $\varepsilon^{01} \varepsilon_{10} = - 1$ for
$\tmop{SO} (2)$, while it is 1 for $\tmop{SO} (1, 1)$.}

We will also need the reflected representation $\rho^R$ with matrices $\rho^R
(g) = \rho (g^R)$, where $g \rightarrow g^R = \Theta g \Theta,$ $\Theta =
\tmop{diag} (- 1, 1, \ldots, 1),$ is an automorphism of $\tmop{SO} (d)$. For
tensor representations, we can consider the map
\begin{equation}
  T^{\mu_1 \ldots \mu_l} \rightarrow {\Theta^{\mu_1}_{\ \nu_1} \ldots
  	\Theta^{\mu_l}_{\ \nu_l}} T^{\nu_1 \ldots \nu_l},
  \label{intertwiner:theta}
\end{equation}
which preserves rank and mixed symmetry properties. It also maps chiral to
anti-chiral tensors in any even $d$. Whenever the representation space is
preserved, this map serves as an intertwiner between $\rho^R$ and $\rho$. This
means that $\rho^R \simeq \rho$ for all tensor representations without
chirality constraints, while this operation interchanges chiral and
antichiral irreps in any even $d$.\footnote{In odd $d$, $\Theta$ is a
product of $- 1$ and an $\tmop{SO} (d)$ matrix, so that $g \rightarrow g^R$ is
an inner automorphism. This provides another argument why $\rho^R \simeq \rho$
for all irreducible $\tmop{SO} (d)$ representations in odd $d$.}

Applying both conjugation and reflection we get the conjugate reflected
representation $\bar{\rho}^R$ (isomorphic to dual reflected). From the above
it follows that $\bar{\rho}^R \simeq \rho$ for all $\tmop{SO} (d)$ irrreps,
except for (anti-)chiral tensors in $d = 0 \tmop{mod} 4$ which are
interchanged.

Just as for Wightman axioms, we will need a conjugation operation $\dagger :
\mathcal{C} \rightarrow \mathcal{C}$ on the vector space of local operators,
which is involutive, anti-linear, and associates to each independent component
$\mathcal{O}_i^{\alpha}  (\alpha = 1 \ldots \dim \rho_i)$ a conjugate operator
$(\mathcal{O}_i^{\dagger})^{\alpha} \assign
(\mathcal{O}_i^{\alpha})^{\dagger}$. Then the \tmtextbf{hermiticity} axiom
takes the form\footnote{Although we write the operators in the r.h.s.\ in the
inverted order like in {\eqref{Hermiticity}}, permutation invariance renders
this detail unimportant for the OS axioms.}
\begin{equation}
  \overline{\langle \varphi \nobracket_1 (x_1) \ldots \varphi_n (x_n)
  \rangle} = \langle \varphi_n^{\dagger} (x^{\theta}_n) \ldots
  \varphi_1^{\dagger} (x^{\theta}_1) \rangle, \label{HermiticityOS}
\end{equation}
similar to the Minkowski counterpart {\eqref{Hermiticity}} but with an
important difference that the operators in the r.h.s.\ are put at reflected
positions
\begin{equation}
  x^{\theta} \assign \Theta x.
\end{equation}
This change has a consequence that $\mathcal{O}^{\dagger}_i$ transforms in the
conjugate \tmtextit{reflected} representation $\overline{\rho_i}^R$,
explaining why we introduced this concept in the first place.\footnote{Indeed
we have ${\langle (\mathcal{O}^{\dagger}_i)^{\alpha} (x) \ldots \rangle =
	\overline{\langle \mathcal{O}^{\alpha}_i (x^{\theta}) \ldots \rangle}
	=\overline{\rho (g)^{\alpha}_{\ \beta}} \langle
	\mathcal{O}^{\beta}_i (g^{- 1} x^{\theta}) \ldots \rangle = \overline{\rho
		(g)^{\alpha}_{\ \beta}} \langle (\mathcal{O}^{\dagger}_i)^{\beta}
	((g^R)^{- 1} x) \ldots \rangle}$.} For self-conjugate-reflected
representations we may choose a basis such that
\begin{equation}
  (\mathcal{O}^{\dagger}_i)^{(\mu)} = \Theta^{(\mu)}_{(\nu)}
  \mathcal{O}_i^{(\nu)}, \label{realOS}
\end{equation}
where $\Theta^{(\mu)}_{(\nu)} : = {\Theta^{\mu_1}_{\ \nu_1} \ldots
	\Theta^{\mu_l}_{\ \nu_l}}$ is the intertwiner {\eqref{intertwiner:theta}}.

To write positivity, basic ket states $| \psi (f, \varphi_1, \ldots,
\varphi_n) \rangle$ are associated with $n$ local operators $\varphi_1,
\ldots, \varphi_n \in \mathcal{C}$ and a complex compactly supported
Schwartz test function of $n$ variables $f (x_1, \ldots, x_n)$ which
vanishes unless all points are in the lower half space and have time variables
ordered: $0 > x_1^0 > x_2^0 > \cdots > x_n^0$. These support requirements
were absent in the Wightman case. The inner product on the basic ket states is
defined by
\begin{eqnarray}
  \langle \psi (g, \chi_1, \ldots, \chi_m) | \psi (f, \varphi_1, \ldots,
  \varphi_n) \rangle & \assign & \int d x\, d y\, \overline{g (y_1^{\theta},
  \ldots, y_m^{\theta})} f (x_1, \ldots, x_n) \nonumber\\
  &  & \times \langle \chi^{\dagger}_m (y_m) \ldots \chi_1^{\dagger} (y_1)
  \varphi_1 (x_1) \ldots \varphi_n (x_n) \rangle,  \label{inner:OS}
\end{eqnarray}
and is extended by (anti)linearity to the vector space
$\mathcal{H}_0^{\tmop{OS}}$ of \tmtextit{finite} linear combinations $| \Psi
\rangle$ of basic ket states. In this notation, \tmtextbf{positivity} takes
the same form as {\eqref{Wightman:positivity}}, i.e.\ that the so defined inner
product must be positive semidefinite:
\begin{equation}
  \langle \Psi | \Psi \rangle \geqslant 0 \quad \forall | \Psi \rangle \in
  \mathcal{H}_0^{\tmop{OS}} . \label{OS:positivity}
\end{equation}
This is referred to as ``OS reflection positivity'' because of the reflected
$g$ arguments in {\eqref{inner:OS}}, differently from the Wightman case.
Because of this reflection and the above test function support requirements,
all operators in {\eqref{inner:OS}} sit at separated positions. This is one
reason why the OS axioms involve ordinary functions, without worrying about
coincident points. In contrast, Wightman positivity integrates operator
insertions over coincident points and makes sense only for distributions.

Just as in the Wightman case (Remark \ref{rem:states}), we can complete the
vector space $\mathcal{H}_0^{\tmop{OS}}$, mod out by states of zero norm, and
obtain a Hilbert space $\mathcal{H}^{\tmop{OS}}$ of the Euclidean theory.

Although we included hermiticity as an independing axiom, it can be derived
from positivity, just as in Remark \ref{hermfrompos} in the Wightman case.

One simple consequence of OS reflection positivity is pointwise positivity of
$2 n$-point functions at reflection invariant configurations of points:
\begin{equation}
  \langle \varphi^{\dagger}_n (x^{\theta}_n) \ldots \varphi_1^{\dagger}
  (x^{\theta}_1) \varphi_1 (x_1) \ldots \varphi_n (x_n) \rangle \geqslant 0
  \label{OSnaive1}
\end{equation}
for any $x_1, \ldots, x_n$ in the lower half space.\footnote{For tensor
operator in self-conjugate-reflected representations, choosing the real basis
{\eqref{realOS}}, this becomes $\langle \ldots \Theta^{(\mu)}_{(\nu)}
\mathcal{O}^{(\nu)} (x^{\theta}) \mathcal{O}^{(\mu)} (x) \ldots
\rangle \geqslant 0$ (no sum on $\mu$), i.e.\ tensor indices are also
reflected.} This follows from {\eqref{OS:positivity}} by taking $| \Psi
\rangle = | \psi (f, \varphi_1, \ldots, \varphi_n) \rangle$ and localizing $f$
near one configuration of points. In general, imposing {\eqref{OSnaive1}} for
all $\varphi$'s and $x$'s would be clearly weaker than full OS reflection
positivity. E.g.\ {\eqref{OS:positivity}}, but not {\eqref{OSnaive1}}, can be
used to bound 3-point functions in terms of 2- and 4-point functions, or
non-reflection-invariant 4-point functions by reflection-invariant ones.
However for CFTs we will see below that OS reflection positivity can be
reduced to a form of {\eqref{OSnaive1}} for 2-point functions plus a form of
{\eqref{HermiticityOS}} for 3-point functions.

Finally, the OS \tmtextbf{clustering} asserts that
\begin{eqnarray}
  & \lim_{\lambda \rightarrow \infty} \int d x\, d y\, \overline{g
  (y_1^{\theta}, \ldots, y_m^{\theta})} f (x_1, \ldots, x_n) \langle
  \chi^{\dagger}_m (y_m) \ldots \chi_1^{\dagger} (y_1) \varphi_1 (x_1 +
  \lambda a) \ldots \varphi_n (x_n + \lambda a) \rangle &  \nonumber\\
  & = \int d x\, d y\, \overline{g (y_1^{\theta}, \ldots, y_m^{\theta})} f
  (x_1, \ldots, x_n) \langle \chi^{\dagger}_m (y_m) \ldots \chi_1^{\dagger}
  \nobracket (y_1) \rangle \langle \nobracket \varphi_1 (x_1) \ldots \varphi_n
  (x_n) \rangle &  \label{clusterint}
\end{eqnarray}
for any Schwartz test functions $f (x_1, \ldots, x_n)$ and $g (y_1, \ldots,
y_m)$ supported for $0 > x_1^0 > x_2^0 > \ldots > x_n^0$ and $0 > y_1^0 >
y_2^0 > \ldots > y_n^0$, for any local fields $\varphi_1, \ldots, \varphi_n$
and $\chi_1, \ldots, \chi_m$, and for any $a \in \mathbb{R}^d$ which is
parallel to the $x^0$ plane ($a^0 = 0$, called ``purely spatial" elsewhere). The latter requirement is somewhat
analogous to having the Wightman cluster property {\eqref{Wightman:cluster}}
to be satisfied only for spacelike $a$.\footnote{This is axiom E4 in
{\cite{osterwalder1973}}. Ref.\ {\cite{osterwalder1973}} also mentions a
stronger axiom E4', but we will be content here with checking the easier axiom
E4.}

Note that the Minkowski operators can be mapped to Euclidean operators. In
particular any $\tmop{SO} (1, d - 1)$ irrep can be mapped to an $\tmop{SO}
(d)$ irrep. This map of irreps originates from the map between the two Lie
algebras which have the same complexification. It can then be shown that a
pair of conjugate $\tmop{SO} (1, d - 1)$ irreps is mapped to a pair of
$\tmop{SO} (d)$ irreps which are conjugate-reflected to each other. This gives
another rationale for the appearance of reflected irreps in the OS axioms.

\begin{remark}
  \label{OSnewVSold}The stated version of OS axioms includes the assumption of
  real analyticity {\eqref{real-anal1}} and the bound {\eqref{OSmod}}. These
  assumptions are natural from physics perspective; they also easily follow
  from Wightman axioms. The original OS axioms did not include
  {\eqref{real-anal1}} nor {\eqref{OSmod}}, but included instead a differently
  stated assumption:
  \begin{equation}
    \text{correlators are distributions on $^0 \mathcal{S}$,} \label{OSorig}
  \end{equation}
  where $^0 \mathcal{S}$ is the space of Schwartz test function vanishing at
  coincident points with all their \ derivatives.
  
  We would like to discuss here the relation between
  {\eqref{real-anal1}}+{\eqref{OSmod}} and {\eqref{OSorig}}. In one direction
  this is easy: clearly {\eqref{OSmod}} implies {\eqref{OSorig}}. In the other
  direction it can be shown that {\eqref{OSorig}} and other OS axioms (in
  particular OS positivity and rotation invariance) imply real analyticity
  {\eqref{real-anal1}}. This is a result of \
  {\cite{osterwalder1973,osterwalder1975}} and {\cite{Glaser1974}}. It is also
  possible to derive {\eqref{OSmod}} from {\eqref{OSorig}} and other OS axioms
  {\cite{osterwalder1975}}. These issues will be reviewed further in Sec.\ \ref{OS}.
\end{remark}

\begin{remark}
	{In Ref.\ \cite{osterwalder1975}, Osterwalder and Schrader introduced an extra assumption on Euclidean correlators, the \textit{linear growth condition}. We will come back to this in Sec.\ \ref{OS}. Here we would like to stress that this condition is not needed for the implications discussed in the previous paragraph. On the other hand, the linear growth condition was used in Ref.\ \cite{osterwalder1975} when showing that the Euclidean correlators can be Wick-rotated to Lorentzian signature, resulting in Wightman distributions.}
\end{remark}

\subsection{Euclidean CFT axioms}\label{ECFTax}

Wightman and OS axioms stated in the previous two sections are standard. We
took care to present them for general operator representations and in general
$d$. We will now present axioms for Euclidean unitary CFT. Just as OS axioms,
these concern correlators in Euclidean signature, but there is an extra
assumption of conformal invariance. Another feature of the CFT axioms is that
assumptions are imposed on simple building blocks (2- and 3-point functions)
from which more complicated correlators can be constructed. Properties of
these complicated correlators then follow. The point of our paper is how one
can recover OS axioms and (after Wick rotation) Wightman axioms in this setup.

A Euclidean unitary CFT in $\mathbb{R}^d$ ($d \geqslant 2$) deals with local
\tmtextit{primary} operators $\mathcal{O}_i (x)$ and with their $n$-point
correlation functions $\langle \mathcal{O}_{i_1} (x_1) \ldots
\mathcal{O}_{i_n} (x_n) \rangle$. Correlators are real-analytic functions
defined away from coincident points, which are permutation-invariant as in
{\eqref{perminv}}. Each primary is characterized by its scaling dimension
$\Delta_i$ and is an $\tmop{SO} (d)$ tensor transforming in an irreducible
representation $\rho_i$.\footnote{Operators can also be grouped into
multiplets of the global symmetry group $G$ which a CFT might have, but we
will not discuss global symmetry here. For simplicity we will only consider
bosonic operators. More generally one should allow fermionic operators and
spinor representations.} The scaling dimensions are real and nonnegative, with
the unit operator having dimension zero. The set of scaling dimensions
(``spectrum'') is assumed to be discrete, by which we mean that there are finitely
many $\Delta_i$'s in any finite interval $[a, b] \subset
\mathbb{R}$.\footnote{There exist 2d unitary CFTs, such as the Liouville
theory, with a continuous spectrum of scaling dimensions. In this case axioms
need to be modified. All known unitary CFTs in $d \geqslant 3$ have a discrete
spectrum.}

The set of all local operators of a CFT consists of primaries $\mathcal{O}_i
(x)$ and their space-time derivatives $\partial_{\mu_1} \ldots
\partial_{\mu_n} \mathcal{O}_i (x)$, often referred to as descendants. The
correlation functions of the descendant operators are simply the derivatives
of the correlation functions of primary operators. They are well-defined since
the correlators of primaries are assumed to be real-analytic.

Parameters $\Delta_i$ and $\rho_i$ determine transformation properties of
$\mathcal{O}_i (x)$ under the conformal group $\tmop{SO} (d + 1, 1)$, and
correlators remain invariant under these standard transformations which we
will not write down. These constraints determine the functional form of
1,2,3-point functions. In particular, the unit operator is the only one with a
nonzero 1-point function. {See, e.g.,~\cite{Poland:2018epd} for a review of these
facts.} 

{
An important fact that follows from the conformal invariance of correlation functions
is that one is allowed to insert an operator at spatial infinity. The primary operator at spatial infinity is defined as
\be\label{eq:inftydefn}
	\<\cO_i(\oo)\cdots\>\equiv\lim_{L\to +\oo}L^{2\De_i}\<\cO_i(L\hat e_0)\cdots\>.
\ee
To see that this limit exists one can use a conformal map that takes
$\oo$ to a finite point and moves no other operators to infinity. After applying this 
map the limit~\eqref{eq:inftydefn} turns into a limit in 
which all points approach finite values. We conclude that~\eqref{eq:inftydefn}
exists, and is then of course independent of the concrete conformal map that we chose.
In the definition~\eqref{eq:inftydefn} we have chosen a particular direction ($\hat e_0$) for
the limit. Using conformal symmetry it is easy to show that~\eqref{eq:inftydefn} is independent
of this direction, up to a rotation on the indices of $\cO_i$. In what follows we will
always allow for Euclidean CFT correlators to have one of the operators to be at $\oo$.
}

{A Euclidean unitary CFT comes equipped with a conjugation operation $\dagger$, an involutive antilinear operator on the vector space of local operators (including both primaries and descendants), similarly to the OS axioms. Every primary $\mathcal{O}_i$ is mapped by $\dagger$ to a conjugate primary $\mathcal{O}^{\dagger}_i$ such that the 2-point function $\langle \mathcal{O}_i^{\dagger} \mathcal{O}_i \rangle$ does not vanish. }The $\mathcal{O}_i$ and $\mathcal{O}^{\dagger}_i$ have equal scaling dimensions, and transform in the conjugate-reflected irreps. Recall that in Sec.\ \ref{OSaxioms} we saw that most $\tmop{SO} (d)$ irreps are self-conjugate-reflected, $\rho_i \simeq \bar{\rho}_i^R$, the only exception being (anti-)chiral tensors in $d = 0 \tmop{mod} 4$ which are exchanged by this operation. For operators in self-conjugate-reflected irreps we may choose operator basis such that Eq.\ {\eqref{realOS}} holds, which we copy here:
\begin{equation}
  (\mathcal{O}^{\dagger}_i)^{(\mu)} = \Theta^{(\mu)}_{(\nu)}
  \mathcal{O}_i^{(\nu)} . \label{realOS1}
\end{equation}
The functional form of the $\langle \mathcal{O}_i^{\dagger} (x) \mathcal{O}_i
(y) \rangle$ 2-point function is fixed by conformal symmetry:
\begin{equation}
  \langle (\mathcal{O}^{\dagger}_i)^{(\mu)} (x) \mathcal{O}_i^{(\nu)} (y)
  \rangle = \cN_i I^{(\mu), (\nu)} (x - y), \label{2-pointfixed}
\end{equation}
where $(\mu), (\nu)$ are collections of tensor indices (of equal length),
$I^{(\mu), (\nu)} (x - y)$ is a tensor function depending only on $\Delta_i,
\rho_i$, {and $\cN_i$ is a constant. 
	
	{ We are free to rescale the basis of primaries, multiplying them by some positive constants which should be the same for both $\mathcal{O}_i$ and $\mathcal{O}_i^\dagger$. This transformation clearly preserves the above conditions on $\dagger$, as well as further conditions which will be discussed below, notably \eqref{CFThermiticity}. Using this freedom we can rescale $\cN_i$ by a positive real number and fix $|\cN_i|$ in some arbitrary unimportant way, e.g.\ so that some component of the 2-point function
	is of absolute value one at unit separation. On the other hand the phase of $\cN_i$ cannot be changed in this way. Instead, it is uniquely determined by the positivity condition discussed below.}

Positivity is imposed in Euclidean CFT axioms only on 2-point functions. We
write it again using the language of states. Basic ket states are $|
\partial^{(\beta)} \mathcal{O}_i^{(\nu)} \rangle \nobracket$ where
$\mathcal{O}_i^{(\nu)}$ is a primary component and $\partial^{(\beta)}$ an
arbitrary derivative. The inner product is defined as
\begin{equation}
  \langle \partial^{(\alpha)} \mathcal{O}_i^{(\mu)} | \partial^{(\beta)}
  \mathcal{O}_i^{(\nu)} \rangle = \Theta^{(\alpha)}_{(\alpha')} \langle
  \partial^{(\alpha')} (\mathcal{O}^{\dagger}_i) \nobracket^{(\mu)} (x_N)
  \partial^{(\beta)} \mathcal{O}_i^{(\nu)} (x_S) \rangle, \label{innerCFT}
\end{equation}
i.e.\ as the value of the shown 2-point function inserting the operators at
$x_S = (- 1, 0, \ldots, 0)$ and $x_N = (1, 0, \ldots, 0) = (x_S)^{\theta}$ \
(where N,S stands for north, south). For ket states with $i \neq j$ the inner
product vanishes since the 2-point function is zero. This inner product is
extended by (anti)linearity to the vector space $\mathcal{H}_0^{\tmop{CFT}}$
of finite complex linear combinations of basic ket states. In this language,
Euclidean CFT \tmtextbf{positivity} reads exactly as the Wightman and OS
positivity: $\langle \Psi | \Psi \rangle \geqslant 0$ for all states of this
restricted form. More prosaically, this can also be stated that the infinite
matrices $M^{(\alpha) (\mu), (\beta) (\nu)}_i$ built out of 2-point functions
in the r.h.s.\ of {\eqref{innerCFT}} are all positive semidefinite when
restricted to finite subspaces. 

{CFT positivity can be analyzed primary by primary, and it depends only on
the primary 2-point function, Eq.\ {\eqref{2-pointfixed}} which determines the full matrix $M^{(\alpha) (\mu), (\beta)
(\nu)}_i$. Clearly, only one phase of the normalization constant $\cN_i$ in Eq.\ {\eqref{2-pointfixed}} can give rise to a positive definite matrix, so that phase is uniquely fixed. Once the phase of $\cN_i$ is fixed, positivity for a given primary depends only on its $\Delta$, $\rho$.} It then can be shown that CFT positivity holds if and only if every $\Delta$, in addition to being real and non-negative, lies above a certain
minimal $\rho$-dependent value (``unitarity bound''):
\begin{equation}
  \Delta \geqslant \Delta_{\min} (\rho) .
\end{equation}
These unitarity bounds are documented in the literature, e.g.\ we have
$\Delta_{\min} = d / 2 - 1$ for scalars, and $d + \ell - 2$ for spin-$\ell$,
$\ell \geqslant 1$. For arbitrary $\tmop{SO} (d)$ representations see
{\cite{Minwalla:1997ka}}.\footnote{We chose to express CFT positivity
inserting operators at the points $(\pm 1, 0, \ldots, 0)$ which corresponds to
the N-S quantization (see {\cite{EPFL}}) and will facilitate the comparison
with the Osterwalder-Schrader reflection positivity. Equivalently, one could
go via a conformal transformation to the more familiar radial quantization corresponding to inserting the
operators at $0$ and $\infty$. CFT positivity is then equivalent to radial
quantization states having positive norm on every level, which is how the
unitarity bounds are usually worked out in Euclidean CFTs
{\cite{Minwalla:1997ka}}. In mathematical language, this latter condition
corresponds to having a positive-definite Shapovalov form on the parabolic
Verma module. Recent work {\cite{Yamazaki:2016vqi,Penedones:2015aga}}
explained how the determinant formulas by Jantzen
{\cite{jantzen_kontravariante_1977}} provide a rigorous justification of the
Euclidean unitarity bounds (both in the necessary and sufficient directions).}

For future uses, we wish to define the CFT Hilbert space
$\mathcal{H}^{\tmop{CFT}}$ via completion of $\mathcal{H}_0^{\tmop{CFT}}$,
after modding out by zero norm states (for operators saturating the unitarity
bounds, some descendants have zero norm). This can be done abstractly, or
explicitly using a basis as we now describe. Throwing out zero-norm
descendants, the remaining states can be organized choosing an orthonormal
basis. We may choose such a basis independently among descendants of each
primary, and then combine all these bases, e.g.\ in the order of non-decreasing
scaling dimensions. \ The elements of $\mathcal{H}^{\tmop{CFT}}$ are then
formal linear combinations $\sum_n c_n | n \rangle,$where $| n \rangle$ are
orthonormal basis elements, and $c_n$ is an arbitrary complex $\ell_2$
sequence. The norm on $\mathcal{H}^{\tmop{CFT}}$ is the $\ell_2$ norm of
the sequence $c_n$. Restricting to sequences $c_n$ which have only a finite
number of nonzero elements, we get elements of $\mathcal{H}_0^{\tmop{CFT}}$
(modulo the zero-norm states).

Let us continue with the axioms. CFT \tmtextbf{hermiticity} condition is
imposed only on the 2-point and 3-point functions, namely:
\begin{equation}
  \langle (\mathcal{O}^{\dagger}_i)^{(\mu)} (x_1) \mathcal{O}_i^{(\nu)}
  (x_2) \rangle =  \overline{\langle \mathcal{O}_i^{(\mu)} (x_1^{\theta})
  (\mathcal{O}_i^{\dagger})^{(\nu)} (x_2^{\theta}) \rangle}, 
  \label{hermicity2-point}
\end{equation}
%which is just a restatement of {\eqref{I-Ibar}}, and 
{which is also a consequence CFT
positivity and in particular fixed the phase of $\cN_i$ up to a sign},\footnote{For the special case $x_1 = x_N$, $x_2 = x_S$, Eq.\ {\eqref{hermicity2-point}} is nothing but hermiticity of the matrix $M_i^{(\mu),
(\nu)}$, a consequence of positive-semidefiniteness. The general case reduces
to the special one mapping $x_1, x_2$, to $x_N$, $x_S$ by a conformal
transformation (both sides of {\eqref{hermicity2-point}} have the same conformal
transformation properties).} and
\begin{eqnarray}
  \langle \mathcal{O}_i^{(\mu)} (x_1) \mathcal{O}_j^{(\nu)} (x_2)
  \mathcal{O}_k^{(\lambda)} (x_3) \rangle & = & \overline{\langle
  (\mathcal{O}_i^{\dagger})^{(\mu)} (x^{\theta}_1)
  (\mathcal{O}_j^{\dagger})^{(\nu)} (x^{\theta}_2)
  (\mathcal{O}^{\dagger}_k)^{(\lambda)} (x^{\theta}_3) \rangle} . 
  \label{CFThermiticity}
\end{eqnarray}
for any 3 primaries $\mathcal{O}_i$, $\mathcal{O}_j$,
$\mathcal{O}_k$.\footnote{Note that since $\dagger$ is involutive, this
covers the case when an operator and its conjugate are interchanged between
the two sides of this equation.} Similarly to the 2-point function case, this
condition can be simplified using the conformally invariant tensor structures,
with an important difference that the normalization of operators has already
been fixed. Conformal invariance constrains the 3-point functions to take the
form
\begin{equation}
  \langle \mathcal{O}_i^{(\mu)} (x_1) \mathcal{O}_j^{(\nu)} (x_2)
  \mathcal{O}_k^{(\lambda)} (x_3) \rangle = \sum_{a = 1}^{N_{i j k}} f^a_{i j
  k}  \langle \mathcal{O}_i^{(\mu)} (x_1) \mathcal{O}_j^{(\nu)} (x_2)
  \mathcal{O}_k^{(\lambda)} (x_3) \rangle_a, \label{3-pointGeneral}
\end{equation}
where $\langle \mathcal{O}_i^{(\mu)} (x_1) \mathcal{O}_j^{(\nu)} (x_2)
\mathcal{O}_k^{(\lambda)} (x_3) \rangle_a$ span the finite-dimensional space
(of dimension $N_{i j k}$) of solutions of conformal invariance constraints on
the 3-point functions of the operators with given $\Delta_s, \rho_s$ ($s = i,
j, k$). On the other hand the coefficients $f_{i j k}^a \in \mathbb{C}$ are
not fixed by conformal symmetry (no sum on $i, j, k$ in the r.h.s.\ of
{\eqref{3-pointGeneral}}). We often refer to $f^a_{i j k}$ as the ``OPE
coefficients.'' It is always possible to choose the basis structures $\langle
\mathcal{O}_i^{(\mu)} (x_1) \mathcal{O}_j^{(\nu)} (x_2)
\mathcal{O}_k^{(\lambda)} (x_3) \rangle_a$ to satisfy the hermiticity
constraint {\eqref{CFThermiticity}} individually, in which case the OPE
coefficients must satisfy
\begin{equation}
  (f^a_{i j k})^{\ast} = f_{\bar{i}  \bar{j}  \bar{k} }^a,
  \label{OPEcoeff:conjugation}
\end{equation}
where the barred indices refer to the conjugate operators
$\mathcal{O}_i^{\dagger}, \mathcal{O}_j^{\dagger}, \mathcal{O}^{\dagger}_k .$
In particular, when all three operators are self-conjugate-reflected i.e.\ satisfy {\eqref{realOS1}}, the OPE coefficients $f^a_{i j k}$ must be real.

Finally, unitary CFTs enjoy a \tmtextbf{convergent operator product
expansion} (OPE). This means that any correlation function\footnote{Importantly, we allow here for one of the operators to
be inserted at spatial infinity.} satisfies
\begin{equation}
  \langle \mathcal{O}_i^{(\mu)} (x_1) \mathcal{O}_j^{(\nu)} (x_2)
  \mathcal{O}_m^{(\rho)} (x_3) \cdots \rangle = \sum_k \sum_{a = 1}^{N_{i j
  k}} f_{i j k}^a C_{a, (\lambda)}^{(\mu) (\nu)} (x_1, x_2, x_0, \partial_0)
  \langle (\mathcal{O}^{\dagger}_k)^{(\lambda)} (x_0)
  \mathcal{O}_m^{(\rho)} (x_3) \cdots \rangle, \label{OPEgeneral}
\end{equation}
where the first sum runs over all primary operators $\mathcal{O}_k$ in the
theory, and $C_{a, (\lambda)}^{(\mu) (\nu)} (x_1, x_2, x_{0,} \partial_0)$ is
a formal sum of the form
\begin{equation}
  C_{a, (\lambda)}^{(\mu) (\nu)} (x_1, x_2, x_0, \partial_0) =
  \sum_{\alpha} C_{a, (\lambda), \alpha}^{(\mu) (\nu)} (x_1, x_2, x_0)
  (\partial / \partial x_0)^{\alpha} . \label{Cexp}
\end{equation}
This differential operator is determined by conformal symmetry\footnote{There is an ambiguity when $\cO_k$ is in a short conformal
representation (in unitary theories this happens only if $\cO_k$ is a conserved current or a free field). This subtlety will not play
any role in this paper.}
and depends only on
\ $\Delta_s, \rho_s$ ($s = i, j, k$). Here $(\partial / \partial
x)^{\alpha} = (\partial / \partial x^0)^{\alpha_0} \ldots (\partial /
\partial x^{d - 1})^{\alpha_{d - 1}}$ with $\alpha = (\alpha_0, \ldots,
\alpha_{d - 1}) \in (\mathbb{Z}_{\geqslant 0})^d$ a multiindex. Convergence of
OPE means that the sum {\eqref{OPEgeneral}}, with $C$ expanded as in
{\eqref{Cexp}}, converges whenever \
\begin{equation}
  x_1, x_2 \in B (x_0, R), \quad R = \min (| x_3 - x_0 |, | x_4 - x_0 |,
  \ldots) \label{ball}
\end{equation}
where $B (x_0, R)$ is an open ball centered at $x_0$ and of radius $R$. In
other words, OPE converges whenever $x_1$ and $x_2$ are the two closest
operator insertions to $x_0$ (in Euclidean distance). Convergence should be
understood carefully as follows. For each $\mathcal{O}^{\dagger}_k$ in the
r.h.s.\ of {\eqref{OPEgeneral}}, and for each $n \in \mathbb{Z}_+$, we perform
finite summation over $a, \lambda$, and all multiindices $\alpha$ with $|
\alpha | = n$. We are left then with the doubly infinite sum
\begin{equation}
  \sum_k \sum_{n = 0}^{\infty} g_{k, n} (\{ x_i \}) \label{doublyinf} .
\end{equation}
This doubly infinite sum has to converge absolutely for every $x_1, x_2$ as in
{\eqref{ball}}.\footnote{The requirement of absolute convergence can be
somewhat relaxed, see Sec.\ \ref{OSfromCFT}.}

That the same coefficients $f_{i j k}^a$ appear in the OPE
{\eqref{OPEgeneral}} and the 3-point function {\eqref{3-pointGeneral}} follows
immediately by using the OPE inside the latter 3-point function.

Local Euclidean CFTs contain the conserved stress tensor operator $T_{\mu
\nu}$ of dimension $d$, and in case of continuous global symmetry, conserved
global symmetry currents $J_{\mu}$ of dimensions $d - 1$. We will not discuss
here additional axioms involving 3-point functions and OPE coefficients of
these operators, related to their conservation and Ward identities, see e.g.\ {\cite{Dymarsky:2017xzb,Dymarsky:2017yzx}}.

\begin{remark}
  \label{ECFTax-comp}The just given Euclidean CFT axioms are more careful in
  what concerns reality constraints than the set of CFT rules gathered in the
  conformal bootstrap review {\cite{RMP}}. They are also more economical: e.g.\ Ref.\ {\cite{RMP}} assumed OS reflection positivity and clustering for
  $n$-point functions, which for us will be theorems to prove, not
  assumptions. Ref.\ {\cite{RMP}} also included some constraints on the CFT
  data which emerge when considering CFT in Minkowski signature, most notably
  the Averaged Null Energy Condition (ANEC). In this paper we will establish
  all Wightman axioms for scalar Minkowski CFT 4-point functions from the
  Euclidean CFT axioms, but we will not discuss ANEC. A proof of ANEC
  {\cite{Faulkner:2016mzt}} has been given using the Haag-Kastler axioms for
  general QFT. CFT arguments have also been given in
  {\cite{Hartman:2016lgu,Kravchuk:2018htv}}, but they rely on some assumptions
  which have not been rigorously proven from axioms. It would be interesting
  to fill these gaps and establish ANEC as a theorem from Euclidean CFT
  axioms.\footnote{\label{caveats} {The argument in
  {\cite{Hartman:2016lgu}} uses an OPE asymptotic expansion on the second
  sheet, outside of the range of convergence of the OPE rigorously implied by
  the Euclidean CFT axioms (see App.\ \ref{Tom} where we review this method
  going back to {\cite{Hartman:2015lfa}}).} In {\cite{Kravchuk:2018htv}} ANEC
  is derived using manipulations with a generalization of the Lorentzian
  inversion formula of {\cite{Caron-Huot:2017vep}}, of which some have not
  been rigorously justified. For example, the derivation starts with the
  Euclidean inversion formula, which is readily justified from Harmonic
  analysis only for external scaling dimensions on principal series and
  square-integrable correlators, none of which is generically the case in the
  required setup.}
\end{remark}

\section{Euclidean CFT $\Rightarrow$ Osterwalder-Schrader}\label{CFTtoOS}
In this section we will discuss some simple consequences of CFT axioms and in
particular will show that they imply OS axioms for 4-point functions (the case of
higher-point functions is more subtle and is discussed in App.\ \ref{OShigher}). Here we will prove only the OS reflection-positivity and the
cluster property. The ``Euclidean temperedness'' bound {\eqref{OSmod}} will
follow from our arguments in the following sections, where we establish
power-law bounds on CFT correlation functions. The remaining OS axioms are a
subset of the CFT axioms.

\subsection{OS reflection positivity}\label{OSfromCFT}

{In this section we prove OS positivity
for compactly-supported test functions. The extension to Schwartz functions is easy once 
we establish~\eqref{OSmod} for CFT correlators, see Remark \ref{lastOS}.}

To someone familiar with OPE in CFTs, the goal of this section may seem straightforward to
achieve: in usual CFT treatments, one can express any state as a sum of ``single operator states'',
created by action of a single insertion of primaries and their descendants on the vacuum, and our two-point
positivity axiom makes sure that the inner products of such states are positive-semidefinite.
The part of this wisdom that doesn't work immediately in our setup is being able to express
states in the OS Hilbert space as infinite OPE sums. Our OPE axiom is much weaker than this statement,
in particular the truncation order for the OPE needed to achieve a given precision $\epsilon$ in
\eqref{OPEgeneral} may a priori depend on everything in the left-hand side, including coordinates $x_i$ and
the ``spectator'' operators $\mathcal{O}_m,\ldots$, in an arbitrary fashion. (On the contrary, for a Hilbert space
OPE statement, the truncation order is good for all choices of spectator operators as long as an 
appropriate norm remains bounded by a fixed constant.)  In spite of this difficulty, in this and the next section we will be able to recover reflection positivity for $n$-point functions up to $n\le 4$ and Hilbert space OPE convergence for states created by up to $2$ operator insertions, but doing so requires some care. }

First let us slightly reformulate the OPE convergence property. Consider an
$n$-point correlation function with operators inserted at $x_1 \ldots x_n .$
Let $S$ be a hyperplane and $x_0$ be a point such that $x_1, x_2, x_0$ are
on one side of $S$ while all the other points $x_i$, $i>2$, are on the other side.
Using a conformal transformation we can map $S$ to a sphere $S'$ so that $x_0$
is mapped to the center of $S'$ which we denote by $x_0'$. Let $x_i'$ denote
the positions of all the other points $x_i$ after this map. We can then use
the OPE {\eqref{OPEgeneral}} for the correlation function evaluated at $x_i'$,
\begin{equation}
  \langle \mathcal{O}_i^{(\mu)} (x'_1) \mathcal{O}_j^{(\nu)} (x'_2)
  \mathcal{O}_m^{(\rho)} (x'_3) \cdots \rangle = \sum_k \sum_{a = 1}^{N_{i j
  k}} f_{i j k}^a C_{a, (\lambda)}^{(\mu) (\nu)} (x'_1, x'_2, x_0',
  \partial_{0'}) \langle (\mathcal{O}^{\dagger}_k)^{(\lambda)} (x'_0)
  \mathcal{O}_m^{(\rho)} (x'_3) \cdots \rangle .
\end{equation}
Transforming this expansion term-by term to the original coordinates $x_i$ we
find the convergent expansion (with convergence understood in the same sense
as in the previous section\footnote{Careful reading of the argument below
shows that, in the 4-point case, the requirement of absolute convergence of
{\eqref{doublyinf}} could be replaced by a weaker requirement that we can find
any subsequence of partial sums of {\eqref{doublyinf}} which approximates the
correlator pointwise.})
\begin{equation}
  \langle \mathcal{O}_i^{(\mu)} (x_1) \mathcal{O}_j^{(\nu)} (x_2)
  \mathcal{O}_m^{(\rho)} (x_3) \cdots \rangle = \sum_k \sum_{a = 1}^{N_{i j
  k}} f_{i j k}^a \frak{C}_{a, (\lambda)}^{(\mu) (\nu)} (x_1, x_2, x_0,
  \mathcal{D}_0) \langle (\mathcal{O}^{\dagger}_k)^{(\lambda)} (x_0)
  \mathcal{O}_m^{(\rho)} (x_3) \cdots \rangle . \label{OPEplanar}
\end{equation}
Here the differential operators $\mathcal{D}^{(\alpha)}$ are simply the
derivatives $\left( \frac{\partial}{\partial x'} \right)^{(\alpha)}$ expressed
in the original coordinates $x$, and conjugated by the conformal
transformation factor of $\mathcal{O}_k^{\dagger}$. The functions
$\frak{C}_{a, (\lambda)}^{(\mu) (\nu)}$ are obtained from $C_{a,
(\lambda)}^{(\mu) (\nu)}$ by our conformal transformation. The important point
is that truncation of $C_{a, (\lambda)}^{(\mu) (\nu)}$ in order of derivatives
$\partial^{(\alpha)}$ corresponds to truncation of $\frak{C}_{a,
(\lambda)}^{(\mu) (\nu)}$ in order of operators $\mathcal{D}^{(\alpha)}$.

We now specialize to $S$ being the $x^0 = 0$ plane, $x_0 = x_S = (- 1, 0,
\ldots),$ and take $S'$ to be the unit sphere with the center $x_0' = 0$. Then
the derivatives $\left( \frac{\partial}{\partial x'} \right)^{(\alpha)}
\mathcal{O}_k^{(\mu)} (x_0')$ are eigenstates of the standard dilatation
generator $D$ with eigenvalues $\Delta_k + | \alpha |$. Note that $D$ has two
fixed points: $x_0' = 0$ and infinity. Applying our conformal map, we find
that the derivatives $\mathcal{D}^{(\alpha)} \mathcal{O}_k^{(\mu)} (x_S)$ are
in turn eigenstates, with the same eigenvalues, of the conformal generator $D'
= (K^0 - P^0) / 2$ that preserves $x_S$ and $x_N = x_S^{\theta}$ (which is the
image of infinity under our conformal map) and acts by dilatations near these
two points. This, together with the conformal invariance and diagonality of
2-point functions, implies
\begin{equation}
  \langle (\mathcal{D}^{(\alpha)})^{\theta} \mathcal{O}_j^{\dagger (\mu)}
  (x_N) \mathcal{D}^{(\beta)} \mathcal{O}_k^{(\nu)} (x_S) \rangle \propto
  \delta_{| \alpha |, | \beta |} \delta_{j, k}, \label{descOrth}
\end{equation}
where $(\mathcal{D}^{(\alpha)})^{\theta}$ is obtained from
$\mathcal{D}^{(\alpha)}$ by replacing $x \rightarrow x^{\theta}$.

The OPE {\eqref{OPEplanar}} gives an expansion for ket states $| \Psi \rangle
\in \mathcal{H}_0^{\tmop{OS}}$ created by two local operators in terms of ket
states created by a single operator, which are elements of
$\mathcal{H}_0^{\tmop{CFT}}$. We would like to have a dual version of this
expansion for bra states $\langle \Psi |$. For this we need to understand how
the OPE transforms under the conjugation. Note that the formal differential
operators $\frak{C}_{a, (\lambda)}^{(\mu) (\nu)} (x_1, x_2, x_0,
\mathcal{D}_0)$ can be uniquely determined by the equation
\begin{equation}
  \langle \mathcal{O}_1^{(\mu)} (x_1) \mathcal{O}_2^{(\nu)} (x_2)
  \mathcal{O}_3^{(\rho)} (x_3) \rangle_a = \frak{C}_{a, (\lambda)}^{(\mu)
  (\nu)} (x_1, x_2, x_0, \mathcal{D}_0) \langle
  (\mathcal{O}^{\dagger}_3)^{(\lambda)} (x_0) \mathcal{O}_3^{(\rho)} (x_3)
  \rangle,
\end{equation}
where it is understood that the points are arranged as above, so that the
formal sum defined by $\frak{C}_{a, (\lambda)}^{(\mu) (\nu)} (x_1, x_2, x_0,
\mathcal{D}_0)$ actually converges. By applying complex conjugation on both
sides and using the 2-point and 3-point hermiticity constraints
{\eqref{hermicity2-point}} and {\eqref{CFThermiticity}} we find
\begin{equation}
  \langle (\mathcal{O}^{\dag}_1)^{(\mu)} (x_1^{\theta})
  (\mathcal{O}^{\dag}_2)^{(\nu)} (x_2^{\theta})
  (\mathcal{O}^{\dag}_3)^{(\rho)} (x_3^{\theta}) \rangle_a = [\frak{C}_{a,
  (\lambda)}^{(\mu) (\nu)} (x_1, x_2, x_0, \mathcal{D}_0)]^{\asterisk}
  \langle \mathcal{O}_3^{(\lambda)} (x_0^{\theta})
  (\mathcal{O}^{\dag}_3)^{(\rho)} (x_3^{\theta}) \rangle,
\end{equation}
which implies that
\begin{equation}
  [\frak{C}_{a, (\lambda)}^{(\mu) (\nu)} (x_1^{\theta}, x_2^{\theta},
  x_0^{\theta}, \mathcal{D}_0^{\theta})]^{\asterisk} =
  \widetilde{\frak{C}}_{a, (\lambda)}^{(\mu) (\nu)} (x_1, x_2, x_0,
  \mathcal{D}_0), \label{OPEkernelDagger}
\end{equation}
where $\widetilde{\frak{C}}_{a, (\lambda)}^{(\mu) (\nu)} (x_1, x_2, x_0,
\mathcal{D}_0)$ is the formal sum that appears in the OPE for operators
with conjugate-reflected quantum numbers.

We are now ready to prove OS positivity for 4-point functions. Let $|
\Psi_0 \rangle$ be an OS ket state involving at most two local operators, i.e.\footnote{In the notation of section \ref{OSaxioms} this could be written as $\sum_{i,k,\alpha,\beta}| \psi(f_{i,j,(\alpha),(\beta)},\,\mathcal{O}_i^{(\alpha)},\,\mathcal{O}_j^{(\beta)})\rangle$.}
{\begin{equation}
		| \Psi_0 \rangle =\sum_{i, j, \alpha, \beta} \int dx_1\, dx_2\, f_{i, j, (\alpha)
			(\beta)} (x_1, x_2)| \mathcal{O}_i^{(\alpha)} (x_1) \mathcal{O}_j^{(\beta)} 
		(x_2) \rangle, \label{H2}
\end{equation}}
where $f (x_{1,} x_2)$ is a compactly supported test function vanishing unless
$0 > x_1^0 > x_2^0$. (Terms with one or no operators are realized by setting
one or both operators to the identity.) Since by Euclidean CFT axioms the
correlation functions are real-analytic, the integrals that appear in the
expression for $\langle \Psi_0 | \nobracket \Psi_0 \rangle$ can be
approximated by finite Riemann sums, reflection-symmetric if necessary. This
implies that, for any $\varepsilon > 0$, we can pass from $| \Psi_0 \rangle$
to a ket state $| \Psi \rangle$ which is created by a \tmtextit{finite}
linear combination of insertions of up to two local operators with $x_1^0,
x_2^0 < 0$: 
{
	\begin{equation}
  | \Psi \rangle = \sum_{i, j,  \alpha, \beta, x_1, x_2} c_{i, j,  (\alpha)
  (\beta), x_1,x_2} |\mathcal{O}_i^{(\alpha)} (x_1) \mathcal{O}_j^{(\beta)} (x_2) \rangle,
  \label{H2fin}
\end{equation}
}
and has the property that
\begin{equation}
  | \langle \Psi_0 | \Psi_0 \rangle - \nobracket \langle \Psi_0 | \Psi
  \rangle \nobracket | < \varepsilon, \qquad | \langle \Psi_0 | \Psi
  \rangle - \nobracket \langle \Psi | \Psi \rangle \nobracket | <
  \varepsilon, \label{PsiPsi0prop}
\end{equation}
so that as a result
\begin{equation}
  | \langle \Psi_0 | \Psi_0 \rangle - \nobracket \langle \Psi | \Psi
  \rangle \nobracket | < 2 \varepsilon \label{stp}
\end{equation}
We are therefore reduced to proving the nonnegativity of $\langle \Psi |
\Psi \rangle \nobracket$.

Now, the OPE convergence axiom implies that, for any $\varepsilon > 0$,
starting from $| \Psi \rangle$ and using the OPE {\eqref{OPEplanar}} in the
half-space $x^0 < 0$, we can construct a state $| \psi \rangle \nobracket = |
\psi_{ \Lambda} \rangle \in \mathcal{H}_0^{\tmop{CFT}}$ created by a finite
linear combination of local operators at $x_S$ such that
\begin{equation}
  | \langle \Psi | \psi_{} \rangle - \langle \Psi | \Psi \rangle |
  < \varepsilon . \label{appr1}
\end{equation}
Here $\Lambda$ is an OPE truncation cutoff which we need to increase
appropriately as $\varepsilon$ gets smaller. Namely, we will obtain $|
\psi_{} \rangle$ by keeping in the OPE all terms with $\Delta_k + | \alpha
| < \Lambda$, where $\Delta_k$ is the dimension of a primary
$\mathcal{O}_k$ appearing in the OPE, and $\alpha$ is the order of the
descendant $\mathcal{D}^{(\alpha)} \mathcal{O}_k$.

We can then repeat this procedure in the upper half-plane $x^0 > 0$ and
construct a state $\langle \psi' | = \langle \psi_{\Lambda'} |
\nobracket$ of local operators inserted at $x_N$ such that
\begin{equation}
  | \langle \Psi | \psi \rangle - \nobracket \langle \psi' | \psi
  \rangle \nobracket | < \varepsilon . \label{psiprime}
\end{equation}
Eq.\ {\eqref{OPEkernelDagger}} and the reality constraint
{\eqref{OPEcoeff:conjugation}} for the OPE coefficients imply that the state
$\langle \psi' |$ differs from $\langle \psi |$ at most by where the OPE
expansion was truncated. Furthermore, we can always assume that $\langle
\psi' |$ contains at least all the terms that $\langle \psi |$ does (i.e.\ $\Lambda' \geqslant \Lambda$), since adding more OPE terms to $\langle \psi'
|$ can only improve {\eqref{psiprime}}.

Eq.\ {\eqref{descOrth}} then implies that $\langle \psi' | \psi \rangle
\nobracket = \langle \psi | \psi \rangle \nobracket$ and therefore
\begin{equation}
  | \langle \Psi | \psi \rangle - \nobracket \langle \psi | \psi \rangle
  \nobracket | < \varepsilon . \label{appr2}
\end{equation}
Combining this with {\eqref{appr1}} we conclude:
\begin{equation}
  | \langle \Psi | \Psi \rangle - \nobracket \langle \psi | \psi \rangle
  \nobracket | < 2 \varepsilon, \label{PsiPsi}
\end{equation}
Since by the CFT positivity axiom $\langle \psi | \psi \rangle \nobracket$ is
non-negative, we conclude that $\langle \Psi | \Psi \rangle \nobracket$ is
also non-negative. This completes the proof of OS positivity for states
created by up to two operator insertions.

\subsection{Denseness and Hilbert space implications}\label{Hilbert}

Here we will describe some useful byproducts of the just given argument. Note
that {\eqref{appr1}} and {\eqref{appr2}}, in addition to {\eqref{PsiPsi}},
also implies
\begin{equation}
  \| \Psi - \psi_{\Lambda} \| \equiv \langle \Psi - \psi_{\Lambda} | \Psi -
  \psi_{\Lambda} \rangle = \langle \Psi | \Psi \rangle - \langle \Psi |
  \psi_{\Lambda} \rangle + \langle \psi_{\Lambda} | \psi_{\Lambda} \nobracket
  \rangle - \langle \psi_{\Lambda} | \Psi \rangle < 2 \varepsilon .
  \label{PsiPsiDense}
\end{equation}
This means that any $| \Psi \rangle$ can be approximated arbitrarily well
by a $| \psi_{\Lambda} \rangle \nobracket \in \mathcal{H}_0^{\tmop{CFT}}$. In
other words, $\mathcal{H}_0^{\tmop{CFT}}$ is a dense subspace of the vector
space of $\Psi$'s.

This fact has a simple but quite powerful consequence involving the CFT
Hilbert space $\mathcal{H}^{\tmop{CFT}}$, defined in Sec.\ \ref{ECFTax}
as the completion of $\mathcal{H}_0^{\tmop{CFT}}$. Eq.\ {\eqref{PsiPsiDense}}
implies, using the triangle inequality $\| \psi_{\Lambda_1} - \psi_{\Lambda_2}
\| \leqslant \| \Psi - \psi_{\Lambda_1} \| + \| \Psi - \psi_{\Lambda_2} \|$,
that the states $| \psi_{\Lambda} \rangle \nobracket$ corresponding to smaller
and smaller $\varepsilon$ form a Cauchy sequence. Therefore, these states have
a limit in $\mathcal{H}^{\tmop{CFT}}$ as $\Lambda \rightarrow \infty$,
which we call $| \psi_{\infty} \rangle \nobracket$. This $\psi_{\infty}$ is
nothing but the full, untruncated, OPE expansion of the state $\Psi$. We claim
that the map mapping $\Psi$'s to the corresponding $\psi_{\infty}$'s is
isometric, i.e.\ it preserves the inner products:
\begin{equation}
  \langle \Phi | \Psi \rangle = \langle \varphi_{\infty} | \psi_{\infty}
  \rangle . \label{OPEhilbert}
\end{equation}
Here the inner product on the l.h.s.\ is the OS inner product, computed using
CFT 4-point functions with operators inserted in the lower and upper
half-spaces, while the inner product in the r.h.s.\ is the \
$\mathcal{H}^{\tmop{CFT}}$ inner product, defined as the limit of
$\mathcal{H}_0^{\tmop{CFT}}$ inner product removing the cutoff:
\begin{equation}
  \langle \varphi_{\infty} | \psi_{\infty} \rangle \assign \lim_{\Lambda
  \rightarrow \infty} \langle \varphi_{\Lambda} | \psi_{\Lambda} \rangle .
  \label{HCFTinner}
\end{equation}
The proof of {\eqref{OPEhilbert}} is straightforward. We write:
\begin{eqnarray}
  & \langle \varphi_{\Lambda} | \psi_{\Lambda} \rangle = \langle \Phi +
  (\varphi_{\Lambda} - \Phi) | \Psi + (\psi_{\Lambda} - \Psi) \rangle =
  \langle \Phi | \Psi \rangle + \tmop{err} (\Lambda), & \\
  & \tmop{err} (\Lambda) = \langle \varphi_{\Lambda} - \Phi | \Psi \rangle +
  \langle \Phi | \psi_{\Lambda} - \Psi \rangle + \langle \varphi_{\Lambda} -
  \Phi | \psi_{\Lambda} - \Psi \rangle . & 
\end{eqnarray}
By Eq.\ {\eqref{PsiPsiDense}} we know that $\| \Psi - \psi_{\Lambda} \|$, $\|
\Phi - \varphi_{\Lambda} \|$ go to zero as $\Lambda \rightarrow \infty$.
Hence, $\tmop{err} (\Lambda) \rightarrow 0$ and {\eqref{OPEhilbert}} is
proved.

Eqs.\ {\eqref{OPEhilbert}} and {\eqref{HCFTinner}} mean that OPE converges in
the sense of the CFT Hilbert space. This property is often used in the CFT
literature (see Sec.\ \ref{Eucl4-point}). Note that CFT axioms in Sec.\ \ref{ECFTax} only assume pointwise OPE convergence, which is a weaker
statement. Curiously, by the given arguments, pointwise OPE convergence plus
CFT positivity imply Hilbert space convergence, at least for the 4-point
functions.

In the above argument we used the Hilbert space $\mathcal{H}^{\tmop{CFT}}$,
the completion of $\mathcal{H}_0^{\tmop{CFT}}$. We may introduce a second
Hilbert space as the completion of the space of $\Psi$'s, call it
$\mathcal{H}^{(2)}$. This Hilbert space contains e.g.\ all $\Psi_0$ states \
{\eqref{H2}}. (Similarly to {\eqref{PsiPsiDense}}, Eqs.\ {\eqref{PsiPsi0prop}}
and {\eqref{stp}} imply that $\Psi$ states are dense in the $\Psi_0$ states.)
Although $\mathcal{H}^{(2)}$ may look like a ``bigger'' space than
$\mathcal{H}^{\tmop{CFT}}$, actually it's not. Indeed, the map from $\Psi$
to $\psi_{\infty}$ extends to an isometric map from $\mathcal{H}^{(2)}$
to $\mathcal{H}^{\tmop{CFT}}$. In other words, Eq.\ {\eqref{OPEhilbert}}
remains true for any $\Phi, \Psi \in \mathcal{H}^{(2)}$. Since we can view $\cH^\text{CFT}_0$ as a
subspace of the space of $\Psi$'s, we also have that this map is surjective. The Hilbert spaces
$\mathcal{H}^{(2)}$ and $\mathcal{H}^{\tmop{CFT}}$ are thus unitarily
equivalent and may be identified. 

\subsection{OS clustering}\label{OSclustering}

Here we will derive the OS clustering {\eqref{clusterint}} from CFT axioms. We
will consider only $m + n = 4$, i.e.\ when the left-hand side of
{\eqref{clusterint}} can be written in terms of a 4-point function (this
also covers $m + n < 4$ since we can choose some of $\varphi$'s or $\chi$'s to
be the trivial identity field). We can assume that all $\chi$'s and
$\varphi$'s in {\eqref{clusterint}} are primary fields, since any derivatives
can be integrated by parts.

First as a general remark, assuming OS positivity, clustering (for any $m, n$)
only needs to be established point-wise, i.e.
\begin{equation}
  \lim_{\lambda \rightarrow \infty} \langle \chi^{\dagger}_m (y_m) \ldots
  \chi_1^{\dagger} (y_1) \varphi_1 (x_1 + \lambda a) \ldots \varphi_n (x_n +
  \lambda a) \rangle = \langle \chi^{\dagger}_m (y_m) \ldots
  \chi_1^{\dagger} \nobracket (y_1) \rangle \langle \nobracket \varphi_1 (x_1)
  \ldots \varphi_n (x_n) \rangle .
\end{equation}
This follows from the dominated convergence theorem. Indeed, OS positivity and
translation invariance (recall that we only consider $a^0 = 0$!) implies a
uniform in $\lambda$ bound\footnote{This follows, similarly to
{\eqref{OSnaive1}}, by applying OS positivity to the state $| \Psi \rangle = |
\psi (F, \varphi_1 \ldots \varphi_n) \rangle + e^{i \alpha} | \psi (G, \chi_1
\ldots \chi_n) \rangle \nobracket$ where $F, G$ tend to delta functions
localizing the operators at points $x_1 + \lambda a, \ldots, x_n + \lambda a$
and $y_1^{\theta}, \ldots, y_n^{\theta}$ respectively, and choosing the phase
$\alpha$ appropriately.}
\begin{eqnarray}
  & | \langle \chi^{\dagger}_m (y_m) \ldots \chi_1^{\dagger} (y_1)
  \varphi_1 (x_1 + \lambda a) \ldots \varphi_n (x_n + \lambda a) \rangle |^2
  \leqslant & \\
  &  \langle \chi^{\dagger}_m (y_m) \ldots \chi_1^{\dagger} (y_1) \chi_1
  (y_1^{\theta}) \ldots \chi_n (y_n^{\theta}) \rangle \times \langle
  \varphi^{\dagger}_n (x_n^{\theta}) \ldots \varphi_1^{\dagger}
  (x_1^{\theta}) \varphi_1 (x_1) \ldots \varphi_n (x_n) \rangle . &  \nonumber
\end{eqnarray}
It then follows that the integrand in {\eqref{clusterint}} is bounded by a
$\lambda$-independent integrable function, and the dominated convergence theorem is
applicable.

Going back to the 4-point function case which is our focus in this section,
let us start with $m = n = 2$. Since we already proved OS positivity for
states created by at most two operators (Sec.\ \ref{OSfromCFT}), in this
case we can rely on the above observation and we only need to check the
point-wise limit:
\begin{equation}
  \lim_{\lambda \rightarrow \infty} \langle \chi^{\dagger}_2 (y_2)
  \chi_1^{\dagger} (y_1) \varphi_1 (x_1 + \lambda a) \varphi_2 (x_2 + \lambda
  a) \rangle = \langle \chi^{\dagger}_2 (y_2) \chi_1^{\dagger} \nobracket
  (y_1) \rangle \langle \nobracket \varphi_1 (x_1) \varphi_2 (x_2) \rangle .
\end{equation}
To see this, we apply the OPE {\eqref{OPEplanar}} to $\chi^{\dagger}_2
(y_2) \chi_1^{\dagger} (y_1)$ in the left-hand side. The results of the
previous section imply that this OPE can be interpreted as expanding the state
in the Hilbert space $\mathcal{H}$ created by $\chi^{\dagger}_2 (y_2)
\chi_1^{\dagger} (y_1)$ in terms of eigenstates of $(K^0 - P^0) / 2$. This
implies that the OPE converges uniformly in $\lambda$ since the norm of $|
\varphi_1 (x_1 + \lambda a) \varphi_2 (x_2 + \lambda a) \rangle$ is
independent of $\lambda$ due to translation invariance ($a^0 = 0$!). We can
thus use the OPE to approximate $\langle \chi^{\dagger}_2 (y_2)
\chi_1^{\dagger} (y_1) \varphi_1 (x_1 + \lambda a) \varphi_2 (x_2 + \lambda a)
\rangle$ for any $\lambda$ to within any $\varepsilon > 0$ by a finite sum of
3-point functions of the form $\langle (\mathcal{D}^{(\alpha)})^{\theta}
\mathcal{O}_i^{(\nu)} (x_N) \varphi_1 (x_1 + \lambda a) \varphi_2 (x_2 +
\lambda a) \rangle$ times some $\lambda$-independent coefficients. Of these,
only the term corresponding to the identity operator, i.e.\ the one with
$(\mathcal{D}^{(\alpha)})^{\theta} \mathcal{O}_i^{(\nu)} = 1$, does not decay
with $\lambda$. It is easily verified that the contribution of this term is
precisely equal to $\langle \chi^{\dagger}_2 (y_2) \chi_1^{\dagger}
\nobracket (y_1) \rangle \langle \nobracket \varphi_1 (x_1) \varphi_2 (x_2)
\rangle$. This finishes the proof of clustering for $m = n = 2.$

In the remaining case $m = 3, n = 1$, we will consider the limit for the
integral (since we have not yet proved OS positivity for states involving 3
operators). Note that $\langle \varphi_1 (x) \rangle = 0$ unless $\varphi_1
\propto 1$, in which case the cluster property becomes trivial. This means
that we only need to prove
\begin{equation}
    \lim_{\lambda \rightarrow \infty} \int d x\, d y\, \overline{g
    (y_1^{\theta}, y_2^{\theta}, y_3^{\theta})} f (x_1) \langle
    \chi_3^{\dagger} (y_3) \chi_2^{\dagger} (y_2) \chi_1^{\dagger} (y_1)
    \varphi_1 (x_1 + \lambda a) \rangle = 0 \label{m3n1}
\end{equation}
for $\varphi_1 \neq 1.$ This in turn is a very simple consequence of conformal
invariance, and of the fact that $\Delta_{\varphi} > 0$ for all operators but
the identity. We will be somewhat schematic. The main point is that the
configuration $(y_3, y_2, y_1, \infty)$ is nonsingular from the conformal
kinematics point of view (for which the conformal compactification $S_d$ of
the Euclidean space $\mathbb{R}^d$ is the appropriate arena). One way to see
it is that the cross ratios are finite in this limit. Thinking in a pedestrian
way, we can find a conformal transformation $g_{\lambda}$ which will move
points $(y_3, y_2, y_1, x_1 + \lambda a)$ to some points which have finite
limits as $\lambda \rightarrow \infty$. Transforming the integral
{\eqref{m3n1}} to this coordinate system, the only singular behavior at large
$\lambda$ comes from the Weyl transformation factor as the operator
$\varphi_1$ is moved from near infinity to a finite position. This factor
implies that the integral {\eqref{m3n1}} will go to zero as $\lambda^{- 2
\Delta_{\varphi}}$, proving clustering in this particular case. See Sec.\ \ref{3+1MinkCluster} for additional details. More generally, the same argument
will also work for arbitrary $m$ as long as $n = 1$.

\section{Euclidean CFT $\Rightarrow$ Wightman: Basic
strategy}\label{strategy}

We will now pass to the main task of our paper: given a Euclidean unitary CFT,
recover Minkowski correlators and show that they satisfy Wightman axioms.

Let us first discuss this problem without assuming conformal invariance.
Suppose we know correlators $G^E_n (x_1, \ldots, x_n)$ of a scalar field in a
Euclidean QFT which is translationally and rotationally invariant, but not
necessarily conformally invariant. We are assuming, as discussed above, that
the correlators $G^E_n$ are defined and real-analytic (see footnote
\ref{real-anal}) for non-coincident Euclidean points ($x_k \in \mathbb{R}^d,
x_i \neq x_j$).

We would like to recover correlators in Minkowski signature. We are only
interested here in Wightman correlation functions, where the operator ordering
is fixed while the Minkowski time coordinates vary independently. We will call
them simply ``Minkowski correlators''. Starting from this section we will
focus on correlators of scalar primaries; correlators of fields in general
$\tmop{SO} (d)$ representations will be considered in our future publication
{\cite{paper2a}}.

To understand the equations below, it helps to keep in mind the basic
heuristic. If we had a Hilbert space, field operators $\phi$, and a
Hamiltonian $H$, then the Minkowski correlators would be given by
\begin{equation}
G_n^M (x^M_1, \ldots, x^M_n) = \langle 0 | \phi \left( 0,
   \mathbf{x}_1 \right) e^{- i H (t_1 - t_2)} \phi \left( 0, \mathbf{x}_2
   \right) e^{- i H (t_2 - t_3)} \ldots | 0 \rangle, \qquad x_k^M
   = \left( t_k, \mathbf{x}_k \right) . 
 \end{equation}
while the Euclidean correlators by
\begin{equation}
G^E_n (x_1, \ldots, x_n) = \langle 0 | \phi \left( 0,
   \mathbf{x}_1 \right) e^{- H (\epsilon_1 - \epsilon_2)} \phi \left( 0,
   \mathbf{x}_2 \right) e^{- H (\epsilon_2 - \epsilon_3)} \ldots | 0
   \rangle, \qquad x_k = \left( \epsilon_k, \mathbf{x}_k \right),
   \quad \epsilon_k > \epsilon_{k + 1}, 
  \end{equation}
We stress that the r.h.s.\ of these two equations will never be used in this
paper. We just use them to illustrate the intuitive property that $G_n^M$ can
be recovered from $G^E_n$ by analytic continuation $\epsilon_k \rightarrow
\epsilon_k + i t_k$ and sending $\epsilon_k \rightarrow 0$ while respecting
$\epsilon_k > \epsilon_{k + 1}$. In other words, there is a holomorphic function
$G_n$ which reduces to $G_n^M$ in one limit and to $G^E_n$ in another. The
precise domain of analyticity of $G_n$ can be clarified from the Wightman
axioms {\cite{Streater:1989vi}}. Their basic consequence is that Minkowski
correlators can be analytically continued to the ``forward tube'' (see below),
which contains {the configuration space of Euclidean time-ordered points as a section.} In this paper we will derive
Wightman axioms, rather than assume them. In particular, we will carry out
analytic continuation to the forward tube just from the properties of the
Euclidean correlators.

Let us put these observations into a definition of what it means to recover
$G_n^M$ from $G^E_n$. We consider $n$-point configurations with
complexified coordinates:
\begin{equation}
  c = (x_1, \ldots, x_n), \qquad x_k = \left( x_k^0, \mathbf{x}_k \right) \in
  \mathbb{C}^d . \label{ccompl}
\end{equation}
The ``forward tube'' $\mathcal{T}_n$ is defined as the set of all such
configurations for which the differences $y_k = x_k - x_{k + 1} = (y_k^0,
\mathbf{y}_k) \in \mathbb{C}^d$ satisfy the constraint:
\begin{equation}
  \tmop{Re} y_k^0 > | \tmop{Im} \mathbf{y}_k | \label{forward}, \qquad k = 1,
  \ldots, n - 1.
\end{equation}
Equivalently, this means that vectors $\tmop{Im} (i y_k^0, \mathbf{y}_k)$
belong to the open forward light cone of $\mathbb{R}^{1, d - 1}$, explaining
the name ``forward tube''.\footnote{The just given definition of the forward tube is adapted to the Euclidean coordinates. In Sec.\ \ref{executive}, Eq.\ \eqref{forward-tube-Mink}, we wrote the same definition in terms of Minkowski coordinates $x_k^M=(-i x_k^0, \mathbf{x}_k)$.}

Let $\mathcal{D}_n$ be the subset of the forward tube consisting of the
configurations with real spatial parts $\mathbf{x}_k$. Equivalently, we have:
\begin{equation}
  \label{def:Dn} \mathcal{D}_n = \{\, c\ |\ x_k^0 = \epsilon_k + i
  t_k,\ \mathbf{x}_k \in \mathbb{R}^{d - 1},\ \epsilon_1 > \epsilon_2 > \ldots >
  \epsilon_n \} .
\end{equation}
Finally, we denote by $\mathcal{D}^E_{n}$ the Euclidean part of
$\mathcal{D}_n$ obtained by setting all $t_k = 0$.

Minkowski correlators are then defined by the following two-step procedure:

\tmtextbf{Step 1.} One finds an extension $G_n^E$ from $\mathcal{D}^E_n$ to a
function $G_n (x_1, \ldots, x_n)$ such that one of the two conditions is
satisfied:
\begin{equation}
  \text{$G_n$ is defined on $\mathcal{T}_n$, and holomorphic in all variables
  $x_k^0$, $\mathbf{x}_k$,} \label{Gass1}
\end{equation}
or
\begin{equation}
  \text{$G_n$ is defined on $\mathcal{D}_n$, is holomorphic in variables $x_k^0$
  and is real-analytic in $\mathbf{x}_k$.} \label{Gass}
\end{equation}
Real analyticity in $\mathbf{x}_k$ means that $G_n$ can be extended from
$\mathcal{D}_n$ to a holomorphic function defined on a neighborhood of
$\mathcal{D}_n$ which allows small imaginary parts for $\mathbf{x}_k$. This
neighborhood can be arbitrarily small. Condition {\eqref{Gass}} is thus weaker
than {\eqref{Gass1}} and may be easier to check, although Theorem \ref{ThVlad}
below shows that the two conditions are equivalent under the ``powerlaw
bound'' assumption.

\tmtextbf{Step 2.} Minkowski correlators are defined as the limits of $G_n$
from inside $\mathcal{D}_n$ by sending $\epsilon_i \rightarrow 0$:\footnote{We will see in Theorem \ref{ThVlad} that this limit has to be taken
along a fixed direction and is independent of direction. If the stronger
condition {\eqref{Gass1}} holds, the limit can in fact be taken along any
direction in the forward null cone.}
\begin{equation}
  G_n^M (x_1^M, \ldots, x_n^M) = \lim_{\epsilon_i \rightarrow 0} G_n (x_1,
  \ldots, x_n), \qquad x_k^M = \left( t_k, \mathbf{x}_k \right), \quad k = 1
  \ldots n \label{limit} .
\end{equation}
As mentioned several times, Minkowski correlators are expected to be tempered
distributions, and therefore this limit has to be understood in the
distributional sense. To show that the limit exists and has properties
required by Wightman axioms, one relies on the following powerful theorem of
several complex variables: 

\begin{theorem}[Vladimirov's theorem]
  \label{ThVlad}Suppose that the function $G_n$ is translation- and rotation-invariant, satisfies
  {\tmem{{\eqref{Gass}}}} and in addition satisfies everywhere on
  $\mathcal{D}_n$ the following `powerlaw bound' with some positive constants
  $C_n, A_n, B_n$:
  \begin{gather}
    |G_n (x_1, \ldots, x_n)| \leqslant C_{n}  \left( 1 + \max_k 
    \dfrac{1}{\epsilon_k - \epsilon_{k + 1}} \right)^{A_n}  (1 + \max_i  | x_i
    - x_{i + 1} |)^{B_n},  \label{powerlawbound}\\
    |x_i - x_j |^2 \equiv  | \epsilon_i + it_i - \epsilon_j - it_j |^2 +
| \mathbf{x}_i - \mathbf{x}_j|^2.  \label{absdef} 
  \end{gather}
  Then:
  \begin{enumerate}
    \item Limit {\tmem{{\eqref{limit}}}} exists in the sense of tempered
    distributions. The limiting value $G_n^M$ is a tempered distribution.\footnote{This part of the theorem does not need translation- and rotation-invariance of $G_n$.}
    
    \item The distribution $G_n^M$ is Poincar\'e-invariant and satisfies the
    Wightman spectral condition. I.e.\ its Fourier transform $W (p_1, \ldots,
    p_{n - 1})$ with respect to the differences $x_k^M - x_{k + 1}^M$ has
    support in the product of closed forward light cones, which is the region
    $E_k \geqslant | \mathbf{p}_k |$, $p_k = (E_k, \mathbf{p}_k)$.
    
    \item The function $G_n$ can be extended from a
    holomorphic function on the whole forward tube $\mathcal{T}_n$. The limit
    {\eqref{limit}} exists also from the forward tube, i.e.\ when $\tmop{Re}
    y_k^0 \rightarrow 0, | \tmop{Im} \mathbf{y}_k | \rightarrow 0$, satisfying
    {\eqref{forward}}.
  \end{enumerate}
\end{theorem}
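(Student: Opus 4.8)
\emph{Proof proposal (sketch of the approach).} This is a theorem about boundary values of functions holomorphic on tube domains, and the plan is to reduce it to that classical setting and then run the Fourier--Laplace machinery. Using translation invariance I would first pass to the difference variables $\xi_k = x_k - x_{k+1}\in\mathbb{C}^d$ ($k=1,\ldots,n-1$). In these variables $\mathcal{D}_n$ is the tube $\mathbb{R}^{d(n-1)} + i\Gamma$ over the cone $\Gamma$ whose $k$-th imaginary part lies on the backward time axis, $\mathrm{Im}\,\xi_k^M = (-\alpha_k,\mathbf{0})$ with $\alpha_k = \epsilon_k-\epsilon_{k+1}>0$ (with small imaginary spatial parts also allowed, thanks to real-analyticity in $\mathbf{x}_k$); the bound \eqref{powerlawbound} translates into the statement that $G_n$ grows at most polynomially in the base variables and blows up at most like $\mathrm{dist}(\,\cdot\,,\partial\Gamma)^{-A_n}$ as $\mathrm{Re}\,\xi_k^0 = \alpha_k\to 0$. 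The forward tube $\mathcal{T}_n$ is the ``thickened'' tube over the product of open past light cones $(V_-)^{\times(n-1)}$, and part~3 is exactly the assertion that holomorphy on the thin tube $\mathcal{D}_n$ plus the bound force holomorphy on the thick tube $\mathcal{T}_n$.

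For part~1 I would use the classical primitive (antiderivative) trick. Fix a direction $e$ in the interior of $\Gamma$ and integrate $G_n$ $N$ times along the imaginary shifts $z\mapsto z + ise$, $s>0$, from a fixed interior basepoint; for $N$ large enough the resulting holomorphic function $F$ is bounded on $\mathcal{D}_n$ by one fixed power of $(1+|\,\cdot\,|)$ uniformly up to the boundary, hence has a boundary value $F^{\mathrm{bv}}\in\mathcal{S}'$. Since $G_n = \partial_e^N F$, its boundary value $G_n^M = \partial_e^N F^{\mathrm{bv}}$ exists in $\mathcal{S}'$, and the factor $(1+\max_i|x_i-x_{i+1}|)^{B_n}$ makes it tempered. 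This limit is a priori only along the fixed direction $e$; its independence of direction within $\mathcal{D}_n$ follows a posteriori from part~3 (the boundary value from the connected open set $\mathcal{T}_n$ is automatically direction-independent).

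For parts~2 and~3: a polynomially bounded holomorphic function on a tube over a cone has a Fourier transform supported in the dual cone, so here $\hat{W} = \widehat{G_n^M}$ is supported in $\Gamma^* = \{E_k\geqslant 0\ \forall k\}$ (no constraint yet on $\mathbf{p}_k$). To upgrade this to the full spectral support $\{E_k\geqslant|\mathbf{p}_k|\}$ I would use rotation invariance: by the standard analytic-continuation argument (as in \cite{Streater:1989vi}), $\mathrm{SO}(d)$-invariance of the holomorphic $G_n$ implies invariance of the boundary distribution $G_n^M$ under the identity component of $\mathrm{SO}(1,d-1)$; its Fourier support is then a Lorentz-invariant closed cone inside $\{E_k\geqslant 0\}$, which forces it into the product of closed forward light cones. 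Translation invariance of $G_n^M$ is automatic. Finally, to get the extension to $\mathcal{T}_n$ I would set $G_n(\xi + i\eta)$ equal to the inverse Fourier transform of $\hat{W}(p)\, e^{-\sum_k p_k\cdot\eta_k}$; the exponential decays on $\mathrm{supp}\,\hat{W}$ precisely when each $\eta_k$ lies in the open past cone, i.e.\ exactly on $\mathcal{T}_n$, so this defines a holomorphic function there, it agrees with the original $G_n$ on $\mathcal{D}_n$ by uniqueness of analytic continuation, and letting $\eta\to 0$ (the exponential factors tending to $1$, controlled by temperedness of $\hat{W}$) recovers $G_n^M$ as the boundary value from inside $\mathcal{T}_n$.

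The hard part will be part~1: turning the pointwise powerlaw bound into honest convergence in $\mathcal{S}'$, where one must make the primitive construction and the attendant uniform estimates precise --- in particular controlling the behavior near all the walls $\epsilon_k = \epsilon_{k+1}$ simultaneously and at large $|x_i-x_{i+1}|$. A second delicate point is the Lorentz-invariance step of part~2: the forward tube is invariant under real Lorentz transformations but not under all of $\mathrm{SO}(d,\mathbb{C})$, so the continuation of rotation invariance to boost invariance has to be routed carefully (along paths that keep both $G_n$ and its continuation defined, or via the extended tube).
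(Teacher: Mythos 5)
Your plan is essentially the paper's: pass to difference variables, use the classical primitive (antiderivative/Newton--Leibniz) trick to extract a tempered boundary value for Part~1, then set up the Fourier--Laplace representation, read off the spectral support in the dual cone, combine with Lorentz invariance to get the full light-cone support, and use the Fourier--Laplace integral itself to extend holomorphically over the whole forward tube. The proof in Appendix~\ref{Vlad} is organized exactly this way.

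The one place where you diverge — and where you flag a worry you don't quite resolve — is the passage from SO$(d)$-invariance of the Euclidean function to SO$(1,d-1)$-invariance of the boundary distribution. You propose to do this at the level of finite group elements (``the standard analytic-continuation argument''), noting correctly that this forces you to track whether continuation paths stay in the domain of definition or whether one must invoke the extended tube. The paper sidesteps this entirely by working infinitesimally: rotation invariance is encoded as the system of first-order PDEs $\{y^a\partial_{y^b}-y^b\partial_{y^a}\}\mathcal{G}=0$ with polynomial coefficients; these hold for real $y$, persist under analytic continuation by uniqueness, and then survive the distributional limit $\varepsilon\to 0$ because multiplication by polynomials and differentiation are continuous on $\mathcal{S}'$. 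The small bookkeeping needed is just to drop the $(\mathcal{G}_\varepsilon,\varepsilon\varphi)$ term in the limit using the uniform Schwartz-norm bound. This infinitesimal route buys you global Lorentz invariance of $G_n^M$ without any discussion of complexified group orbits or extended tubes, and I would recommend you adopt it over the finite-transformation version you sketched, since the latter is genuinely more delicate than you give it credit for.

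Everything else — the cutoff-in-momentum-space construction of the forward-tube extension (Lemma~\ref{tubeLemma}), establishing temperedness of the spectral function $g$, and the $\varepsilon\to\infty$ argument for $\mathrm{supp}\,g\subset\{E\geqslant 0\}$ — matches your sketch.
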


See App.\ \ref{Vlad} for the proof of Vladimirov's theorem and a reminder
of what the limit in the sense of distributions means. In the process of the
proof, it will be established that the holomorphic function $G_n$ on
$\mathcal{T}_n$ can be written as a ``Fourier-Laplace'' transform
\begin{equation}
  G_n (x_1, \ldots, x_n) = \int d p_1 \ldots d p_{n - 1}\, W (p_1, \ldots, p_{n
  - 1}) e^{\underset{k = 1}{\overset{n - 1}{\sum}} (- E_k (x^0_k - x^0_{k +
  1}) + i\mathbf{p}_k \cdummy (\mathbf{x}_k -\mathbf{x}_{k + 1}))},
  \label{FL1}
\end{equation}
where $W$ is a tempered distribution which is the Fourier transform of the
tempered distribution $G_n^M$, mentioned in Part 2 of the theorem.

To use Theorem \ref{ThVlad}, one needs to verify the powerlaw bound
{\eqref{powerlawbound}}. This strategy was first developed by Osterwalder and
Schrader (OS) {\cite{osterwalder1973,osterwalder1975}}.\footnote{OS used a
slightly stronger version of Theorem \ref{ThVlad} with real analyticity in
$\mathbf{x}_k$ replaced by the weaker assumption of continuity in these
variables, but this difference is not essential.} Their full list of
assumptions included, in addition to reflection positivity and other OS axioms
listed in Sec.\ \ref{OSaxioms}, a less widely known \tmtextit{linear growth
condition}, which roughly says that $G_n^E$ (appropriately integrated) grows
with $n$ not faster than a power of $n!$ and the degree of its singularities
grows not faster than linearly in $n$. The proof of the powerlaw bound was the
most technical part of the OS construction, and it crucially relied on the
linear growth condition. See App.\ \ref{OS} for the review.

In this paper we aim to define Minkowski correlators of a \tmtextit{conformal}
field theory, given Euclidean correlators satisfying the CFT axioms of Sec.\ \ref{ECFTax}. As seen in Sec.\ \ref{OSfromCFT}, reflection positivity is
robustly encoded in CFT via the positivity requirements for 2-point functions
and reality constraints on the OPE coefficients. On the other hand, not much
is known about how CFT $n$-point functions grow with $n$. In particular, we
are unable to justify the OS growth condition in our setup, hence we cannot
appeal to the OS theorem.

In this paper we will be able to circumvent this difficulty, by giving an
alternative proof of the powerlaw bound for the most important in applications
cases of 2, 3 and 4-point functions. Then, by \ Theorem \ref{ThVlad},
these correlators exist in Minkowski space and are tempered distributions. Our
proof of the powerlaw bound uses only the Euclidean CFT axioms. In fact, the
two- and 3-point function case is almost trivial, these correlators being
fixed by conformal invariance. The 4-point function case is much deeper and
is one of our main results. Remaining Wightman axioms not mentioned in Theorem
\ref{ThVlad} (positivity and clustering) will also be shown to hold.

\begin{remark}
  In practice, to compute the Minkowski correlator function one may connect a
  Minkowski configuration to a Euclidean configuration by a curve $c (s), 0
  \leqslant s \leqslant 1$, where $c (0)$ is Euclidean, $c (1)$ Minkowskian,
  and $c (s)$, $0 < s < 1$, belong to the forward tube. In general, the curve
  should remain in the forward tube except for the endpoint $c (1)$. This
  means that we must have strict inequalities:
  \begin{equation}
    \tmop{Re} x_1^0 (s) > \tmop{Re} x_2^0 (s) > \cdots > \tmop{Re} x_n^0 (s)
    \label{ftineq}
  \end{equation}
  everywhere along the analytic continuation contour, except for $s = 1$. See
  Fig.\ \ref{illustr}.
  
  In the literature, one sometimes encounters a different prescription for
  computing the Minkowski correlators (see e.g.\ {\cite{Hartman:2015lfa}},
  Sec.\ 3.1), where one puts all points but one at Minkowski positions, and
  considers correlators as a holomorphic function of the complexified coordinate
  of the remaining point. One then imagines that Wightman functions are
  holomorphic functions branching at light-cone separation, and that one can
  access different operator orderings by going around branch points. We would
  like to warn the reader that this prescription has to be taken with a grain
  of salt. To our knowledge there is no general result that the only Wightman
  functions singularities are branch cuts on the light cones. This is known to
  be true only in some special cases, e.g.\ for CFT 2-point and 3-point functions, as
  well as for CFT 4-point functions in $d = 2$ {\cite{Maldacena:2015iua}}. While
  some analytic continuation beyond the forward tube can be done in a general
  QFT (to the so called permuted extended tube), it does not suffice to
  justify the analytic continuation prescription of {\cite{Hartman:2015lfa}}
  in a general QFT. {In CFTs in higher dimensions, the
  prescription of Ref.\ {\cite{Hartman:2015lfa}} has some applicability, with
  the understanding that the correlator is analytic along the continuation
  contour but may stop being analytic at the endpoint (see App.\ \ref{Tom}).}
\end{remark}

\begin{figure}[h]\centering
  {\includegraphics[width=0.33\textwidth]{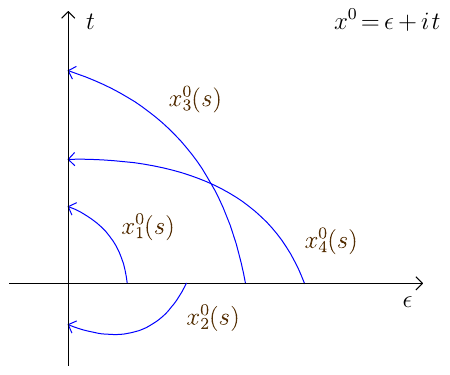}}
  \caption{Inequalities {\eqref{ftineq}} should be satisfied along the
  analytic continuation contour.\label{illustr}}
\end{figure}

\begin{remark}
	\label{lastOS}
	\newcommand{\eps}{\epsilon}A powerlaw bound in the forward tube \eqref{powerlawbound} of course implies a powerlaw bound for the Euclidean 4pt function itself. Together with rotation invariance, this will imply the remaining OS axiom, the ``Euclidean temperedness bound'' \eqref{OSmod}. Indeed, by rotation invariance, we can choose the direction of the $x^0$ axis before applying the Euclidean powerlaw bound. Let us choose the $x^0$ direction so that, after ordering the operators according to $\eps_1>\eps_2>\eps_3>\eps_4$, we have $\eps_k-\eps_{k+1}\geqslant \alpha |x_k-x_{k+1}|$ for each $k$. Such a direction exists for a sufficiently small positive $\alpha$, depending on $d$ and the number of points but not on $x_i$.\footnote{For each pair of points $(x_i,x_j)$ we consider the set of direction $\hat{e}_0$ such that $|(x_i-x_j)\cdot\hat{e}_0|\leqslant|x_i-x_j|\sin \delta$. This gives a subset $\mathcal{U}_\delta$ of the sphere $S^{d-1}$ with ${\rm Vol}(\mathcal{U}_\delta)\leqslant2\delta\, {\rm Vol}(S^{d-2})$. If we choose 
		$\delta_*=\frac{{\rm Vol}(S^{d-1})}{2 n(n-1)\,\text{Vol}(S^{d-2})}$, then the total volume excluded by considering all possible $(x_i,x_j)$ pairs is less than ${\rm Vol}(S^{d-1})/2$. Therefore, we can find a direction $\hat{e}_0$ such that the opposite inequality $|(x_i-x_j)\cdot\hat{e}_0|\geqslant  |x_i-x_j| \sin \delta_*$ holds for all pairs. Then, renumbering the points in the order of decreasing $x_i^0$, we obtain $x^0_k-x^0_{k+1}\geqslant  |x_k-x_{k+1}| \sin \delta_*$.} Applying the Euclidean case of the powerlaw bound \eqref{powerlawbound} in this frame we obtain \eqref{OSmod}.
\end{remark}

\subsection{Recovering Minkowski averages from Euclidean
averages}\label{MinkFromEucl}

Minkowski correlators provided by Theorem \ref{ThVlad}, being tempered distributions, can be paired with a
Schwartz test function $F$:
\begin{equation}
(G_n^M, F) = \int d x\, G_n^M (x_1 \ldots x_n) F (x_1 \ldots x_n) .
\label{GMav}
\end{equation}
Here we will discuss how these pairings can be computed given the Euclidean correlators (compare {\cite{osterwalder1973}}, Sec.\ 4.3). This discussion will be needed in Sec.\ \ref{sec:Wpos} below and may be skipped on the first reading. 

Eq.\ \eqref{GMav} can equivalently be expressed via the Fourier transform $W$ of $G_n^M$
with respect to $x_k - x_{k + 1}$:
\begin{equation}
  (G_n^M, F) = \int d p\, W (p_1, \ldots, p_{n - 1}) f (p_1 \ldots p_{n - 1}),
  \label{GMavFT}
\end{equation}
where $f (p_1 \ldots p_{n - 1}) = \hat{F} (- p_1, p_1 - p_2, \ldots, p_{n - 2}
- p_{n - 1}, p_{n - 1})$.
%As mentioned above we will construct Minkowski correlators from Euclidean
%correlators via Theorem \ref{ThVlad}. 
Natural pairings for Euclidean
correlators are
\begin{equation}
  (G_n^E, \varphi) = \int d x\, G_n^E (x_1 \ldots x_n) \Phi (x_1 \ldots x_n) .
  \label{GEav}
\end{equation}
where $\Phi$ is a $C^{\infty}$ test function compactly supported in $x_1^0 >
\cdots > x_n^0$. We wish to discuss how pairings {\eqref{GMav}} or
{\eqref{GMavFT}} can be found given {\eqref{GEav}}. 

By the Fourier-Laplace
representation {\eqref{FL1}}, we can write {\eqref{GEav}} as
\begin{equation}
  (G_n^E, \varphi) = \int d p\, W (p_1, \ldots, p_{n - 1}) g (p_1, \ldots, p_{n
  - 1}), \label{GEavFT}
\end{equation}
where $g$ is any Schwartz class function which agrees inside the forward light
cones with $\tilde{\varphi}$, the Fourier-Laplace transform of $\varphi (y_1,
\ldots, y_{n - 1}) = \int d x_n\, \Phi \left( x_n + \sum_{i = 1}^{n - 1} y_i,
x_n + \sum_{i = 2}^{n - 1} y_i, \ldots, x_n \right)$:
\begin{gather}
g \in \mathcal{S}, \qquad g (p) = \tilde{\varphi} (p) \qquad (p_k \in
\overline{V_+}) .  \label{gphi}\\
  \tilde{\varphi} (p_1, \ldots, p_{n - 1}) =  \int d y\, \varphi (y_1 \ldots
  y_{n - 1}) e^{\underset{k = 1}{\overset{n - 1}{\sum}} (- p^0_k y^0 +
  i\mathbf{p}_k \cdummy \mathbf{y}_k)}, \nn
\end{gather}
Note that we cannot just put $g = \tilde{\varphi}$ because $\tilde{\varphi}$
is by itself not a Schwartz function (it may grow exponentially in the
negative $p^0_{k}$ directions, although it will decrease exponentially in
the positive one, since $\varphi$ is supported at $y_k^0 > 0$). On the other
hand the values of $g$ outside the light cones, where $W$ is supported, are
unimportant. We can for example take
\begin{equation}
  g (p_1, \ldots, p_{n - 1}) = \chi (p^0_1) \ldots \chi (p^0_{n - 1})
  \tilde{\varphi} (p_1, \ldots, p_{n - 1}), \label{gex}
\end{equation}
where $\chi (s)$ is a $C^{\infty}$ function which equals identically 1 for $s
\geqslant 0$ and 0 for $s < - 1$.

Suppose then that we find a sequence of $C^{\infty}$ functions $\{ \varphi_r
\}_{r = 1}^{\infty}$ compactly supported at $y^0_k > 0$, the corresponding
functions $g_r \in \mathcal{S}$ such that $g_r = \tilde{\varphi}_r$ inside the
light cones, and in addition that $g_r \rightarrow f$ in the sense of the
Schwartz space (i.e.\ that all Schwartz space seminorms of the difference go to
zero), where $f$ is the function in {\eqref{GMavFT}}. Let us put $\Phi_r (x_1,
\ldots, x_n) = \varphi (x_1 - x_2, x_2 - x_3, \ldots, x_{n - 1} - x_n) \omega
(x_n)$ where $\omega$ is any $C_0^{\infty}$ function of integral one. Then we will
have
\begin{equation}
  (G_n^E, \Phi_r) = (W, g_r) \longrightarrow (W, f) = (G_n^M, F) \qquad (r
  \rightarrow \infty),
\end{equation}
and so we will solve the problem of computing Minkowski averages given
Euclidean averages. The following lemma, loosely
related to Lemma 8.2 in {\cite{osterwalder1973}}, shows that it is indeed possible to find such sequences
$\varphi_r$ and $g_r$ for any Schwartz class $f$. %Namely we have The following lemma, loosely
%related to Lemma 8.2 in {\cite{osterwalder1973}}.

\begin{lemma}
  \label{lemma:fcheckdense}The set of functions $g \in \mathcal{S}
  (\mathbb{R}^{d (n - 1)})$ which satisfy {\eqref{gphi}} for some $\varphi$ a
  $C^{\infty}$ test function compactly supported in $y_k^0 > 0$ is dense in
  the Schwartz space.
\end{lemma}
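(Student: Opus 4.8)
The plan is to establish the claim in two logically independent moves: a reduction from Schwartz targets to targets localized in the forward light cones, and then an explicit approximation of the latter. First I would recall the structure of the problem. The functions $g$ in question are those that agree, on the product $\overline{V_+}^{\,n-1}$ of closed forward cones, with the Fourier--Laplace transform $\tilde\varphi$ of some $\varphi\in C^\infty_0$ supported in $\{y^0_k>0\}$. Since the values of $g$ outside $\overline{V_+}^{\,n-1}$ play no role in \eqref{GEavFT} (that is where $W$ lives), what must really be shown is: every $h\in\mathcal{S}(\R^{d(n-1)})$ can be approximated, in the Schwartz topology, by some $g$ as in \eqref{gphi}. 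It is convenient to factor through the ansatz \eqref{gex}: given $\varphi$, set $g=\chi(p^0_1)\cdots\chi(p^0_{n-1})\,\tilde\varphi$, which is Schwartz because $\tilde\varphi$ decays rapidly in the half-spaces $p^0_k\geqslant 0$ (as $\varphi$ is supported in $y^0_k>0$ and smooth) and is killed by the cutoffs in the complementary directions. So the task becomes: the set $\{\chi(p^0_1)\cdots\chi(p^0_{n-1})\,\tilde\varphi : \varphi\in C^\infty_0,\ \mathrm{supp}\,\varphi\subset\{y^0_k>0\}\}$ is Schwartz-dense.

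Next I would peel off the variables. Write $p_k=(p^0_k,\mathbf p_k)$ and $y_k=(y^0_k,\mathbf y_k)$. In the spatial variables $\mathbf y_k$ nothing delicate happens: ordinary Fourier transform is a topological isomorphism of $\mathcal{S}(\R^{(d-1)(n-1)})$, and $C^\infty_0$ spatial profiles have dense Fourier images. The only genuine constraint is in the Euclidean-time variables $y^0_k$, where $\varphi$ must be supported in the open half-lines $y^0_k>0$ and we only control $\tilde\varphi$ (a Laplace transform in these variables) on $p^0_k\geqslant 0$. So by a tensor-product/partition argument it suffices to treat the one-variable statement: functions of the form $\chi(p^0)\int_0^\infty \psi(y^0)e^{-p^0 y^0}\,dy^0$ with $\psi\in C^\infty_0(0,\infty)$ are Schwartz-dense in $\mathcal{S}(\R)$ — and then take products over $k$ and tensor with the spatial factor. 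Concretely, given a target $h\in\mathcal{S}(\R)$, I would first approximate it by a function $h_1$ that is Schwartz and, say, identically zero for $p^0<-\tfrac12$ (using a smooth cutoff; this costs an arbitrarily small Schwartz-norm error since the relevant region can be pushed off, or alternatively note $\chi$ is free to be applied at the end). The real content is to hit the ``positive half'': produce $\psi\in C^\infty_0(0,\infty)$ whose Laplace transform $L\psi(p^0)=\int_0^\infty\psi(y^0)e^{-p^0y^0}dy^0$ approximates $h$ on $p^0\geqslant 0$ in all Schwartz seminorms restricted there, and then multiply by $\chi$ to extend.

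For that last point the tool I would use is that $L\psi(p^0+i\tau)=\hat{\tilde\psi}(\tau,p^0)$ is, for $p^0\geqslant 0$, the Fourier transform in $\tau$ of $\psi(y^0)e^{-p^0y^0}$, so controlling Schwartz seminorms in $p^0$ reduces to $y^0$-moment and smoothness estimates of $\psi$ times exponential weights — all harmless since $\psi$ is compactly supported away from $0$ and $\infty$. The cleanest route: take any $\phi\in C^\infty_0(0,\infty)$ with $\int\phi=1$, rescaled/translated, and convolve, or directly note that multiplication by $e^{p^0 y^0}$-type weights and Borel-type summation give that $\{L\psi\}$ is dense in the Paley--Wiener-like space of boundary values of functions holomorphic and rapidly decreasing in the right half-plane, which in particular is Schwartz-dense after restriction and $\chi$-extension. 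A robust way to avoid sharp function-theory: approximate the target on $[0,\infty)$ uniformly-with-derivatives by a finite linear combination of Laguerre-type functions $p^0{}^j e^{-p^0}$ (these span a dense subspace and each is manifestly $L$ of an explicit $y^0{}^{\,?}$-profile which can be smoothly truncated to compact support in $(0,\infty)$ at negligible cost). Taking products over the $n-1$ time variables, tensoring with the spatial factor, reinstating $\chi(p^0_1)\cdots\chi(p^0_{n-1})$, and collecting errors through the (jointly continuous) multiplication and tensor maps on $\mathcal{S}$ completes the argument; the resulting $g$ is of the required form with $\varphi$ the inverse transform of a genuine $C^\infty_0$ profile supported in $\{y^0_k>0\}$.

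The main obstacle is the one-dimensional half-line density statement — i.e.\ controlling the Laplace transform of compactly-supported smooth functions in the full Schwartz topology on $p^0\geqslant 0$, rather than merely pointwise or in sup-norm. Everything else (spatial variables, tensoring, the cutoffs $\chi$, passing errors through continuous linear/bilinear maps on $\mathcal{S}$) is routine functional analysis. I would isolate that one-variable lemma, prove it via the Laguerre-function span (whose density in $C^\infty([0,\infty))$-with-rapid-decay is classical) plus explicit smooth truncation, and then assemble.
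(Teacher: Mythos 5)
Your approach is constructive, whereas the paper's actual proof of the lemma is by contradiction: it assumes $A=\bigcup_\varphi A_\varphi$ is not dense, invokes Hahn--Banach to produce a nonzero $T\in\mathcal{S}'$ annihilating $A$, uses $T|_{A_0}=0$ to place $\mathrm{supp}\,T\subseteq\overline{V_+}$, and then shows via the holomorphy of $\hat{T}$ in the forward tube and vanishing on a totally real submanifold that $T=0$. Your plan is essentially the one the paper relegates to App.~\ref{IntLem1} as ``intuition'', so you have independently reconstructed the constructive alternative: reduce to a one-dimensional half-line statement in each Euclidean-time variable (the spatial variables being handled by Paley--Wiener density), and approximate Schwartz targets on $[0,\infty)$ by Laplace transforms $\mathcal{L}(\varphi)$ with $\varphi\in C_0^\infty(0,\infty)$. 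The trade-off is clear: the Hahn--Banach proof is shorter and works in all dimensions and orders uniformly; the constructive proof gives explicit approximating sequences and is what one actually needs in practice (and is what the paper uses implicitly in Sec.~\ref{sec:Wpos}).

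The gap in your version is in the key one-variable step: you assert that Laguerre-type functions $(p^0)^j e^{-p^0}$ span a dense subspace of the Schwartz space on $[0,\infty)$, calling this ``classical''. What is classical is their density in $L^2$; density in the \emph{Schwartz} topology on the half-line --- i.e.\ uniform approximation of all derivatives with all polynomial weights simultaneously, by a single sequence --- is precisely the nontrivial content, and it does not follow from the $L^2$ statement or from Weierstrass applied naively. The paper's appendix handles this by a two-step device: compactify $[0,\infty)\to(0,1]$ via $x=1/(1+E)$, apply Weierstrass to the $N$-th derivative $F^{(N)}$ (not to $F$), and integrate $N$ times with vanishing initial data so that the resulting polynomial $P$ satisfies $P=O(x^N)$; after transforming back this yields $p(E)=\sum_{N\leqslant n\leqslant M}a_n(1+E)^{-n}$, and the vanishing to order $N$ at $x=0$ is exactly what controls the $E$-decay and the higher derivatives, giving an estimate $|f-p|_{N/2,+}\lesssim\varepsilon$. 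It is important that the sum starts at $n=N$, a degree that grows with the number of seminorms one wants to control; a plain Laguerre expansion does not supply this. Finally, there is a second technicality you gloss over: each $(1+E)^{-n}$ is $\mathcal{L}$ of $t^{n-1}e^{-t}/(n-1)!$, which is \emph{not} in $C_0^\infty(0,\infty)$, so one must further truncate $\psi_r$ to compact support and show that the error propagates through $\mathcal{L}$ in all Schwartz seminorms of bounded order --- this needs that $\psi_r$ vanishes to high order at $t=0$ (which again the $n\geqslant N$ device provides). If you replace your Laguerre hand-wave by this compactification--Weierstrass--truncation argument, and justify the Seeley extension from $\mathcal{S}(\overline{\mathbb{R}_+})$ to $\mathcal{S}(\mathbb{R})$ that you are implicitly invoking to pass from half-line seminorms to full-line ones, your proposal becomes a complete proof. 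As stated, the density claim for the Laguerre span is the missing idea.
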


We will give a formal proof; see App.\ \ref{IntLem1} for some intuition. We
will consider the case $n = 2$ as $n > 2$ is no more complicated. For each
$\varphi$ we consider the set $A_{\varphi}$ of Schwartz functions $g$ which satisfy
{\eqref{gphi}}:
\begin{equation}
  A_{\varphi} \assign \left\{ g \in \mathcal{S} (\mathbb{R}^d) \middle| 
  \nobracket g |_{\overline{V_+}} = \tilde{\varphi} \right\} .
\end{equation}
We know that $A_{\varphi}$ is non-empty, e.g.\ we can take $g$ from
{\eqref{gex}} (it is not hard to show that this is a Schwartz function). Our
lemma says that $A \equiv$ ``the union of $A_{\varphi}$ over all $\varphi$''
is dense in $\mathcal{S} (\mathbb{R}^d)$. The proof will be by contradiction.
Note that $A$ is a linear subspace of $\mathcal{S} (\mathbb{R}^d)$. If
$\overline{A} \neq \mathcal{S} (\mathbb{R}^{d})$, then there exists a
tempered distribution $T \in \mathcal{S}' (\mathbb{R}^d)$ such that $T$
vanishes on all test functions from $A$ but does not vanish
identically.\footnote{The corresponding statement for normed spaces is
standard, being a well-known consequence of the Hahn-Banach theorem (see e.g.\ {\cite{brezis}}, Corollary 1.8). For the Schwartz space, we can first find a
Schwartz norm $| \cdot |_n$, such that $\overline{A}$ is not everywhere dense
with respect to this norm, and then apply the standard statement with the norm
$| \cdot |_n$. This gives a linear functional $T$ on $\mathcal{S}
(\mathbb{R}^d)$ continuous with respect to $| \cdot |_n$, hence $T \in
\mathcal{S}' (\mathbb{R}^d)$.}

So $T$ in particular vanishes on $A_0 = \left\{ g \in \mathcal{S}
(\mathbb{R}^{d}) \middle|  \nobracket g |_{\overline{V_+}} = 0 \right\}$
(take $\varphi = 0$). This means that the support of the distribution $T$ is
contained inside $\overline{V_+}$. Consider the Fourier transform of $T$,
\begin{equation}
  \hat{T} (x) \assign \int \frac{d^d p}{(2 \pi)^{d}}\, T (p) e^{i p^0 x^0 -
  i\mathbf{p} \cdummy \mathbf{x}} .
\end{equation}
We can consider $\hat{T} (x)$ for real $x$ where it is a distribution. Since
$\tmop{supp} (T) \subseteq \overline{V_+}$ it is also natural to consider
$\hat{T} (\xi + i \eta)$ where $\xi, \eta$ are real and $\eta$ is in the
forward cone. We know that $\hat{T} (x)$ is a holomorphic function for such $x
= \xi + i \eta$. We also know that the distribution $\hat{T} (x)$ for real $x$
can be obtained as a limit of the holomorphic function $\hat{T} (\xi + i
\eta)$ as $\eta \rightarrow 0$.

Let us now come back to the assumption that $(T, g) = 0$ for any $g \in
A_{\varphi}$. We will apply this to a function $g$ of the form $g = X (p) \tilde{\varphi}$ where $X
(p)$ is a $C^{\infty}$ function identically 1 on the forward
light cone and such that $X (p) e^{- p^0 x^0 + i\mathbf{p} \cdot
\mathbf{x}}$ is in Schwartz class for any $x^0 > 0$. It is easy to see that
such functions $X(p)$ exist. Writing $(T, g)$ in full we get:
\begin{eqnarray}
  0 = (T, g) & = & \int d p\, T (p) X (p) \int d x\, e^{- p^0 x^0 +
  i\mathbf{p} \cdummy \mathbf{x}} \varphi (x) \nonumber\\
  & = & \int d x\, \varphi (x) \int d p\, T (p) X (p) e^{- p^0 x^0 +
  i\mathbf{p} \cdummy \mathbf{x}} \nonumber\\
  & = & \int d x\, \varphi (x) \hat{T} (i x^0, \mathbf{x}) . 
  \label{int:Tf}
\end{eqnarray}
The swap of the order of integration between the first and the second line can
be justified as follows. Since $T (p)$ is a tempered distribution, we can
write it as a finite sum of derivatives of continuous functions of power
growth: $T (p) = \underset{}{} \sum_{\alpha} \partial_p^{\alpha} F_{\alpha}
(p)$. Using distributional integration by parts, we can the rewrite the first
line of {\eqref{int:Tf}} as a sum of ordinary integrals, apply Fubini's
theorem to swap the integration order, and integrate by parts back to express
the answer in terms of $T (p)$.

Because $\varphi (x)$ has compact support in the region $x^0 > 0$, the
argument of $\hat{T}$ in the last line of (\ref{int:Tf}) is of the form $\xi +
i \eta$ with $\eta = (x^0, \tmmathbf{0})$ in the forward light cone, where
we know $\hat{T}$ is analytic. So, from the fact that the last line of
(\ref{int:Tf}) vanishes for any $\varphi$ we conclude that
\begin{equation}
  \hat{T} (\xi + i \eta) = 0, \qquad \xi = (0, \mathbf{x}), \quad \eta = (i
  x^0, \tmmathbf{0}), \quad x^0 > 0.
\end{equation}
The set of these points is a totally real submanifold, and so by analyticity
we conclude that $\hat{T} (\xi + i \eta)$ is identically zero for any $\xi
\in \mathbb{R}^d$ and any $\eta$ in the forward cone. Furthermore, as
mentioned above, $\hat{T} (x)$ for real $x$ is a boundary value of $\hat{T}
(\xi + i \eta)$. Therefore, $\hat{T} = 0$ in the sense of distributions.
However we assumed above that $T$ was not identically zero. The reached
contradiction shows that $A$ is dense in $\mathcal{S}$.

\section{Two- and 3-point functions}\label{23-point}

Let us see how the strategy from Sec.\ \ref{strategy} works for the CFT 2-point
and 3-point functions. The Euclidean 2-point and 3-point correlators of scalar primaries
are given by [$x_{i j}^2 = (x_i - x_j)^2$]
\begin{eqnarray}
  G^E_2 (x_1, x_2) & = & \frac{1}{(x^2_{12})^{\De}},  \label{G2E}\\
  G^E_3 (x_1, x_2, x_3) & = & \frac{c_{123}}{(x^2_{12})^{h_{123}}
  (x^2_{13})^{h_{132}} (x_{23}^2)^{h_{23}}}, \qquad h_{i j k} = (\Delta_i +
  \Delta_j - \Delta_k) / 2.  \label{G3E}
\end{eqnarray}
%where we denoted $x_{i j}^2 = (x_i - x_j)^2$.

In this case, the standard way to obtain the Wightman functions is to write
these Euclidean correlators in terms of $x^2_{i j}$ with $i < j$ (as we did).
Substituting the analytic continuation of $x_{i j}^2$,
\begin{equation}
  x^2_{i j} = (x_i - x_j)^{\mu} (x_i - x_j)_{\mu} = (x^0_i -
  x^0_j)^2 + (\mathbf{x}_i - \mathbf{x}_j)^2, \qquad x_i = \left( x_i^0,
  \mathbf{x}_i \right) \in \mathbb{C}^d \label{xij-cont} .
\end{equation}
into the Euclidean 2-point and 3-point functions expressions, we obtain their analytic
continuations. Suppose further that $x^2_{i j} \neq 0$, $i < j$, in the
forward tube (this will be shown below). Then the functions
\begin{equation}
  c \mapsto x_{i j}^2 \quad (i < j) \label{cxij2}
\end{equation}
are holomorphic functions from the forward tube to $\widetilde{\mathbb{C}
\backslash \{ 0 \}}$, the universal covering of the complex plane minus the
origin. On the other hand $z \mapsto z^h$ is holomorphic from this universal
covering to $\mathbb{C}$. Composing these two holomorphic functions, we conclude
that $(x_{i j}^2)^h$, $i < j$, are holomorphic on the forward tube. Hence this
procedure analytically extends the Euclidean 2-point and 3-point functions to the
whole forward tube $\mathcal{T}_n$ $(n = 2, 3)$.

We will now give a simple lemma which proves that indeed $x^2_{i j} \neq 0$,
$i < j$, in the forward tube. Actually the lemma says that a bit more is true,
namely $x^2_{i j} \in \mathbb{C} \backslash (- \infty, 0]$. This has the
following practical consequence. In general, to compute the analytic
continuation of $(F (c))^h$, where $F (c)$ is a nonzero holomorphic function, we
need to know the phase of $F (c)$, i.e.\ to which sheet of the Riemann surface
$\widetilde{\mathbb{C} \backslash \{ 0 \}}$ it belongs. To compute the phase
we need to connect $c$ to some $c_E$ by a curve and analytically continue along
this curve, following the phase. However, this is unnecessary for $(x_{i
j}^2)^h$. Indeed, by the lemma below $x_{i j}^2$ always belongs to the
principal sheet. So there is no need to use a curve to compute the phase: it
can be computed unambiguously just by plugging the coordinates into
{\eqref{xij-cont}}.

\begin{lemma}
  \label{xij2h}Let $y = (y^0, \mathbf{y}) \in \mathbb{C}^d$ satisfy $\tmop{Re}
  y^0 > | \tmop{Im} \mathbf{y} |$. Then
  \begin{equation}
    y^2 \equiv (y^0)^2 +\mathbf{y}^2 \in \mathbb{C} \backslash (- \infty, 0] .
    \text{} \label{xij2}
  \end{equation}
\end{lemma}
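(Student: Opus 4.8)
The plan is to separate $y^2$ into its real and imaginary parts and show that it can never be a non-positive real number. Write $y^0=a+ib$ with $a=\tmop{Re}\,y^0$, $b=\tmop{Im}\,y^0$, and $\mathbf{y}=\mathbf{u}+i\mathbf{v}$ with $\mathbf{u},\mathbf{v}\in\mathbb{R}^{d-1}$ the real and imaginary parts of $\mathbf{y}$, so the hypothesis reads $a>|\mathbf{v}|\geqslant 0$ (in particular $a>0$). Expanding $(y^0)^2=a^2-b^2+2iab$ and $\mathbf{y}^2=\mathbf{y}\cdot\mathbf{y}=|\mathbf{u}|^2-|\mathbf{v}|^2+2i\,\mathbf{u}\cdot\mathbf{v}$ (Euclidean dot product, no conjugation), we get
\begin{equation}
  y^2=(a^2-b^2+|\mathbf{u}|^2-|\mathbf{v}|^2)+2i(ab+\mathbf{u}\cdot\mathbf{v}).
\end{equation}
It therefore suffices to prove the implication: if $\tmop{Im}\,y^2=0$ then $\tmop{Re}\,y^2>0$. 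This rules out $y^2\in(-\infty,0]$.

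So assume $ab=-\mathbf{u}\cdot\mathbf{v}$. By the Cauchy--Schwarz inequality, $a^2b^2=(\mathbf{u}\cdot\mathbf{v})^2\leqslant|\mathbf{u}|^2|\mathbf{v}|^2$, and dividing by $a^2>0$ gives $b^2\leqslant|\mathbf{u}|^2|\mathbf{v}|^2/a^2$. Writing $s=|\mathbf{u}|^2\geqslant 0$ and $t=|\mathbf{v}|^2\geqslant 0$, this yields
\begin{equation}
  \tmop{Re}\,y^2=a^2-b^2+s-t\;\geqslant\;a^2-\frac{st}{a^2}+s-t=\frac{(a^2+s)(a^2-t)}{a^2}.
\end{equation}
Since $a^2+s\geqslant a^2>0$ and $a^2-t=a^2-|\mathbf{v}|^2>0$ by hypothesis, the right-hand side is strictly positive, which is the desired contradiction. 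Note that this single estimate already handles the degenerate configurations $\mathbf{u}=0$, $\mathbf{v}=0$, or $b=0$, so no separate case analysis is needed.

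As a conceptual cross-check, one can phrase this in Minkowski language: with $\tilde y=(iy^0,\mathbf{y})$, the hypothesis says exactly that $\eta:=\tmop{Im}\,\tilde y=(\tmop{Re}\,y^0,\tmop{Im}\,\mathbf{y})$ lies in the open forward light cone, while $y^2$ equals the Minkowski square $\tilde y\cdot\tilde y$. Writing $\tilde y=\xi+i\eta$ with $\xi=\tmop{Re}\,\tilde y$, one has $\tilde y\cdot\tilde y=\xi^2-\eta^2+2i\,\xi\cdot\eta$; if this were real and $\leqslant 0$, then $\xi\cdot\eta=0$, and since $\eta$ is timelike any vector orthogonal to it is spacelike or zero, so $\xi^2\geqslant 0$, giving $\xi^2-\eta^2=\xi^2+|\eta^2|>0$, a contradiction. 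I do not expect any real obstacle in this proof; the only thing to watch is the bookkeeping of real versus imaginary parts and the observation that the factorization $(a^2+s)(a^2-t)$ disposes of all degenerate cases uniformly.
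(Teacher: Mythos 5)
Your proof is correct, and you have in fact supplied two independent arguments. Your primary argument is genuinely different from the paper's: you work entirely in Euclidean coordinates, expand $y^2$ into real and imaginary parts, and show that $\operatorname{Im}y^2=0$ forces $\operatorname{Re}y^2>0$ via Cauchy--Schwarz and the factorization $(a^2+|\mathbf{u}|^2)(a^2-|\mathbf{v}|^2)/a^2>0$. The paper instead passes immediately to Minkowski coordinates, writing $(iy^0,\mathbf{y})=\xi+i\eta$ with $\eta$ in the open forward light cone, and argues by contradiction: if $\xi^2-\eta^2+2i(\xi\eta)\in(-\infty,0]$ then $(\xi\eta)=0$, so $\xi$ is spacelike or zero (being Minkowski-orthogonal to the timelike $\eta$), hence $\xi^2-\eta^2=\xi^2+(-\eta^2)>0$, a contradiction. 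Your ``conceptual cross-check'' is essentially a verbatim restatement of this Minkowski argument, so there you have rediscovered the paper's route exactly. The Euclidean version you give first is more elementary and self-contained; the Minkowski version is conceptually cleaner, links directly to the forward-tube geometry that pervades the whole paper, and reuses the standard fact that a vector orthogonal to a timelike vector is spacelike or zero, which the paper also exploits in Lemma~\ref{Petr} and in Sec.~\ref{PetrProof}. One small point in your favour: by proving $\operatorname{Re}y^2>0$ directly rather than assuming $\xi^2-\eta^2<0$ for contradiction, you transparently exclude the endpoint $y^2=0$ of $(-\infty,0]$; the paper's written proof tests only the strict inequality, though its argument clearly covers the endpoint as well since $-\eta^2>0$ strictly.
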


\begin{proof}
  We will denote by Greek letters $\xi, \eta$, etc., vectors of Minkowski space
  $\mathbb{R}^{1, d - 1}$ with the Minkowski inner product $\xi^2 = -
  (\xi^0)^2 + \tmmathbf{\xi}^2$.\footnote{\label{signature}We remind the readers that everywhere in this
  paper we are using $-, +, \ldots, +$ Minkowski signature.} Decomposing the
  vector $(i y^0, \mathbf{y})$ into its real and imaginary parts:
  \begin{equation}
    (i y^0, \mathbf{y}) = \xi + i \eta,
  \end{equation}
  condition $\tmop{Re} y^0 > | \tmop{Im} \mathbf{y} |$ means that $\eta^0 > 0$
  and $- \eta^2 > 0$, i.e.\ $\eta$ is in the open forward light cone, which we
  will denote by $\eta \succ 0$. In this notation, we have to prove that
  \begin{equation}
    (\xi + i \eta)^2 = \xi^2 - \eta^2 + 2 i (\xi \eta) \nin \mathbb{C}
    \backslash (- \infty, 0] . \label{toprove}
  \end{equation}
  where by our conventions all inner products involving $\xi, \eta$ are
  Minkowski. Suppose this is violated, i.e.
  \begin{equation}
    (\xi \eta) = 0, \quad \xi^2 - \eta^2 < 0, \label{incomp}
  \end{equation}
  for some $\xi, \eta$. Since $\eta$ is timelike, $(\xi \eta) = 0$ implies
  that $\xi$ is spacelike. But then $\xi^2 - \eta^2 = \xi^2 + (- \eta^2) > 0$.
  Thus the two conditions in {\eqref{incomp}} cannot both be true, and
  {\eqref{toprove}} is proved.
\end{proof}

As the next step of implementing the strategy from Sec.\ \ref{strategy}, we
need to check that the constructed analytic continuations satisfy a powerlaw
bound so that we can apply Theorem \ref{ThVlad}. Although we already
constructed analytic continuation to the whole $\mathcal{T}_n$, we only need
to check the bound on $\mathcal{D}_n$ which is somewhat easier. The powerlaw
bound follows from the following lemma.

\begin{lemma}
  \label{x2bnd}
  (a) Let $y = (\varepsilon + i s, \mathbf{y})$, $\varepsilon, s
  \in \mathbb{R}$, $\mathbf{y} \in \mathbb{R}^{d - 1}$. Then $y^2$ is bounded
  above and below in the absolute value, as follows:
  \begin{equation}
    \varepsilon^2 \leqslant | y^2 | \leqslant | y |^2 \equiv | \varepsilon + i
    s |^2 +\mathbf{y}^2 ;
  \end{equation}
  (b) On $\mathcal{D}_n$ ($n = 2, 3$), each $1 / (x^2_{i j})^h$ factor in
  {\eqref{G2E}}, {\eqref{G3E}} ($h \in \mathbb{R}$) satisfies a powerlaw
  bound:
  \begin{equation}
    \Bigl| \frac{1}{(x^2_{i j})^h} \Bigr| \leqslant \frac{| x_i - x_j
    |^B}{(\epsilon_i - \epsilon_j)^A} \qquad (i > j),
  \end{equation}
  where $A = 2h$, $B = 0$ for $h$ positive and $A = 0$, $B = - 2h$ for $h$
  negative.
\end{lemma}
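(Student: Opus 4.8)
The plan is to obtain (b) as an immediate corollary of (a), so that the only real work is the elementary two-sided estimate in (a).

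For (a) I would simply expand. Writing $y^2=(\varepsilon+is)^2+\mathbf{y}^2=(\varepsilon^2-s^2+\mathbf{y}^2)+2i\varepsilon s$ gives $|y^2|^2=(\varepsilon^2-s^2+\mathbf{y}^2)^2+4\varepsilon^2 s^2$. The upper bound $|y^2|\leqslant|y|^2$, with $|y|^2=\varepsilon^2+s^2+\mathbf{y}^2$, is equivalent to $|y|^4-|y^2|^2\geqslant 0$, and a one-line computation collapses this difference to $4s^2\mathbf{y}^2\geqslant 0$. The lower bound $|y^2|\geqslant\varepsilon^2$ is equivalent to $|y^2|^2-\varepsilon^4\geqslant 0$, and regrouping the same expression gives $|y^2|^2-\varepsilon^4=2\varepsilon^2(s^2+\mathbf{y}^2)+(\mathbf{y}^2-s^2)^2\geqslant 0$. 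Both right-hand sides are manifestly nonnegative, which proves (a). (One could instead package this through the Minkowski decomposition $(iy^0,\mathbf{y})=\xi+i\eta$ used in Lemma \ref{xij2h}, with $\eta$ in the forward cone, but the brute-force expansion is the shortest route.)

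For (b), fix one of the pairs $i<j$ occurring in \eqref{G2E}--\eqref{G3E} and set $y=x_i-x_j$. On $\mathcal{D}_n$ this has exactly the form $(\varepsilon+is,\mathbf{y})$ of part (a), with $\varepsilon=\epsilon_i-\epsilon_j>0$, $s=t_i-t_j$, and $\mathbf{y}=\mathbf{x}_i-\mathbf{x}_j$; moreover the quantity $|y|^2$ of part (a) coincides with $|x_i-x_j|^2$ as defined in \eqref{absdef}. Part (a) therefore reads $(\epsilon_i-\epsilon_j)^2\leqslant|x_{ij}^2|\leqslant|x_i-x_j|^2$. Since Lemma \ref{xij2h} guarantees $x_{ij}^2\in\mathbb{C}\backslash(-\infty,0]$, the principal branch of $z\mapsto z^h$ is unambiguous along $\mathcal{D}_n$ and $|(x_{ij}^2)^h|=|x_{ij}^2|^h$ for real $h$. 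For $h\geqslant 0$ the lower bound gives $|1/(x_{ij}^2)^h|\leqslant(\epsilon_i-\epsilon_j)^{-2h}$, i.e.\ $A=2h$, $B=0$; for $h<0$ the upper bound gives $|1/(x_{ij}^2)^h|=|x_{ij}^2|^{-h}\leqslant|x_i-x_j|^{-2h}$, i.e.\ $A=0$, $B=-2h$. This is precisely the claimed powerlaw bound. I do not expect a genuine obstacle here — the statement is elementary algebra together with the branch remark — so the only thing to watch is matching the sign conventions and the nonstandard definition \eqref{absdef} of $|x_i-x_j|^2$.
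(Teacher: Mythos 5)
Your proof is correct and follows essentially the same route as the paper: both start from the identity $|y^{2}|^{2}=(\varepsilon^{2}-s^{2}+\mathbf{y}^{2})^{2}+4\varepsilon^{2}s^{2}$, and the paper then establishes the lower bound by case-by-case minimization in $\mathbf{y}$ and $s$, whereas you regroup directly to display $|y^{2}|^{2}-\varepsilon^{4}=2\varepsilon^{2}(s^{2}+\mathbf{y}^{2})+(\mathbf{y}^{2}-s^{2})^{2}\geqslant 0$ and $|y|^{4}-|y^{2}|^{2}=4s^{2}\mathbf{y}^{2}\geqslant 0$, which is cleaner and avoids the case split. For (b) the paper just says ``follows from (a)''; your spelled-out version, including the observation that Lemma~\ref{xij2h} places $x_{ij}^{2}$ on the principal branch so that $|(x_{ij}^{2})^{h}|=|x_{ij}^{2}|^{h}$, is exactly the intended argument.
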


\begin{proof}
  (a) The upper bound is obvious. Let us show the lower bound by an explicit
  computation (see Lemma \ref{zeta2bnd} below for an alternative proof). We
  have:
  \begin{equation}
    | y^2 |^2 \equiv | (\varepsilon + i s)^2 +\mathbf{y}^2 |^2 =
    (\varepsilon^2 - s^2 +\mathbf{y}^2)^2 + 4 \varepsilon^2 s^2,
  \end{equation}
  Minimizing this in $\mathbf{y}$, we get
  \[ \min_{\mathbf{y}} | y^2 |^2 = \left\{\begin{array}{l}
       (\varepsilon^2 - s^2)^2 + 4 \varepsilon^2 s^2 = (\varepsilon^2 +
       s^2)^2, \qquad | s | \leqslant \varepsilon,\\
       4 \varepsilon^2 s^2, \qquad | s | \geqslant \varepsilon .
     \end{array}\right. \]
  Minimizing this next in $s$, we find that the absolute minimum is located at
  $\mathbf{y}= 0$,  $s = 0$, and is equal to $\varepsilon^4$. Part (b) follows
  from (a).
\end{proof}

Now that we have the powerlaw bound, we can apply Theorem \ref{ThVlad}. We
conclude that the Minkowski 2-point and 3-point functions, defined as $\epsilon_i
\rightarrow 0$ limits of the analytically continued Euclidean correlators,
exist, are Lorentz-invariant tempered distributions, and satisfy the spectral
condition.\footnote{\label{noteMarc1}Since these are tempered distributions,
their Fourier transforms are well defined. Explicit expressions for these
Fourier transforms are known in many cases. See {\cite{Gillioz:2018mto}} for
$\langle \mathcal{O}_{\Delta, l} (p) \mathcal{O}_{\Delta, l} (- p) \rangle$
and {\cite{Gillioz:2019lgs}} for $\langle \mathcal{O}_{\Delta_1} (p_1)
\mathcal{O}_{\Delta_2} (p_2) \mathcal{O}_{\Delta, l} (p_3) \rangle$.}

\subsection{Comparison with the $i
\varepsilon$-prescription}\label{comparison}

Here we will comment on the ``$i \varepsilon$-prescription'' often used in the
literature to define Minkowski 2-point and 3-point correlators, and how it compares
with our definition. We will focus on the 2-point case for definiteness (same
remarks hold for the 3-point case).

The $i \varepsilon$-prescription defines the Minkowski 2-point correlator $G_2^M
(x_1^M, x_2^M)$ as
\begin{equation}
  \frac{1}{(- (s - i \varepsilon)^2 +\mathbf{y}^2)^{\De}}, \label{GMnaive}
\end{equation}
with $s = t_1 - t_2$, $\mathbf{y}=\mathbf{x}_1 -\mathbf{x}_2$ and taking the
$\varepsilon \rightarrow 0^+$ limit. The precise meaning of the limit is often
left implicit in the physics literature. Away from the light cone the 2-point correlator is
an ordinary function, the limit can be understood pointwise and it agrees with
our definition. Clearly, on the light cone the limit must be understood in
distributional sense, integrating against a test function $f (s, \mathbf{y})$. {That is what we showed above: Vladimirov's theorem guarantees that the limit $\e\to 0$ exists as a tempered distribution and can be therefore integrated against any Schwartz test function. In physics literature, one instead often hears that such integrals should be defined by ``shifting the
integration contour''. Note however that this alternative way of understanding the $\e\to 0$ limit would only work for analytic test functions.} Let us discuss the consequences of this limitation.

It is helpful to recall that the theory of distributions commonly uses three
classes of test functions, denoted $\mathcal{S}, \mathcal{K}, \mathcal{Z}$
{\cite{gelfandshilov}}. Here $\mathcal{S}$ is the space of Schwartz functions,
$\mathcal{K}$ (denoted sometimes by $\mathcal{D}$) is the space of compactly
supported $C^{\infty}$ functions, and $\mathcal{Z}$ consists of entire
holomorphic functions decreasing faster than any power in the real directions
and bounded by some fixed exponential in the imaginary directions. Note that
$\mathcal{K}, \mathcal{Z} \subset \mathcal{S}$. The corresponding distribution
spaces thus satisfy the opposite inclusion: $\mathcal{S}' \subset
\mathcal{K}', \mathcal{Z}'$. The elements of $\mathcal{S}'$ are precisely the
tempered distributions discussed above, $\mathcal{K}'$ are distributions on
the compactly supported test functions\footnote{They are briefly mentioned in
the proof of Theorem \ref{ThVlad}, App.\ \ref{Proof1}, after Eq.
{\eqref{gge}}.}, while $\mathcal{Z}'$ is yet another distributional class.

Importantly, the Fourier transform $\mathcal{F}$ leaves $\mathcal{S}$
invariant. Since the Fourier transform is defined in the theory of
distributions by duality, we also have $\mathcal{F} (\mathcal{S}')
=\mathcal{S}'$: the Fourier transform of a tempered distribution is also a
tempered distribution. On the other hand, one can show (see
{\cite{gelfandshilov}}) that $\mathcal{F} (\mathcal{K}) =\mathcal{Z}$. This is
the rationale behind introducing the space $\mathcal{Z}$, and this also
implies that $\mathcal{F}$ maps $\mathcal{K}'$ to $\mathcal{Z}'$ and vice
versa.

Coming back to {\eqref{GMnaive}}, shifting the integration contour defines
this distribution as an element of $\mathcal{Z}'$. The pairing with a test
function $f \in \mathcal{Z}$ is thus defined by
\begin{equation}
  \int_C d z\, \int d\mathbf{y}\, \frac{1}{(- z^2 +\mathbf{y}^2)^{\De}} f (z,
  \mathbf{y})
\end{equation}
with the contour $C$ running parallel to the real axis in the lower half
plane.\footnote{We can somewhat relax the condition $f \in \mathcal{Z}$. At
the very least, $f$ must be holomorphic in the lower half-plane close to the real
axis and decrease sufficiently fast at infinity for the integral to be
convergent.} \ By the previous paragraph, this is then sufficient to define
the Fourier transform of the 2-point function as an element of $\mathcal{K}'$. By
moving the contour far away from the real axis, one shows that the Fourier
transform vanishes for negative energies, and by Lorentz invariance one
concludes that the support must belong to the forward light cone. These
arguments have parallels in the proof of Part 2 of Theorem \ref{ThVlad} (see
App.\ \ref{Proof1}).

Compared to this simple and almost elementary discussion, Theorem
\ref{ThVlad} proves a stronger statement that the 2-point distribution
{\eqref{GMnaive}} can be extended to test functions of Schwartz class and,
furthermore, to functions which have only a finite number of derivatives as
expressed by Eq.\ {\eqref{GMcont}}. This can be seen as a finer
characterization of the singularity structure at short distances. The Fourier
transform is then also a tempered distribution, thus bounded by some power,
which is a stronger statement than it being an element of $\mathcal{K}'$ since
those can grow arbitrarily fast at infinity.

Since the 2-point and 3-point correlators are known in closed form, one can in
principle verify that their Fourier transform does not grow too fast at
infinity by an explicit computation. This would provide an alternative proof
of temperedness. Our point here is that Theorem \ref{ThVlad} reaches this
conclusion without any computations. For the 4-point correlators considered below,
the Fourier transform cannot be evaluated easily, and Theorem \ref{ThVlad}
appears to be the only realistic way to show temperedness.

It is instructive to discuss why we insist so much on temperedness. In other
words, why Wightman axioms require that the Minkowski $n$-point correlators
must be tempered distributions, and not of some other class? There is a simple
reason why temperedness is a natural requirement, while $\mathcal{K}'$ or
$\mathcal{Z}'$ would not suffice. The point is that Wightman axioms include
\tmtextbf{both} commutativity at spacelike separation and the spectral
condition (the Fourier transform supported in the forward tube). Both these
conditions need compactly supported test functions: the former in position
space, the latter in momentum space. The space $\mathcal{S}$ is large enough
to write both these conditions, while $\mathcal{K}$ or $\mathcal{Z}$ are
inadequate as we would lose one of them.\footnote{For completeness it should
be noted that one can reduce $\mathcal{S}$ a bit and still be able to
formulate both these axioms, as for Jaffe fields {\cite{Jaffe}}, which may
have stronger-than-powerlaw singularities at short distances. For CFTs and for
any theory which asymptotes to a CFT at short distances, there is no reasons
to consider such modifications, and $\mathcal{S}$ remains the natural choice.}

Finally, sometimes by the $i \varepsilon$-prescription one
means the following simplified form of {\eqref{GMnaive}}:
\begin{eqnarray}
  &  & \frac{1}{(- s^2 + i 0^+ \tmop{sign} (s) +\mathbf{y}^2)^{\De}}, 
  \label{Expr1}
\end{eqnarray}
which agrees with {\eqref{GMnaive}} away from the light cone. {By Vladimirov's theorem, this defines a distribution for $s>0$ (including the light cone) and another distribution for $s<0$, but it is not an adequate starting point for defining the distribution around $(s,\mathbf{y})=(0,0)$.} 

\section{Scalar 4-point function}\label{sec:4-point}

This section is the heart of our paper. In it we will show how to define
Minkowski 4-point functions starting from Euclidean 4-point functions of a unitary
CFT. \ We will follow the strategy of Sec.\ \ref{strategy} and in particular
will rely on Theorem \ref{ThVlad}. To avoid inessential details, we will focus
on the case of four identical scalars. Non-identical scalars can be treated by
the same argument (see Sec.\ \ref{nonId}). Additional complications arise
for spinning operators; this case is postponed to a future publication
{\cite{paper2a}}.

So, we consider the Euclidean CFT 4-point function of four identical scalar
Hermitian primaries, which by conformal invariance can be written as:
\begin{equation}
  G^E_4 (c_E) \equiv \langle
  \mathcal{O}(x_1)\mathcal{O}(x_2)\mathcal{O}(x_3)\mathcal{O}(x_4) \rangle =
  \frac{1}{(x_{12}^2 x_{34}^2)^{\Delta_{\mathcal{O}}}} g (c_E) .
  \label{def:Euclidean4-point}
\end{equation}
Here $c_E = (x_1, x_2, x_3, x_4)$ denotes an ordered configuration of four
Euclidean non-coincident points ($x_k \in \mathbb{R}^d, \hspace{1em} x_i \neq
x_j$), $\Delta_{\mathcal{O}}$ is the scaling dimension of $\mathcal{O}$, and
$g (c_E)$ is a real function which depends only on the conformal class of
$c_E$. It can be written as a function of two conformally invariant
cross-ratios $u, v$:
\begin{equation}
  g (c_E) = g (u, v), \qquad u = \dfrac{x_{12}^2 x_{34}^2}{x_{13}^2 x_{24}^2},
  \quad v = \dfrac{x_{14}^2 x_{23}^2}{x_{13}^2 x_{24}^2} . \label{uv}
\end{equation}
Our plan is as follows. After a discussion of the basic issues involved in the
analytic continuation of the 4-point function (Sec.\ \ref{sec:informal}), we
will introduce a representation in terms of the radial coordinates $\rho,
\bar{\rho}$ (Sec.\ \ref{Eucl4-point}), and use it to construct the analytic
continuation to the whole forward tube $\mathcal{T}_4$ (Sec.\ \ref{anal4-point}). This construction works because $\rho, \bar{\rho}$ remain
strictly inside the unit disc everywhere in the forward tube (Lemma
\ref{bound} and Eq.\ {\eqref{3b}}), a fundamental fact proved in Sec.\ \ref{PetrProof}. We then briefly review the well-established powerlaw bound on
$g (\rho, \bar{\rho})$ with respect to $\rho, \bar{\rho}$, and prove a
powerlaw bound on $| \rho (c) |, | \bar{\rho} (c) |$ with respect to $c \in
\mathcal{T}_4$. Combining these powerlaw bounds together, we will get a
powerlaw bound on the analytically continued 4-point function $G_4 (c)$, which by
Theorem \ref{ThVlad} implies (as $c$ approaches the Minkowski region) the
existence of the boundary limit of $G_4 (c)$ as a tempered distribution
(Sec.\ \ref{power4-point}). After establishing temperedness, we will derive the
Minkowski conformal invariance (Sec.\ \ref{ConfMink}), Wightman positivity
(Sec.\ \ref{sec:Wpos}), Wightman clustering (Sec.\ \ref{clusterWightman})
and local commutativity (Sec.\ \ref{local-comm}). Some of them do not rely
on conformal properties: for these we will use the standard arguments given by
Osterwalder and Schrader {\cite{osterwalder1973}}. In Sec.\ \ref{nonId}, we
will generalize the above results to non-identical scalars by using
Cauchy-Schwarz arguments.

\subsection{Informal introduction to basic issues}\label{sec:informal}

Here we wish to outline a few basic difficulties which must be dealt with when
analytically continuing the 4-point function. We will be using $u, v$ coordinates as
an example, although we will see below that other coordinates will be more
suitable for our task. Readers uninterested in philosophical discussions may
skip directly to Sec.\ \ref{Eucl4-point}.

\begin{figure}[h]\centering
\includegraphics[width=0.5\textwidth]{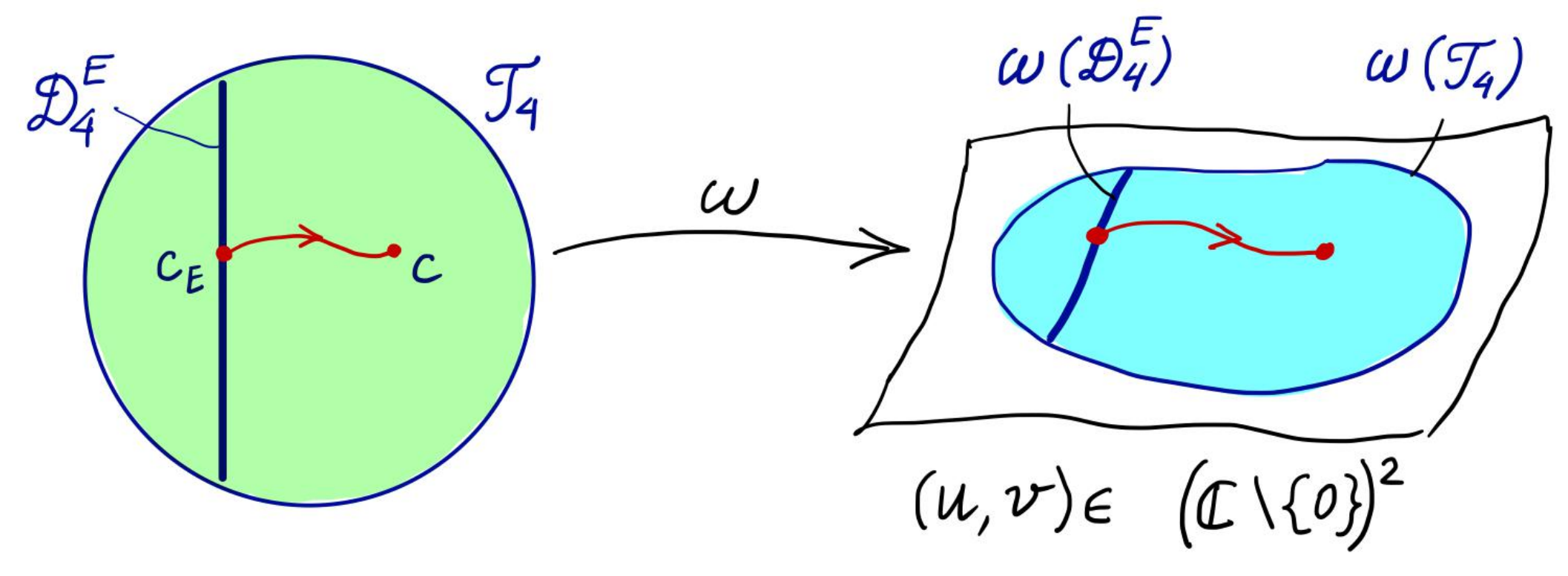}
  \caption{\label{fig:uv}Illustration of the discussion in Sec.\ \ref{sec:informal}.}
\end{figure}

Given any point $c$ of the forward tube, we can connect it to a Euclidean
point $c_E$ by a curve, and analytically continue the 4-point function along the
curve (see Fig.\ \ref{fig:uv}, left). The forward tube being simply connected,
the analytic continuation (if it exists) does not depend on the curve.
Furthermore, let us take into account that our conformal 4-point function
effectively only depends on two variables $u, v$. Applying Lemma \ref{xij2h},
we see that $u, v$ are both nonzero holomorphic functions on the forward tube.
Consider the map:
\begin{equation}
  \omega : c \rightarrow (u, v) \label{omegauv} .
\end{equation}
Since the forward tube is simply connected, we can consider this map as acting
from $\mathcal{T}_4$ to $(\widetilde{\mathbb{C} \backslash \{ 0 \}})^2$, where
tilde denotes the universal cover. Denote by $\omega (\mathcal{T}_4)$ and
$\omega (\mathcal{D}^E_4)$ the images of the forward tube and of its Euclidean
part under this map (see Fig.\ \ref{fig:uv}, right).

Now suppose that we found an analytic continuation of $g (u, v)$ from $\omega
(\mathcal{D}^E_4)$, where it is initially defined, to the whole of $\omega
(\mathcal{T}_4)$. Then we could immediately write down the analytic
continuation of the 4-point function to the forward tube as follows:\footnote{The
prefactor analytically continues just as the 2-point and 3-point functions in Sec.\ \ref{23-point}.}
\begin{equation}
  G_4 (c) = \frac{1}{(x_{12}^2 x_{34}^2)^{\Delta_{\mathcal{O}}}} g (u (c), v
  (c)) . \label{simple}
\end{equation}
This formula defines the analytic continuation to the forward tube as a
composition of two holomorphic functions:
\begin{equation}
  \mathcal{T}_4 \xrightarrow{\omega} \omega (\mathcal{T}_4)
  \xrightarrow{g} \mathbb{C}.
\end{equation}
We would like to use this strategy, but its direct implementation is hindered
by a couple of difficulties:
\begin{itemize}
  \item We don't know much about the shape or even topology of $\omega
  (\mathcal{T}_4)$. E.g.\ we don't know if this set is simply connected. The
  continuous image of a simply connected set, such as the forward tube, does
  not have to be simply connected (Fig.\ \ref{fig:simply}). If $\omega
  (\mathcal{T}_4)$ is not simply connected, there is no guarantee that $g (u,
  v)$ will be single-valued on it. And if $g (u, v)$ has branch cuts, then a
  simple formula like {\eqref{simple}} using only the endpoint values $(u (c),
  v (c))$ will not work; we will need to know in addition ``from which side of
  the cut'' we got to this point along the analytic continuation contour (see
  Fig.\ \ref{fig:simply}).
  
  \item To be sure, we don't know if the above difficulty is actually
  realized. Perhaps the set $\omega (\mathcal{T}_4)$ is, after all, simply
  connected, and $g (u, v)$ has a single-valued analytic continuation to it.
  Even if this is the case, how can we construct this extension starting from
  $g (u, v)$ in the Euclidean region?
\end{itemize}
\begin{figure}[h]\centering
  \includegraphics[width=0.5\textwidth]{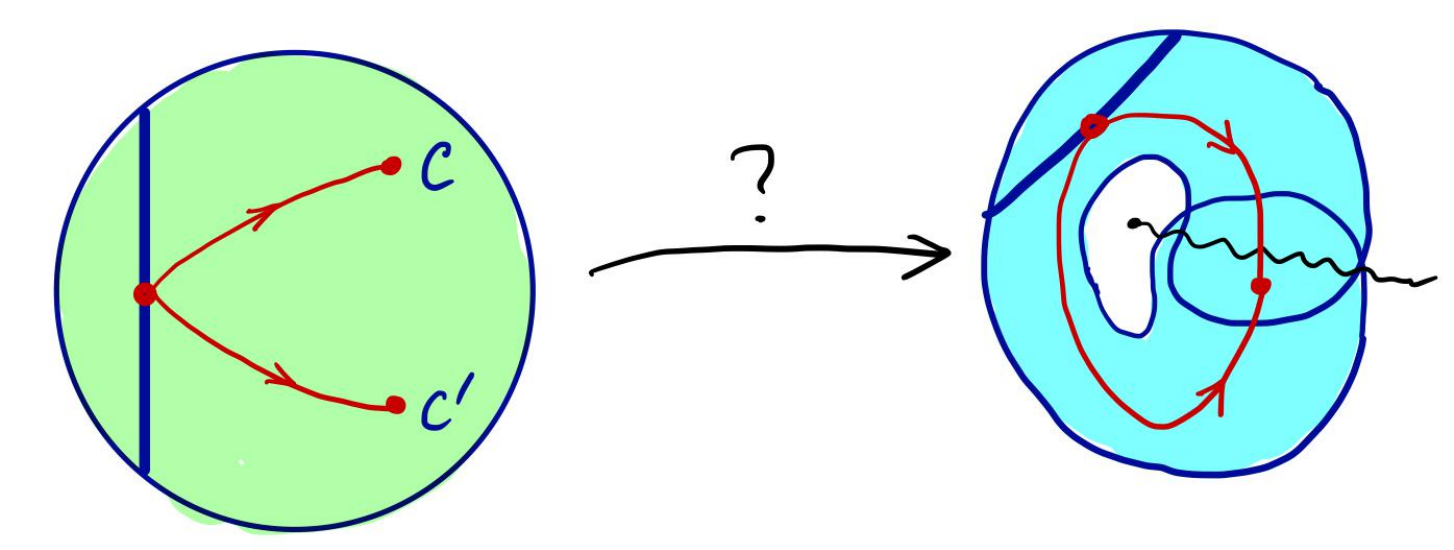}
  \caption{\label{fig:simply}Illustration of a potential difficulty if the set
  $\omega (\mathcal{T}_4)$ were not simply connected (see Sec.\ \ref{sec:informal}).}
\end{figure}

In this paper we will circumvent these difficulties rather than attacking them
head-on. In the Euclidean region, one often uses different variables to
parametrize the cross-ratios $u, v$, such as the Dolan-Osborn variables $z,
\bar{z}$, or the radial variables $\rho, \bar{\rho}$. As one can imagine, a
smart choice of Euclidean variables can greatly simplify the analytic
continuation. We will see that the radial variables are ideally suited for
this task, allowing a natural resolution of the above-mentioned difficulties.

\subsection{Euclidean 4-point function in radial coordinates}\label{Eucl4-point}

We first recall the well-known Dolan-Osborn variables $z, \bar{z}$
{\cite{Dolan:2000ut,Dolan:2003hv}}, which are two complex variables related to
$u, v$ by
\begin{gather}
u = z \bar{z}, \quad v = (1 - z)  (1 - \bar{z}), \qquad\infixor 
  \label{zzbar}\\
z, \bar{z} = \frac{1}{2} \left( 1 + u - v \pm \sqrt{(1 + u - v)^2 - 4
  u} \right) .  \label{zzbarsolved}
\end{gather}
Since in the Euclidean case we only consider non-coincident points, we have $u, v \neq 0$, and hence
$z, \bar{z} \neq 0, 1$. It is possible to fix a Euclidean conformal frame by
setting the four points to positions
\begin{equation}
  x_1 = 0, \quad x_2 = a \hat{e}_0 + b \hat{e}_1, \qquad x_3 = \hat{e}_0,
  \qquad x_4 = \infty \hat{e}_0, \label{frameE}
\end{equation}
where $\hat{e}_{\mu}$ is the standard orthonormal basis of $\mathbb{R}^d$.
Using this frame, we obtain $z, \bar{z} = a \pm i b$. This shows that in the
Euclidean, the variables $z, \bar{z}$ are complex-conjugate $(\bar{z} =
z^{\ast})$.

Euclidean configurations with real $z = \bar{z}$ correspond to four points
lying on a circle, which maps in the frame {\eqref{frameE}} to four points on
a line. The three possibilities $z < 0$, $z \in (0, 1)$, $z \in (1, + \infty)$
are then realized for different cycling orderings. \

The radial variables $\rho, \bar{\rho} \in \mathbb{C}$
{\cite{Pappadopulo:2012jk,Hogervorst:2013sma}} are defined in terms of the
Dolan-Osborn variables by the formula:
\begin{equation}
  \label{def:rho} \rho = f (z), \quad \bar{\rho} = f (\bar{z}), \quad f (w) :
  = \dfrac{w}{(1 + \sqrt{1 - w})^2} .
\end{equation}
The function $f (w)$ in this definition\footnote{The definition assumes the
standard branch of the square root function.} is the uniformization map for
the complex plane minus the cut $(1, + \infty),$ i.e.\ it is a one-to-one map
of $\mathbb{C} \backslash [1, + \infty)$ onto the unit disk. Eq.
{\eqref{def:rho}} thus associates with any Euclidean configuration a pair of
complex conjugate $\rho, \bar{\rho}$ $(\bar{\rho} = \rho^{\ast})$ belonging to
the unit disk: $| \rho | \leqslant 1$. Moreover we have $| \rho | < 1$ except
for the Euclidean configurations with $z = \bar{z} \in (1, + \infty)$. As
explained above, this happens when four points lie on a circle in the cyclic
order $1324$. For such exceptional configurations one may define $\rho,
\bar{\rho}$ by continuity so that $| \rho | = 1$, $\bar{\rho} = \rho^{\ast}$.

The meaning of the coordinate $\rho$ is clarified by mapping the 4-point
configuration to a conformal frame (compare {\eqref{frameE}})
\begin{equation}
  x_1 = - \alpha \hat{e}_0 - \beta \hat{e}_1, \quad x_2 = \alpha \hat{e}_0 +
  \beta \hat{e}_1, \qquad x_3 = \hat{e}_0, \qquad x_4 = - \hat{e}_0,
  \label{frameErho}
\end{equation}
Using this frame, we obtain $\rho, \bar{\rho} = \alpha \pm i \beta$.

There is a small difference between $d = 2$ and $d \geqslant 3$ dimensions.
In $d \geqslant 3$, conformal frames {\eqref{frameE}} and {\eqref{frameErho}}
are unique only up to a sign of $b$ and $\beta$ (flipped rotating by $\pi$ in
the $12$ plane), which implies that pairs $(z, \bar{z})$ and $(\rho,
\bar{\rho})$ are defined only up to permutation. On the other hand in $d = 2$
flipping the sign of $b$ or $\beta$ is a parity transformation, which is not
in the identity component of the conformal group. Hence the conformal frames
are unique and $z, \bar{z}$ as well as $\rho, \bar{\rho}$ are individually
meaningful.

In a unitary Euclidean CFT, the 4-point function admits a power-series
expansion in the $\rho$ coordinate, absolutely convergent whenever $| \rho | <
1$ {\cite{Pappadopulo:2012jk}}. Specifically, \tmtextit{the function $g
(c_E)$ appearing in the 4-point function {\eqref{def:Euclidean4-point}} of four
identical scalar Hermitean primaries has a series expansion of the form}
\begin{equation}
  g (c_E) = \sum_{\delta, m} p_{\delta, m} r^{\delta} e^{i m \theta},
  \label{g:rhoexpansion}
\end{equation}
\tmtextit{where the sum runs over a discrete set of pairs $(\delta, m)$ with
$\delta \geqslant 0$, $m \in 2\mathbb{Z}$, and the variables $r$, $\theta \in
\mathbb{R}$ are the modulus and the phase of $\rho (c_E) = r e^{i \theta}$.
The sum is absolutely convergent when $r = | \rho (c_E) | < 1$. In addition,
we know that $| m | \leqslant \delta$ and $p_{\delta, m} \geqslant 0$ for all
terms in {\eqref{g:rhoexpansion}}. Finally, when $d \geqslant 3$ we have
$p_{\delta, - m} = p_{\delta, m}$, so that the r.h.s.\ of
{\eqref{g:rhoexpansion}} is uniquely defined in spite of $\rho (c_E)$ being
defined only up to complex conjugation.}

The readers familiar with this fact may skip to Sec.\ \ref{anal4-point} where we
will use it to perform analytic continuation. In the rest of this section we
recall how it follows from the CFT axioms
{\cite{Pappadopulo:2012jk,Fitzpatrick:2012yx}}.

We consider the 4-point function in the conformal frame configuration
{\eqref{frameErho}} and write it as the inner product of two states created by
the operators outside and inside a unit sphere $S$ centered at the origin:
\begin{equation}
  \langle \mathcal{O} (1, 0, \tmmathbf{0}) \mathcal{O} (- 1, 0, \tmmathbf{0}
  \tmmathbf{}) | \mathcal{O} (\alpha, \beta, \tmmathbf{0}) \mathcal{O} (-
  \alpha, - \beta, \tmmathbf{0}) \rangle \label{frame0}
\end{equation}
We can find a conformal transformation which maps the sphere $S$ to $x^0 = 0$
plane, its center 0 to $x_S$ and the infinity to $x_N$. This is the setup in
which we developed the CFT Hilbert space picture in Sec.\ \ref{Hilbert}.
Applying the inverse transformation, we are allowed to use the Hilbert space
language in the frame {\eqref{frame0}}, which is the familiar setting of
radial quantization. We decompose the radial quantization Hilbert space,
produced by local operators inserted at the origin, in orthonormalized
eigenstates $| \delta, m \rangle$ of the dilatation $D$ and the planar
rotation $M_{01}$. The ket state is expanded in this basis as
\begin{equation}
  \nobracket | \mathcal{O} (\alpha, \beta, \tmmathbf{0}) \mathcal{O} (-
  \alpha, - \beta, \tmmathbf{0}) \rangle = \sum_{\delta, m} c_{\delta, m}
  r^{\delta - 2 \Delta_{\varphi}} e^{i m \theta} | \delta, m \rangle .
  \label{ketexpr}
\end{equation}
The dependence of the expansion coefficients in this formula on $r$ and
$\theta$ is fixed by knowing how the state in the l.h.s.\ transforms under
rotations and dilatations. The transformation $\theta \rightarrow \theta +
\pi$ swaps the two operators leaving the state invariant for the considered
case of identical operators. Hence the state in the r.h.s.\ also must remain
invariant, proving that $m$ must be even.

Setting $r = 1, \theta = 0$ in {\eqref{ketexpr}}, we get
\begin{equation}
  \nobracket | \mathcal{O} (1, 0, \tmmathbf{0}) \mathcal{O} (- 1, 0,
  \tmmathbf{0}) \rangle = \sum_{\delta, m} c_{\delta, m} | \delta, m \rangle .
  \label{ketexpr1}
\end{equation}
In the considered frame the OS reflection is the inversion with respect to the
sphere $S$: $x^{\mu} \rightarrow x^{\mu} / x^2 .$ In particular, this leaves
$x_3$ and $x_4$ invariant. Applying this transformation to {\eqref{ketexpr1}},
we get
\begin{equation}
  \langle \mathcal{O} (1, 0, \tmmathbf{0}) \mathcal{O} (- 1, 0, \tmmathbf{0}
  \tmmathbf{}) | = \sum_{\delta, m} c^{\ast}_{\delta, m} \langle \delta, m | .
  \label{braexpr}
\end{equation}
Taking the inner product of {\eqref{ketexpr}} and {\eqref{braexpr}}, we get
\begin{equation}
  \langle \mathcal{O} (1, 0, \tmmathbf{0}) \mathcal{O} (- 1, 0, \tmmathbf{0}
  \tmmathbf{}) | \mathcal{O} (\alpha, \beta, \tmmathbf{0}) \mathcal{O} (-
  \alpha, - \beta, \tmmathbf{0}) \rangle = \sum_{\delta, m} | c_{\delta, m}
  |^2 r^{\delta - 2 \Delta_{\varphi}} e^{i m \theta} .
\end{equation}
Comparing this with Eq.\ {\eqref{def:Euclidean4-point}}, and using that $x_{12}^2 =
4 r^2$, $x_{34}^2 = 4$ in the considered conformal frame, we obtain
{\eqref{g:rhoexpansion}} with $p_{\delta, m} = 16^{\Delta_{\mathcal{O}}} |
c_{\delta, m} |^2 \geqslant 0$.

In the above argument we chose for simplicity the sphere of radius 1, but any
sphere of radius $r < r_0 < 1$ would work equally well and give rise to the
same expression. Absolute convergence for $r < 1$ follows, because both the
bra and the ket states are normalizable for such $r_0$ (while for $r_0 = 1$ as
above the bra state $\langle \varphi (x_3) \varphi (x_4 \tmmathbf{}) |$ is not
normalizable).

The restriction $| m | \leqslant \delta$ follows from the 2d unitarity bounds.
The 2d unitarity bound applies, as any $d$-dimensional CFT restricted to a
plane can be seen as a unitary 2d CFT. For 2d primaries of spin $J$ and
dimension $\Delta$, the 2d unitarity bound says $| J | \leqslant \Delta$. The
descendants at level $n \in \mathbb{Z}_{\geqslant 0}$ have $\delta = \Delta +
n,$ $| m - J | \leqslant n$, hence $| m | \leqslant \delta$ follows.

Finally, let us prove that $p_{\delta, m} = p_{\delta, - m}$ in $d \geqslant
3$. We consider Eq.\ {\eqref{ketexpr1}} and perform a $\pi$ rotation in the 12
plane. In the r.h.s. $| \delta, m \rangle \rightarrow | \delta, - m \rangle$
because $M_{01} \rightarrow - M_{01}$ under such a rotation. On the other hand
the l.h.s.\ does not change. This implies that we must have $c_{\delta, m} =
c_{\delta, - m}$, and hence $p_{\delta, m} = p_{\delta, - m}$. (In $d = 2$,
these properties also hold under the additional assumption of parity
invariance.)

\subsection{Analytic continuation}\label{anal4-point}

In this section we will construct the analytic continuation of the Euclidean
4-point function {\eqref{def:Euclidean4-point}} to the forward tube
$\mathcal{T}_4$ (recall the forward tube definition {\eqref{forward}}).
Analytic continuation to $\mathcal{D}_4 \subset \mathcal{T}_4$ has already
been given in {\cite{Qiao:2020bcs}}, Sec.\ 3.4, and we will use a somewhat
streamlined version of that construction. We will analytically continue to the
full forward tube $\mathcal{T}_4$, since this does not lead to additional
complications.

The analytic continuation will be given by the formula
\begin{equation}
  G_4 (c) = \frac{1}{(x_{12}^2 x_{34}^2)^{\Delta_{\mathcal{O}}}} g (c), \qquad
  c \in \mathcal{T}_4 . \label{G4c}
\end{equation}
Here the prefactor trivially analytically continues to $\mathcal{T}_4$
similarly to the 2-point and 3-point functions discussed in Sec.\ \ref{23-point}. We
will construct $g (c)$, analytic continuation of $g (c_E)$, starting from Eq.
{\eqref{g:rhoexpansion}}.

First we have to define the variables $z (c)$, $\bar{z} (c)$ on the forward
tube, which is naturally done as follows. Given a configuration $c \in
\mathcal{T}_4$, we evaluate $u = u (c),$ $v = v (c)$ via {\eqref{uv}}. By
Lemma \ref{xij2h}, $u (c)$ and $v (c)$ are nonzero holomorphic functions on the
forward tube. We then define $z (c)$, $\bar{z} (c)$ via {\eqref{zzbarsolved}}:
\begin{equation}
  z (c), \bar{z} (c) = \frac{1}{2} \left( 1 + u (c) - v (c) \pm \sqrt{[1 + u
  (c) - v (c)]^2 - 4 u (c)} \right) . \label{zuv}
\end{equation}
Unlike for Euclidean configurations, for a general configuration $c \in
\mathcal{T}_4$ these are two complex numbers unrelated by conjugation. Since
$u (c)$ and $v (c)$ are nonzero, Eq.\ {\eqref{zzbar}} implies $z (c), \bar{z}
(c) \in \mathbb{C} \backslash \{ 0, 1 \}$.

Since Eq.\ {\eqref{zuv}} only defines $z (c), \bar{z} (c)$ up to permutation,
we view it as a map from the forward tube to $\mathbb{C}^2 /\mathbb{Z}_2$,
the set of unordered pairs of complex numbers. This map is continuous, and is
analytic everywhere except on $\Gamma \subset \mathcal{T}_4$ where the
expression under the square root vanishes:
\begin{equation}
  \Gamma = \{ c \in \mathcal{T}_4 : [1 + u (c) - v (c)]^2 - 4 u (c) = 0 \} .
  \label{Gamma}
\end{equation}
Actually, it turns out that in $d = 2$ one can resolve the ambiguity inherent
in Eq.\ {\eqref{zuv}} and define $z (c)$, $\bar{z} (c)$ as individually
globally holomorphic functions on $\mathcal{T}_4$. We will bring up this fact
below when we need it. Ref.\ {\cite{Qiao:2020bcs}}, App.\ A, showed that such an
improvement is impossible in $d \geqslant 3$.

The following result is fundamental for our construction. The proof is
elementary but a bit tricky and is postponed to Sec.\ \ref{PetrProof}.

\begin{lemma}
  \label{bound}For any $c \in \mathcal{T}_4$ we have $z (c)$, $\bar{z} (c)
  \nin [1, + \infty)$.
\end{lemma}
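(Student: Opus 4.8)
The plan is to prove the contrapositive-friendly statement directly: assuming $c \in \mathcal{T}_4$, show that the two roots $z(c), \bar z(c)$ of the quadratic \eqref{zuv} cannot land on the cut $[1,+\infty)$. The natural route is to rephrase membership on $[1,+\infty)$ in terms of the cross-ratios $u,v$, and then to derive a contradiction from the forward-tube conditions $\operatorname{Re} y_k^0 > |\operatorname{Im}\mathbf{y}_k|$ via the positivity-of-Minkowski-inner-products trick already used in Lemma \ref{xij2h}. Concretely, if (say) $z(c) \in [1,+\infty)$, then from $u = z\bar z$ and $v = (1-z)(1-\bar z)$ one gets $1 - z \leqslant 0$, and one can read off sign/reality constraints on combinations of $u,v$ (e.g. $\sqrt{(1+u-v)^2 - 4u}$ is real and $\geqslant |1+u-v|$, or equivalently $v \leqslant 0$ together with a condition pinning down which root we picked). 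The key will be to translate ``$z \in [1,\infty)$'' into a statement about the $x_{ij}^2$: since $u = \frac{x_{12}^2 x_{34}^2}{x_{13}^2 x_{24}^2}$ and $v = \frac{x_{14}^2 x_{23}^2}{x_{13}^2 x_{24}^2}$, and since Lemma \ref{xij2h} already tells us each $x_{ij}^2 \in \mathbb{C}\setminus(-\infty,0]$ for $i<j$, we know $u,v$ are ratios of such quantities.

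The heart of the argument, I expect, is the following geometric reformulation. Write $y_{ij} = x_i - x_j$ and, as in the proof of Lemma \ref{xij2h}, set $(i\,y_{ij}^0, \mathbf{y}_{ij}) = \xi_{ij} + i\eta_{ij}$ with all $\eta_{ij}$ in the open forward light cone (this uses the forward-tube inequalities together with the telescoping $y_{13} = y_{12} + y_{23}$ etc., so the $\eta$'s are genuinely forward and the total ordering is consistent). Then $x_{ij}^2 = (\xi_{ij} + i\eta_{ij})^2 = \xi_{ij}^2 - \eta_{ij}^2 + 2i(\xi_{ij}\cdot\eta_{ij})$ in Minkowski signature. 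The condition $z(c) \in [1,+\infty)$ should force a degenerate alignment among these vectors — morally, that the four points become ``collinear on a circle in cyclic order $1324$'' in the complexified sense, which forces certain imaginary parts to vanish or certain Minkowski products to have a fixed sign, and this is incompatible with all $\eta_{ij} \succ 0$. The cleanest way to extract the contradiction is probably: $z, \bar z \in [1,\infty)$ would require $\bar z = z^*$ only in the Euclidean locus, but for the mixed case one root real $\geqslant 1$ forces the discriminant $(1+u-v)^2 - 4u$ to be a nonnegative real number AND $1+u-v$ to be real with $1 + u - v \geqslant 2\sqrt{u}$ (taking the appropriate branch) — i.e. $u,v$ real, $v \leqslant 0$. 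Then one must rule out real $u$, $v \leqslant 0$ directly from the light-cone positivity: $v = \frac{x_{14}^2 x_{23}^2}{x_{13}^2 x_{24}^2} \leqslant 0$ with all four numerator/denominator factors in $\mathbb{C}\setminus(-\infty,0]$ means the product of phases conspires to give a negative real — but each factor has argument in $(-\pi,\pi)$, and one shows the sum of the four (signed) arguments cannot equal $\pm\pi$ given the forward-cone constraints. This last ``phase-counting'' step is where I expect the real work to be, and it will likely require tracking the imaginary parts $(\xi_{ij}\cdot\eta_{ij})$ carefully, perhaps after reducing to a convenient conformal frame or to the $d=2$ case first (where $z,\bar z$ are individually holomorphic, as the text notes) and then bootstrapping.

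An alternative, possibly slicker, approach worth mentioning: reduce to a two-dimensional subproblem. Restricting all four points to a common complex 2-plane containing the relevant $\eta$-directions, the cross-ratios $u,v$ (hence $z,\bar z$) are unchanged, so it suffices to prove the lemma in $d=2$. In $d=2$ one can use the explicit holomorphic functions $z(c), \bar z(c)$ on $\mathcal{T}_4$ and the product structure: $z(c)$ depends holomorphically on the $x_i^0 \pm i\mathbf{x}_i$ ``light-cone'' combinations in a way that factorizes, and the forward-tube condition becomes a statement that each light-cone coordinate difference lies in a half-plane; then ``$z \in [1,\infty)$'' is excluded by a one-variable argument about when a Möbius-type combination of such half-plane quantities can be real and $\geqslant 1$. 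The main obstacle in either approach is the same: converting the somewhat opaque algebraic condition ``the root of the quadratic lies on the ray $[1,\infty)$'' into a clean geometric incompatibility with ``all imaginary-part vectors lie in the forward cone,'' and I would budget most of the proof's length for exactly that translation, with the forward-cone positivity lemma (the $\xi$-spacelike argument from Lemma \ref{xij2h}) doing the final contradiction.
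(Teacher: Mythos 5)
The plan sketched here does not close the gap you yourself flag at the end, and the specific reduction you propose is incorrect. You assert that $z(c)\in[1,+\infty)$ would force the discriminant $(1+u-v)^2-4u$ to be nonnegative real and hence $u,v$ real with $v\leqslant 0$. That does not follow: $z$ and $\bar z$ are two \emph{independent} complex numbers in the forward tube (not complex conjugates), so having one root $z$ land on $[1,\infty)$ places no reality constraint on $u=z\bar z$ or $v=(1-z)(1-\bar z)$ because $\bar z$ is unconstrained. A quick counterexample: $z=2$, $\bar z=i$ gives $u=2i$, discriminant $3-4i$. So the ``phase-counting on $x_{ij}^2$ to rule out $v\leqslant 0$'' step never gets off the ground. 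Your alternative route (restrict to a common $2$-plane to reduce to $d=2$) also fails in general: four complexified points in $\mathcal{T}_4$ do not lie in a $2$-plane, and the forward-tube condition is not preserved under the conformal maps you would need to make them do so.

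The missing idea is a coordinate normalization that \emph{linearizes} the quadratic before any sign analysis. The paper sets $\zeta_3=0$ (so $\eta_1,\eta_2\succ 0$, $\eta_4\prec 0$) and then applies the inversion $\zeta\mapsto\zeta'=\zeta/\zeta^2$. By Lemma~\ref{Petr} this preserves the causal half of the light cone each $\eta_i$ lies in, and it simultaneously sends $\zeta_3$ to infinity while leaving the cross-ratios unchanged. In the inverted frame the quadratic \eqref{quadeq} collapses to the single perfect square
\begin{equation*}
\bigl(\zeta'_{14}+(z-1)\,\zeta'_{24}\bigr)^2=0\,,
\end{equation*}
and since $\eta'_{14},\eta'_{24}\succ 0$, for $z\in[1,\infty)$ the vector $\zeta'_{14}+(z-1)\zeta'_{24}$ has imaginary part in the open forward cone, so Lemma~\ref{xij2h} gives an immediate contradiction. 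No reality of $u,v$ or case analysis on the discriminant is ever needed; the inversion is what converts ``root on a ray'' into ``Minkowski square of a vector with timelike imaginary part vanishes,'' which is exactly the contradiction your last paragraph was reaching for but did not locate.
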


We next define $\rho (c)$, $\bar{\rho} (c)$ on $\mathcal{T}_4$, via
\begin{equation}
  \rho (c) = f (z (c)), \quad \bar{\rho} (c) = f (\bar{z} (c)), \label{rc}
\end{equation}
where $f$ is the same function as in {\eqref{def:rho}}, mapping $\mathbb{C}
\backslash [1, + \infty)$ onto the unit disk. By Lemma \ref{bound}, we then
have
\begin{equation}
  0 < | \rho (c) |, | \bar{\rho} (c) | < 1 \text{\qquad for any\qquad} c \in
  \mathcal{T}_4 .\footnote{Note that the converse is not true: the region
  in which $0<|\rho|,|\bar\rho|<1$ is larger than the forward tube. For example, it includes
  the extended forward tube (see Sec.~\ref{localCFT}).} \label{3b}
\end{equation}
{Moreover, $\rho(c)$ and $\bar\rho(c)$ are locally holomorphic away from $\G$.}
Because of this, and since Eq.\ {\eqref{g:rhoexpansion}} for the 4-point
function converges in the Euclidean for any $| \rho | < 1$, we may hope to use
Eq.\ {\eqref{g:rhoexpansion}} to analytically continue $g (c)$ to the whole
forward tube. We will now carry out this strategy. Note that some extra care
is needed, because $\rho (c)$ and $\bar{\rho} (c)$ are, just as $z (c)$ and
$\bar{z} (c)$, not globally holomorphic and are defined only up to permutation
(except in $d = 2$, see below), and because {\eqref{g:rhoexpansion}} contains
in general non-integer powers.

To begin with, we rewrite Eq.\ {\eqref{g:rhoexpansion}} equivalently as
\begin{equation}
  \label{g:rhoexpansion2} g (c_E) = \sum_{\delta, 0 \leqslant m \leqslant
  \delta} (\rho \bar{\rho})^{\delta / 2 - m / 2}  (p_{\delta, m} \rho^m +
  p_{\delta, - m} \bar{\rho}^m) \hspace{0.17em},
\end{equation}
Various pieces of this formula need to be analytically continued to the
forward tube. Consider first
\begin{equation}
  R (c) = \rho (c)  \bar{\rho} (c),
\end{equation}
which is a candidate for the analytic continuation of $\rho \bar{\rho}$ from
the Euclidean region. We can view it as a composition of two functions: $c
\mapsto (\rho (c), \bar{\rho} (c))$ which is a continuous function from
$\mathcal{T}_4$ to $\mathbb{C}^2 /\mathbb{Z}_2$ analytic away from $\Gamma$,
followed by $(\rho, \bar{\rho}) \mapsto \rho \bar{\rho}$ which is a continuous
holomorphic function from $\mathbb{C}^2 /\mathbb{Z}_2$ to $\mathbb{C}$. Hence $R
(c)$ is a continuous function on the forward tube, analytic everywhere except
perhaps on $\Gamma$. However, manifold $\Gamma$ has complex codimension one,
and by an analogue of Riemann's theorem about removable singularities we
conclude that $R (c)$ is in fact analytic also on $\Gamma$, and thus on the
whole $\mathcal{T}_4$.\footnote{The precise argument is as follows. Let us
keep all complex coordinates fixed and vary just one, say $x_{1}^{0}$.
There are two cases: either {\eqref{Gamma}} is identically zero as a function
of $x_{1}^{0}$, or it is a nonzero polynomial of $x_{1}^{0}$. In
the first case $R (c)$ is trivially holomorphic in $x_{1}^{0}$. In the
second case {\eqref{Gamma}} vanishes at most for a few isolated values of
$x_1^0$. We can then apply 1d Riemann's theorem to say that $R (c)$ is also
analytic at those isolated points. By these arguments, we conclude that $R
(c)$ is holomorphic in each variable separately. Finally, a continuous function
of several complex variables holomorphic in each variable separately is jointly
holomorphic {\cite{Osgood}}.}

In addition, $R (c)$ is nonzero in the forward tube. Thus we can lift $R (c)$
to a holomorphic function $\tilde{R} (c)$ from the forward tube to the universal
cover {$\widetilde{\mathbb{C} \backslash \{ 0 \}}$}. Composing this function with $z^h :
\widetilde{\mathbb{C} \backslash \{ 0 \}} \rightarrow \mathbb{C}$, we obtain
an analytic continuation of $(\rho \bar{\rho})^h$ for any $h \in \mathbb{R}$,
which we denote by $R_h (c)$. This discussion mirrors the one around Eq.
{\eqref{cxij2}} in Sec.\ \ref{23-point}. However, unlike $x_{i j}^2$ in that
discussion, it is not true that $\tilde{R} (c)$ always belongs to the
principal sheet of {$\widetilde{\mathbb{C} \backslash \{ 0 \}}$}. So, in general, to
compute the phase of the analytically continued function, one should follow
the phase of $\rho \bar{\rho}$ along a curve joining $c_E$ to $c$.

Following a curve is perfectly fine as a theoretical device. For practical
computations of the phase, one may wish to use instead the following trick
which avoids having to look at the curve. (The reader happy to follow the curve may skip the trick and go directly to Eq.\ {\eqref{Phi}}.) Consider the
identity:
\begin{equation}
  \rho \bar{\rho} = \frac{1}{16} u (1 + \rho)^2 (1 + \bar{\rho})^2 =
  \frac{1}{16} \frac{x^2_{12} x^2_{34}}{x^2_{13} x^2_{24}} Y^2, \quad Y = (1 +
  \rho) (1 + \bar{\rho})^{}, \label{rhotrick}
\end{equation}
which follows by using $z = \frac{4 \rho}{(1 + \rho)^2}$, the inverse of the
relation {\eqref{def:rho}} between $\rho$ and $z$, as well as $u = z \bar{z}$
and the expression for $u$. The function $Y (c) = (1 + \rho (c)) (1 +
\bar{\rho} (c))$ is holomorphic on $\mathcal{T}_4$ by the same ``analyticity
on $\mathcal{T}_4 \backslash \Gamma$ plus Riemann's theorem'' argument as used
above for $\rho \bar{\rho}$. In addition, and this is the key point, because
$| \rho (c) |, | \bar{\rho} (c) | < 1$, we know that $Y (c) \in \mathbb{C}
\backslash (- \infty, 0]$. The upshot of the trick is that Eq.
{\eqref{rhotrick}} expresses $\rho \bar{\rho}$ as a product of factors which
all remain on the principal sheet of $z^h$ upon the analytic continuation.
Hence we can compute the analytic continuation of $(\rho \bar{\rho})^h$ by
\begin{equation}
  R_h (c) = \frac{1}{16^h} \frac{(x^2_{12})^h (x^2_{34})^h}{(x^2_{13})^h
  (x^2_{24})^h} Y (c)^{2 h} \qquad (h \in \mathbb{R}), \label{Rh}
\end{equation}
This determines the phase of $R_h (c)$ unambiguously without having to
look at the curve joining $c_E$ to $c$.

Next, we consider for an integer $m$ a function
\begin{equation}
  \Phi_m (c) = \rho (c)^m + \bar{\rho} (c)^m . \label{Phi}
\end{equation}
Just as $\rho \bar{\rho}$ and $Y$, it is continuous on $\mathcal{T}_4$ and
holomorphic on $\mathcal{T}_4 \backslash \Gamma$, and thus holomorphic on the
whole $\mathcal{T}_4$.

We can now define the analytic continuation of {\eqref{g:rhoexpansion2}}.
Consider first $d \geqslant 3$, when $p_{\delta, - m} = p_{\delta, m}$. In
this case the analytic continuation is given by the formula
\begin{equation}
  g (c) = \sum_{m, \delta, 0 \leqslant m \leqslant \delta} p_{\delta, m}
  R_{\delta / 2 - m / 2} (c) \Phi_m (c) . \label{eq:gtilde}
\end{equation}
This series consists of holomorphic functions, and it reduces to
{\eqref{g:rhoexpansion2}} in the Euclidean region. Furthermore, every term in
the series can be bounded in absolute value by:
\begin{eqnarray}
  | p_{\delta, m} R^{\delta / 2 - m / 2} (c) \Phi_m (c)  | & \leqslant &
  p_{\delta, m}  | \rho (c) \bar{\rho} (c) |^{\delta / 2 - m / 2} (| \rho (c)
  |^m + | \bar{\rho} (c) |^m) \nonumber\\
  & \leqslant & p_{\delta, m} r^{\delta - m} (r^m + r^m),  \label{maj}
\end{eqnarray}
where $r = r (c) = \max (| \rho (c) |, | \bar{\rho} (c) |)$, which is $< 1$ by
Eq.\ {\eqref{3b}}. Here we used $p_{\delta, m} \geqslant 0$ in the first line,
and $\delta - m \geqslant 0$ in the second line. The terms in the r.h.s.\ of
{\eqref{maj}} comprise a positive convergent series whose sum is the Euclidean
4-point function {\eqref{g:rhoexpansion2}} evaluated at $\rho = \bar{\rho} = r
(c)$. This proves that {\eqref{eq:gtilde}} converges uniformly on compact
subsets of $\mathcal{T}_4$, and hence defines a holomorphic function in
$\mathcal{T}_4$.

It remains to consider $d = 2$. As anticipated above, in this case the
functions $z (c), \bar{z} (c)$ are individually globally holomorphic on
$\mathcal{T}_4$. This can be seen introducing coordinates (see
{\cite{Qiao:2020bcs}}, Sec.\ 3.5)
\begin{equation}
  z_k = x_k^0 + i x_k^1, \qquad \bar{z}_k = x_k^0 - i x_k^1, \qquad k = 1, 2,
  3, 4.
\end{equation}
Then the explicit formulas for $z (c), \bar{z} (c)$ are given by:
\begin{equation}
  z (c) = \frac{(z_1 - z_2) (z_3 - z_4)}{(z_1 - z_3) (z_2 - z_4)}, \qquad
  \bar{z} (c) = \frac{(\bar{z}_1 - \bar{z}_2) (\bar{z}_3 -
  \bar{z}_4)}{(\bar{z}_1 - z_3) (\bar{z}_2 - \bar{z}_4)} . \label{zzbarglobal}
\end{equation}
The functions $\rho (c), \bar{\rho} (c)$ defined by {\eqref{rc}} are also
individually globally holomorphic on $\mathcal{T}_4$. As a consequence, the
functions $\rho (c)^m$ and $\bar{\rho} (c)^m$ are individually holomorphic in $d
= 2$, and not just their sum {\eqref{Phi}}. We can therefore define the
analytic continuation of $g (c)$ by the formula (compare {\eqref{eq:gtilde}}):
\begin{equation}
  g (c) = \sum_{m, \delta, 0 \leqslant m \leqslant \delta} R_{\delta / 2 -
  m / 2} (c)  [p_{\delta, m} \rho (c)^m + p_{\delta, - m} \bar{\rho} (c)^m]
  . \label{eq:gtilde2d}
\end{equation}
This formula would be appropriate for non-parity invariant 2d CFTs which may
have $p_{\delta, m} \neq p_{\delta, - m}$. Analyticity follows from the
uniform convergence on compact subsets, by the same argument as for $d
\geqslant 3$.

Finally, we wish to explain how the above construction may be translated into
the language of Sec.\ \ref{sec:informal}, to see how the issues raised there
are resolved. This is instructive but not strictly speaking necessary, so we
will be schematic. In the 2d case, when $\rho (c)$, $\bar{\rho} (c)$ are
individually defined, the translation is in terms of the map
\begin{equation}
  \Omega : c \mapsto (\rho (c), \bar{\rho} (c)) \in (\widetilde{\mathbb{D}
  \backslash \{ 0 \}})^2,
\end{equation}
where $\mathbb{D}$ is the open unit disk, and we lifted each of the maps $\rho
(c)$, $\overline{\rho (c)}$ to the universal cover of $\mathbb{D} \backslash
\{ 0 \}$. This map is the present analogue of $\omega$ in {\eqref{omegauv}}.
The function $g (\rho, \bar{\rho})$ extends analytically to the whole
$(\widetilde{\mathbb{D} \backslash \{ 0 \}})^2$, which makes it unnecessary to
understand the precise shape of $\Omega (\mathcal{T}_4)$.

For $d \geqslant 3$, $\rho (c)$, $\bar{\rho} (c)$ are defined only up to
permutation. Translation can then be done in terms of their symmetric
combinations $\rho \bar{\rho}, \rho + \bar{\rho}$. Any symmetric polynomial in
$\rho$, $\bar{\rho}$, such as the r.h.s.\ of {\eqref{Phi}}, can \ be expressed
as a polynomial in these coordinates. Let then $X$ be the image of
$(\mathbb{D} \backslash \{ 0 \})^2$ under the map $(\rho, \bar{\rho}) \mapsto
(\rho \bar{\rho}, \rho + \bar{\rho})$. The following map is holomorphic on
$\mathcal{T}_4$:
\begin{equation}
  \Omega : c \mapsto (\rho (c) \bar{\rho} (c), \rho (c) + \bar{\rho} (c)) \in
  \tilde{X},
\end{equation}
where we lifted to the universal cover. The above argument can be interpreted
as showing that the function $g (\rho, \bar{\rho})$ extends analytically to
the whole $\tilde{X}$. Understanding the precise shape of $\Omega
(\mathcal{T}_4)$ is once again unnecessary.

\subsection{Proof of $z, \bar{z} \nin [1, + \infty)$}\label{PetrProof}

Here we will prove Lemma \ref{bound} which played such a fundamental role in
the previous section. Just as for Lemma \ref{xij2h}, it will be helpful to use
the Minkowski metric. Thus we pass from Euclidean complex coordinates $x_k \in
\mathbb{C}^d$ to Minkowski complex coordinates $\zeta_k = (i x^0_k,
\mathbf{x}_k) \in \mathbb{C}^{1, d - 1}$. Definitions of $u, v$ are then
rewritten equivalently as
\begin{equation}
  u = \frac{\zeta_{12}^2 \zeta_{34}^2}{\zeta_{13}^2 \zeta_{24}^2}, \quad v =
  \frac{\zeta_{23}^2 \zeta_{14}^2}{\zeta_{13}^2 \zeta_{24}^2},
\end{equation}
where $\zeta_{i j} = \zeta_i - \zeta_j$ and $\zeta^2 = - (\zeta^0)^2
+\tmmathbf{\zeta}^2 .$ We denote
\begin{equation}
  \zeta_k = \xi_k + i \eta_k, \quad \xi_k, \eta_k \in \mathbb{R}^{1, d - 1} .
\end{equation}
We will thus use Minkowski norm for $\xi$'s, $\eta$'s and their differences.
The forward tube condition on $x_k$ is rewritten as $\eta_k - \eta_{k + 1}
\succ 0$ which is the notation for
\begin{equation}
  \eta^0_k - \eta^0_{k + 1} > 0 \quad \infixand \quad - (\eta_k - \eta_{k +
  1})^2 > 0. \label{FTc}
\end{equation}
We will need the following lemma which is related to Lemma \ref{xij2h} (see
the proof at the end of the section).

\begin{lemma}
  \label{Petr}Let $\zeta = \xi + i \eta$ and $\eta^2 < 0$. Then
  
  (a) $\zeta^2 \neq 0$;
  
  (b) Define $\zeta' = \xi' + i \eta'$ by
  \begin{equation}
    \zeta' = \zeta / \zeta^2,
  \end{equation}
  which is finite by Part (a). Then $\eta'$ belongs to the same causal part of
  the light cone (future or past) as $\eta$. I.e.\ $\eta \succ 0 \Rightarrow
  \eta' \succ 0$. Analogously, $\eta \prec 0 \Rightarrow \eta' \prec 0$.
\end{lemma}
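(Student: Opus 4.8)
The plan is to prove both parts by a direct computation with the real and imaginary parts of $\zeta^2$, exploiting that $\eta$ is timelike. First I would write $\zeta^2 = \xi^2 - \eta^2 + 2i(\xi\cdot\eta)$, where all products are Minkowski. For Part (a), suppose $\zeta^2 = 0$; then $\xi\cdot\eta = 0$ and $\xi^2 = \eta^2 < 0$. But $\eta$ timelike and $\xi\cdot\eta = 0$ forces $\xi$ to be spacelike (or zero), so $\xi^2 \geqslant 0$, contradicting $\xi^2 < 0$. This is essentially the same one-line argument as in Lemma \ref{xij2h}, only now we do not even need the full strength ``$\zeta^2 \notin (-\infty,0]$'' — just nonvanishing — because the hypothesis here is merely $\eta^2 < 0$ rather than $\eta \succ 0$.

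For Part (b), the key is to compute $\eta' = \operatorname{Im}(\zeta/\zeta^2)$ explicitly. Writing $\zeta^2 = A + iB$ with $A = \xi^2 - \eta^2$, $B = 2(\xi\cdot\eta)$, we have
\begin{equation}
  \zeta' = \frac{\zeta \,\overline{\zeta^2}}{|\zeta^2|^2} = \frac{(\xi + i\eta)(A - iB)}{|\zeta^2|^2},
\end{equation}
so that $\eta' = (A\eta - B\xi)/|\zeta^2|^2$. Since $|\zeta^2|^2 > 0$, the sign of the causal character of $\eta'$ is governed by $w := A\eta - B\xi = (\xi^2 - \eta^2)\eta - 2(\xi\cdot\eta)\xi$. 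I would then show two things: $w$ is timelike, i.e.\ $w^2 < 0$; and $w^0$ has the same sign as $\eta^0$ (equivalently $w$ lies in the same cone as $\eta$). For the first, expand $w^2 = A^2\eta^2 - 2AB(\xi\cdot\eta) + B^2\xi^2 = A^2\eta^2 - B^2\eta^2 + B^2(\xi^2 + \eta^2) \cdot(\text{...})$; more cleanly, note that $\zeta'^2 = \zeta^2/(\zeta^2)^2 = 1/\zeta^2$, hence $\operatorname{Re}\,\zeta'^2 = A/|\zeta^2|^2$ and one computes $\eta'^2 - \xi'^2 = -\operatorname{Re}(1/\zeta^2) = -A/|\zeta^2|^2 = (\eta^2 - \xi^2)/|\zeta^2|^2$. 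This does not immediately give $\eta'^2 < 0$, so instead I would argue directly: pick a Lorentz frame where $\eta = (\eta^0, \mathbf 0)$ with $\eta^0 > 0$ (assuming $\eta \succ 0$); then $\eta^2 = -(\eta^0)^2$, $\xi\cdot\eta = -\eta^0\xi^0$, $A = \xi^2 + (\eta^0)^2$, $B = -2\eta^0\xi^0$, and $w = A\eta - B\xi = (A\eta^0 + 2\eta^0(\xi^0)^2 - 2\eta^0(\xi^0)^2, 2\eta^0\xi^0\boldsymbol\xi)$... — more simply, $w^0 = A\eta^0 + 2\eta^0(\xi^0)^2 = \eta^0(\xi^2 + (\eta^0)^2 + 2(\xi^0)^2) = \eta^0((\boldsymbol\xi)^2 + (\eta^0)^2 + (\xi^0)^2) > 0$, which already shows $w$ (and hence $\eta'$) is in the forward cone provided we separately check $w$ is timelike.

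The timelike check in this frame: $w^2 = -(w^0)^2 + |\mathbf w|^2$ with $\mathbf w = -B\boldsymbol\xi = 2\eta^0\xi^0\boldsymbol\xi$, so $|\mathbf w|^2 = 4(\eta^0)^2(\xi^0)^2|\boldsymbol\xi|^2$ and $(w^0)^2 = (\eta^0)^2(|\boldsymbol\xi|^2 + (\eta^0)^2 + (\xi^0)^2)^2$; expanding the square and comparing, $(w^0)^2 - |\mathbf w|^2 = (\eta^0)^2\big[(|\boldsymbol\xi|^2 + (\eta^0)^2 + (\xi^0)^2)^2 - 4(\xi^0)^2|\boldsymbol\xi|^2\big] = (\eta^0)^2\big[(|\boldsymbol\xi|^2 - (\xi^0)^2)^2 + 2(\eta^0)^2(|\boldsymbol\xi|^2 + (\xi^0)^2) + (\eta^0)^4\big] > 0$, so $w^2 < 0$. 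Combined with $w^0 > 0$, this gives $\eta' = w/|\zeta^2|^2 \succ 0$. The case $\eta \prec 0$ follows by replacing $\zeta$ with $-\zeta$ (which changes neither $\zeta^2$ nor the construction but flips $\eta$), or by the obvious sign symmetry of the computation. I expect the main obstacle to be purely organizational: choosing the frame cleanly so the algebra for $w^2 < 0$ stays short, and making sure the frame choice is legitimate (any timelike $\eta$ can be brought to rest-frame form by a Lorentz transformation, and $\zeta^2$, the causal class of $\eta'$, and all hypotheses are Lorentz invariant). There is no conceptual difficulty — it is the same circle of ideas as Lemma \ref{xij2h}, just pushed one step further to control the image under inversion.
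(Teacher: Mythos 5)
Your proof is correct and runs on essentially the same rails as the paper's: both reduce Part (b) to the observation that, up to a positive scalar, $\eta'$ equals $w = (\xi^2-\eta^2)\eta - 2(\xi\cdot\eta)\xi$, and both show $w$ is timelike and lies in the same causal cone as $\eta$. The execution differs slightly. You pass to a rest frame with $\eta = (\eta^0,\mathbf 0)$ and verify separately that $w^0>0$ and $w^2<0$ by explicit algebra; the paper instead works frame-independently, noting the identity
\begin{equation}
  w^2 = (\xi^2-\eta^2)^2\,\eta^2 + 4(\xi\cdot\eta)^2\,\xi^2 - 4(\xi\cdot\eta)^2(\xi^2-\eta^2) = \eta^2\bigl[(\xi^2-\eta^2)^2 + 4(\xi\cdot\eta)^2\bigr] < 0
\end{equation}
(which you nearly found before abandoning the $\zeta'^2 = 1/\zeta^2$ detour), and then fixes the forward-versus-backward ambiguity by continuity in $\xi$ from the obvious $\xi=0$ case $w = (-\eta^2)\eta$. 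Your route buys an explicit check on $w^0$ with no continuity argument; the paper's buys a one-line, manifestly Lorentz-invariant timelikeness proof. Both are fine; the paper's intrinsic identity is worth knowing since it sidesteps the frame choice entirely.
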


Let us start the proof of Lemma \ref{bound}. The $z, \bar{z}$ are defined from
$u, v$ via {\eqref{zzbar}}. It is not hard to see from the first line of
{\eqref{zzbar}} that $z, \bar{z}$ are precisely the two solutions of the
quadratic equation
\begin{equation}
  z^2 - (1 + u - v) z + u = 0. \label{quadeq}
\end{equation}
We thus have to show that, assuming {\eqref{FTc}}, this equation has no
solutions which are real and belong to the interval $[1, + \infty)$.

Without loss of generality, we can assume that $\zeta_3 = 0$.\footnote{It is
important to move $\zeta_3$ (or $\zeta_2$) to zero rather than $\zeta_1$ or
$\zeta_4$, because only then, after applying the inversion, one gets causal
information not only on $\eta'_k$'s but also on some of their differences.}
Then we have $\eta_1, \eta_2 \succ 0$ while $\eta_4 \prec 0$. Then we apply
Lemma \ref{Petr} and map the configuration $(\zeta_1, \zeta_2, 0, \zeta_4)$ to the
configuration $(\zeta_1', \zeta_2', \infty, \zeta_4')$ with $\eta_1', \eta_2'
\succ 0$ while $\eta_4' \prec 0$. These relations imply $\eta'_{14} \succ 0,
\eta'_{24} \succ 0$ which will be used below.

Since $u, v$ are invariant under the inversion, we have {(this can be checked by a direct computation)}
\begin{equation}
  u = \frac{(\zeta'_{12})^2}{(\zeta'_{24})^2}, \quad v =
  \frac{(\zeta'_{14})^2}{(\zeta'_{24})^2},
\end{equation}
and Eq.\ {\eqref{quadeq}} reduces to
\begin{equation}
  (\zeta'_{24})^2 z^2 - [(\zeta'_{24})^2 + (\zeta'_{12})^2 -
  (\zeta'_{14})^2] z + (\zeta'_{12})^2 = 0 .
\end{equation}
Using that $\zeta'_{12} = \zeta_{14}' - \zeta_{24}'$, this equation can be
written equivalently as
\begin{equation}
  (\zeta_{14}' + (z - 1) \zeta_{24}')^2 = 0 . \label{equivquad}
\end{equation}
Now let us suppose that $z \in [1, + \infty)$. Then
\begin{equation}
  \tmop{Im} [\zeta_{14}' + (z - 1) \zeta_{24}'] = \eta_{14}' + (z - 1)
  \eta_{24}' \succ 0 .
\end{equation}
Then Eq.\ {\eqref{equivquad}} is in contradiction with Lemma \ref{xij2h}. Lemma
\ref{bound} is demonstrated.

\tmtextbf{Proof of Lemma \ref{Petr}.} This was shown in
{\cite{Kravchuk:2018htv}}, footnote 74, and we reproduce the argument here for
completeness. Part (a) is a partial case of Lemma \ref{xij2h} (for $\eta \prec
0$ we should apply it to the complex conjugate vector $\zeta^{\ast} = \xi - i
\eta$). Let us show Part (b). To show that $\eta \succ 0 \Rightarrow \eta'
\succ 0$, we write
\begin{equation}
  \zeta' = \frac{\xi + i \eta}{\xi^2 - \eta^2 + 2 i (\xi, \eta)} = \frac{(\xi
  + i \eta)  (\xi^2 - \eta^2 - 2 i (\xi, \eta))}{(\xi^2 - \eta^2)^2 + 4 (\xi,
  \eta)^2}  .
\end{equation}
So, up to a positive factor, $\eta'$ is given by
\begin{equation}
  (\xi^2 - \eta^2) \eta - 2 (\xi, \eta) \xi .
\end{equation}
For $\xi = 0$ this is given by $(- \eta^2) \eta \succ 0$. More generally, this
squares to
\begin{equation}
  (\xi^2 - \eta^2)^2 \eta^2 + 4 (\xi, \eta)^2 \xi^2 - 4 (\xi, \eta)^2  (\xi^2
  - \eta^2) = \eta^2  ((\xi^2 - \eta^2)^2 + 4 (\xi, \eta)^2) < 0 .
  \label{eta1sq}
\end{equation}
Therefore, for all $\xi$, we have that $\eta'$ is timelike. Since we have
shown that $\eta' \succ 0$ for $\xi = 0$, by continuity it follows that $\eta'
\succ 0$ for all $\xi$.

Finally, the implication $\eta \prec 0 \Rightarrow \eta' \prec 0$ follows by
complex conjugation.

\subsection{4-point function powerlaw bound}\label{power4-point}

We wish to show next that the analytically continued 4-point function satisfies a
powerlaw bound, so that we can apply Theorem \ref{ThVlad}. The prefactor in
Eq.\ {\eqref{G4c}} satisfies a powerlaw bound by Lemma \ref{x2bnd}.
Furthermore, Eq.\ {\eqref{maj}} implies that the analytic continuation $g (c)$
constructed in Sec.\ \ref{anal4-point} is bounded by a Euclidean 4-point function,
namely:
\begin{equation}
  | g (c) | \leqslant g_E (c_{\ast}), \label{ggE}
\end{equation}
where $c_{\ast}$ is any Euclidean 4-point function configuration having $\rho
(c_{\ast}) = \bar{\rho} (c_{\ast}) = r = r (c) = \max (| \rho (c) |, |
\bar{\rho} (c) |)$. We choose the conformal frame {\eqref{frameErho}}:
\begin{equation}
  c_{\ast} : \qquad x_1 = - r \hat{e}_0,\ x_2 = r \hat{e}_0,\ x_3 = \hat{e}_0,\
  x_4 = - \hat{e}_0 .
\end{equation}
Using the convergent OPE in the $x_2 \rightarrow x_3$, $x_1 \rightarrow x_4$
channel, we have the asymptotics
\begin{equation}
  G_4^E (c_{\ast}) \sim \frac{1}{(1 - r)^{4 \Delta_{\varphi}}} \qquad (r
  \rightarrow 1) . \label{g4cstar}
\end{equation}
The function $g_E (c_{\ast})$ satisfies the same asymptotics up to a constant,
being related to $G_4^E (c_{\ast})$ via Eq.\ {\eqref{def:Euclidean4-point}} by a
factor which is non-singular in the $r \rightarrow 1$ limit. Since $g_E
(c_{\ast})$ is a positive monotonically increasing function for $0 \leqslant r
< 1$ (see Eq.\ {\eqref{g:rhoexpansion}}), we conclude that it has a bound
\begin{equation}
  g_E (c_{\ast}) \leqslant \frac{\tmop{const} .}{(1 - r (c))^{4
  \Delta_{\varphi}}}, \label{gEcstar}
\end{equation}
and $| g (c) |$ by {\eqref{ggE}} satisfies the same bound.

The upshot of this discussion is that we will have a powerlaw bound on $G_4
(c)$ if we manage to get a powerlaw bound on $\frac{1}{1 - r (c)}$. We will
next state and prove such a bound.

{Before launching into the technical discussion, let us discuss
intuitively why a result like this is expected to be true. We know (Lemma
\ref{bound}) that $| \rho (c) |, | \bar{\rho} (c) | < 1$ and now we wish to
prove that $| \rho (c) |, | \bar{\rho} (c) |$ do not approach 1 too quickly as
$c$ goes to the Minkowski boundary of the forward tube. This may remind the
reader of the Schwarz-Pick lemma, which says that if $f (w)$ is a function
from a unit disk to itself and $f (0) = 0$, then $| f (w) | \leqslant | w |$,
hence providing a bound on how fast $| f (w) |$ can approach 1 as $| w |
\rightarrow 1$. In the 2d case, when $\rho (c)$ and $\bar{\rho} (c)$ are
individually defined holomorphic functions in the forward tube, it is indeed
possible to use the Schwarz-Pick lemma to prove a powerlaw bound on $\max (|
\rho (c) |, | \bar{\rho} (c) |)$ {\cite{lecturesSaclay}}. It should be
possible to generalize the Schwarz-Pick argument to any $d$, although we have
not worked it out in full details.\footnote{For any $d$, the Schwarz-Pick
lemma allows a natural generalization to holomorphic functions in the forward
tube {\cite{KravchukSchwarz-Pick}}.} The proof below will be different and
more direct: it will simply mimic the proof of Lemma \ref{bound}, replacing
all ``$> 0$'' inequalities by ``$\geqslant \varepsilon$'' with an explicit
positive $\varepsilon$.}

\subsubsection{A powerlaw bound on $\frac{1}{1 - r (c)}$}\label{section:1-r}

Let us introduce some notation. We will measure the size of a complex vector
$\zeta \in \mathbb{C}^{1, d - 1}$ by $| \zeta |$,
\begin{equation}
  | \zeta |^2 = | \zeta^0 |^2 + | \zeta^1 |^2 + \cdots + | \zeta^{d -
  1} |^2 .
\end{equation}
Clearly $| (\zeta_1, \zeta_2) | \leqslant | \zeta_1 | | \zeta_2 |$. We also
define for $\zeta = \xi + i \eta$, $\xi, \eta \in \mathbb{R}^{1, d - 1}$, and
$\eta^2 < 0$ (i.e.\ timelike)
\begin{equation}
  S (\zeta) = \max \Bigl( \frac{1}{\sqrt{- \eta^2}}, | \zeta | \Bigr) .
\end{equation}
Thus $S (\zeta)$ is large either if some component of $\zeta$ (real or
imaginary) is large or if $\eta$ approaches the light cone. Note that $S
(\zeta) \geqslant 1$ for any $\zeta$. We will never need $S (\xi + i \eta)$
for spacelike $\eta$.

Finally we consider an analogous function on $\mathcal{T}_4$:
\begin{equation}
  S (c) = \max_{i < j} S (\zeta_{i j}),
\end{equation}
which becomes large if any of $S (\zeta_{i j})$ become large. We claim that
there is the following bound (recall $r (c) = \max (| \rho (c) |, | \bar{\rho}
(c) |)$)
\begin{equation}
  \frac{1}{1 - r (c)} \leqslant 720 S (c)^{12} \qquad (c \in \mathcal{T}_4)
  \label{rhobound} .
\end{equation}
This bound will be shown for any $c$ in the forward tube, which is the natural
setting. When we specify to $c \in \mathcal{D}_4 \subset \mathcal{T}_4$ [see
Eq.\ {\eqref{def:Dn}}], we have
\begin{equation}
  S (c) = \max_{i < j} \max \left\{  \frac{1}{| \epsilon_i - \epsilon_j
  |}, | x_i - x_j | \right\} . \label{Scbound}
\end{equation}
Eq.\ {\eqref{rhobound}} then becomes a powerlaw bound for $\frac{1}{1 - r (c)}$
on $\mathcal{D}_4$ of the form {\eqref{powerlawbound}}, precisely as needed
for applying Theorem \ref{ThVlad}.

The proof of the bound {\eqref{rhobound}} will build upon the proof of $z,
\bar{z} \nin [1, + \infty)$ given in Sec.\ \ref{PetrProof}. There we showed
that $z$ solves Eq.\ {\eqref{equivquad}}, which however is inconsistent
for $z \in [1, + \infty)$ and $c$ in the forward tube. Here we will use the
same Eq.\ {\eqref{equivquad}}, but make the rest of the argument
quantitative, by showing that if $c$ stays away from the boundary or infinity
of the forward tube, so that $S (c)$ is bounded, then both $z (c)$ and
$\bar{z} (c)$ must stay a finite distance away from $[1, + \infty)$, as
measured by an upper bound on $\frac{1}{1 - r (c)}$ expressed by Eq.
{\eqref{rhobound}}. The proof is straightforward but somewhat technical and we
split it into a series of lemmas.

\begin{lemma}
  \label{zeta2bnd}Let $\zeta = \xi + i \eta$, $\eta^2 < 0$. Then for any $\xi$
  \begin{equation}
     | \zeta^2 | \geqslant (- \eta^2) . \label{zeta2bnd0}
  \end{equation}
\end{lemma}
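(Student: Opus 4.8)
The plan is to compute $|\zeta^2|^2$ directly and minimize over $\xi$ for fixed $\eta$, exactly mirroring the strategy used in the proof of Lemma \ref{x2bnd}(a). Write $\zeta^2 = \xi^2 - \eta^2 + 2i(\xi,\eta)$, where all inner products are Minkowski. Then
\begin{equation}
  |\zeta^2|^2 = (\xi^2 - \eta^2)^2 + 4(\xi,\eta)^2.
\end{equation}
The goal is to show this is $\geqslant (\eta^2)^2$, i.e.\ $\geqslant (-\eta^2)^2$, for every real $\xi$.

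First I would decompose $\xi$ relative to the timelike vector $\eta$: write $\xi = t\,\eta + w$ with $(w,\eta) = 0$, so that $w$ is spacelike (being Minkowski-orthogonal to a timelike vector) or zero, hence $w^2 \geqslant 0$. Then $(\xi,\eta) = t\,\eta^2$ and $\xi^2 = t^2\eta^2 + w^2$, so
\begin{equation}
  |\zeta^2|^2 = (t^2\eta^2 + w^2 - \eta^2)^2 + 4t^2(\eta^2)^2
  = \big((t^2-1)\eta^2 + w^2\big)^2 + 4t^2(\eta^2)^2.
\end{equation}
Setting $s = t^2 \geqslant 0$ and $a = -\eta^2 > 0$, $b = w^2 \geqslant 0$, this is $\big((1-s)a + b\big)^2 + 4s\,a^2$. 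I want to show this is $\geqslant a^2$ for all $s,b\geqslant 0$. Expanding, $\big((1-s)a+b\big)^2 + 4sa^2 = (1-s)^2 a^2 + 2(1-s)ab + b^2 + 4sa^2 = (1+s)^2 a^2 + 2(1-s)ab + b^2$. When $s\leqslant 1$ both the first and second terms are $\geqslant 0$ and the first already exceeds $a^2$, so we are done; when $s \geqslant 1$, the worst case in $b$ is $b$ as small as possible but $b\geqslant 0$, and at $b=0$ the expression is $(1+s)^2 a^2 \geqslant a^2$, while increasing $b$ only helps since the discriminant argument shows $(1+s)^2a^2 + 2(1-s)ab + b^2 \geqslant (1+s)^2a^2 - \tfrac{(1-s)^2a^2}{1}$... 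Actually the clean way: $(1+s)^2a^2 + 2(1-s)ab + b^2 = (b + (1-s)a)^2 + (1+s)^2a^2 - (1-s)^2a^2 = (b+(1-s)a)^2 + 4sa^2 \geqslant 4sa^2 \geqslant 4a^2 \geqslant a^2$ for $s\geqslant 1$. This completes the bound.

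I do not expect a real obstacle here; the only mild subtlety is the sign of $w^2$, which requires noting that a vector Minkowski-orthogonal to a timelike vector is spacelike (or zero). This is exactly the fact already invoked in the proof of Lemma \ref{xij2h}. An even slicker route, which avoids the case split, is to observe directly from the antepenultimate display of the previous section (Eq.\ \eqref{eta1sq}) that the vector $(\xi^2-\eta^2)\eta - 2(\xi,\eta)\xi$ squares to $\eta^2\big((\xi^2-\eta^2)^2 + 4(\xi,\eta)^2\big) = \eta^2\,|\zeta^2|^2$; since that vector is timelike its square is negative, and comparing magnitudes with $\eta$ itself (whose square is $\eta^2$) would give the bound — but the direct minimization above is cleanest and self-contained, so that is the route I would write up.
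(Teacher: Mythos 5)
Your proof is correct, and it takes a genuinely different route from the paper's own argument for Lemma \ref{zeta2bnd}. The paper writes $|\zeta^2|^2 = (\xi^2)^2 + (\eta^2)^2 + 2\,[2(\xi,\eta)^2 - \xi^2\eta^2]$ and reduces the claim to the single scalar inequality $2(\xi,\eta)^2 - \xi^2\eta^2 \geqslant 0$, which it then checks by splitting on the sign of $\xi^2$ (immediate when $\xi^2\geqslant 0$ since $\eta^2<0$; for $\xi^2<0$ it uses Lorentz invariance and homogeneity to set $\xi=(\pm 1,\mathbf{0})$, where the left side becomes $(\eta^0)^2+\boldsymbol{\eta}^2\geqslant 0$). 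You instead decompose $\xi = t\eta + w$ with $(w,\eta)=0$ — noting that $w$ is then spacelike or zero — and minimize the resulting scalar expression in $(t^2,w^2)$ directly. Both arguments are short and valid; the paper's extracts a slick intermediate inequality with a certain independent interest, while yours is more self-contained elementary algebra and, as you observe, is actually closer in spirit to how the paper proves the simpler Lemma \ref{x2bnd}(a) (explicit minimization over the free variables). Amusingly, the paper's own remark before its proof of \ref{zeta2bnd} says it ``could be proven analogously'' to \ref{x2bnd}(a) but opted for ``a slightly different proof for a change'' — your write-up supplies precisely that analogous version. One small caution: the ``slicker route'' you sketch at the end — that the timelikeness of $(\xi^2-\eta^2)\eta - 2(\xi,\eta)\xi$ should immediately yield the bound by ``comparing magnitudes with $\eta$'' — is not actually a one-liner: timelikeness only gives $\eta^2\,|\zeta^2|^2 < 0$, hence $|\zeta^2|^2>0$, which is trivial; a genuine comparison would need an additional estimate. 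Since you rightly do not pursue it and present the direct minimization instead, this does not affect the proof you actually gave.
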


\begin{proof}
  This is a generalization of Lemma \ref{x2bnd}(a) and could be proven
  analogously. We give a slightly different proof for a change. We have
  \begin{equation}
    | \zeta^2 |^2 = (\xi^2 - \eta^2)^2 + 4 (\xi, \eta)^2 = (\xi^2)^2 +
    (\eta^2)^2 + 2 [2 (\xi, \eta)^2 - \xi^2 \eta^2] .
  \end{equation}
  The lemma now follows from the inequality:
  \begin{equation}
    2 (\xi, \eta)^2 - \xi^2 \eta^2 \geqslant 0 . \label{ineqtoprove}
  \end{equation}
  Eq.\ {\eqref{ineqtoprove}} is obvious for $\xi^2 \geqslant 0$, so let us
  consider $\xi^2 < 0$. By Lorentz invariance and homogeneity it's enough to
  consider $\xi = (\pm 1, 0, \ldots, 0)$ in which case the l.h.s.\ of
  {\eqref{ineqtoprove}} becomes $(\eta^0)^2 +\tmmathbf{\eta}^2 \geqslant 0$.
\end{proof}

Then we have the following strengthening of Lemma \ref{Petr}(b):

\begin{lemma}
  \label{Sinv}Let $\zeta = \xi + i \eta$, $\eta^2 < 0$, and $\zeta' = \zeta /
  \zeta^2$. Then
  \begin{equation}
    S (\zeta') \leqslant [S (\zeta)]^3 .
  \end{equation}
\end{lemma}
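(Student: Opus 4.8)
The plan is to bound the two quantities that define $S(\zeta')$, namely $1/\sqrt{-(\eta')^2}$ and $|\zeta'|$, separately in terms of $S(\zeta)$. For the first, I would recall from the proof of Lemma \ref{Petr}(b) that, writing $\zeta' = \zeta/\zeta^2$ and separating real and imaginary parts, one finds $\eta'$ proportional (up to the positive scalar $|\zeta^2|^{-2}$) to the vector $v \assign (\xi^2-\eta^2)\eta - 2(\xi,\eta)\xi$, whose square was computed there to be $(v)^2 = \eta^2\big((\xi^2-\eta^2)^2 + 4(\xi,\eta)^2\big) = \eta^2 |\zeta^2|^2$. Hence $(\eta')^2 = |\zeta^2|^{-4}\, (v)^2 = |\zeta^2|^{-4}\,\eta^2\,|\zeta^2|^2 = \eta^2/|\zeta^2|^2$, so that $-(\eta')^2 = (-\eta^2)/|\zeta^2|^2$. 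Therefore $1/\sqrt{-(\eta')^2} = |\zeta^2|/\sqrt{-\eta^2}$. Now $|\zeta^2| \leqslant |\zeta|^2 \leqslant S(\zeta)^2$, and $1/\sqrt{-\eta^2} \leqslant S(\zeta)$, so $1/\sqrt{-(\eta')^2} \leqslant S(\zeta)^3$.

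For the second quantity, I would simply use $|\zeta'| = |\zeta|/|\zeta^2|$ together with the lower bound on $|\zeta^2|$ from Lemma \ref{zeta2bnd}, namely $|\zeta^2| \geqslant -\eta^2$. This gives $|\zeta'| \leqslant |\zeta|/(-\eta^2) \leqslant S(\zeta)\cdot S(\zeta)^2 = S(\zeta)^3$, since $|\zeta| \leqslant S(\zeta)$ and $1/(-\eta^2) = \big(1/\sqrt{-\eta^2}\big)^2 \leqslant S(\zeta)^2$. Taking the maximum of the two estimates yields $S(\zeta') = \max\big(1/\sqrt{-(\eta')^2},\, |\zeta'|\big) \leqslant S(\zeta)^3$, as claimed.

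There is essentially no serious obstacle here: both halves are short computations that reuse the identity $(v)^2 = \eta^2|\zeta^2|^2$ already established in the proof of Lemma \ref{Petr} and the elementary bound $|\zeta^2|\geqslant -\eta^2$ of Lemma \ref{zeta2bnd}. The only mild subtlety is to make sure the exponent $3$ is not beaten down by being careless---e.g., in the $|\zeta'|$ bound one must resist writing $|\zeta^2| \leqslant |\zeta|^2$ (wrong direction) and instead use the lower bound from Lemma \ref{zeta2bnd}. The cleanest writeup records $(\eta')^2 = \eta^2/|\zeta^2|^2$ once and then reads off both inequalities from it.
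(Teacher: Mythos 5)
Your proof is correct and follows essentially the same route as the paper: both compute $(\eta')^2 = \eta^2/|\zeta^2|^2$ from the identity established in Lemma~\ref{Petr}'s proof (Eq.~\eqref{eta1sq}), and both pair the lower bound $|\zeta^2|\geqslant -\eta^2$ of Lemma~\ref{zeta2bnd} with the elementary upper bound $|\zeta^2|\leqslant|\zeta|^2$ to get the two estimates. The only cosmetic difference is that you work with $1/\sqrt{-\eta'^2}$ while the paper tracks $1/(-\eta'^2)$, which is equivalent.
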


\begin{proof}
  We have
  \begin{equation}
    | \zeta' | = \frac{| \zeta |}{| \zeta^2 |} \leqslant \text{[by Lemma
    \ref{zeta2bnd}]}  \frac{| \zeta |}{- \eta^2} \leqslant S (\zeta)^3 .
  \end{equation}
  We also have (see the proof of Lemma \ref{Petr}, in particular Eq.
  {\eqref{eta1sq}}) that $\eta^{\prime 2} < 0$ and
  \[ \frac{1}{- \eta^{\prime 2}} = \frac{| \zeta^2 |^2}{- \eta^2} \leqslant
     \text{[by Lemma \ref{zeta2bnd}]} S (\zeta)^6 . \]
\end{proof}

\begin{lemma}
  \label{SSS}Let $\zeta_i \in \mathbb{C}^{1, d - 1}, \eta_i \succ 0$ ($i =
  1, 2$). Then
  \begin{equation}
    S (\zeta_1 + \zeta_2) \leqslant S (\zeta_1) + S (\zeta_2) .
  \end{equation}
\end{lemma}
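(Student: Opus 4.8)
\textbf{Proof plan for Lemma \ref{SSS}.}

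The plan is to reduce the triangle-type inequality $S(\zeta_1+\zeta_2)\leqslant S(\zeta_1)+S(\zeta_2)$ to two separate estimates, one for each of the two terms inside the $\max$ that defines $S$. Recall $S(\zeta)=\max\bigl(1/\sqrt{-\eta^2},\,|\zeta|\bigr)$ for $\zeta=\xi+i\eta$ with $\eta\succ 0$. So, writing $\zeta_i=\xi_i+i\eta_i$ and $\zeta_1+\zeta_2=\xi+i\eta$ with $\xi=\xi_1+\xi_2$, $\eta=\eta_1+\eta_2$, I need to show (i) $|\zeta_1+\zeta_2|\leqslant S(\zeta_1)+S(\zeta_2)$, and (ii) $1/\sqrt{-\eta^2}\leqslant S(\zeta_1)+S(\zeta_2)$; since each of $S(\zeta_1)$, $S(\zeta_2)$ is at least $1$, and since $\eta=\eta_1+\eta_2$ is again in the open forward cone (the forward cone is convex), the left-hand side $S(\zeta_1+\zeta_2)$ is well-defined and equals the larger of these two quantities, so both bounds together give the claim.

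Part (i) is immediate: $|\zeta_1+\zeta_2|\leqslant|\zeta_1|+|\zeta_2|$ by the ordinary triangle inequality for the norm $|\cdot|$ on $\mathbb{C}^{1,d-1}$, and $|\zeta_i|\leqslant S(\zeta_i)$ by definition. Part (ii) is where the real content lies. The key point is a superadditivity property of the function $\eta\mapsto\sqrt{-\eta^2}$ on the forward cone: for $\eta_1,\eta_2\succ 0$ one has $\sqrt{-(\eta_1+\eta_2)^2}\geqslant\sqrt{-\eta_1^2}+\sqrt{-\eta_2^2}$. This is the reverse triangle inequality for timelike vectors in the same cone; I would prove it by squaring, where it becomes $-(\eta_1,\eta_2)\geqslant\sqrt{(-\eta_1^2)(-\eta_2^2)}$, i.e. $(\eta_1,\eta_2)^2\geqslant\eta_1^2\eta_2^2$ together with $(\eta_1,\eta_2)<0$ (both standard facts for vectors in the open forward cone: the inner product of two future-pointing timelike vectors is negative in the $-,+,\dots,+$ signature used here, and the reversed Cauchy–Schwarz holds on a Lorentzian two-plane spanned by timelike vectors). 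Given superadditivity, $1/\sqrt{-\eta^2}=1/\sqrt{-(\eta_1+\eta_2)^2}\leqslant 1/\bigl(\sqrt{-\eta_1^2}+\sqrt{-\eta_2^2}\bigr)\leqslant 1/\sqrt{-\eta_1^2}\leqslant S(\zeta_1)\leqslant S(\zeta_1)+S(\zeta_2)$, using positivity of both square roots for the middle step.

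The only mildly delicate point — the ``main obstacle,'' though it is a soft one — is the reversed triangle inequality for $\sqrt{-\eta^2}$, i.e. making sure the inner product cross term has the right sign and magnitude. Everything else is routine. An alternative, perhaps cleaner, route for part (ii): use Lorentz invariance to boost so that $\eta_1=(a,\mathbf{0})$ with $a=\sqrt{-\eta_1^2}>0$; then $-\eta^2=-(\eta_1+\eta_2)^2=(a+\eta_2^0)^2-|\boldsymbol{\eta}_2|^2\geqslant(a+\eta_2^0)^2-(\eta_2^0)^2+(-\eta_2^2)\geqslant a^2$ since $\eta_2^0>0$ (using $|\boldsymbol{\eta}_2|^2=(\eta_2^0)^2-(-\eta_2^2)\leqslant(\eta_2^0)^2$ and dropping the nonnegative remainder), so $\sqrt{-\eta^2}\geqslant a=\sqrt{-\eta_1^2}$, which already suffices for (ii) without needing the full superadditive bound.
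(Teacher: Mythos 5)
Your proof is correct, and the overall skeleton (bound each of the two quantities inside the $\max$ separately) is exactly what the paper does. The difference is in how you treat the $1/\sqrt{-\eta^2}$ piece. Your primary route proves the full superadditivity $\sqrt{-(\eta_1+\eta_2)^2}\geqslant\sqrt{-\eta_1^2}+\sqrt{-\eta_2^2}$, which after squaring requires the reverse Cauchy--Schwarz inequality $-(\eta_1,\eta_2)\geqslant\sqrt{(-\eta_1^2)(-\eta_2^2)}$. The paper uses a strictly weaker fact, namely only the sign $(\eta_1,\eta_2)<0$, to get $-(\eta_1+\eta_2)^2\geqslant-\eta_1^2-\eta_2^2\geqslant-\eta_1^2$, which already suffices since $1/\sqrt{-(\eta_1+\eta_2)^2}\leqslant 1/\sqrt{-\eta_1^2}\leqslant S(\zeta_1)$. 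Your ``alternative, perhaps cleaner route'' (boost $\eta_1$ to rest and drop nonnegative terms) is essentially this same observation made in a preferred frame; it is the argument the paper in effect uses, just phrased without choosing coordinates. So: both are fine, the superadditivity route buys you a quantitatively stronger intermediate inequality which the lemma does not need, while the paper's/your alternative route is the minimal input.
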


\begin{proof}
  We have $| \zeta_1 + \zeta_2 | \leqslant | \zeta_1 | + | \zeta_2 |$ and $-
  (\eta_1 + \eta_2)^2 \geqslant - \eta^2_1 - \eta_2^2 $ {(since $\eta_1\cdot\eta_2 <0$).}
\end{proof}

\begin{lemma}
  \label{lemma8}Let $\Upsilon_i = \Phi_i + i \Psi_i \in \mathbb{C}^{1, d - 1},
  \Phi_i, \Psi_i \in \mathbb{R}^{1, d - 1}$, $\Psi_i \succ 0$ (i=1,2), and $z$
  solves the equation
  \begin{equation}
    (\Upsilon_1 + (z - 1) \Upsilon_2)^2 = 0 . \label{Yeq}
  \end{equation}
  Then
  \begin{equation}
    1 - | \rho (z) | \geqslant \delta_0 : = \frac{1}{45 S^4}, \qquad S = \max
    (S (\Upsilon_1), S (\Upsilon_2)) . \label{rhoclose}
  \end{equation}
\end{lemma}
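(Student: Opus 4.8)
The plan is exactly the one announced in the text just before the lemma: re-run the proof of Lemma~\ref{bound} from Section~\ref{PetrProof}, but keep track of how much room there is in each strict inequality, i.e.\ systematically upgrade every ``$\succ 0$'' to a quantitative ``$\,\geqslant\varepsilon\,$'' with an explicit $\varepsilon=\varepsilon(S)$. The first move is to translate the target into geometric control of $z$ relative to the cut $[1,+\infty)$ and to infinity. Writing $s=\sqrt{1-z}$ with the principal branch (so that $\tmop{Re}\,s>0$ for $z\nin[1,+\infty)$), the uniformization map factors as $\rho(z)=\tfrac{1-s}{1+s}$, which gives the elementary identity $1-|\rho(z)|^{2}=\tfrac{4\tmop{Re}\,s}{|1+s|^{2}}$ and hence
\begin{equation}
  1-|\rho(z)|\;\geqslant\;\frac{2\tmop{Re}\,s}{|1+s|^{2}}\;=\;\frac{2\tmop{Re}\sqrt{1-z}}{1+2\tmop{Re}\sqrt{1-z}+|1-z|}.
\end{equation}
So it suffices to bound $|1-z|$ from above and $\tmop{Re}\sqrt{1-z}$ from below, both by powers of $S$.

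Bounding $|1-z|$ is the easy part. By Lemma~\ref{zeta2bnd}, $|\Upsilon_2^{2}|\geqslant-\Psi_2^{2}\geqslant S^{-2}>0$, so $(\Upsilon_1+(z-1)\Upsilon_2)^{2}=0$ is a genuine quadratic in $z-1$ with leading coefficient bounded below by $S^{-2}$ and all coefficients bounded above by $S^{2}$ (using $|\Upsilon_i|\leqslant S$, $|\Upsilon_1\cdot\Upsilon_2|\leqslant|\Upsilon_1|\,|\Upsilon_2|\leqslant S^{2}$, $|\Upsilon_i^{2}|\leqslant|\Upsilon_i|^{2}\leqslant S^{2}$); the quadratic formula then gives $|z-1|\leqslant\mathrm{const}\cdot S^{4}$, which keeps $z$ away from infinity, and the product-of-roots formula $\zeta_+\zeta_-=\Upsilon_1^{2}/\Upsilon_2^{2}$ gives the complementary lower bound $|z-1|\geqslant\mathrm{const}\cdot S^{-8}$, which keeps $z$ away from the branch point $z=1$ (where $|\rho|\to1$ as well).

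The heart of the argument is the quantitative no-go near the cut. For every real $\tau\geqslant0$ the vector $\Upsilon_1+\tau\Upsilon_2$ has imaginary part $\Psi_1+\tau\Psi_2\succ0$, and since two future timelike vectors have negative Minkowski product, $-(\Psi_1+\tau\Psi_2)^{2}=-\Psi_1^{2}-2\tau\,\Psi_1\!\cdot\!\Psi_2-\tau^{2}\Psi_2^{2}\geqslant-\Psi_1^{2}+\tau^{2}(-\Psi_2^{2})\geqslant\tfrac{1+\tau^{2}}{S^{2}}$, so Lemma~\ref{zeta2bnd} yields $|(\Upsilon_1+\tau\Upsilon_2)^{2}|\geqslant\tfrac{1+\tau^{2}}{S^{2}}$. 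Now let $\tau_0\geqslant0$ be the point of $[0,+\infty)$ nearest to $z-1$, and write $z-1=\tau_0+i\sigma$, so that $\tmop{dist}(z,[1,+\infty))$ equals $|\sigma|$ when $\tmop{Re}\,z\geqslant1$ (and equals $|z-1|$, with $\tau_0=0$, otherwise). Setting $w=\Upsilon_1+(z-1)\Upsilon_2$ (so $w^{2}=0$) and expanding $(\Upsilon_1+\tau_0\Upsilon_2)^{2}=(w-i\sigma\Upsilon_2)^{2}=-2i\sigma\,w\!\cdot\!\Upsilon_2-\sigma^{2}\Upsilon_2^{2}$, one gets $\tfrac{1+\tau_0^{2}}{S^{2}}\leqslant2|\sigma|\,|w|\,|\Upsilon_2|+\sigma^{2}|\Upsilon_2^{2}|$; since $|w|\leqslant(1+\tau_0+|\sigma|)S$ and $|\Upsilon_2|,\sqrt{|\Upsilon_2^{2}|}\leqslant S$, this is $\tfrac{1+\tau_0^{2}}{S^{4}}\leqslant|\sigma|\bigl(2+2\tau_0+3|\sigma|\bigr)$, whence a lower bound $\tmop{dist}(z,[1,+\infty))=|\sigma|\geqslant c\,\tfrac{1+\tau_0}{S^{4}}$ for an absolute $c>0$. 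The point that makes this work --- and the reason the final power of $S$ does not blow up --- is that the ``room'' $\tfrac{1+\tau^{2}}{S^{2}}$ grows quadratically in $\tau$, compensating the linear growth of $|\Upsilon_1+\tau\Upsilon_2|$, so that the lower bound on the distance scales up in step with $\tau_0$ and $|z-1|$ along the worst directions.

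Finally I would combine: $|1-z|\leqslant\mathrm{const}\cdot S^{4}$, $\tmop{dist}(z,[1,+\infty))\geqslant c\,\tfrac{1+\tau_0}{S^{4}}$, and the elementary estimate $\tmop{Re}\sqrt{Re^{i\theta}}=\sqrt R\cos(\theta/2)$, which is $\geqslant\sqrt R/\sqrt2$ when $|\theta|\leqslant\tfrac{\pi}{2}$ (the case $\tmop{Re}\,z\leqslant1$, where additionally $|1-z|\geqslant\mathrm{const}\cdot S^{-8}$) and $\geqslant\tfrac{\sqrt R\,|\sin\theta|}{2}=\tfrac{|\sigma|}{2\sqrt{|1-z|}}$ when $\tmop{Re}(1-z)<0$ (the case $\tmop{Re}\,z>1$); feeding these into the displayed lower bound for $1-|\rho(z)|$ and running the case analysis on $\tmop{Re}\,z\lessgtr1$ and on $|\sigma|$ versus $\tau_0$ should yield $1-|\rho(z)|\geqslant\tfrac{1}{45S^{4}}$ after optimizing the constants. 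I expect the genuine difficulty to be precisely this last bookkeeping, in particular the configurations where $z$ sits near a \emph{distant} point of the cut ($\tau_0$ large): there the crudest estimates over-count and produce a higher power of $S$, and one must use the correlations between $|z-1|$, $\tau_0$, $\tmop{dist}(z,[1,+\infty))$ forced by $w^{2}=0$ (for instance, that $w^{2}=0$ makes $\mathrm{Im}(w)$ spacelike, not merely non-timelike, and constrains $\mathrm{Re}(w)$) so that the excess powers cancel down to $S^{4}$; getting the explicit constant $45$ is then just a matter of bookkeeping in the resulting elementary inequalities.
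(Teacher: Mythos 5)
Your strategy is genuinely different from the paper's, and while the core estimate is correct, the final step is left incomplete and a direct attempt at it does not close at the power $S^{-4}$.

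The paper's proof never leaves the $\rho$-variable: it writes $\rho = r e^{i\alpha}$, $\delta=1-r$, substitutes $z=4\rho/(1+\rho)^2$ into \eqref{Yeq}, and multiplies by $e^{-2i\alpha}$ to get $(\Upsilon+\Upsilon')^2=0$ where $\Upsilon = 4\cos^2(\alpha/2)\,\Upsilon_1 + 4\sin^2(\alpha/2)\,\Upsilon_2$ has $-(\mathrm{Im}\,\Upsilon)^2 \geqslant O(1/S^2)$ \emph{uniformly in $\alpha$}, and $\Upsilon'$ is an $O(\delta S)$ perturbation. One then checks that $-(\mathrm{Im}\,\Upsilon + \mathrm{Im}\,\Upsilon')^2 > 0$ as long as $\delta \lesssim 1/S^4$, contradicting Lemma~\ref{Petr}(a). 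The power of $S$ is the ratio of the margin $1/S^2$ to the perturbation sensitivity $S^2$, and there is no case analysis. Your route instead goes through the $z$-plane: bound $|1-z|$ above and below via the quadratic in $z-1$, then produce a quantitative version of ``$z\notin[1,\infty)$'' by showing
$\mathrm{dist}(z,[1,\infty))\gtrsim (1+\tau_0)/S^4$ from $|(\Upsilon_1+\tau_0\Upsilon_2)^2|\geqslant(1+\tau_0^2)/S^2$, and finally convert a distance bound into $1-|\rho|$ via $\mathrm{Re}\sqrt{1-z}$. That ``no-go near the cut'' lemma is correct and nicely self-contained; used by itself it is a valid alternative proof of Lemma~\ref{bound}.

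The gap is exactly where you flag it. In the regime $\mathrm{Re}\,z>1$ and $\tau_0\equiv \mathrm{Re}\,z-1$ large, your chain gives
\begin{equation}
  1-|\rho(z)|\ \gtrsim\ \frac{|\sigma|}{\sqrt{R}\,(1+\sqrt{R})^2}\ \gtrsim\ \frac{(1+\tau_0)/S^4}{(1+\tau_0)^{3/2}}\ =\ \frac{1}{S^4\sqrt{1+\tau_0}},
\end{equation}
and since your upper bound on $|z-1|$ is only $O(S^4)$, one is left with $1-|\rho|\gtrsim S^{-6}$, not $S^{-4}$. To fix this you would need a sharper lower bound on $|\sigma|$ that grows faster than linearly in $\tau_0$; this does not follow from the estimate you wrote (which scales exactly linearly), and the ``correlations forced by $w^2=0$'' that you gesture at are precisely the part that remains to be exhibited. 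This is not a pedantic constant issue: the exponent $4$ in $\delta_0$ is what feeds into the powerlaw bound \eqref{powerlawbound} and thence into the regularity estimate \eqref{GMreg}, so a weaker power is a materially weaker result. The paper's parametrization by $\rho$ sidesteps the problem because as $z$ recedes along the cut, $\alpha\to\pi$ and the weights $\cos^2(\alpha/2),\sin^2(\alpha/2)$ in $\Upsilon$ trade off against each other while $-(\mathrm{Im}\,\Upsilon)^2$ keeps a uniform lower bound, so the ``large $\tau_0$'' regime never becomes a separate worst case. If you wanted to salvage the $z$-plane route, the most promising avenue would be to run the no-go estimate not only at the orthogonal foot $\tau_0$ but optimized over the choice of real $\tau$, or to exploit that $\mathrm{Im}\,w$ being non-timelike (Lemma~\ref{xij2h}) gives an extra inequality relating $\tau_0$, $\sigma$, and the components of $\Phi_2$.
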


\begin{proof}
  Note that $z = 4 \rho / (1 + \rho)^2$, and so Eq.\ {\eqref{Yeq}} can be
  rewritten as
  \begin{equation}
    ((\rho + 1)^2 \Upsilon_1 - (\rho - 1)^2 \Upsilon_2)^2 = 0. \label{eqrew1}
  \end{equation}
  For $\rho = e^{i \alpha}$, multiplying this equation by $e^{- 2 i \alpha}$,
  it becomes
  \begin{equation}
    (\Upsilon)^2 = 0, \qquad \Upsilon \equiv \left( 2 \cos \frac{\alpha}{2}
    \right)^2 \Upsilon_1 + \left( 2 \sin \frac{\alpha}{2} \right)^2
    \Upsilon_2,
  \end{equation}
  which contradicts Lemma \ref{Petr}(a), since $\tmop{Im} \Upsilon \succ 0$.
  So $\rho$ cannot lie precisely on the unit circle (as we already knew). It
  should then not be surprising that it also cannot get too close to the unit
  circle, which is what {\eqref{rhoclose}} says. This can be shown by a
  straightforward although somewhat technical generalization of the above
  argument.
  
  Denoting $\rho = r e^{i \alpha} = e^{i \alpha} - \delta e^{i \alpha}$,
  $\delta = 1 - r > 0$, and multiplying {\eqref{eqrew1}} by $e^{- 2 i
  \alpha}$, it becomes
  \begin{equation}
    \left( \left( 2 \cos \frac{\alpha}{2} - \delta e^{i \alpha / 2} \right)^2
    \Upsilon_1 + \left( 2 \sin \frac{\alpha}{2} + i \delta e^{i \alpha / 2}
    \right)^2 \Upsilon_2 \right)^2 = 0,
  \end{equation}
  or
  \begin{equation}
    (\Upsilon + \Upsilon')^2 = 0 \label{Y+Y}
  \end{equation}
  with
  \begin{eqnarray}
    & \Upsilon = 4 \cos^2 \frac{\alpha}{2} \Upsilon_1 + 4 \sin^2
    \frac{\alpha}{2} \Upsilon_2, & \\
    & \Upsilon' = \kappa_1 \Upsilon_1 + \kappa_2 \Upsilon_2, & \\
    & \kappa_1 = - 4 \cos \frac{\alpha}{2} \delta e^{i \alpha / 2} + \delta^2
    e^{i \alpha}, \quad \kappa_2 = 4 i \sin \frac{\alpha}{2} \delta e^{i
    \alpha / 2} - \delta^2 e^{i \alpha} . & 
  \end{eqnarray}
  So for $\delta$ small, $\Upsilon'$ is a small correction to $\Upsilon$. We
  write $\tmop{Im} (\Upsilon + \Upsilon') = \Psi + \Psi'$, where
  \[ \Psi = \tmop{Im} \Upsilon = 4 \cos^2 \frac{\alpha}{2} \Psi_1 + 4 \sin^2
     \frac{\alpha}{2} \Psi_2, \quad \Psi' = \tmop{Im} \Upsilon' . \]
  We know that $\Psi \succ 0$. In addition we also have a lower bound on $-
  \Psi^2$:
  \begin{equation}
    - \Psi^2 \geqslant 16 \cos^4 \frac{\alpha}{2} (- \Psi^2_1) + 16 \sin^4
    \frac{\alpha}{2} (- \Psi^2_2) \geqslant \frac{1}{S^2} \times 16 \min
    \left\{ \cos^4 \frac{\alpha}{2}, \sin^4 \frac{\alpha}{2} \right\} =
    \frac{4}{S^2} . \label{lowerbnd}
  \end{equation}
  We will now show that $- (\Psi + \Psi')^2$ remains strictly positive if
  $\delta < \delta_0$. This will imply, by Lemma \ref{Petr}(a), that Eq.
  {\eqref{Y+Y}} cannot hold, and hence we must have $\delta \geqslant
  \delta_0$, i.e.\ Eq.\ {\eqref{rhoclose}}, proving the lemma.
  
  To implement this natural strategy, we will need only crude estimates of
  the size of various terms. Note that $\delta_0 < 1$ since $S \geqslant 1$,
  so in particular we have $\delta^2 \leqslant \delta$. Using this we have the
  bounds $| \kappa_i | \leqslant 5 \delta$, and hence an upper bound
  \begin{equation}
    | \Psi' | \leqslant | \Upsilon' | \leqslant 10 \delta S.
  \end{equation}
  We also have an upper bound $| \Psi | \leqslant 4 S$. Using these,
  {\eqref{lowerbnd}}, and $\delta^2 \leqslant \delta$, we have:
  \begin{eqnarray}
    - (\Psi + \Psi')^2 = - \Psi^2 - 2 (\Psi, \Psi') - (\Psi')^2 & \geqslant &
    \frac{4}{S^2} - 2 | \Psi | | \Psi' | - | \Psi' |^2 \nonumber\\
    & \geqslant & \frac{4}{S^2} - 80 \delta S^2 - 100 \delta^2 S^2
    \nonumber\\
    & \geqslant & \frac{4}{S^2} - 180 \delta S^2 = \frac{4 (1 - \delta /
    \delta_0)}{S^2} \,,
  \end{eqnarray}
  which is strictly positive for $\delta < \delta_0$. As explained above this
  proves the lemma.
\end{proof}

Finally we can prove {\eqref{rhobound}}. We repeat the proof of Lemma
\ref{bound} given in Sec.\ \ref{PetrProof}. As there, we reduce to
configuration having $\zeta_3 = 0$ and obtain that $z$ (as well as $\bar{z}$)
is a solution of Eq.\ {\eqref{equivquad}}, which has the form
{\eqref{Yeq}} with
\begin{equation}
  \Upsilon_1 = \zeta_{14}' = \zeta_1' - \zeta_4', \quad \Upsilon_2 =
  \zeta_{24}' = \zeta_2' - \zeta_4', \quad \zeta_i' = \zeta_i / \zeta^2 \quad
  (i = 1, 2, 4) . \label{def:Upsilon12}
\end{equation}
Let us write $\Upsilon_i = \Phi_i + i \Psi_i \in \mathbb{C}^{1, d - 1},
\Phi_i, \Psi_i \in \mathbb{R}^{1, d - 1}$. As was already pointed out in
Sec.\ \ref{PetrProof}, we have $\Psi_i \succ 0$ ($i = 1, 2$). Furthermore,
by Lemma \ref{Sinv} we know that $S (\zeta_i') \leqslant S (c)^3$, and then
applying Lemma \ref{SSS} that $S (\Upsilon_i) \leqslant 2 S (c)^3$. Thus Lemma
\ref{lemma8} implies {\eqref{rhobound}} (note that $720 = 45 \times 16$).

\begin{remark}
  The bound {\eqref{rhobound}} is not optimal. We will prove a better bound in
  Sec.\ \ref{secondpass}, by a different argument. 
\end{remark}

Let us recap. In Sec.\ \ref{anal4-point} we have analytically continued the
Euclidean 4-point function to the forward tube, and here we showed that this
analytic continuation satisfies a powerlaw bound. Then by Theorem
\ref{ThVlad}, the Minkowski 4-point function defined as the limit {\eqref{limit}}
exists, is a Lorentz-invariant tempered distribution, and satisfies Wightman
spectral condition. In the remainder of this section we will show that this
distribution is also conformally invariant (Sec.~\ref{ConfMink}), that it satisfies the remaining
Wightman axioms (positivity in Sec.~\ref{sec:Wpos}, clustering in Sec.~\ref{clusterWightman}, and local commutativity in Sec.~\ref{local-comm}). {Later in Sec.~\ref{OPEconvMink} we will also show that it
can be computed by a convergent (in the sense of distributions) OPE.}
%\PK{I might have missed something, but it looks like we don't prove the distributional convergence 
%in Sec.\ 6. We prove some bounds on remainders in non-identical scalars section.} 

Now that we know that the Minkowski 4-point function is a distribution everywhere,
one may inquire about the regularity of this distribution. {E.g.\ for some
configurations the 4-point function is actually real-analytic {\cite{Qiao:2020bcs}}. We will come back to
this question in the conclusion section.}

\subsection{Conformal invariance}\label{ConfMink}

Conformal invariance of Euclidean 4-point function {\eqref{def:Euclidean4-point}} can
be described as invariance under finite conformal transformations $x
\rightarrow x' = f (x)$,
\begin{equation}
  \Omega_1 \Omega_2 \Omega_3 \Omega_4 G^E_4 (x'_1, x'_2, x'_3, x'_4)
  = G^E_4 (x_1, x_2, x_3, x_4), \label{EfinInv}
\end{equation}
where $\Omega_i = J (x_i)^{\Delta_{\mathcal{O}}}$ and $J (x) = \det
(\partial f^{\mu} / \partial x^{\nu})^{1 / d}$ is the local scale factor.
Alternatively, and equivalently, this can be expressed as invariance under
infinitesimal conformal transformations, a conformal Ward identity, which says
that the Euclidean correlator is annihilated by a sum of differential
operators, one per point:
\begin{equation}
  \sum_{i = 1}^4 \mathcal{D} (x_i, \partial_{x_i}) G^E_4 (x_1, x_2, x_3, x_4)
  = 0 . \label{Ward}
\end{equation}
There is a differential operator per conformal group generator
($\partial^{\mu}$ for $P_{\mu}$, $x^{\mu} \partial^{\nu} - x^{\nu}
\partial^{\mu}$ for $M_{\mu \nu}$, $x \cdot \partial +
\Delta_{\mathcal{O}}$ for $D$, $x^2 \partial^{\mu} - 2 x^{\mu} (x \cdot
\partial) - 2 x^{\mu} \Delta_{\mathcal{O}}$ for $K_{\mu}$).

Since all these differential operators have polynomial coefficients, Ward
identities {\eqref{Ward}} continue to hold in the forward tube for the
function $G (x_1, x_2, x_3, x_4)$. Taking the limit to the Minkowski boundary,
we obtain that the Minkowski 4-point function satisfies infinitesimal Minkowski
conformal invariance expressed by the Ward identities.

The possibility to take the limit is guaranteed by the standard result that
distributional limits commute with derivatives. Indeed, suppose that we have,
in the sense of distributions, $\lim_{\varepsilon \rightarrow 0}
f_{\varepsilon} = g$. This means that for any test function $\varphi$, we have
$\lim_{\varepsilon \rightarrow 0} (f_{\varepsilon}, \varphi) = (g, \varphi)$.
But then for any derivative $\partial,$
\begin{equation}
  (\partial g, \varphi) = - (g, \partial \varphi) = - \lim_{\varepsilon
  \rightarrow 0} (f_{\varepsilon}, \partial \varphi) = \lim_{\varepsilon
  \rightarrow 0} (\partial f_{\varepsilon}, \varphi), \label{derLimit}
\end{equation}
which implies that $\lim_{\varepsilon \rightarrow 0} \partial f_{\varepsilon}
= \partial g$. A similar argument shows that the limit commutes with
multiplication of distributions by polynomials. All this is analogous to how
we prove Lorentz invariance of the Minkowski correlator in App.\ \ref{Vlad}.

So we have shown that the Minkowski 4-point function satisfies Lorentzian
conformal Ward identities. This means that
\begin{equation}
  \sum_{i = 1}^4 (\mathcal{D}_i G^M_4, \varphi) = 0, \label{MinkConfWard}
\end{equation}
where $\mathcal{D}_i$ are the analytic continuations of the Euclidean
differential operators to Minkowski space, and the pairing with the Schwartz
test functions is defined by integration by parts. Note that the conformal
Ward identities in Minkowski space hold also at coincident points (i.e.\ the
test function $\varphi$ does not have to be zero at coincident points).

Now let us discuss invariance of Minkowski 4-point function under
\tmtextit{finite} Lorentzian conformal transformations. Since $G_4^M$ is a
distribution, the appropriate form of writing is to transform the test
function:
\begin{equation}
  (G_4^M, \varphi) = (G_4^M, \varphi^f), \label{MfinInv}
\end{equation}
where $\varphi^f (x_1, \ldots, x_4) = \varphi (f^{- 1} (x_1), \ldots, f^{- 1}
(x_4)) \prod_{i = 1}^4 J (f^{- 1} (x_i))^{\Delta_{\mathcal{O}} - d}$. However
we have to be careful. This invariance is true not for every test function
$\varphi$ but only for an $f$-dependent subset of test functions.

Let $f_t$ be a smooth family of Lorentzian conformal transformations
connecting $f$ to the identity: $f_0 = \tmop{id}$, $f_1 = f$. Suppose that
\begin{equation}
  \text{$\varphi^{f_t}$ is a Schwartz function for any $f_t$ in the family} .
  \label{req1}
\end{equation}
Then we can integrate infinitesimal conformal invariance and prove that
{\eqref{MfinInv}} is true. For translations, Lorentz transformations and
dilatations, Eq.\ {\eqref{req1}} is clearly satisfied and Eq.\ {\eqref{MfinInv}}
holds for any $\varphi$. However, for general conformal transformations,
{\eqref{req1}} may not necessarily be true. The problems will appear if $f$
is singular on the support of $\varphi$, as $\varphi^f$ may then not be a
Schwartz function. As a concrete example, consider the Lorentzian special
conformal transformation:
\begin{equation}
  f (x) = \frac{x^{\mu} + x^2 b^{\mu}}{1 + 2 x \cdummy b + x^2 b^2} .
  \label{fb}
\end{equation}
The corresponding scale factor is $J (x) = \frac{1}{1 + 2 x \cdummy b + x^2
b^2}$. Take for definiteness spacelike $b = \beta \hat{e}_1$, where $\beta >
0$ and $\hat{e}_1$ is the unit vector in the $x^1$ direction. The
transformation {\eqref{fb}} is then singular for $x^0 = \pm | \mathbf{x}+
\beta^{- 1} \hat{e}_1 |$, where the scale factor blows up, i.e.\ on the
light cone whose vertex is at $x = - \beta^{- 1} \hat{e}_1$.\footnote{Recall
that we are using $-, + \cdots +$ Minkowski signature.} Scaling $\beta$ to zero
we can connect the transformation {\eqref{fb}} to the identity. Under this
scaling the light cone of singularities moves away to infinity along the
negative $x^1$ direction. Requirement {\eqref{req1}}, and hence finite
invariance {\eqref{MfinInv}}, will hold if the light cone of singularities,
while moving away, does not touch the support of $\varphi$ (see Fig.\ \ref{figure:supp0} for the 2d case).

\begin{figure}[h]\centering
\includegraphics[width=8.24452315361406cm,height=4.28230355503083cm]{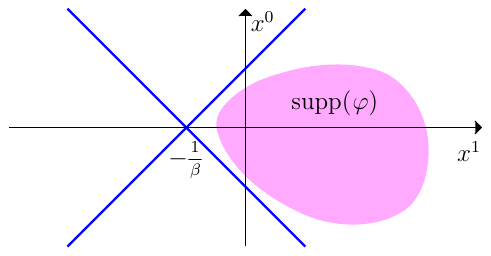}
\caption{\label{figure:supp0}In
the 2d case, the special conformal transformation {\eqref{fb}} is singular on
the blue light cone $x^0 = \pm | x^1 + \beta^{- 1} |$. Suppose $\varphi$ is
supported as shown on the right of the light cone. As $\beta \rightarrow 0$,
the light cones moves towards the left infinity and does not touch $\tmop{supp}
(\varphi)$. Therefore, such a $\varphi$ satisfies the condition for the 
invariance under a finite special conformal transformation {\eqref{fb}}. }
\end{figure}

Note that such a support requirement still leaves the possibility for both
spacelike and timelike separations among the points $x_i$ in the support of
$\varphi$. For $x_i \in \tmop{supp} \varphi$, the points $f (x_i)$ will have
the same causal structure as the points $x_i$, i.e.~$(f (x_i) - f (x_j))^2$
will have the same sign as $(x_i - x_j)^2$. This follows from the fact that $J
(x) > 0$ on $\tmop{supp} (\varphi)$, as guaranteed by being able to
continuously connect to the identity without singularities.

In the early CFT days, it was considered puzzling that Lorentzian special
conformal transformations may change the causal structure of a point
configuration. As we see here, the puzzle can be avoided by either limiting to
infinitesimal conformal invariance, or by restricting the class of test
functions so that the causal structure is preserved. A third way to deal with
the puzzle is to consider the Lorentzian conformal transformations acting on
the Lorentzian cylinder as opposed to the Minkowski space
{\cite{Luscher:1974ez}}. We will revisit the Lorentzian cylinder in our future
publication {\cite{paper3}}.

\begin{remark}
  We would like to contrast the Minkowski conformal Ward identities
  {\eqref{MinkConfWard}} with conformal Ward identities valid for Euclidean
  correlators. Euclidean correlators are real-analytic away from coincident
  points and naturally satisfy conformal Ward identities for such
  configurations. Although in this paper we don't need it, in some questions
  it might be useful to extend Euclidean correlators, in the sense of
  distributions, also to coincident points. One may ask if such an extension
  can be done in a way so that the resulting distributional correlators
  satisfy conformal Ward identities analogous to~\eqref{MinkConfWard}. In general the answer is no, already for
  2-point functions. Namely 2-point functions of primaries of dimension
  $\Delta$ such that $2 \Delta - d \in \mathbb{Z}$ will in general not allow
  even a scale invariant extension at coincident points, let alone
  conformally invariant one. E.g.\ this feature will always be present for the
  stress tensor 2-point function. 
\end{remark}

\subsubsection{Conformal invariance in terms of cross
ratios}\label{conf-cross}

So as we have just seen, Minkowski correlator $G_4^M$ is conformally
invariant. If it were a function, conformal invariance would imply that we
could write it as the usual prefactor times a function of the cross-ratios.
Since it is a distribution, one might hope that it can be written as the
prefactor times a {\tmem{distribution}} of the cross-ratios. We will now
develop this point of view and show that it indeed works, at least 
locally.\footnote{\label{no-global}We note right away that one does not expect a very nice global statement. Indeed, 
the cross-ratio space is morally the moduli space of four points on Minkowski cylinder
$\cM$. (Note that we have not yet constructed the Wightman functions as distributions on Minkowski
cylinder. However, it is not important for the point that we are trying to convey.) This is a quotient space $(\cM^4)/G$
where $G$ is the universal cover of Lorentzian conformal group. This quotient space is rather singular,
which has to do with different configurations in $\cM^4$ having different stability subgroups (light-cones, $z=\bar z$).
The quotient space $(\cM^4)/G$ is not only not smooth, it is not even Hausdorff. So away from some regular regions
of $(\cM^4)/G$ one shouldn't expect a simple statement of the form~\eqref{G4g}, unless one finds a smoother model 
of this moduli space.}

Our goal in this section will be to make sense of the formula:
\begin{equation}
  G_4 (c) = \frac{g (z (c), \bar{z} (c))}{(x_{12}^2
  x_{34}^2)^{\Delta_{\mathcal{O}}}}, \label{G4g}
\end{equation}
where $g (z, \bar{z})$ will be in general a distribution in two variables, and
$g (z (c), \bar{z} (c))$ its pullback to the space $\mathbb{R}^{4 d}$ of
Minkowski 4-point configurations $c$. This equation will be understood in the
sense of integrating both parts with a test function. {Because of the difficulty described in footnote \ref{no-global},} we will only consider
compactly supported $C^{\infty}$ test functions $\varphi (c)$, with the
additional requirement that all $c \in \tmop{supp} \varphi$ have the same
causal ordering. In particular, this implies that $\tmop{supp} \varphi$
contains no $c$'s with lightlike separated pairs. The causal ordering of a
configuration $c = (x_1, x_2, x_3, x_4)$ is encoded by the directed graph with
vertices $1, 2, 3, 4$ and edges $i \rightarrow j$ if $x_j$ belongs to the open
future light cone of $x_i$ (no edge if two points are spacelike).

Since $u, v$ are real in Minkowski space, $z, \bar{z}$ are either both real
(excluding $0, 1$) or complex conjugate. Ref.\ {\cite{Qiao:2020bcs}} divided
all causal orderings into 4 classes according to possible values of $(z,
\bar{z})$:
\begin{itemize}
  \item Class S: $(z, \bar{z}) \in (0, 1) \times (- \infty, 0)$ or the other
  way around
  
  \item Class T: $(z, \bar{z}) \in (0, 1) \times (1, + \infty)$ or the other
  way around
  
  \item Class U: $(z, \bar{z}) \in (- \infty, 0) \times (1, + \infty)$ or the
  other way around
  
  \item Class E causal orderings which contain configurations realizing the
  remaining possibilities:
  \begin{itemize}
    \item E$_{\tmop{su}}$: $(z, \bar{z}) \in (- \infty, 0) \times (- \infty,
    0)$
    
    \item E$_{\tmop{st}}$: $(z, \bar{z}) \in (0, 1) \times (0, 1)$
    
    \item E$_{\tmop{tu}}$: $(z, \bar{z}) \in (1, + \infty) \times (1, +
    \infty)$
    
    \item E$_{\tmop{stu}}$: $z, \bar{z}$ are complex-conjugate and not real
  \end{itemize}
\end{itemize}
Some class E causal orderings realize only one of the four subclasses, while
others contain configurations in each \ subclass. In the latter case different
subclasses are connected along configurations with $z = \bar{z}$ (see Fig.\ 2
in {\cite{Qiao:2020bcs}}).

To simplify the discussion, we will assume that $\tmop{supp} \varphi$ does
not include any configurations with $z = \bar{z}$. In particular, this implies
that all configurations from $\tmop{supp} \varphi$ fall into a single class S,
T, U or a single subclass E$_{\tmop{su}}$, E$_{\tmop{st}}$, E$_{\tmop{tu}}$,
E$_{\tmop{stu}}$. Below we will comment how one can add the $z = \bar{z}$
configurations.

If $\tmop{supp} \varphi$ falls into class S, E$_{\tmop{su}}$, E$_{\tmop{st}}$,
E$_{\tmop{stu}}$, we will have $| \rho |, | \bar{\rho} | < 1$. These cases do
not require special treatment, since the correlator is a function, and Eq.
{\eqref{G4g}} is true in the ordinary sense of functions.

If $\tmop{supp} \varphi$ falls into class T or U, we will have $| \rho | < 1,
| \bar{\rho} | = 1$ or the other way around. Then $g (z, \bar{z})$ will be a
function in $z$ and a distribution in $\bar{z}$.\footnote{For some (but not
all) of these causal orderings, it can be shown using another OPE channel that
$g (z, \bar{z})$ is actually a function of both variables. See Ref.\ {\cite{Qiao:2020bcs}}.} This case can be treated analogously, and simpler,
than the $| \rho |, | \bar{\rho} | = 1$ case discussed below.

Finally, if $\tmop{supp} \varphi$ falls into class E$_{\tmop{tu}}$, we will
have $| \rho |, | \bar{\rho} | = 1$. Then $g (z, \bar{z})$ will generally be a
distribution in two variables. This is the case we will focus on. E.g.\ it is
realized for the causal ordering $1 \rightarrow 3 \rightarrow 2 \rightarrow
4$.

Let us define the distribution $g (z, \bar{z})$ for $z, \bar{z} \in (1, +
\infty)$. We first define the distribution $g (\rho, \bar{\rho})$ with $|
\rho |, | \bar{\rho} | = 1$. This is done using the series in the r.h.s.\ of
Eq.\ {\eqref{g:rhoexpansion2}}, which we now consider as a function of two
independent variables $\rho, \bar{\rho}$. To be precise we consider the
series:
\begin{equation}
  g (\rho, \bar{\rho}) = \sum_{\delta, 0 \leqslant m \leqslant \delta} e^{i
  \Phi (\delta / 2 - m / 2)} (\rho \bar{\rho})^{\delta / 2 - m / 2} 
  (p_{\delta, m} \rho^m + p_{\delta, - m} \bar{\rho}^m) \hspace{0.17em},
  \label{gseries}
\end{equation}
which we view as a holomorphic function on $(\mathbb{D} \backslash (- 1,
0])^2$. Here $e^{i \Phi}$, $\Phi \in \{ 0, \pm 2 \pi, \pm 4 \pi \}$, is the
phase acquired by $\rho (c) \bar{\rho} (c)$ upon analytic continuation from
Euclidean space (as discussed in Sec.\ \ref{anal4-point} this phase is the same
as for $u (c)$). This phase is constant for each causal ordering and it may be
determined by following a path from $c_E$ to $c$ for any particular $c$.
Alternatively, the phase can also be determined from {\eqref{rhotrick}}. E.g.\ the causal ordering $1 \rightarrow 3 \rightarrow 2 \rightarrow 4$ has $\Phi =
0$.\footnote{We have that $x_{12}^2$, $x^2_{34}$, $x^2_{13}$, $x^2_{24}$ all
acquire phase $- \pi$, hence $u = \frac{x_{12}^2 x^2_{34}}{x_{13}^2 x^2_{24}}$
acquires phase 0.}

It's easy to see that function {\eqref{gseries}} satisfies a powerlaw bound as
$| \rho |, | \bar{\rho} | \rightarrow 1$. This is a baby version of the
problems studied in this paper, which was considered in {\cite{paper1}}. The
limit of $g (r e^{i \theta_1}, r e^{i \theta_2})$ as $r \rightarrow 1$
defines a tempered distribution on the boundary of the domain of analyticity,
parametrized by the two angles $\theta_1, \theta_2$. We can write it as $g
(\rho, \bar{\rho})$, with $\rho, \bar{\rho} \in S^1$.\footnote{In {\cite{paper1}} we also discussed a more general distribution
defined on the product of universal covers of two circles. Here Eq.
{\eqref{gseries}} with fixed $\alpha$ will be sufficient for our purposes.}
%\PK{Unclear why excluding $\{-1,1\}$.} \JQ{No need to exclude $\{\pm1\}$. We can say at the beginning of the next paragraph that $\rho,\bar{\rho}\neq\pm1$ for each fixed causal ordering.}

In fact we are interested only in a part of this distribution, {because $\rho,\bar{\rho}\neq\pm1$ for each fixed causal ordering.} The
points $- 1, 1$ divide the circle into two open arcs, and within $\tmop{supp}
\varphi$, $\rho$ and $\bar{\rho}$ will each live in one or the other arc. Each
arc is mapped smoothly and one-to-one to $(1, + \infty)$ by the $\rho \mapsto
z$ map. Thus we obtain the distribution $g (z, \bar{z})$ defined for $z,
\bar{z} > 1$. Although in general $z (c), \bar{z} (c)$ are defined only up to
permutation, let us define them in the case at hand, with real $z \neq
\bar{z}$, so that $\bar{z} (c) > z (c)$.

Now let us go back to making sense of {\eqref{G4g}}. Suppose first $g (z (c),
\bar{z} (c))$ were a function. Integrating {\eqref{G4g}} against a test
function we have:
\begin{equation}
  \int d^{4 d} c \, G_4 (c) \varphi (c) = \int d^{4d} c\, g (z (c), \bar{z} (c))
  \tilde{\varphi} (c), \qquad \tilde{\varphi} (c) = \frac{\varphi
  (c)}{(x_{12}^2 x_{34}^2)^{\Delta_{\mathcal{O}}}} .
\end{equation}
Note that $\tilde{\varphi} (c)$ is still $C^{\infty}$ since we are away from
light cones. We would like to continue by expressing the r.h.s. of the previous equation as an integral of $g
(z, \bar{z})$ against a two-dimensional test function:
\begin{equation}
\begin{gathered}
    \int d^{4d} c\, g (z (c), \bar{z} (c)) \tilde{\varphi} (c) = \int d z\, d \bar{z}\, g
    (z, \bar{z}) \psi (z, \bar{z}), \\
    \psi (x_1, x_2) = \int d^4 c \,\delta (x_1 - z (c)) \delta (x_2 - \bar{z} (c)) \tilde{\varphi} (c) .
    \label{psidef}
\end{gathered}
\end{equation}
We would like to know if $\psi (x_1, x_2)$ is a smooth function. By our
assumptions, $\tilde{\varphi} (c)$ is supported away from $\bar{z} (c) = z
(c)$. In this region the map $c \rightarrow (z (c), \bar{z} (c))$ is a
submersion, which means that the Jacobian has maximal rank (i.e.\ 2).
Alternatively, this means that the form $d z \wedge d \bar{z}$ does not vanish
anywhere away from $z = \bar{z}$. Showing this is a matter of an easy
computation.\footnote{Start by noting that, away from $z = \bar{z}$, we have
$d z \wedge d \bar{z} \propto d u \wedge d v$ with a nonvanishing prefactor.
We need to understand where $\nabla u$ can become proportional to $\nabla v$.
Using the embedding space formalism {\cite{Costa:2011mg}} we write $u =
\frac{(X_1 X_2) (X_3 X_4)}{(X_1 X_3) (X_2 X_4)}$, $v = \frac{(X_1 X_2) (X_3
X_4)}{(X_1 X_3) (X_2 X_4)}$ where $X_i$ are null cone $d + 2$ dimensional
vectors. For any $X_i, X_j, X_k$ the direction $R_{i, j k} = X_j (X_i X_k) -
X_k (X_i, X_j)$ is tangent to the null cone at $X_i$. Imposing $R_{i, j k}
\cdot \nabla_{X_i}  (u - \alpha v) = 0$ for all unequal $i, j, k$ where
$\alpha$ is a constant, one gets a set of simple algebraic constraints on $u,
v$. These constraints can be easily solved to show that $\alpha = \frac{2 u}{-
1 + u + v}$ while $(1 + u - v)^2 - 4 u = 0$. The latter is precisely the
constraint characterizing $z = \bar{z}$.}

Using the fact that $c \rightarrow (z (c), \bar{z} (c))$ is a submersion,
it's easy to show that $\psi (x_1, x_2)$ is smooth for $\tilde{\varphi} (c)$
supported away from $\bar{z} (c) = z (c)$ (see Chapter III.1 of
{\cite{gelfandshilov}} for such arguments). To summarize, for every smooth
function $\varphi (c)$ compactly supported away from $\bar{z} (c) = z (c)$ and
from the light cones, we constructed a smooth function $\psi (z, \bar{z})$
compactly supported in $1 < z < \bar{z}$ such that
\begin{equation}
  \int d^{4 d} c \, G_4 (c) \varphi (c) = \int d z\, d \bar{z}\, g (z, \bar{z})
  \psi (z, \bar{z}) \label{Gg1}
\end{equation}
holds in case $g (z, \bar{z})$ is a function. We now claim that this equation
continues to hold, with the same $\psi$, in case $g (z, \bar{z})$ is a
distribution. The point is that we can find a sequence of functions $g_n (z,
\bar{z})$ which tend to $g (z, \bar{z})$ in the sense of distributions, so
that the corresponding $\frac{g_n (z (c), \bar{z} (c))}{(x_{12}^2
x_{34}^2)^{\Delta_{\mathcal{O}}}}$ tend to $G_4 (c)$ in the sense of
distributions on $\mathbb{R}^{4 d}$. Since both $\varphi$ and $\psi$ are
smooth, we are allowed to pass to the limit on both sides of the equation,
proving the claim. The functions $g_n (z, \bar{z})$ are given e.g.\ by the
partial sums of the series {\eqref{gseries}}, transformed from the $\rho$ to
the $z$ coordinates.

Let us now discuss how configurations where $z = \bar{z}$ can be included into
this discussion. The basic difficulty is that the map $c \mapsto (z, \bar{z})$
fails to be a submersion near such configurations. So in general the function
$\psi (z, \bar{z})$ will not be smooth. Consider e.g.\ the causal ordering $1
\rightarrow 3 \rightarrow 2 \rightarrow 4$. In this case it's possible to show
(we omit the proof) that the function $\psi (z, \bar{z})$ behaves like
\begin{equation}
  \text{$| z - \bar{z} |^{d - 2}$ times a smooth function near $z = \bar{z}$},
  \label{psizzbar}
\end{equation}
which in general is not smooth unless $d$ is even.

We need to be able to make sense of the r.h.s.\ of {\eqref{Gg1}} for such
non-fully-smooth test functions. This is possible due to the following
observation. Above we explained, following the arguments first presented in
{\cite{paper1}}, that $g (\rho, \bar{\rho})$ is a distribution for $| \rho |,
| \bar{\rho} | = 1$. But in fact it's a bit better than that (the fact not
mentioned in {\cite{paper1}}): it is a distribution in $\rho$ for each fixed
value of $\bar{\rho} / \rho = e^{i \alpha}$! Indeed if we substitute
$\bar{\rho} = e^{i \alpha} \rho$ with a fixed $\alpha$ into {\eqref{gseries}},
we get a holomorphic function in the unit disk of $\rho$, which satisfies a
powerlaw bound, hence its boundary value is a distribution. This can be
generalized to holomorphic maps $\bar{\rho} = f (\rho)$ which maps the unit disk
into itself (or at least a portion of the unit disk near $\rho = \rho_0$ into
the unit disk). Translating to $z, \bar{z}$, this implies in particular that
$g (z, z + t)$ is a distribution for any fixed $t$. In fact, the map $\bar{z}
= z + t$ corresponds to a map $\bar{\rho} = f_t (\rho)$ to which the previous
argument is applicable. So $g (z, \bar{z})$ is by no means the most general
distribution in two variables, as it allows the restriction to the submanifold
$\bar{z} = z + t$ for any $t$. E.g.\ $\delta (z - \bar{z})$ is not allowed by
this property, while $\delta (z + \bar{z})$ is allowed. Following this logic a
bit more carefully, it can be shown (we omit the proof) that $g (z, \bar{z})$
can be paired with test functions $\psi (z, \bar{z})$ which, when expressed in
terms of $s = z + \bar{z}$, $t = \bar{z} - z$, have the following property:
$\psi (s, t)$ is $C^{\infty}$ with respect to $s$ for any fixed $t$, with
bounds on derivatives in the $s$ direction which are integrable in the $t$
direction. Eq.\ {\eqref{psizzbar}} is compatible with this requirement.

A further complications arises near the $z = \bar{z} > 1$ locus for the
causal orderings which include configurations in both E$_{\tmop{tu}}$ and
E$_{\tmop{stu}}$ subclasses. In this case the function $\psi (z, \bar{z})$
defined in {\eqref{psidef}} will consist of two functions $\psi_1 (z,
\bar{z})$ and $\psi_2 (z, \bar{z})$: one defined for real $z, \bar{z}$,
another for complex-conjugate $z, \bar{z}$. The two functions $\psi_i$ will be
glued along the $z = \bar{z} > 1$ line. The resulting glued function will not
in generally be smooth on the $z = \bar{z} > 1$ line (while it will be smooth
away from it). However the directions orthogonal to the line turn out
analogous to the $t$ direction in the previous paragraph, i.e.\ the test
function is actually not required to be smooth in these directions for the
pairing to be defined. This allows to make sense of the formula {\eqref{Gg1}}
also in this case. We omit the details.

\subsubsection{Fixing points}

We would like to put the results of the previous section in the context of a
general question of ``fixing points'' in a distribution. E.g.\ we know that the
Minkowski 4-point function is a translationally invariant distribution. Using
translation invariance we can always fix one of the 4 points to a given
position, e.g.\ zero, and consider it as a distribution with respect to the
remaining 3 positions. One could ask if one can do better than that, i.e.\ to
fix $n$ points to given positions and consider the 4-point function as a
distribution with respect to the remaining $4 - n$ positions. Where the 4-point
function is real-analytic we can of course consider all four points as fixed.

Now, results of Sec.\ \ref{conf-cross} show that, if one excludes lightlike
separations limiting to configurations having some fixed causal ordering, one
can fix a conformal frame, i.e.\ fix three points to some fixed positions, and
the fourth point to a position characterized by two conformal cross ratios,
and consider the distribution as a distribution in only two variables (cross
ratios). It is not clear if results of Sec.\ \ref{conf-cross} can be
generalized to cover lightlike separations.

In some cases it is possible to argue that one can fix more than one point
without using conformal invariance. E.g.\ we may always fix a consecutive pair
of points, i.e.\ $(x_k, x_{k + 1})$, where $k = 1, 2$ or $3$, to
spacelike-separated positions in Minkowski space, while allowing the remaining
two points to approach Minkowski limit from the forward tube. The proof of
Lemma \ref{bound} can be slightly modified to show that $| \rho |, |
\bar{\rho} | < 1$ for such configurations (see Sec.\ \ref{localCFT} below).
Moreover, a powerlaw bound also holds, by a slight modification of the
argument after Eq.\ (\ref{def:Upsilon12}).\footnote{Since $S (c) = \infty$ in
these cases, we cannot rely on (\ref{rhobound}). Instead we directly show
powerlaw bounds on $S (\Upsilon_1), S (\Upsilon_2)$ defined in Eq.
(\ref{def:Upsilon12}). Then the powerlaw bound on $| \rho |, | \bar{\rho} |$
holds by Lemma \ref{lemma8}. For $k = 1$, by fixing $\zeta_3 = 0$, and using
Lemmas \ref{Sinv} and \ref{SSS}, we have $S (\Upsilon_i) \leqslant S
(\zeta_i') + S (- \zeta_4') \leqslant S (x_{i 3})^3 + S (x_{34})^3$ $(i = 1,
2)$. This is the desired powerlaw bound with respect to $x_3$ and $x_4$. For
$k = 2$, $S (\Upsilon_1)$ is bounded as for $k = 1$, while for $S
(\Upsilon_2)$ we argue as follows. Since $x_2$ and $x_3$ are spacelike
separated, after fixing $\zeta_3 = 0$, $\zeta_2$ is a spacelike Minkowski
point, hence so is $\zeta_2'$, i.e.\ $\tmop{Im} (\zeta_4' + \zeta_2') =
\tmop{Im} (\zeta_4')$. Then by Lemma \ref{Sinv}, \ $S (\Upsilon_2) \leqslant S
(\zeta_4') + | \zeta_2' | \leqslant S (\zeta_4)^3 + | \zeta_2' |$, which is
the needed bound. Case $k = 3$ follows by similar arguments or by mapping it
to $k = 1$ via $(x_1, x_2, x_3, x_4) \rightarrow (x_1' = x_4^{\theta}, x_2' =
x_3^{\theta}, x_3' = x_2^{\theta}, x_4' = x_1^{\theta})$ which maps $\rho$ and
$\bar{\rho}$ are to their complex conjugates.} Then our arguments show that
the Minkowski 4-point function is a distribution with respect to the two unfixed
coordinates, which depends analytically on the fixed coordinates. In this case
the unfixed coordinates may have any causal orderings and also lightlike
separation.

One interesting case is that of the double light cone (DLC) singularity, i.e.\ the region close to $x_1 = 0$, $x_3 = \hat{e}_1$, $x_4 = \infty$, while $x_2$
on the light cones of $x_1, x_3$. Our results are the first ones which establish
the existence of the Wightman 4-point function in a neighborhood of DLC.
However, there is a difference between restricting to one causal ordering near
DLC or studying an open neighborhood of DLC which includes several causal
orderings (see Fig.\ \ref{134} in Conclusions). In the former case we can use
directly the results of Sec.\ \ref{conf-cross} and represents the 4-point
function as a distribution in two variables $z, \bar{z}$. In the latter case
we can fix, by the above argument, two successive spacelike points $x_3$ and
$x_4$. We are left with a distribution depending on $x_1, x_2$, i.e.\ $2\times d$
coordinates. This distribution still satisfies conformal invariance Ward
identities w.r.t.\ infinitesimal conformal transformations preserving $x_2$. It
would be interesting to understand how this constrains the distribution at the
DLC.

{Although it is not directly related, }we would also like to
mention here the classic result of Borchers {\cite{Borchers1964}} which says
that it is enough to smear Wightman functions $G_M (x_1, \ldots, x_n = 0)$
with respect to the time variables only, i.e.\ integrating with respect to $h_1
(x_1^0) \ldots h_{n - 1} (x_{n - 1}^0)$ where $h_i \in \mathcal{S}
(\mathbb{R})$, after which they become $C^{\infty}$ functions in the remaining
spatial variables $\mathbf{x}_i$. This result is valid in any QFT satisfying
Wightman axioms. It holds because smearing in time, which acts as an energy
cutoff, is effectively also a momentum cutoff because $| \mathbf{p} |
\leqslant E$.

\subsection{Wightman positivity}\label{sec:Wpos}

Recall that in Sec.\ \ref{OSfromCFT} we showed that CFT axioms imply OS
reflection positivity for 4-point functions. That discussion gives us access to OS
states $| \mathcal{O} (x) \mathcal{O} (y) \rangle \nobracket$ with $0 > x^0
> y^0$, with finite norm, and inner products measured by the Euclidean 4-point
function. We know that these states belong to the CFT Hilbert space, i.e.\ can
be arbitrarily well approximated in norm by states produced by inserting
finite linear combinations of CFT local operators at one point in the
half-space $x^0 < 0$, e.g.\ the south pole $x_S = (- 1, \tmmathbf{0})$.

Now that we analytically continued the 4-point function, we can consider other
states involving operators at complexified coordinates. We wish to prove that
those states belong to the CFT Hilbert space and have a positive definite
inner product. This can be shown by a robust argument, going back to
Osterwalder and Schrader {\cite{osterwalder1973}}, Sec.\ 4.3. The argument
uses only OS positivity and the Fourier-Laplace representation, but not
directly the CFT axioms.

We will consider two new kinds of states. First, states generated by a pair of
Minkowski operators smeared with respect to an arbitrary Schwartz test
function:
\begin{equation}
  | \Psi_M (F) \rangle \nobracket = \int d x\, d y\, F (x, y) | \mathcal{O} (i
  x^0, \mathbf{x}) \mathcal{O} (i y^0, \mathbf{y}) \rangle \nobracket,
  \label{PsiF}
\end{equation}
and second, states generated by a pair of Euclidean operators at complexified
time positions:
\begin{equation}
  \left| \mathcal{O} (x_3) \mathcal{O} (x_4) \rangle, \quad x_i = (\epsilon_i
  + i t_i, \mathbf{x}_i) \right., \quad 0 > \epsilon_3 > \epsilon_4 .
  \label{Ocompl}
\end{equation}
The inner products of states {\eqref{PsiF}} are given by integrals of the
Minkowski 4-point function
\begin{equation}
  \langle \Psi_M (F_1) | \Psi_M (F_2) \rangle \nobracket = \int d x\, G_4^M
  (x_1, x_2, x_3, x_4) \overline{F_1 (x_2, x_1)} F_2 (x_3,
  x_4), \label{PsiF1PsiF2}
\end{equation}
while the natural inner product on the states {\eqref{Ocompl}} is:
\begin{equation}
  \langle \mathcal{O} (x_1) \mathcal{O} (x_2) | \nobracket \mathcal{O} (x_3)
  \mathcal{O} (x_4) \rangle = G_4 (x^{\theta}_2, x^{\theta}_1, x_3, x_4),
  \label{complinner}
\end{equation}
where the OS reflection operation extends to points with complex time
coordinates by:
\begin{equation}
  x = (\varepsilon + i t, \mathbf{x}) \mapsto x^{\theta} = (- \varepsilon + i
  t, \mathbf{x}) . \label{reflCompl}
\end{equation}
The states {\eqref{PsiF}} also have a natural inner product $\langle
\mathcal{O} (x_1) \mathcal{O} (x_2) | \Psi (F) \rangle \nobracket$ with the OS
states.\footnote{We start from the analytically continued Euclidean 4-point
function $G_4 (x_1, x_2, x_3, x_4)$ and take the limit where $x_1, x_2$ are
kept at fixed Euclidean positions, while $x_3, x_4$ approach the Minkowski
space. By Theorem \ref{ThVlad}, the limit is a distribution in $x_3, x_4$, and
the inner product is its pairing with the test function $F$.}

We wish to show that all these new inner products are positive definite and,
moreover, that the new states can be approximated in norm by the smeared OS
2-operator states at Euclidean positions. Note that the positive definiteness
of {\eqref{PsiF1PsiF2}} is precisely Wightman positivity for the 4-point case.

\subsubsection{Wightman states}

Let us start with {\eqref{PsiF1PsiF2}}. Rewriting the inner product in terms
of $W (p_1, p_2, p_3)$, the (distributional) Fourier transform of $G_4^M$
with respect to $y_k = x_k - x_{k + 1}$, we obtain
\begin{equation}
  \langle \Psi_M (F_1) | \Psi_M (F_2) \rangle \nobracket = \int d p\, W (p_1,
  p_2, p_3) [\widehat{F_1} (p_2 - p_1, p_1)]^{\ast}  \hat{F}_2 (p_2 - p_3,
  p_3) \label{pos4-pointFT} .
\end{equation}
We will also need the inner products of the (smeared) OS states
\begin{equation}
  | \nobracket \Psi (H) \rangle = \int d x\, d y\, H (x, y) | \nobracket
  \mathcal{O} (x) \mathcal{O} (y) \rangle \label{PsiH}
\end{equation}
where $H$ is any $C^{\infty}$ function compactly supported at $0 > x^0 >
y^0$. Their inner products are given by
\begin{equation}
  \langle \Psi (H_1) | \Psi (H_2) \rangle \nobracket = \int d x\, G^E_4
  (x_1, x_2, x_3, x_4) \overline{H_1 (x^{\theta}_2, x^{\theta}_1)} H_2
  (x_3, x_4) .
\end{equation}
This can be expressed using the Fourier-Laplace representation {\eqref{FL1}}.
We obtain
\begin{equation}
  \langle \Psi (H_1) | \Psi (H_2) \rangle \nobracket = \int d p\, W
  (p_1, p_2, p_3) \overline{g (H_1) (p_2, p_1)} g (H_2) (p_2, p_3),
  \label{fphipos}
\end{equation}
where $g (H) (p, q)$ is a Schwartz class function related to $H (x, y)$ as
follows. First we form the function $h (y_1, y_2) = H (- y_1, - y_1 - y_2)$
which has support at $y_1^0, y_2^0 > 0$. Next we consider $\tilde{h}$, the
Fourier-Laplace transform of $h (y_1, y_2)$:
\begin{eqnarray}
  & \tilde{h} (p_1, p_2) = \int d y_1\, d y_2\, e^{- p_1^0 y_1^0 + i\mathbf{p}_1
  \cdummy \mathbf{y}_1 - p_2^0 y_2^0 + i\mathbf{p}_2 \cdummy \mathbf{y}_2} h
  (y_1, y_2) . &  \label{fphi}
\end{eqnarray}
Finally, $g (H)$ is an arbitrary Schwartz class function which coincides
with $\tilde{h}$ inside the forward light cones. We also have an analogous
formula for the inner product between states of two types:
\begin{equation}
  \langle \Psi (H) | \Psi_M (F) \rangle \nobracket = \int d p\, W
  (p_1, p_2, p_3) \overline{g (H) (p_2, p_1)} \hat{F}_2 (p_2 - p_3,
  p_3) . \label{PsiHPsiF}
\end{equation}
At this point we recall Lemma \ref{lemma:fcheckdense} from Sec.\ \ref{MinkFromEucl}. That lemma implies that Schwartz functions of the form $g
(H)$ are dense in the Schwartz space. In particular, for any Schwartz $F$, we
can find a sequence of functions $\{ H_r \}_{r = 1}^{\infty}$ such that $g
(H_r) (p_2, p_3) \rightarrow \hat{F} (p_2 - p_3, p_3)$ in the Schwartz space.
Then it follows from {\eqref{pos4-pointFT}}, {\eqref{fphipos}}, {\eqref{PsiHPsiF}}
that
\begin{eqnarray}
  & \langle \Psi (H_r) | \Psi (H_r) \rangle \nobracket \rightarrow
  \langle \Psi_M (F) | \Psi_M (F) \rangle \nobracket, & \\
  & \langle \Psi (H_r) | \Psi_M (F) \rangle \nobracket \rightarrow
  \langle \Psi_M (F) | \Psi_M (F) \rangle \nobracket . &  \nonumber
\end{eqnarray}
From the first equation we conclude that $\langle \Psi_M (F) | \Psi_M
(F) \rangle \nobracket \geqslant 0$, proving Wightman positivity. The two
equations taken together imply that
\begin{equation}
  \langle \Psi (H_r) - \Psi_M (F) | \Psi (H_r) - \Psi_M (F) \rangle
  \nobracket \rightarrow 0,
\end{equation}
i.e.\ OS states can approximate Wightman states in norm.

\subsubsection{OS states for complexified times}\label{CScompl}

Let us discuss next the states {\eqref{Ocompl}} obtained by putting operators
at complexified time positions. In these states we don't take the limit to
Minkowski space, so they are defined without smearing. Using the
Fourier-Laplace representation, their inner product {\eqref{complinner}} is
expressed as
\begin{equation}
  \langle \mathcal{O} (x_1) \mathcal{O} (x_2) | \nobracket \mathcal{O} (x_3)
  \mathcal{O} (x_4) \rangle = G_4 (x^{\theta}_2, x^{\theta}_1, x_3, x_4) =
  \int d p\, W (p_1, p_2, p_3) \overline{f_{x_1, x_2} (p_2, p_1)}
  f_{x_3, x_4} (p_2, p_3),
\end{equation}
where $f_{x, y} (p, q)$, where $0 > \tmop{Re} (x^0) > \tmop{Re} (y^0)$, is
any Schwartz function which agrees with
\begin{equation}
  e^{p^0 x^0 - i\mathbf{p} \cdummy \mathbf{x} - q^0 (x^0 - y^0) +
  i\mathbf{q} \cdummy (\mathbf{x}-\mathbf{y})} .
\end{equation}
for $p, q$ in the forward light cone (where this function is exponentially
decreasing) and extends it somehow outside the light cones (it does not matter
how because $W$ has support in the forward light cones).

Since $f_{x, y}$ is a Schwartz function, it can be approximated by Schwartz
functions of the form $g (H)$ as in the preceding subsection. This implies that non-smeared complexified OS
states can be approximated in norm by Euclidean OS states smeared with
compactly supported test functions. In particular, the inner product
{\eqref{complinner}} is positive definite, providing an extension of pointwise
OS positivity to complexified times:
\begin{equation}
  G_4 (y^{\theta}, x^{\theta}, x, y) \geqslant 0, \qquad (0 > \tmop{Re} x^0
  > \tmop{Re} y^0) .
\end{equation}
As usual, positive-definite inner product implies a Cauchy-Schwarz inequality
for the complexified times:
\begin{equation}
  | G_4 (x_1, x_2, x_3, x_4) |^2 \leqslant G_4 (x_1, x_2,
  x^{\theta}_2, x^{\theta}_1) G_4 (x^{\theta}_4, x^{\theta}_3, x_3, x_4),
  \label{CS4-point}
\end{equation}
valid for $\tmop{Re} x^0_1 > \tmop{Re} x^0_2 > 0 > \tmop{Re} x_3^0 > \tmop{Re}
x^0_4$. The analogues of these properties for conformal blocks will be useful
in Sec.\ \ref{secondpass}.

\begin{remark}
  \label{genrefl}We can extend the reflection operation further for points
  with complexified both time and space coordinates, as
  \begin{equation}
    x = (\varepsilon + i t, \mathbf{x}+ i\mathbf{y}) \mapsto x^{\theta} = (-
    \varepsilon + i t, \mathbf{x}- i\mathbf{y}) . \label{reflCompl1}
  \end{equation}
  With this definition, we can show by the same arguments as above that
  $G_4 (y^{\theta}, x^{\theta}, x, y) \geqslant 0$ (pointwise OS positivity) remains true for $0 \succ (\tmop{Re} x^0, \tmop{Im} \mathbf{x}) \succ
  (\tmop{Re} y^0, \tmop{Im} \mathbf{y})$ where $\eta_1 \succ \eta_2$ means
  $\eta_1 - \eta_2 \succ 0$ (i.e.\ in the forward light cone). We can then show
  that the states $| \nobracket \mathcal{O} (x) \mathcal{O} (y) \rangle$
  make sense for such $x, y$ and can be approximated in norm by integrated
  Euclidean OS states.
\end{remark}

\subsection{Wightman clustering}\label{clusterWightman}

\subsubsection{$2 + 2$ split}

In this section we will derive clustering {\eqref{Wightman:cluster}} for
Wightman 4-point functions (see {\cite{osterwalder1973}}, Sec.\ 4.4). As in
Sec.\ \ref{OSclustering} for the OS case, we will consider 2+2 and 3+1
splits separately. The property we need to prove in the 2+2 case can be
written conveniently in the language of Wightman states $| \nobracket \Psi_M
(F) \rangle$, at our disposal by the discussion in Sec.\ \ref{sec:Wpos}:
\begin{equation}
  \langle \Psi_M (F_1) | \Psi_M \nobracket (U_{\lambda a} F_2) \rangle
  \rightarrow \langle \Psi_M (F_1) | \Omega \nobracket \rangle \langle \Omega
  | \Psi_M \nobracket (F_2) \rangle \label{WCshow}
\end{equation}
as $\lambda \rightarrow \infty$ for any spacelike vector $a$ and any Schwartz
test functions $F_1, F_2$, where $U_{\lambda a}$ is translation: $(U_{\lambda
a} F_2) (x, y) = F_2 (x - \lambda a, y - \lambda a)$, and $\Omega$ is the
vacuum state corresponding to inserting the unit operator. By Lorentz
invariance it's enough to prove this for $a = (0, \mathbf{a})$, purely spatial
vector. In Sec.\ \ref{OSclustering} we showed the OS clustering, which we
can also write using the integrated OS states {\eqref{PsiH}}, as
\begin{equation}
  \langle \Psi (H_1) | \Psi \nobracket (U_{\lambda a} H_2) \rangle
  \rightarrow \langle \Psi (H_1) | \Omega \nobracket \rangle \langle \Omega
  | \Psi \nobracket (H_2) \rangle \label{OSCknow} .
\end{equation}
As explained in Sec.\ \ref{sec:Wpos}, we can find states $| \Psi (H_1)
\rangle \nobracket$ and $| \Psi (H_2) \rangle \nobracket$ which approximate
$| \Psi_M \nobracket (F_1) \rangle$ and $| \Psi_M \nobracket (F_2) \rangle$ in
norm within any $\varepsilon > 0$. Moreover it's obvious from that
construction that the norm is invariant under shifts in purely spatial
direction (i.e.\ the operator $U_{\lambda a}$ is unitary). Hence we have $\|
\Psi (U_{\lambda a} H_2) - \Psi_M (U_{\lambda a} F_2) \| = \| \Psi (H_2)
- \Psi_M (F_2) \| \leqslant \varepsilon$ for any $\lambda$. By these
properties, {\eqref{OSCknow}} implies {\eqref{WCshow}}.\footnote{Indeed we
have $| \langle \Psi_M (F_1) | \Psi_M \nobracket (U_{\lambda a} F_2) \rangle -
\langle \Psi (H_1) | \Psi \nobracket (U_{\lambda a} H_2) \rangle |
\leqslant C \varepsilon$ with some $C$ independent of $\lambda$. Now passing
to the limit $\lambda \rightarrow \infty$ and using {\eqref{OSCknow}} we
obtain $\text{lim sup}_{\lambda \rightarrow \infty} \langle \Psi_M (F_1) |
\Psi_M \nobracket (U_{\lambda a} F_2) \rangle \leqslant \langle \Psi_M (F_1) |
\Omega \nobracket \rangle \langle \Omega | \Psi_M \nobracket (F_2) \rangle +
C' \varepsilon$, and an analogous lower bound on $\text{lim inf}_{\lambda
\rightarrow \infty}$. Since $\varepsilon > 0$ is arbitrary we obtain
{\eqref{WCshow}}.}

\subsubsection{$3 + 1$ split}\label{3+1MinkCluster}

Let us first restate the Euclidean 3+1 clustering argument from Sec.\ \ref{OSclustering} in a somewhat more explicit form, and specializing to
scalars. So let $\varphi (x_1), \chi (x_2, x_3, x_4)$ be two smooth functions
with compact support\footnote{For simplicity, in this section we prove clustering for compactly-supported, as opposed to Schwartz, test functions. We expect that it should be possible to find a proof for Schwartz test functions as well. In any case, the most natural proof would use positivity and the OPE similarly to 2+2 split, provided positivity for higher-point functions is proven (which we don't do in this paper).}
\begin{equation}
  \tmop{supp} (\varphi) \subset \{ x_1^0 > 0 \}, \quad \tmop{supp} (\chi)
  \subset \{ 0 > x_2^0 > x_3^0 > x_4^0 \} . \label{suppreq}
\end{equation}
We would like to show
\begin{equation}
  \lim_{\lambda \rightarrow \infty} (G, \varphi_{\lambda} \otimes \chi) = 0
  \label{toshow31}, \qquad \varphi_{\lambda} \assign \varphi (\cdot - \lambda
  \hat{e}_1)
\end{equation}
where $G = G_4^E$ is the Euclidean 4-point function of four identical scalars, and
$\hat{e}_1$ is the $x^1$ unit vector. The main idea is that we can find a
conformal transformation which moves the point at infinity as well as all the
other points to some finite positions. The suppression of the integral then
comes from the Jacobian of this transformation. Consider a special conformal
transformation $f (x) = \frac{x^{\mu} + x^2 b^{\mu}}{1 + 2 x \cdummy b + x^2
b^2} =\mathcal{J} \circ T_b \circ \mathcal{J}$, where $\mathcal{J}$ is
inversion and $T_b$ is a translation by $b = \hat{e}_1$. We have $f (-
\hat{e}_1) = \infty$, while $f$ is non-singular on $\tmop{supp}
(\varphi_{\lambda})$ and $\tmop{supp} (\chi)$. We also have $f (\infty) =
\hat{e}_1 .$ By conformal invariance we have (compare {\eqref{MfinInv}}) $(G,
\Phi) = (G, \Phi^f)$ where $\Phi^f (x_1, \ldots, x_4) = \Phi (f^{- 1} (x_1),
\ldots, f^{- 1} (x_4)) \prod_{i = 1}^4 J (f^{- 1} (x_i))^{\Delta_{\mathcal{O}}
- d}$, where $J (x) = \frac{1}{1 + 2 x \cdummy b + x^2 b^2}$. We apply this
equation with $\Phi = \varphi_{\lambda} \otimes \chi$. The function $\chi$ is
mapped by this transformation to some smooth function. Suppression of the
integral in the limit $\lambda \rightarrow \infty$ will come from the
transformation of $\varphi_{\lambda}$, which is mapped to
\begin{equation}
  \varphi_{\lambda}^f (x_1) \assign \varphi (f^{- 1} (x_1) - \lambda
  \hat{e}_1) J (f^{- 1} (x_1))^{\Delta_{\mathcal{O}} - d} .
\end{equation}
Namely we have
\begin{equation}
  | (G, \varphi_{\lambda}^f \otimes \chi^f) | \leqslant C (\lambda) I, \quad I
  = \int d x_1\, | \varphi_{\lambda}^f (x_1) |, \quad C (\lambda) = \sup_{x_1
  \in \tmop{supp} \varphi_{\lambda}^f} | (G (x_1, \cdot), \chi^f) | .
\end{equation}
The function $\varphi_{\lambda}^f$ is nonzero for $f^{- 1} (x_1) \in
\tmop{supp} (\varphi) + \lambda \hat{e}_1$, which is a point near infinity for
$\lambda$ large. We conclude that $\varphi_{\lambda}^f$ is supported in a small
neighborhood, order $1 / \lambda$, of $f (\infty) = \hat{e}_1$. Since $G$
is real-analytic at nonzero point separation, this implies that $C
(\lambda)$ is bounded by some constant for $\lambda \geqslant \lambda_0$.
To compute $I$, we do the change of variables $x_1 = f (y)$:
\begin{equation}
  I = \int d y\, | \varphi (y - \lambda \hat{e}_1) | J
  (y)^{\Delta_{\mathcal{O}}} \sim \frac{\tmop{const}}{\lambda^{2
  \Delta_{\mathcal{O}}}} .
\end{equation}
This finishes the proof of Euclidean 3+1 clustering, Eq.\ {\eqref{toshow31}}.

Let us proceed next to show Wightman 3+1 clustering. We will show the same
equation as {\eqref{toshow31}}, namely
\begin{equation}
  \lim_{\lambda \rightarrow + \infty} (G, \varphi_{\lambda} \otimes \chi) = 0,
\end{equation}
where now $G = G_4^M$ is the Minkowski 4-point function, which is a tempered
distribution, and $\varphi (x_1)$ and $\chi (x_2, x_3, x_4)$ are arbitrary
compactly supported test functions (i.e.\ no support requirements analogous to
{\eqref{suppreq}}).\footnote{The method described below cannot be
straightforwardly generalized to the case of Schwartz test functions.} \ The
proof will be based on the same idea of moving the point at infinity to a
finite position, paying attention to $G$ now being distribution, and to the
requirement {\eqref{req1}} on invariance under finite Minkowski conformal
transformations.

We will use the same transformation $f (x) = \frac{x^{\mu} + x^2 b^{\mu}}{1 +
2 x \cdummy b + x^2 b^2}$, $b = \hat{e}_1$. By translation invariance, we may
assume that $\tmop{supp} (\chi)$ lies at larger $x^1$ values than of the
singularity light cone $x^0 = \pm | \mathbf{x}+ \hat{e}_1 |$ of this
transformation (see Sec.\ \ref{ConfMink}). For sufficiently large $\lambda$,
$\tmop{supp} (\varphi_{\lambda})$ will also satisfy this condition. As we
scale $b$ to zero to connect $f$ to the identity, the singularity light cone
moves away to infinity along the negative $x^1$ direction, without touching
$\tmop{supp} (\varphi_{\lambda})$ nor $\tmop{supp} (\chi)$, see Fig.\ \ref{3+1Mink}. Hence requirement {\eqref{req1}} is satisfied and we may apply
invariance {\eqref{MfinInv}}, which says $(G, \varphi_{\lambda} \otimes \chi)
= (G, \varphi_{\lambda}^f \otimes \chi^f)$.

\begin{figure}[h]\centering
  \raisebox{-0.504762041315234\height}{\includegraphics[width=10.1973796405615cm,height=4.57459333595697cm]{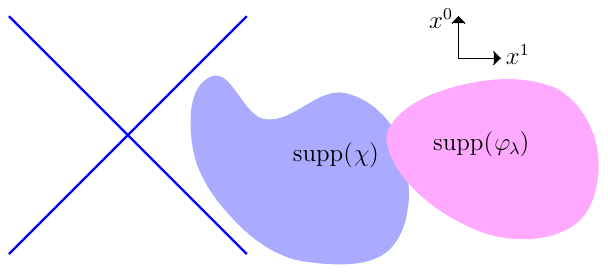}}
  \caption{Location of supports of $\varphi_{\lambda}$ and $\chi$ with respect
  to the singularity light cone of $f$.\label{3+1Mink}}
\end{figure}

Now, using translation invariance of the 4-point function, $G (x_1, x_2, x_3, x_4)
= \tilde{G} (x_2 - x_1, x_3 - x_1, x_4 - x_1)$ we may write
\begin{equation}
  (G, \varphi_{\lambda}^f \otimes \chi^f) = \int d x_1\, \varphi_{\lambda}^f
  (x_1) F (x_1), \label{intMink}
\end{equation}
where $F (x_1) = (\tilde{G}, T_{x_1} \cdot \chi^f)$ and $T_a$ is a
translation. $\tilde{G}$ is a distribution, but since translation is a
continuous operation in the space of test functions, we know that $F (x_1)$ is
a continuous function of $x_1$. When $\lambda$ goes to $+ \infty$, the support
of $\varphi_{\lambda}^f$ shrinks to the point $\hat{e}_1$.\footnote{It is
important for the argument that, as one can easily check, $\tmop{supp}
(\varphi_{\lambda}^f)$ shrinks to a compact set (in fact, a point) and not,
say, spreads out along some light cone.} Hence for $\lambda \geqslant
\lambda_0$ we can bound $| F |$ on $\tmop{supp} (\varphi_{\lambda}^f)$ by a
constant, and estimate {\eqref{intMink}} in absolute value by $\tmop{const}
\times \int d x_1\,  | \varphi_{\lambda}^f (x_1) |$. This remaining integral is
computed via the change of variables as the Euclidean one, and goes to zero as
$\lambda^{- 2 \Delta_{\mathcal{O}}}$, completing the proof.

\subsection{Local commutativity}\label{local-comm}

Let us show that the constructed Minkowski correlators satisfy local
commutativity. This follows by a robust argument which uses only Lorentz
invariance, analyticity in the forward tube, existence of the boundary
distribution, together with real analyticity and permutation symmetry of the Euclidean correlators away from
coincident points (OS {\cite{osterwalder1973}}, Sec.\ 4.5). Here for
completeness we will provide this argument for $n$-point functions which is
its natural setting. In Sec.\ \ref{localCFT} below we will make some remarks
specific to CFT 4-point functions.

So, we start from the Euclidean correlator $G^E (x_1, \ldots, x_n)$ at $x_1^0
> x^0_2 > \cdots > x_n^0$ and its analytic continuation $G (x_1, \ldots,
x_n)$ to the forward tube $\mathcal{T}_n$ which is the set of points $x_k \in
\mathbb{C}^d$ such that their differences $y_k = x_k - x_{k + 1}$ satisfy
$\tmop{Re} y_k^0 > | \tmop{Im} \mathbf{y}_k |$ or equivalently $\eta_k \succ
0$ in terms of $\zeta_k = (i y_k^0, \mathbf{y}_k) = \xi_k + i \eta_k$, $\xi_k,
\eta_k \in \mathbb{R}^{1, d - 1}$. We will write $G$ instead of $G_n$. We know
by Theorem \ref{ThVlad} that this analytic continuation is invariant under
Lorentz transformations $\zeta_k \rightarrow \Lambda \zeta_k$ where $\Lambda
\in L_+^{\uparrow}$, the identity component of the real Lorentz group. 
Since $G$ is translationally invariant, it depends only on
$\zeta_k$, and we will abuse of notation by sometimes writing $G (\zeta_1,
\ldots, \zeta_{n - 1})$ and $(\zeta_1, \ldots, \zeta_{n - 1}) \in
\mathcal{T}_n$ instead of $G (x_1, \ldots, x_n)$ and $(x_1, \ldots, x_n)
\in \mathcal{T}_n$.

\tmtextbf{Step 1.} We will extend domain of analyticity of $G$ using the
complex Lorentz group $L (\mathbb{C})$, defined as the set of complex matrices
$A$ preserving the Minkowski metric, i.e.\ $A^T g A = g$ where $g = \tmop{diag}
(- 1, 1 \ldots 1)$. We will only need the component of $L (\mathbb{C})$
connected to the identity, denoted $L_+ (\mathbb{C})$. For any $\Lambda \in
L_+ (\mathbb{C})$ consider the equation
\begin{equation}
  G (\zeta_1, \ldots, \zeta_{n - 1}) = G (\Lambda^{- 1} \zeta_1, \ldots,
  \Lambda^{- 1} \zeta_{n - 1}) . \label{BHWeq}
\end{equation}
The two sides of this equation coincide for real $\Lambda \in L_+^{\uparrow}$
(by Lorentz invariance of $G_n$), and hence by analyticity in the components
of $\Lambda$ also for complex $\Lambda \in L_+ (\mathbb{C})$, at least for
$\Lambda$ close to 1. In other words, Eq.\ {\eqref{BHWeq}} is just an identity
if $\Lambda \approx 1$ and the arguments of $G_n$ on both sides are in the
forward tube. But a general $\Lambda \in L_+ (\mathbb{C})$ does not preserve
the forward tube. For such $\Lambda$, Eq.\ {\eqref{BHWeq}} extends analytically
$G$ from the forward tube to the set
\begin{equation}
  \mathcal{T}_n' = \bigcup_{\Lambda \in L_+ (\mathbb{C})} \Lambda \cdot
  \mathcal{T}_n,
\end{equation}
called the extended tube. The Bargmann-Hall-Wightman theorem shows that no
further topological obstructions arise in this analytic continuation; see
{\cite{jost1979general}}, p.78 for details. Call this extension $\tilde{G}$.

\tmtextbf{Step 2.} Let us consider $\tilde{G} (x_1, \ldots, x_n)$ for
\begin{equation}
  \epsilon_1 > \ldots > \epsilon_{k - 1} > 0 > \epsilon_{k + 2} > \ldots >
  \epsilon_n,
\end{equation}
while assuming that $\epsilon_k, \epsilon_{k + 1}$ are near zero and much
smaller than other $\epsilon_i$'s, and $| t_k - t_{k + 1} | < |\mathbf{x}_k
-\mathbf{x}_{k + 1} | \nobracket$, $\mathbf{x}_k$, $\mathbf{x}_{k + 1}$ real,
so that $x_k - x_{k + 1}$ approaches a spacelike separation. For $\epsilon_k >
\epsilon_{k + 1}$ this configuration is in the forward tube, so we know
$\tilde{G}$ is analytic there and agrees with $G (x_1, \ldots, x_n)$. Let us
show that the configurations with $\epsilon_k < \epsilon_{k + 1}$ are in the
extended tube. We may set $x_{k + 1} = 0$ for this argument, so that
\begin{equation}
  \zeta_k = (t_k + i \epsilon_k, \mathbf{x}_k) .
\end{equation}
We may assume without loss of generality that $\mathbf{x}_k = (x^1_k, 0,
\ldots 0)$, $x_k^1 > | t_k |$. Then acting on $\zeta_k$ with the complexified
Lorentz transformation
\begin{equation}
  \Lambda_{\theta} = \left(\begin{array}{cc}
    \cosh (i \theta) & \sinh (i \theta)\\
    \sinh (i \theta) & \cosh (i \theta)
  \end{array}\right) \in L_+ (\mathbb{C}),
\end{equation}
with small $\theta$ we get, using $\Lambda_{\theta} \approx
\left(\begin{array}{cc}
  1 & i \theta\\
  i \theta & 1
\end{array}\right)$, $\zeta_k' = \Lambda_{\theta} \zeta_k \approx (t_k, x^1_k)
+ i (\theta x^1_k + \epsilon_k, \theta t_k)$, and thus $\eta_k' \approx
(\theta x^1_k + \epsilon_k, \theta t_k)$. If $\epsilon_k$ is negative but very
small, we can can achieve $\eta_k' \succ 0$ by choosing an appropriate small
$\theta$. We need $\theta$ small so that all the other $\zeta_i'$ remain in
the forward light cone, and this will work because we are assuming that
$\epsilon_k$ is very much smaller than all the other $\epsilon_i$'s.

The bottom line is that the extended tube contains an open set of
configurations as above, with $| t_k - t_{k + 1} | < |\mathbf{x}_k
-\mathbf{x}_{k + 1} | \nobracket$ and $\epsilon_k, \epsilon_{k + 1}$ small,
with $\epsilon_k - \epsilon_{k + 1}$ of any sign. Let us call this set
$\mathcal{Q}_{n, k}$. By restricting this set a bit, we may assume that
$\mathcal{Q}_{n, k}$ is invariant under permutations of $x_k$ and $x_{k + 1}$.
By Step 1 we know that function $\tilde{G}$ is holomorphic in the extended tube
and hence also in $\mathcal{Q}_{n, k}$. In particular, it is analytic if we
set $\epsilon_k = \epsilon_{k + 1} = 0$. This already has an interesting
consequence: The Minkowski correlator is analytic with respect to a pair of
spacelike-separated points (while it remains a distribution with respect to
all the other points). The projection of $\mathcal{Q}_{n, k}$ to the plane
$(\epsilon_k - \epsilon_{k + 1}, t_k - t_{k + 1})$ is shown schematically in
Fig.\ \ref{Qnk}.

\begin{figure}[h]\centering
  \raisebox{-0.496957195215242\height}{\includegraphics[width=5.07528204119113cm,height=4.30852354715991cm]{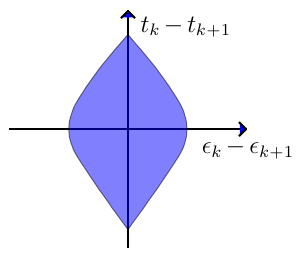}}
  \caption{\label{Qnk}Projection of the set $\mathcal{Q}_{n, k}$, where the
  function $\tilde{G}$ is holomorphic, to the plane $(\epsilon_k - \epsilon_{k +
  1}, t_k - t_{k + 1})$. The vertical extent of this region is determined by
  the condition $| t_k - t_{k + 1} | < | \mathbf{x}_k -\mathbf{x}_{k - 1} |$.
  The horizontal extent is determined, among other things, by the condition
  that $\epsilon_k, \epsilon_{k + 1}$ have to be much smaller that all the
  other $\epsilon_i$'s.}
\end{figure}

The set $\mathcal{Q}_{n, k}$ contains real configurations (horizontal axis in
Fig.\ \ref{Qnk}, setting other $t_i \rightarrow 0$ as well). Restriction of
$\tilde{G}$ to the real part of $\mathcal{Q}_{n, k}$ agrees with the Euclidean
correlator $G^E$. (They agree for $\epsilon_k > \epsilon_{k + 1}$ by
construction and for $\epsilon_k < \epsilon_{k + 1}$ by the uniqueness of
analytic continuation. Recall that the Euclidean correlator $G^E$ is real
analytic everywhere away from coincident points, i.e.\ for $\epsilon_k -
\epsilon_{k + 1}$ of any sign as long as $\mathbf{x}_k \neq \mathbf{x}_{k -
1}$.) One consequence of this fact is that $\tilde{G}$ restricted to the real
part $\mathcal{Q}_{n, k}$ is permutation invariant w.r.t. $x_k \leftrightarrow
x_{k + 1}$:
\begin{equation}
  \tilde{G} (\ldots x_k, x_{k + 1} \ldots) = \tilde{G} (\ldots
  x_{k + 1}, x_k \ldots), \label{Gtperm}
\end{equation}
because the Euclidean correlator has this property. Finally, since
$\mathcal{Q}_{n, k}$ is connected to the real configurations (see Fig.\ \ref{Qnk}), we conclude that permutation invariance {\eqref{Gtperm}} holds
everywhere in $\mathcal{Q}_{n, k}$.\footnote{In fact, $\tilde{G}$ can be
extended to a single-valued holomorphic function on the ``permuted extended
tube'' $\bigcup_{\pi \in S^n} \pi \mathcal{T}_n'$, and satisfied permutation
invariance {\eqref{Gtperm}} on this large set. See {\cite{jost1979general}},
App.\ II, {\cite{Tomozawa}} and {\cite{bogolubov2012general}}, Sec.\ 9.D.
However for our purposes analyticity and permutation invariance on
$\mathcal{Q}_{n, k}$ will suffice.}

We now see the meaning of $\tilde{G}$ for configurations with $\epsilon_k <
\epsilon_{k + 1}$. Via permutation invariance {\eqref{Gtperm}}, such
configurations are mapped to the forward tube and hence can be evaluated as
$G$ for the permuted configurations.

\tmtextbf{Step 3.} We are now ready to show local commutativity. We have to
prove that boundary value limits of two holomorphic functions agree:
\begin{equation}
  \lim_{\epsilon_i \rightarrow 0} G(\ldots x_k, x_{k + 1} \ldots) =
  \lim_{\epsilon_i \rightarrow 0} G(\ldots x_{k + 1}, x_k
  \ldots) \text{} \label{perm1},
\end{equation}
when approaching a Minkowski configuration in which $x_k - x_{k + 1}$ is
spacelike. Note that, by the original definition, the two limits are from
different forward tubes: the first one must respect the condition $\epsilon_k
> \epsilon_{k + 1}$, while the second $\epsilon_{k + 1} > \epsilon_k$. By
Theorem \ref{ThVlad}, Part 3, we can take the limits $\epsilon_i \rightarrow
0$ in any order, so let us send $\epsilon_k, \epsilon_{k + 1} \rightarrow 0$
first, while keeping other $\epsilon_i$ fixed for the moment. For very small
$\epsilon_k, \epsilon_{k + 1}$, the configurations on both sides will be in
$\mathcal{Q}_{n, k}$ where both sides are restrictions of the function
$\tilde{G}$ analytic around $\epsilon_k, \epsilon_{k + 1} = 0$ and
satisfying permutation invariance {\eqref{Gtperm}}. It follows that the two
sides of {\eqref{perm1}} agree in the limit $\epsilon_k, \epsilon_{k + 1}
\rightarrow 0$. Sending the remaining $\epsilon_i \rightarrow 0$ we recover
the local commutativity.

\subsubsection{Local commutativity for CFT 4-point functions}\label{localCFT}

In this paper we analytically continued the CFT 4-point function $\langle
\mathcal{O} (x_1) \mathcal{O} (x_2) \mathcal{O} (x_3) \mathcal{O} (x_4)
\rangle$ to the forward tube using $\rho, \bar{\rho}$ coordinates. We would
like to indicate here that this provides an alternative path to understanding
local commutativity. 
{We have shown previously that $0<|\rho|,|\bar\rho|<1$ in the forward tube.
Since the extended tube is obtained from the forward tube by complexified Lorentz transformations
and $\rho,\bar\rho$ are invariant under such transformations, it follows that $0<|\rho|,|\bar\rho|<1$ also in the extended tube. Below we will show this explicitly for the configurations used
in the proof of local commutativity.}
We consider separately $k = 1$ and $k = 2$ ($k = 3$ being
analogous to $k = 1$).

\tmtextbf{$k = 1$:} Here $x_1, x_2$ approach spacelike-separated Minkowski
points. We know that \ $| \rho |, | \bar{\rho} | < 1$ in $\mathcal{D}_4$,
$\epsilon_1 > \epsilon_2 > \epsilon_3 > \epsilon_4$. Extended tube analyticity
suggests that this must remain true also for $\epsilon_1 = \epsilon_2 >
\epsilon_3 > \epsilon_4$. Indeed, this follows from critical rereading of the
proof of Lemma \ref{bound} (Sec.\ \ref{PetrProof}, Eq.\ {\eqref{quadeq}} and
below). (That proof does not use the condition $\eta_1 \succ \eta_2$ but only
$\eta_1, \eta_2 \succ 0$.\footnote{An alternative argument is as follows. In
Sec.\ \ref{secondpass} we will show the Cauchy-Schwarz inequality for $\rho,
\bar{\rho}$, Theorem \ref{boundThm}, which bounds $\rho, \bar{\rho}$ for any
configuration in the forward tube with $\epsilon_1 > \epsilon_2 > 0 >
\epsilon_3 > \epsilon_4$ in terms of $\rho, \bar{\rho}$ of
``reflection-symmetric'' configurations having $\epsilon_3 = - \epsilon_2$,
$\epsilon_4 = - \epsilon_1$. The proof of Lemma \ref{boundLemma}, Eq.
{\eqref{z12zbar12}} shows that $\rho, \bar{\rho}$ remain less than 1 for the
latter configurations in the limit $\epsilon_1 \rightarrow \epsilon_2$.}) It
is also important for analyticity that $\rho, \bar{\rho}$ not vanish. In the
forward tube $\rho, \bar{\rho}$ do not vanish because $x^2_{i j} \neq 0$, $i <
j$ (Lemma \ref{xij2h}). When $\epsilon_1 = \epsilon_2$ we have $x^2_{12} > 0$
(spacelike separation), hence also nonzero. These observations show that the
CFT 4-point function can be analytically extended, using the $\rho, \bar{\rho}$
expansion, to a neighborhood of points with $\epsilon_1 = \epsilon_2 > 0 >
\epsilon_3 > \epsilon_4$, $x^2_{12} > 0$, in agreement with the general QFT
arguments given above.

Let us now permute the first two points: $(\epsilon_1 + i t_1, \mathbf{x}_1)
\leftrightarrow (\epsilon_2 + i t_2, \mathbf{x}_2)$. In the Euclidean region,
this transformation maps $\rho \rightarrow - \rho, \bar{\rho} \rightarrow -
\bar{\rho}$ and leaves the 4-point function of identical scalars invariant because
the expansion {\eqref{g:rhoexpansion}} contains only even $m$. The same
transformation remains true for complexified times for spacelike separation.
Taking the limit $\epsilon_1, \epsilon_2 \rightarrow 0$, we recover local
commutativity very explicitly.

\tmtextbf{$k = 2$:} Now we are interested in the limit $\epsilon_2 \rightarrow
\epsilon_3$ from inside $\epsilon_1 > \epsilon_2 > \epsilon_3 >
\epsilon_4$.\footnote{\label{noteShock}The discussion on the local
commutativity of this type can also be found in the study of causality in a
shockwave background (see Sec.\ 5 of {\cite{Hartman:2015lfa}}). In
{\cite{Hartman:2015lfa}}, the 2-point function in a shockwave background is
defined by $\langle \mathcal{O} (x) \mathcal{O} (y) \rangle_{\Psi} \assign
\frac{\langle \Psi (i \delta) \mathcal{O} (x) \mathcal{O} (y) \Psi (- i
\delta) \rangle}{\langle \Psi (i \delta) \Psi (- i \delta) \rangle}$, where
$x$, $y$ are Minkowski points and ``$i \delta$'' means the Euclidean point
$(\delta, 0, \ldots, 0)$. In our language it corresponds to the 4-point
function $\langle \Psi (x_1) \mathcal{O} (x_2) \mathcal{O} (x_3) \Psi (x_4)
\rangle$ with $\varepsilon_1 = - \varepsilon_4 = \delta > 0$ and
$\varepsilon_2 = \varepsilon_3 = 0$. We know that the 4-point function is
regular analytic at such configurations. So the commutator $[\mathcal{O} (x),
\mathcal{O} (y)]$ vanishes in the shockwave background when $x$ and $y$ are
spacelike separated.} As for $k = 1$, critical rereading of the proof of
Lemma \ref{bound} shows that $| \rho |, | \bar{\rho} |$ remain less than 1.
(We put in that proof $\zeta_2 = \xi_2 + i \eta_2$, $\xi_2 = (t_2,
\mathbf{x}_2)$ spacelike, and $\eta_2 = (\epsilon_2, \tmmathbf{0})$,
$\epsilon_2 > 0$. The proof does not use the condition $\eta'_2 \succ 0$ but
only $\eta'_{24} \succ 0$. The latter condition remains true for $\epsilon_2
\rightarrow \epsilon_3 = 0$, as $\zeta_2'$ goes to a finite real vector.)
Hence, the CFT 4-point function can be analytically extended, using the $\rho,
\bar{\rho}$ expansion, to a neighborhood of points with $\epsilon_1 >
\epsilon_2 = \epsilon_3 > \epsilon_4$, $x^2_{23} < 0$.

To finish the proof of local commutativity, we fall back on the general
argument, appealing to the permutation invariance of the (real-analytic) CFT
4-point function under $x_2 \leftrightarrow x_3$. (Unlike for $k = 1$, the
s-channel OPE expansion {\eqref{g:rhoexpansion}} cannot be used to make this
step more explicit, as it does not manifestly have this invariance.)

\subsection{Generalization to non-identical scalars}\label{nonId}

In the previous subsections we proved that the 4-point function of
identical scalars has analytic continuation to the forward tube
$\mathcal{T}_4$, and its boundary value in the Minkowski region is a tempered
distribution. Then Minkowski conformal invariance, Wightman positivity,
Wightman clustering and local commutativity follow from their Euclidean
analogues.

In this section we will indicate how to generalize analytic continuation and
temperedness to 4-point functions of non-identical scalars. The proof of
the other properties is the same as in the case of identical scalars.

We consider the 4-point function of scalar primary operators
$\mathcal{O}_i$ with scaling dimensions $\Delta_i$,
\begin{eqnarray}
  G^E_{1234} (c_E) & \assign & \langle \mathcal{O}_1 (x_1) \mathcal{O}_2 (x_2)
  \mathcal{O}_3 (x_3) \mathcal{O}_4 (x_4) \rangle \nonumber\\
  & = & \frac{1}{(x_{12}^2)^{\frac{\Delta_1 + \Delta_2}{2}}
  (x_{34}^2)^{\frac{\Delta_3 + \Delta_4}{2}}} \left( \frac{x_{24}^2}{x_{14}^2}
  \right)^{\frac{\Delta_1 - \Delta_2}{2}} \left( \frac{x_{14}^2}{x_{13}^2}
  \right)^{\frac{\Delta_3 - \Delta_4}{2}} g_{1234} (c_E), 
  \label{def:Eucl4-pointgeneral}
\end{eqnarray}
which reduces to (\ref{def:Euclidean4-point}) when $\Delta_i$'s are identical. The
analytic continuation of the prefactor to the forward tube $\mathcal{T}_4$ is
straightforward. The function $g_{1234} (c_E)$ only depends on the conformal
equivalence class of $c_E$, i.e.\ $g_{1234} (c_E) = g_{1234} (\rho (c_E),
\bar{\rho} (c_E))$. By the similar argument to that in Sec.\ \ref{Eucl4-point},
the function $g_{1234} (c_E)$ has the following series expansion
\begin{equation}
  g_{1234} (c_E) = \left[ \frac{(1 - \rho) (1 - \bar{\rho})}{(1 + \rho) (1 +
  \bar{\rho})} \right]^{\frac{\Delta_1 - \Delta_2 - \Delta_3 + \Delta_4}{2}}
  \underset{\delta, m}{\sum} a_{12} (\delta, m) a_{\bar{4} \bar{3}} (\delta,
  m)^{\ast} r^{\delta} e^{i m \theta}, \qquad \rho (c_E) = r e^{i m \theta},
\end{equation}

where the sum runs over a discrete set of pairs $(\delta, m)$ with $\delta
\geqslant 0$, $m \in \mathbb{Z}$ (not necessarily even for non-identical
scalars), $| m | \leqslant \delta$. Analogously to the case of identical
scalars, the sum is absolutely convergent when $| \rho (c_E) | < 1$ (see
below). Also, when $d \geqslant 3$ we have $p_{\delta, - m} = p_{\delta, m}$,
where $p_{\delta, m} = a_{12} (\delta, m) a_{\bar{4} \bar{3}} (\delta,
m)^{\ast}$. Analogously to Sec.\ \ref{anal4-point}, the analytic continuation of
$g_{1234} (c)$ in $d \geqslant 3$ will be given by the formula (compare
(\ref{eq:gtilde}))
\begin{equation}
  g_{1234} (c) = \left( \frac{x_{14}^2 x_{23}^2}{x_{13}^2 x_{24}^2}
  \right)^{\frac{\Delta_1 - \Delta_2 - \Delta_3 + \Delta_4}{4}} \sum_{m,
  \delta, 0 \leqslant m \leqslant \delta} p_{\delta, m} R_{\delta / 2 - m /
  2} (c) \Phi_m (c) . \label{eq:gtildegeneral}
\end{equation}
In $d = 2$, $p_{\delta, m} \neq p_{\delta, - m}$ but the functions $\rho
(c)^m$ and $\bar{\rho} (c)^m$ are individually holomorphic. In this case the
analytic continuation of $g_{1234} (c)$ is given by the formula (compare
{\eqref{eq:gtilde2d}}):
\begin{equation}
  g_{1234} (c) = \left( \frac{x_{14}^2 x_{23}^2}{x_{13}^2 x_{24}^2}
  \right)^{\frac{\Delta_1 - \Delta_2 - \Delta_3 + \Delta_4}{2}} \sum_{m,
  \delta, 0 \leqslant m \leqslant \delta} R_{\delta / 2 - m / 2} (c) 
  [p_{\delta, m} \rho (c)^m + p_{\delta, - m} \bar{\rho} (c)^m] .
  \label{eq:gtilde2dgeneral}
\end{equation}
We would like to show that

(a) when $r = \max \{ | \rho |, | \bar{\rho} | \} < 1$, the series
\begin{equation}
  \tilde{g}_{1234} (\rho, \bar{\rho}) = \underset{\delta, m}{\sum} a_{12}
  (\delta, m) a_{\bar{4} \bar{3}} (\delta, m)^{\ast} \rho^{(\delta + m) / 2}
  \bar{\rho}^{(\delta - m) / 2} \label{gtildeexp}
\end{equation}
is absolutely convergent;

(b) the remainder $\tilde{g}_{1234} (\rho, \bar{\rho} ; \delta_{\ast}) \assign
\underset{\delta \geqslant \delta_{\ast}, m}{\sum} a_{12} (\delta, m)
a_{\bar{4} \bar{3}} (\delta, m)^{\ast} \rho^{(\delta + m) / 2}
\bar{\rho}^{(\delta - m) / 2}$ has a powerlaw bound, uniform in
$\delta_{\ast}$:
\begin{equation}
  | \tilde{g}_{1234} (\rho, \bar{\rho} ; \delta_{\ast}) | \leqslant C (1 -
  r)^{- \Delta_1 - \Delta_2 - \Delta_3 - \Delta_4} .
  \label{gtildegeneral:bound}
\end{equation}
This is done as follows (compare \ {\cite{paper1}}, Sec.\ 4.2). Consider the
4-point functions $\langle \mathcal{O}_1 \mathcal{O}_2 \mathcal{O}_2^{\dag}
\mathcal{O}_1^{\dag} \rangle$, $\langle \mathcal{O}_4^{\dag}
\mathcal{O}_3^{\dag} \mathcal{O}_3 \mathcal{O}_4 \rangle$, and let
$\tilde{g}_{12 \bar{2} \bar{1}}$, $\tilde{g}_{\bar{4} \bar{3} 34}$ be the
analogues of (\ref{gtildeexp}):
\begin{eqnarray}
  \tilde{g}_{12 \bar{2} \bar{1}} (\rho, \bar{\rho}) & = & \underset{\delta,
  m}{\sum} | a_{12} (\delta, m) |^2 \rho^{(\delta + m) / 2}
  \bar{\rho}^{(\delta - m) / 2},  \nn\\%\label{gtildexp:12}\\
  \tilde{g}_{\bar{4} \bar{3} 34} (\rho, \bar{\rho}) & = & \underset{\delta,
  m}{\sum} | a_{\bar{4} \bar{3}} (\delta, m) |^2 \rho^{(\delta + m) / 2}
  \bar{\rho}^{(\delta - m) / 2} .  \label{gtildeexp:34}
\end{eqnarray}
Noticing that $| m | \leqslant \delta$, we estimate (\ref{gtildeexp}) by
absolute value and apply Cauchy-Schwarz inequality:
\begin{equation}
  | \tilde{g}_{1234} (\rho, \bar{\rho} ; \delta_{\ast}) | \leqslant
  \underset{\delta, m}{\sum} | a_{12} (\delta, m) | | a_{\bar{4} \bar{3}}
  (\delta, m) | r^{\delta} \leqslant [\tilde{g}_{12 \bar{2} \bar{1}} (r, r)
  \tilde{g}_{\bar{4} \bar{3} 34} (r, r)]^{1 / 2} . \label{gtilde:CS}
\end{equation}
The functions $\tilde{g}_{12 \bar{2} \bar{1}} (r, r)$ and $\tilde{g}_{\bar{4}
\bar{3} 34} (r, r)$ correspond to the 4-point functions at the Euclidean
configurations with $\rho = \bar{\rho} = r < 1$, hence their series expansions
%(\ref{gtildexp:12}), 
(\ref{gtildeexp:34}) are convergent by the Euclidean OPE
axiom. Therefore, (\ref{gtildeexp}) is absolutely convergent when $| \rho |, |
\bar{\rho} | < 1$. This finishes the proof of part (a).

Using the t-channel OPE, we can show that for $0 \leqslant r < 1$,
\begin{eqnarray}
  \tilde{g}_{12 \bar{2} \bar{1}} (r, r) & \leqslant & C (1 - r)^{- 2 \Delta_1
  - 2 \Delta_2}, \nn% \label{gtildebound:12}
\\
  \tilde{g}_{\bar{4} \bar{3} 34} (r, r) & \leqslant & C (1 - r)^{- 2 \Delta_3
  - 2 \Delta_4},  \label{gtildebound:34}
\end{eqnarray}
with some $C > 0$. Combining %(\ref{gtildebound:12}), 
(\ref{gtildebound:34})
with (\ref{gtilde:CS}) we get (\ref{gtildegeneral:bound}). This finishes the
proof of part (b).

\section{Optimal powerlaw bound from Cauchy-Schwarz $\rho, \bar{\rho}$
inequality}\label{secondpass}

In Sec.\ \ref{power4-point} we provided a powerlaw bound for the 4-point function,
based on the inequality {\eqref{rhobound}} for $\max (| \rho (c) |, |
\bar{\rho} (c) |)$. That did the job of allowing us to apply Theorem
\ref{ThVlad} and prove that the Minkowski 4-point function is a distribution, but
the actual bound {\eqref{rhobound}} is not optimal. It is interesting to get a
better bound on $| \rho (c) |, | \bar{\rho} (c) |$, because this will
translate into a better powerlaw bound for the 4-point function, allowing us to
get a better idea about the regularity of the Minkowski 4-point function as a
distribution, i.e.\ how many derivatives test functions must have. In the proof
of Theorem \ref{ThVlad}, parameters $A_n$ and $B_n$ of the powerlaw bound
enter into Eq.\ {\eqref{GMreg}} which provides an upper bound on the
regularity.

In this section we will provide such an optimal bound on $| \rho (c) |, |
\bar{\rho} (c) |$. The main idea of the bound and of its proof is inspired by
Sec.\ \ref{CScompl}. Let us denote by $\mathcal{D}^{(0)}_4$ the subset of
configurations $c \in \mathcal{D}_4$ satisfying the condition $\tmop{Re} x_1^0
> \tmop{Re} x_2^0 > 0 > \tmop{Re} x_3^0 > \tmop{Re} x_4^0$. We showed that the
4-point functions for complexified times satisfy the Cauchy-Schwarz inequality
{\eqref{CS4-point}} for $c = (x_1, x_2, x_3, x_4) \in \mathcal{D}^{(0)}_4$. For a
general configuration $c \in \mathcal{D}^{(0)}_4$ we define two configurations
\begin{equation}
  \label{def:configC12C34} c_{12} = (x_1, x_2, x_2^{\theta}, x_1^{\theta}),
  \quad c_{34} = (x_4^{\theta}, x_3^{\theta}, x_3, x_4),
\end{equation}
where $\theta$ is the operation in {\eqref{reflCompl}} which generalizes the
OS reflection to complexified times. We will call such configurations, for
obvious reasons, reflection-symmetric. It is clear that both $c_{12}, c_{34}
\in \mathcal{D}^{(0)}_4$. Eq.\ {\eqref{CS4-point}} can now be written as
\begin{equation}
  | G_4 (c) |^2 \leqslant G_4 (c_{12}) G_4 (c_{34}) \qquad (c \in
  \mathcal{D}^{(0)}_4) . \label{Gc12c34}
\end{equation}
Since we know that $G_4$ can be written as a convergent power series in
$\rho$, $\bar{\rho}$, Eq.\ {\eqref{Gc12c34}} suggests that there should be a
corresponding bound for the $\rho$, $\bar{\rho}$ coordinates. This is indeed
the case, as we have the following couple of results:

\begin{lemma}
  \label{boundLemma}Any reflection-symmetric configuration $c \in
  \mathcal{D}^{(0)}_4$ has $\rho (c), \bar{\rho} (c) \in (0, 1)$.
\end{lemma}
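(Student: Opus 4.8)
The statement concerns a reflection-symmetric configuration $c = c_{12} = (x_1, x_2, x_2^\theta, x_1^\theta) \in \mathcal{D}^{(0)}_4$ with $\operatorname{Re} x_1^0 > \operatorname{Re} x_2^0 > 0$, and asserts $\rho(c), \bar\rho(c) \in (0,1)$. The plan is to compute the cross-ratios $u, v$ (equivalently $z, \bar z$) explicitly for such a configuration and check that they are real, lie in the right ranges, and thus map under $f$ to $\rho, \bar\rho \in (0,1)$. The key simplification is that the reflection $\theta$ acts on the Minkowski-complexified coordinate $\zeta = (i x^0, \mathbf{x})$ simply as complex conjugation composed with a sign flip on the time component; concretely, writing $x_i = (\epsilon_i + i t_i, \mathbf{x}_i)$ with $\mathbf{x}_i$ real, the reflected point $x_i^\theta = (-\epsilon_i + i t_i, \mathbf{x}_i)$ has the property that $(x_i - x_j^\theta)^2$ and $(x_i^\theta - x_j)^2$ are complex conjugates of each other, while $(x_i - x_j)^2$ and $(x_i^\theta - x_j^\theta)^2$ are complex conjugates. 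This will force $u$ and $v$ to be real and positive.

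First I would set up coordinates: with $c_{12} = (x_1, x_2, x_2^\theta, x_1^\theta)$ one has $x_{12}^2 = (x_1 - x_2)^2$, $x_{34}^2 = (x_2^\theta - x_1^\theta)^2 = \overline{(x_1-x_2)^2}$, $x_{13}^2 = (x_1 - x_2^\theta)^2$, $x_{24}^2 = (x_2 - x_1^\theta)^2 = \overline{(x_1 - x_2^\theta)^2}$, $x_{14}^2 = (x_1 - x_1^\theta)^2$, $x_{23}^2 = (x_2 - x_2^\theta)^2$. Note $x_{14}^2 = (2\epsilon_1)^2 + 0 = 4\epsilon_1^2 > 0$ is real (since $x_1 - x_1^\theta = (2\epsilon_1, \mathbf{0})$ in Euclidean coordinates), and similarly $x_{23}^2 = 4\epsilon_2^2 > 0$. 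Therefore
\begin{equation}
  u = \frac{x_{12}^2 x_{34}^2}{x_{13}^2 x_{24}^2} = \frac{|x_{12}^2|^2}{|x_{13}^2|^2} \geqslant 0, \qquad
  v = \frac{x_{14}^2 x_{23}^2}{x_{13}^2 x_{24}^2} = \frac{16\,\epsilon_1^2 \epsilon_2^2}{|x_{13}^2|^2} > 0,
\end{equation}
so both $u$ and $v$ are real and nonnegative, with $v > 0$ strictly. (Here I should check $x_{13}^2 \neq 0$, which follows from Lemma \ref{xij2h} applied to the forward-tube difference $x_1 - x_2^\theta$; indeed $c_{12} \in \mathcal{D}^{(0)}_4 \subset \mathcal{T}_4$ so $\operatorname{Re}(x_1^0 - (x_2^\theta)^0) = \epsilon_1 + \epsilon_2 > 0 > |\operatorname{Im}(\mathbf{x}_1 - \mathbf{x}_2)| = 0$, hence $x_{13}^2 \notin (-\infty, 0]$, in particular $\neq 0$.) With $u \geqslant 0$, $v > 0$ real, the discriminant $(1 + u - v)^2 - 4u$ in \eqref{zzbarsolved} is real, so $z, \bar z$ are either both real or complex conjugate; but by Lemma \ref{bound} we know $z(c), \bar z(c) \notin [1, +\infty)$, which rules out some cases.

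The remaining work is to pin down which of the regions the real pair $(z, \bar z)$ — or the complex-conjugate pair — actually falls into, and to exclude $z = \bar z \in (-\infty, 0]$ and the analogous degenerate loci so that $f(z), f(\bar z)$ land strictly in $(0,1)$ rather than on the boundary or the cut. The cleanest route is to recognize $c_{12}$ as $\theta$-reflection-symmetric and invoke the pointwise OS positivity extended to complexified times (Remark near \eqref{CS4-point}: $G_4(y^\theta, x^\theta, x, y) \geqslant 0$), together with the radial expansion \eqref{g:rhoexpansion}: a reflection-symmetric configuration is exactly one where the 4-point function equals a squared norm $\sum_{\delta,m} |c_{\delta,m}|^2 |\rho|^{\cdots} e^{i m\theta_\rho} \cdots$, which is manifestly achieved at a Euclidean configuration with $\rho = \bar\rho = r$ real; more directly, for reflection-symmetric configurations the state $|\mathcal{O}(x_2^\theta)\mathcal{O}(x_1^\theta)\rangle$ and its reflected bra coincide up to conjugation, so the configuration is conformally equivalent to a genuinely Euclidean one of the type \eqref{frameErho} with real $\alpha, \beta$ — hence $\bar\rho = \rho^*$ — and then reality of $u,v$ forces $\beta = 0$, i.e. $\rho = \bar\rho \in (-1,1)$, and finally $\rho = \bar\rho > 0$ because $\rho < 0$ corresponds to $z < 0$ which would need... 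I would double-check the sign via the conformal frame. \textbf{The main obstacle} is precisely this last sign/range determination: showing $\rho, \bar\rho$ are strictly positive (not merely real in $(-1,1)$) and strictly inside the open disk. I expect the positivity to come from realizing $c_{12}$ is, up to a real conformal transformation, the Euclidean configuration \eqref{frameErho} with $\alpha = r \in (0,1)$, $\beta = 0$; the condition $\operatorname{Re} x_1^0 > \operatorname{Re} x_2^0 > 0$ should translate into the strict nesting that keeps $r$ strictly between $0$ and $1$, and the identification of the OPE channel ($x_2 \to x_3 = x_2^\theta$) fixes the branch so that $\rho > 0$. The proof in the paper (Lemma \ref{boundLemma}'s proof, which I have not yet seen) likely proceeds by this explicit conformal-frame argument, possibly via the formula \eqref{z12zbar12} referenced in the footnote of Section \ref{localCFT}.
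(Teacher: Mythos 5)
Your proposal establishes correctly that $u\geqslant 0$ and $v>0$ are real, and invokes Lemma~\ref{bound} to exclude $z,\bar z\in[1,+\infty)$, but this does not suffice to pin down $z,\bar z\in(0,1)$. With $u,v>0$ and $z,\bar z\not\in[1,+\infty)$, two unexcluded possibilities remain: both $z,\bar z$ real and negative (which would give $\rho,\bar\rho\in(-1,0)$), or $z,\bar z$ a genuine complex-conjugate pair with $\rho=\bar\rho^*\not\in\mathbb{R}$. You acknowledge this explicitly as ``the main obstacle,'' but the arguments you sketch to resolve it (OS positivity, the ``squared norm'' heuristic, conformal equivalence to the frame~\eqref{frameErho}) are not carried out and are not obviously sound as stated --- the pointwise positivity of $G_4$ at reflection-symmetric configurations constrains the value of the correlator, not the cross-ratio region.

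The paper closes precisely this gap by a direct computation. After reducing $c_{12}$ to an effectively two-dimensional configuration by a real translation and rotation (which preserve $u,v$ and commute with time reflection), it applies the 2d formula~\eqref{zzbarglobal}. Reflection symmetry yields $z_3-z_4=(z_1-z_2)^*$ and $z_1-z_3=(z_2-z_4)^*$, so $z(c)=|z_1-z_2|^2/|z_2-z_4|^2$ and similarly for $\bar z(c)$, which are manifestly real and positive. The explicit formulas~\eqref{z12zbar12} then show both lie strictly in $(0,1)$ because the numerator $(\epsilon_1-\epsilon_2)^2+(\cdots)^2$ is strictly smaller than the denominator $(\epsilon_1+\epsilon_2)^2+(\cdots)^2$ when $\epsilon_1,\epsilon_2>0$. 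This one computation simultaneously excludes both the negative-real case and the non-real complex-conjugate case, which your argument leaves open. You correctly guessed in the final paragraph that the proof ``likely proceeds by this explicit conformal-frame argument,'' but the calculation itself is the content of the lemma, not a detail to defer.
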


\begin{theorem}[Cauchy-Schwarz inequality for $\rho, \bar{\rho}$]
  \label{boundThm}For any configuration $c \in \mathcal{D}^{(0)}_4$ we have
  the inequality:
  \begin{equation}
    \label{maxrhoineq} \max \{ | \rho (c) |, | \bar{\rho} (c) | \}^2
    \leqslant \max \{ \rho (c_{12}), \bar{\rho} (c_{12}) \} \times \max
    \{ \rho (c_{34}), \bar{\rho} (c_{34}) \} .
  \end{equation}
\end{theorem}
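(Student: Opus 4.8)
The plan is to reduce the Cauchy–Schwarz-type inequality \eqref{maxrhoineq} to the Cauchy–Schwarz inequality \eqref{Gc12c34} for 4-point functions, applied not to the physical correlator but to a carefully chosen auxiliary ``correlator'' which is engineered to be dominated term-by-term in the $\rho,\bar\rho$ expansion by a single power of $\rho$ or $\bar\rho$. Concretely, I would first establish Lemma \ref{boundLemma}: for a reflection-symmetric configuration $c_{12}=(x_1,x_2,x_2^\theta,x_1^\theta)$ with $\mathrm{Re}\,x_1^0>\mathrm{Re}\,x_2^0>0$, the $u,v$ cross-ratios are real and positive and $z(c_{12})=\bar z(c_{12})\in(0,1)$, hence $\rho(c_{12})=\bar\rho(c_{12})\in(0,1)$; this should follow by the same manipulation as in Sec.\ \ref{PetrProof}, setting one point to zero, inverting, and observing that reflection symmetry forces $z=\bar z$ together with the positivity coming from the forward-tube inequalities. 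So the right-hand side of \eqref{maxrhoineq} is simply $\rho(c_{12})\,\rho(c_{34})$ with both factors in $(0,1)$.

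Next I would introduce, for any fixed unitary conformal block labeled by $(\Delta,\ell)$, the Cauchy–Schwarz inequality \eqref{CB-CS} for the blocks themselves — namely that the block evaluated at a general $c$ is bounded by the geometric mean of the blocks at $c_{12}$ and $c_{34}$ — which holds because the block is an inner product $\langle c_{\text{out}}|c_{\text{in}}\rangle$ of states in a unitary conformal representation, so \eqref{Gc12c34} applies representation by representation. The key asymptotic input, flagged in the executive summary, is that in the limit $\Delta+\ell\to\infty$ with $\Delta-\ell$ fixed, the conformal block $g_{\Delta,\ell}(\rho,\bar\rho)$ is dominated by $r^{(\Delta+\ell)/2}$ where $r=\max(|\rho|,|\bar\rho|)$, and on a reflection-symmetric configuration $g_{\Delta,\ell}(c_{12})\sim \rho(c_{12})^{(\Delta+\ell)/2}$. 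Feeding these asymptotics into the block-level Cauchy–Schwarz inequality and taking the $(\Delta+\ell)$-th root gives exactly $r(c)^2\leqslant \rho(c_{12})\,\rho(c_{34})$, which is \eqref{introoptimal} and hence \eqref{maxrhoineq}.

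In more detail, the block Cauchy–Schwarz reads $|g_{\Delta,\ell}(c)|\leqslant \big(g_{\Delta,\ell}(c_{12})\,g_{\Delta,\ell}(c_{34})\big)^{1/2}$ for all $c\in\mathcal D_4^{(0)}$ and all $(\Delta,\ell)$ in the unitary range. Take a sequence with $\Delta+\ell\to\infty$, $\Delta-\ell$ fixed. Using the known leading behavior of the block in radial coordinates — the sum over descendants being dominated by the lowest-twist family, whose generating factor is a power $r^{(\Delta+\ell)/2}$ times a subexponential correction — one has $\log|g_{\Delta,\ell}(c)| = \tfrac{\Delta+\ell}{2}\log r(c) + o(\Delta+\ell)$, and likewise $\log g_{\Delta,\ell}(c_{12})=\tfrac{\Delta+\ell}{2}\log\rho(c_{12})+o(\Delta+\ell)$. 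Dividing the logarithm of the block inequality by $(\Delta+\ell)/2$ and sending $\Delta+\ell\to\infty$ yields $\log r(c)\leqslant \tfrac12\log\rho(c_{12})+\tfrac12\log\rho(c_{34})$, i.e.\ $r(c)^2\leqslant \rho(c_{12})\rho(c_{34})=\max\{\rho(c_{12}),\bar\rho(c_{12})\}\max\{\rho(c_{34}),\bar\rho(c_{34})\}$, which is \eqref{maxrhoineq}.

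\textbf{Main obstacle.} The serious point is the uniformity of the block asymptotics $g_{\Delta,\ell}\sim r^{(\Delta+\ell)/2}$ in the regime $\Delta+\ell\to\infty$ with $\Delta-\ell$ fixed, and in particular controlling the $o(\Delta+\ell)$ error uniformly over the relevant range of $c$, $c_{12}$, $c_{34}$. One must verify that the ``collinear'' resummation of descendants really does produce a clean leading power with a multiplicative correction growing slower than any exponential in $\Delta+\ell$ — this is where facts about conformal blocks (the Casimir recursion, or the explicit SL(2) collinear blocks times a $1/\bar\rho$-expansion) genuinely enter, and it is the step that, as the authors note, prevents an entirely elementary proof. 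A secondary subtlety is making sure the block-level Cauchy–Schwarz inequality \eqref{CB-CS} is legitimate for \emph{every} $(\Delta,\ell)$ including those saturating or near unitarity bounds (where some descendants have zero norm), which requires that the inner-product structure used to derive \eqref{Gc12c34} descends to each irreducible unitary representation — but this is exactly the content of CFT positivity restricted to a single primary, already available from Sec.\ \ref{ECFTax}.
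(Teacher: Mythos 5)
Your overall strategy coincides with the paper's: use the block-level Cauchy--Schwarz inequality \eqref{CB-CS} and take a large-quantum-number limit to ``exponentiate'' it into an inequality for $\rho,\bar\rho$. You also correctly identify the residual obstruction --- controlling the block asymptotics uniformly in the limit --- as the nontrivial step. Where your proposal stops is exactly where the paper invests the essential idea that turns this into an actual proof: a \emph{dimensional reduction to $d=4$}. Fixing $\mathbf{x}_1=0$ by translation, the remaining vectors $\mathbf{x}_2,\mathbf{x}_3,\mathbf{x}_4$ span at most a three-dimensional subspace of $\mathbb{R}^{d-1}$, so without loss of generality the whole configuration lives in $d=4$, where the Dolan--Osborn closed form \eqref{cb:dolanosborn} is available. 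The paper then specializes to blocks exactly at the 4d unitarity bound ($\bar h=1$, $h=\ell+1$), which kills the $k_{\bar h-1}$ factor since $k_0\equiv 1$ and reduces the inequality to a statement about $k_h(z)-k_h(\bar z)$; the needed asymptotics of $k_h$ as $h\to\infty$ are then a one-line hypergeometric identity (Lemma \ref{lemma:khasymp}), with an explicit leading coefficient $1/\sqrt{1-\rho^2}$ and no uncontrolled $o(\Delta+\ell)$ error to worry about. This circumvents precisely the ``uniformity'' worry you raise; after dividing by the large power and letting $h\to\infty$, the constants $A,B$ in the paper drop out and \eqref{maxrhoineq} follows (with a final limiting argument to cover the non-generic configurations where $|\rho|=|\bar\rho|$ etc.).

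One small inaccuracy in your summary of Lemma~\ref{boundLemma}: for a reflection-symmetric configuration the cross-ratios $z(c_{12})$ and $\bar z(c_{12})$ are both real and both in $(0,1)$, but in general they are \emph{not} equal (see \eqref{z12zbar12}, where they differ by the sign in front of $|\mathbf{x}_1-\mathbf{x}_2|$). So $\rho(c_{12})\neq\bar\rho(c_{12})$ generically, and the right-hand side of \eqref{maxrhoineq} really does need the $\max$; it is not simply $\rho(c_{12})\rho(c_{34})$.
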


We will next prove Lemma \ref{boundLemma}. We will then show how, combined
with Theorem \ref{boundThm}, this implies an optimal bound on $\rho,
\bar{\rho}$. Finally we will present a proof of Theorem \ref{boundThm}, which
is surprisingly subtle.

\subsection{Proof of Lemma \ref{boundLemma}}

To prove the lemma, consider a reflection-symmetric configuration $c$ as in
{\eqref{def:configC12C34}} with:
\begin{eqnarray}
  &  & x_1 = (\epsilon_1 + it_1, \mathbf{x}_1), \quad x_2 = (\epsilon_2 +
  it_2, \mathbf{x}_2), \quad \epsilon_1 > \epsilon_2 > 0, 
  \label{config:reflectionsym1}\\
  &  & x_3 = x_2^{\theta} = (- \epsilon_2 + it_2, \mathbf{x}_2), \quad x_4 =
  x_1^{\theta} = (- \epsilon_1 + it_1, \mathbf{x}_1) . \nonumber
\end{eqnarray}
We will compute $z (c)$, $\bar{z} (c)$ explicitly. We can use translations in
the $\mathbf{x}$ direction, as well as spatial rotations to simplify these
computations. All these transformations do not change the conformal class of
configuration, hence preserve $u, v$ and $z, \bar{z}$. They also commute with time
reflection, and so map reflection-symmetric configurations to
reflection-symmetric ones. By using this freedom, we get an equivalent
configuration $c'$ with the same $z, \bar{z}$:
\begin{equation}
  \label{config:reflectionsym2} x_1' = (\epsilon_1 + i t_1, \tmmathbf{0}),
  \quad x_2' = (\epsilon_2 + it_2, | \mathbf{x}_2 - \mathbf{x}_1 |, 0, \ldots,
  0), \quad x_3' = (x_2')^{\theta}, \quad x_4' = (x_1')^{\theta} .
\end{equation}
This is an effectively two-dimensional configuration. The $z, \bar{z}$
variables of a two-dimensional 4-point configuration $x_k = (x_k^0, x^1_k)$ are
given by Eq.\ {\eqref{zzbarglobal}}, which we copy here
\begin{equation}
  z = \dfrac{(z_1 - z_2)  (z_3 - z_4)}{(z_1 - z_3)  (z_2 - z_4)}, \quad
  \bar{z} = \dfrac{(\bar{z}_1 - \bar{z}_1)  (\bar{z}_3 -
  \bar{z}_4)}{(\bar{z}_1 - \bar{z}_3)  (\bar{z}_2 - \bar{z}_4)}, \quad z_k =
  x_k^0 + i x^1_k, \quad \bar{z}_k = x_k^0 - i x^1_k .
\end{equation}
Applying this to the configuration $c'$, we get $z, \bar{z}$ for $c'$ (which
are the same as for $c$). It's easy to see that $z_3 - z_4 = (z_1 -
z_2)^{\ast}$, $z_1 - z_3 = (z_2 - z_4)^{\ast}$ as a consequence of
reflection symmetry, and similarly for $\bar{z}$'s. So we get $z (c),
\bar{z} (c)$ both real and positive. Explicit expressions come out to be
\begin{eqnarray}
  z (c) = & \dfrac{(\epsilon_1 - \epsilon_2)^2 + (t_1 - t_2 - |
  \mathbf{x}_1 - \mathbf{x}_2 |)^2}{(\epsilon_1 + \epsilon_2)^2 + (t_1 - t_2 -
  | \mathbf{x}_1 - \mathbf{x}_2 |)^2},  \label{z12zbar12}\\
  \bar{z} (c) = & \dfrac{(\epsilon_1 - \epsilon_2)^2 + (t_1 - t_2 + |
  \mathbf{x}_1 - \mathbf{x}_2 |)^2}{(\epsilon_1 + \epsilon_2)^2 + (t_1 - t_2 +
  | \mathbf{x}_1 - \mathbf{x}_2 |)^2} . \nonumber
\end{eqnarray}
In particular we see that $0 < z (c), \bar{z} (c) < 1$. The function $f
(\zeta)$ in the definition of $\rho$ variables maps the interval $(0, 1)$ to
itself. Hence also $0 < \rho (c), \bar{\rho} (c) < 1$, and the lemma and
proved.

\subsection{Optimal bound for $\rho, \bar{\rho}$}\label{rhsbound}

We wish to derive a powerlaw bound on $\frac{1}{1 - r}$, $r = \max (| \rho |,
| \bar{\rho} |)$, since by the arguments in Sec.\ \ref{power4-point} this
implies a powerlaw bound for the 4-point function. Our aim here is to improve on
{\eqref{rhobound}}, {\eqref{Scbound}}.

Consider first a configuration $c \in \mathcal{D}^{(0)}_4$. For such a
configuration, by Theorem \ref{boundThm}, we have
\begin{equation}
  r (c) \leqslant \sqrt{r (c_{12}) r (c_{34})} \leqslant \max (r (c_{12}),
  r (c_{34})),
\end{equation}
and hence
\begin{equation}
  \frac{1}{1 - r (c)} \leqslant \max \left( \frac{1}{1 - r (c_{12})},
  \frac{1}{1 - r (c_{34})} \right) . \label{ineq:1-r}
\end{equation}
We are thus reduced to study $r (c)$ for reflection-symmetric configurations,
like in {\eqref{config:reflectionsym1}}. By definition (\ref{def:rho}) of
$\rho$ variables, we have
\begin{equation}
  \dfrac{1}{1 - \rho} = \dfrac{1 + \sqrt{1 - z}}{2 \sqrt{1 - z}} \leqslant
  \dfrac{1}{\sqrt{1 - z}}, \quad z \in [0, 1), \label{rhozest}
\end{equation}
so it suffices to study $1 / (1 - z)$ and $1 / (1 - \bar{z})$. Using $z,
\bar{z}$ for reflection-symmetric configurations computed in Eqs.\ {\eqref{z12zbar12}} we have
\begin{equation}
  \frac{1}{1 - z (c_{12})} = \dfrac{(\epsilon_1 + \epsilon_2)^2 + (t_1 - t_2 -
  | \mathbf{x}_1 - \mathbf{x}_2 |)^2}{4 \epsilon_1 \epsilon_2},
\end{equation}
and an analogous relation for $\frac{1}{1 - \bar{z} (c_{12})}$. From these
equations, using $\epsilon_2 < \epsilon_1$, and estimating $\epsilon_1 -
\epsilon_2, | t_1 - t_2 |, \left| \mathbf{x}_1 - \mathbf{x}_2 \right|$ from
above by $| x_1 - x_2 |$ (see {\eqref{absdef}}), we easily get
\begin{equation}
  \frac{1}{1 - z (c_{12})}, \frac{1}{1 - \bar{z} (c_{12})} \leqslant \left( 1
  + \frac{1}{\epsilon_2} \right)^2 (1 + | x_1 - x_2 |)^2,
\end{equation}
Putting together this relation, an analogous relation for $z (c_{34})$,
$\bar{z} (c_{34})$, Eqs.\ {\eqref{ineq:1-r}} and {\eqref{rhozest}}, we get
\begin{equation}
  \frac{1}{1 - r (c)} \leqslant \max \left\{ \left( 1 + \frac{1}{\epsilon_2}
  \right) (1 + | x_1 - x_2 |), \left( 1 + \frac{1}{| \epsilon_3 |}
  \right) (1 + | x_3 - x_4 |) \right\} \qquad (c \in \mathcal{D}_4^{(0)})\,.
  \label{1-r0}
\end{equation}
This was for $c \in \mathcal{D}_4^{(0)}$. For a general configuration $c \in
\mathcal{D}_4$, we will shift the coordinates by a translation in time
direction (which of course does not change $\rho, \bar{\rho}$), arranging so
that the shifted configurations $c'$ has $\epsilon_2 > 0 > \epsilon_3$, i.e.\ $c' \in \mathcal{D}_4^{(0)}$. Specifically we will choose
\begin{equation}
  \epsilon_2 (c') = \frac{1}{2} (\epsilon_2 (c) - \epsilon_3 (c)), \quad
  \epsilon_3 (c') = - \frac{1}{2} (\epsilon_2 (c) - \epsilon_3 (c)) . \quad
\end{equation}
Then, using {\eqref{1-r0}} for $c'$, we obtain a bound on $\frac{1}{1 - r
(c)}$ which for example can be expressed as
\begin{equation}
  \frac{1}{1 - r (c)} \leqslant 2 \left( 1 + \frac{1}{\epsilon_2 - \epsilon_3}
  \right) (1 + \max \{ | x_1 - x_2 |, | x_3 - x_4 | \}) \qquad (c \in
  \mathcal{D}_4) . \label{1-rbnd}
\end{equation}
This is a powerlaw bound of the type we were looking for. {By considering reflection-symmetric configurations, it's easy to see that the exponents in this bound cannot be improved. Eq.~\eqref{1-rbnd} is much stronger than our
previous suboptimal bound {\eqref{rhobound}}; in fact it implies a bound of the same form as \eqref{rhobound} with the power exponent 12 replaced by 2.} 

%\JQ{We can say that the bound $\frac{1}{1-r}\leqslant C(1+\frac{1}{\epsilon})(1+|x|)$ is the best possible because it gives $g(r,r)\leqslant[C(1+\frac{1}{\epsilon})(1+|x|)]^{4\Delta}$ which is saturated by GFF.}

\subsection{Proof of Theorem \ref{boundThm}}\label{rhorhobarProof}

Although {\eqref{maxrhoineq}} looks like a simple-enough geometric inequality,
we do not know an elementary proof of this fact. We essentially guessed this
inequality, checked it numerically, and then looked for a proof. Our guess
started in the Euclidean region, where $\bar{\rho} = \rho^{\ast}$, and
{\eqref{maxrhoineq}} takes the form
\begin{equation}
  | \rho (c) |^2 \leqslant \rho (c_{12}) \rho (c_{34}) \qquad ( c \in
  \mathcal{D}^{(0)}_4 \text{\quad Euclidean} ) . \label{CSrrbarEucl}
\end{equation}
Even in this case we don't know an elementary proof. We guessed that this must
hold, because otherwise it was hard to imagine that the 4-point function itself
would satisfy a Cauchy-Schwarz inequality. Indeed {\eqref{CSrrbarEucl}}
implies the Euclidean version of {\eqref{CS4-point}}. We then guessed
{\eqref{maxrhoineq}} as a generalization of {\eqref{CSrrbarEucl}} for
complexified times.

Our proof of {\eqref{maxrhoineq}} reverses this logic, by deriving it from
{\eqref{CS4-point}}. There exist many explicit CFT 4-point functions, e.g.\ mean field
theories (MFT). One could imagine that by considering {\eqref{CS4-point}} for a
family of such 4-point functions, and passing to some limit (e.g.\ of scaling
dimension of the mean field going to infinity), one could recover
{\eqref{maxrhoineq}}. We haven't managed to make this work using MFTs, but a
closely related strategy does work. Namely we will apply this sort of argument
not to the full 4-point function, but to a single conformal block, since the
latter also satisfy {\eqref{CS4-point}} (as we will explain).

Now that we explained the main idea, let us supply the details. By applying a
translation, we may set $\mathbf{x}_1 = 0$. The remaining spacial component
vectors $\mathbf{x}_2, \mathbf{x}_3, \mathbf{x}_4$ span at most
three-dimensional subspace of $\mathbb{R}^{d - 1}$. This shows that it is
enough to prove the inequality {\eqref{maxrhoineq}} in the case $d = 4$. We
assume that the readers are familiar with the conformal blocks, which encode
contributions of a primary into a 4-point function. In the considered case of 4
identical Hermitean scalar, the relevant OPE has the form (simplifying the
general case considered in Sec.\ \ref{OSfromCFT})
\begin{equation}
  \varphi (x_1) \varphi (x_2) = f_{\varphi \varphi \mathcal{O}} C_{(\lambda)}
  (x_1, x_2, x_0, \mathcal{D}_0) \mathcal{O}^{(\lambda)} (x_0)
\end{equation}
where $\mathcal{O}^{(\lambda)}$ is a dimension $\Delta$, spin $\ell$ symmetric
traceless primary. The conformal block then can be computed by
\begin{equation}
  g_{\Delta, \ell} (c) = C_{(\lambda)} (x_1, x_2, x_0, \mathcal{D}_0)
  C_{(\mu)} (x^{\theta}_3, x^{\theta}_4, {x^\theta_0}, \mathcal{D}^{\theta}_0)
  \langle \mathcal{O}^{(\lambda)} (x_0) \mathcal{O}^{\dagger (\mu)}
  (x^{\theta}_0) \rangle . \label{CBrepr}
\end{equation}
The 4d Euclidean conformal blocks are known explicitly
{\cite{Dolan:2000ut,Dolan:2003hv}}:
\begin{equation}
  \label{cb:dolanosborn} g_{\Delta, \ell} (c) = \frac{z \bar{z}}{z -
  \bar{z}} [k_h (z) k_{\bar{h} - 1} (\bar{z}) - k_h (\bar{z}) k_{\bar{h} - 1}
  (z)],
\end{equation}
where $h, \bar{h} = (\Delta \pm \ell) / 2$, and $k_{\beta} (z) = z^{\beta}{}_2 F_1 (\beta, \beta, 2 \beta ; z)$. (We only cite the result for external operators with equal dimensions.) We will assume that the exchanged operator $\mathcal{O}$ satisfies the
4d unitarity bound $\Delta \geqslant \ell + 2$. As Eq.
{\eqref{cb:dolanosborn}} shows, Euclidean conformal blocks are real-analytic
functions whenever $| z | < 1$. We can also use this formula to analytically
continue them to the forward tube. We wish to show that this analytic
continuation satisfies some properties. This is best shown not from the
explicit formula, but by adapting the robust 4-point function arguments from
Sec.\ \ref{sec:4-point}. Indeed, conformal blocks allow an expansion of the same
form as {\eqref{g:rhoexpansion}}, with non-negative coefficients which are
fixed by conformal invariance. This can be shown by arguments similar to those
in Sec.\ \ref{Eucl4-point}. The existence of the representation {\eqref{CBrepr}}
guarantees Hilbert space unitarity. Then, by the arguments of Sec.\ \ref{anal4-point}, conformal blocks admit an analytic extension to the forward
tube (which is of course the same as the one following from the explicit
formula {\eqref{cb:dolanosborn}}). The point of the current construction is
that it shows that the analytic extension satisfies an inequality analogous to
{\eqref{ggE}}:
\begin{equation}
  | g_{\Delta, \ell} (c) | \leqslant g_{\Delta, \ell} (c_{\ast}) \label{ggECB}
\end{equation}
Then, by the arguments in Sec.\ \ref{power4-point}, conformal blocks satisfy the
powerlaw bound in the forward tube. (As is easy to see from
{\eqref{cb:dolanosborn}}, 4d Euclidean conformal blocks grow as $1 / (1 - z)$
as $z \rightarrow 1^-$ along the real axis, which replaces Eq.
{\eqref{g4cstar}}.) Finally, by the arguments analogous to Sec.\ \ref{CScompl} we conclude that the analytically continued conformal blocks
satisfy Cauchy-Schwarz inequality:
\begin{equation}
  | g_{\Delta, l} (c) |^2 \leqslant g_{\Delta, l} (c_{12}) g_{\Delta, l}
  (c_{34}) \quad \text{for any } c \in \mathcal{D}^{(0)}_4 . \label{CB-CS}
\end{equation}
(Euclidean reflection positivity of conformal blocks follows from the
representation {\eqref{CBrepr}}, which we assume to be valid in the Euclidean
region.)

In the rest of the argument we will only need two facts, the Cauchy-Schwarz
inequality {\eqref{CB-CS}} and the explicit Dolan-Osborn formula
{\eqref{cb:dolanosborn}}. We will apply {\eqref{CB-CS}} to the blocks of spin
$\ell \geqslant 1$ at the unitarity bound, i.e.\ with $\bar{h} = 1, h = \ell +
1$. The Cauchy-Schwarz inequality for $\rho, \bar{\rho}$ will follow by
extracting the asymptotics in the limit $h \rightarrow + \infty$. The
asymptotic behavior of $k_h$ is given by the following lemma:

\begin{lemma}
  \label{lemma:khasymp}For any fixed $z \in \mathbb{C} \setminus [1, +
  \infty)$, the function $k_h (z)$ has the following asymptotic behavior in
  terms of the $\rho$ variable defined in (\ref{def:rho}):
  \begin{equation}
    \label{kh:asymptotic} k_h (z) = (4 \rho)^h \left[ \dfrac{1}{\sqrt{1 -
    \rho^2}} + o (1) \right], \quad h \rightarrow + \infty .
  \end{equation}
\end{lemma}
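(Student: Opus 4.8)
The plan is to derive the asymptotics of $k_h(z) = z^h\,{}_2F_1(h,h,2h;z)$ as $h\to+\infty$ from the classical integral representation of the hypergeometric function. First I would write
\begin{equation}
  {}_2F_1(h,h,2h;z) = \frac{\Gamma(2h)}{\Gamma(h)^2}\int_0^1 \bigl[t(1-t)\bigr]^{h-1}(1-tz)^{-h}\,dt,
\end{equation}
valid for $z\in\mathbb{C}\setminus[1,+\infty)$, so that
\begin{equation}
  k_h(z) = \frac{\Gamma(2h)}{\Gamma(h)^2}\int_0^1 \frac{1}{t(1-t)}\,\exp\bigl[h\,\phi(t)\bigr]\,dt,\qquad
  \phi(t) = \log z + \log t + \log(1-t) - \log(1-tz).
\end{equation}
This is a textbook Laplace/saddle-point setup: the prefactor $\Gamma(2h)/\Gamma(h)^2 \sim 4^h/\sqrt{\pi h}$ by Stirling, and the integral is dominated by the critical point of $\phi$.

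Next I would locate the saddle. Setting $\phi'(t)=0$ gives $\tfrac1t - \tfrac1{1-t} + \tfrac{z}{1-tz}=0$, a quadratic in $t$ whose relevant root is $t_* = \tfrac{1}{1+\sqrt{1-z}}$ — precisely the point at which, using $\rho = z/(1+\sqrt{1-z})^2$, one computes $e^{\phi(t_*)} = 4\rho$ and $\phi''(t_*) = -\tfrac{(1+\sqrt{1-z})^2}{t_*(1-t_*)}\cdot(\text{something})$; the clean outcome, after simplification, is that the Gaussian integral $\int \exp[\tfrac h2\phi''(t_*)(t-t_*)^2]\,dt \sim \sqrt{2\pi/(h|\phi''(t_*)|)}$ combines with the value of $1/(t_*(1-t_*))$ and the Stirling factor to produce exactly $(4\rho)^h\bigl[(1-\rho^2)^{-1/2}+o(1)\bigr]$. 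Concretely one checks $\tfrac{1}{t_*(1-t_*)}\cdot\tfrac{1}{\sqrt{|\phi''(t_*)|}}\cdot\tfrac{\sqrt2}{\sqrt\pi}\cdot\tfrac{1}{\sqrt\pi}\cdot\sqrt\pi = (1-\rho^2)^{-1/2}$ — a finite algebraic identity in $z$ (equivalently in $\rho$) that I would verify once and cite. The key is that the $1/[t(1-t)]$ factor in the integrand is regular and nonzero at $t_*$ for $z\notin[1,+\infty)$, so it just contributes its value at the saddle.

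The main obstacle is making the steepest-descent estimate rigorous and uniform enough for fixed complex $z$: one must deform the contour $[0,1]$ to pass through $t_*$ along the direction of steepest descent, check that $\operatorname{Re}\phi$ strictly decreases away from $t_*$ along the deformed contour (so the tails are exponentially subleading relative to $e^{h\phi(t_*)}$), and confirm there is no other critical point or endpoint contribution competing with the saddle. For $z\in(0,1)$ this is elementary (everything is real and $t_*\in(0,1)$), and the general case $z\in\mathbb{C}\setminus[1,+\infty)$ follows by analytic continuation of both sides of \eqref{kh:asymptotic} in $z$ once the $o(1)$ error is shown to be locally uniform — alternatively one can quote a standard uniform-asymptotics result for ${}_2F_1$ with two large equal parameters (e.g.\ from Watson/Olver-type analyses). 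I would present the $z\in(0,1)$ computation in detail and then invoke analyticity, since that is the version actually needed to feed into \eqref{cb:dolanosborn} and extract the $\rho,\bar\rho$ Cauchy--Schwarz inequality via $\bar h=1$, $h=\ell+1\to\infty$. A subsidiary point worth a line: one should record that the error term is uniform on compact subsets of $z\in\mathbb{C}\setminus[1,+\infty)$, since in the application $z,\bar z$ range over such compacts, and the limit $h\to\infty$ must be taken with the block evaluated at several configurations $c, c_{12}, c_{34}$ simultaneously.
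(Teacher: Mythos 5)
Your saddle-point approach from the Euler integral is a genuinely different route from the paper's, and it would work, but it is more laborious than what is actually needed. The paper instead invokes the Kummer quadratic transformation $k_h(z) = (4\rho)^h\,{}_2F_1\!\left(\tfrac12,h;h+\tfrac12;\rho^2\right)$ (cited from~\cite{Hogervorst:2013sma}) and then observes that, in the power series ${}_2F_1\!\left(\tfrac12,h;h+\tfrac12;\rho^2\right) = \sum_{n\geq 0}\frac{(1/2)_n(h)_n}{n!\,(h+1/2)_n}\rho^{2n}$, the ratio $(h)_n/(h+1/2)_n$ increases monotonically to $1$ as $h\to+\infty$, so each term is dominated by $\frac{(1/2)_n}{n!}|\rho|^{2n}$, whose sum is $(1-|\rho|^2)^{-1/2}$. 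Dominated convergence immediately gives the limit $(1-\rho^2)^{-1/2}$, uniformly on compacts of $|\rho|<1$, with no contour deformation, no tail estimates, and no separate treatment of real versus complex $z$. Your method buys nothing extra here and costs you the careful steepest-descent bookkeeping you flag as the ``main obstacle.'' Also note a concrete slip: by Stirling, $\Gamma(2h)/\Gamma(h)^2 \sim 4^h\sqrt{h}/(2\sqrt{\pi})$, not $4^h/\sqrt{\pi h}$ — the extra $\sqrt{h}$ in the prefactor is essential, since it is precisely what cancels the $1/\sqrt{h}$ from the Gaussian width at the saddle; with your stated asymptotics the method would spuriously give $k_h(z) = o\bigl((4\rho)^h\bigr)$. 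The final algebraic identity you assert ``to verify once and cite'' is also left entirely unchecked, which is the whole content of the lemma. In short: the idea is sound, but the execution has an error and the argument is over-engineered compared to the one-line use of the known hypergeometric identity.
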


\begin{proof}
  We have the following identity for $k_h (z)$ {\cite{Hogervorst:2013sma}}:
  \begin{equation}
    k_h (z) = (4 \rho)^h _2 F_1 (1 / 2, h ; h + 1 / 2 ; \rho^2) .
  \end{equation}
  The region $z \notin [1, + \infty)$ corresponds to $| \rho | < 1$, where the
  hypergeometric function $_2 F_1$ has the power series representation
  \begin{equation}
    \label{f21:series}_2 F_1 (1 / 2, h ; h + 1 / 2 ; \rho^2) = \sum_{n =
    0}^{\infty} \dfrac{(1 / 2)_n (h)_n}{n! (h + 1 / 2)_n} \rho^{2 n} .
  \end{equation}
  When $h \rightarrow + \infty$, each coefficient of the series increases
  monotonically, and tends to the coefficients of the convergent in the disk
  $| \rho | < 1$ series
  \[ \sum_{n = 0}^{\infty} \dfrac{(1 / 2)_n}{n!} \rho^{2 n} =
     \dfrac{1}{\sqrt{1 - \rho^2}} . \]
  This implies the statement of the lemma.
\end{proof}

Consider now inequality {\eqref{CB-CS}} for the blocks with $\bar{h} = 1$, $h
= \ell + 1$. Since $k_0 \equiv 1$, it reads:
\begin{equation}
  | w \cdot [k_h (z) - k_h (\bar{z})] |^2 \leqslant w_{12} w_{34} \cdot [k_h
  (z_{12}) - k_h (\bar{z}_{12})] [k_h (z_{34}) - k_h (\bar{z}_{34})],
  \label{CBineq}
\end{equation}
where we denoted $w = \frac{z \bar{z}}{z - \bar{z}}$, and similarly $w_{12},
w_{34}$. Let us assume that the configuration $c \in \mathcal{D}^{(0)}_4$ is
such that
\begin{equation}
  \label{condition:rho} | \rho | \neq | \bar{\rho} |, \quad \rho_{12} \neq
  \bar{\rho}_{12}, \quad \rho_{34} \neq \bar{\rho}_{34} .
\end{equation}
Then, using Lemma \ref{lemma:khasymp}, for large $h$ inequality
{\eqref{CBineq}} becomes:
\begin{equation}
  (A + o (1)) \max \{ | \rho |, | \bar{\rho} | \}^{2 h} \leqslant (B + o
  (1)) \max \{ \rho_{12}, \bar{\rho}_{12} \}^h \max \{ \rho_{34},
  \bar{\rho}_{34} \}^h,
\end{equation}
where $A, B$ are some positive $h$-independent quantities. Now, taking the
limit $h \rightarrow + \infty$, we obtain inequality {\eqref{maxrhoineq}}.

It's easy to see that configurations which violate the condition
{\eqref{condition:rho}} are non-generic. They can be approached by
configurations which do satisfy {\eqref{condition:rho}}. Therefore, by
continuity inequality {\eqref{maxrhoineq}} is valid also for such exceptional
configurations.

\section{OPE convergence in the forward tube and in Minkowski
space}\label{OPEconvMink}

We have several OPE convergence statements scattered throughout the paper. The
Euclidean CFT axioms assume convergence of the OPE series for $\mathcal{O}_1
(x_1) \mathcal{O}_2 (x_2)$ whenever the two points $x_1, x_2$ are closer to
the OPE center than any other point. Then we established OPE convergence in
the Hilbert space sense (Sec.\ \ref{Hilbert}) in the Euclidean region for
states generated by two operators in the half-space. Then in Sec.\ \ref{Eucl4-point} we used Hilbert space language to derive the power series
representation {\eqref{g:rhoexpansion}} for the 4-point function, whose
convergence is thus morally equivalent to OPE convergence (for the 4-point
functions). We then used this power series representation to analytically
continue the 4-point function to the forward tube, and then define the Minkowski
4-point function as a boundary value in the sense of distributions. Finally, in
Sec.\ \ref{sec:Wpos} we showed, by arguments not using conformal invariance,
that the OS states $| \nobracket \mathcal{O}_1 (x_1) \mathcal{O}_2 (x_2)
\rangle$ can be extended to the forward tube and (when integrated against test
functions) to the Minkowski region, and that the so obtained states can be
arbitrarily well approximated by (integrated) OS states. Therefore, OPE
convergence holds for these states, as for the OS states.

In this section we will give a more explicit discussion of the OPE
convergence for the Minkowski 4-point function and for the 2-operator states in
the forward tube and Minkowski space. We will also explain how our approach
and results compare to the classic paper by Mack {\cite{Mack:1976pa}}.

\subsection{Convergence of conformal block decomposition for 4-point functions}

Let us consider the 4-point function of identical scalars
{\eqref{def:Euclidean4-point}}:
\begin{equation}
  G (x_1, x_2, x_3, x_4) \equiv G (c) = \langle \mathcal{O} (x_1) \mathcal{O}
  (x_2) \mathcal{O} (x_3) \mathcal{O} (x_4) \rangle = \frac{g (\rho,
  \bar{\rho})}{(x_{12}^2 x_{34}^2)^{\Delta_{\mathcal{O}}}} .
\end{equation}
The discussion below can be easily extended to non-identical scalars using the
same ideas as in Sec.\ \ref{nonId}.

We know that in the Euclidean region the function $g (\rho, \bar{\rho})$ has a
convergent conformal block decomposition
\begin{equation}
  g (\rho, \bar{\rho}) = \underset{\Delta, l}{\sum} C_{\Delta, l}^2 g_{\Delta,
  l} (\rho, \bar{\rho}) .
\end{equation}
As in Sec.\ \ref{rhorhobarProof}, we will assume that the reader is familiar
with conformal blocks. The main point is that the conformal block
decomposition is obtained by separating the series {\eqref{g:rhoexpansion}}
into parts corresponding to the conformal multiplets of primary operators
$\mathcal{O}_{\Delta, l}$ occurring in the $\mathcal{O} \times \mathcal{O}$ OPE
with coefficients $C_{\Delta, l}$. Conformal blocks in the Euclidean region by
themselves have power series expansions like {\eqref{g:rhoexpansion}} with
positive coefficients (fixed by conformal symmetry). As in Sec.\ \ref{anal4-point}, we can use this expansion to analytically continue conformal
blocks to the forward tube. By an analogue of the bound {\eqref{maj}} we know
that conformal block expansion remains convergent everywhere in the forward
tube, since $| \rho |, | \bar{\rho} | < 1$ there. Since individual conformal
blocks are smaller than the 4-point function in the Euclidean region, by the
arguments in Sec.\ \ref{power4-point} we know that they satisfy a powerlaw
bound, and hence they become tempered distributions when going to the
Minkowski region.\footnote{This argument shows that any conformal block which
occurs in a reflection-positive CFT 4-point function satisfies a powerlaw bound.
E.g.\ conformal blocks for $l \geqslant 0$, $\Delta \geqslant l + d - 2$ occur
in a 4-point function $\langle \varphi_1 \varphi_2 \varphi_1 \varphi_2 \rangle$
where $\varphi_1, \varphi_2$ are two GFFs of appropriately chosen equal
dimension. It should be also possible to show that conformal blocks satisfy a
powerlaw bound without relying on a fiducial 4-point function. E.g.\ for $d = 4$
conformal blocks this follows from their explicit Dolan-Osborn expressions.
For general $d$, powerlaw bound on the diagonal $z = \bar{z}$ can be shown
using the differential equation found in {\cite{Hogervorst:2013kva}} and
extended to $z \neq \bar{z}$ by the usual arguments.}${}^{,}$\footnote{It should be
noted that away from light cones conformal blocks are better than
distributions: they are real-analytic there (although this fact won't play a
role for us). In even $d$ this is obvious from their explicit expressions in
terms of hypergeometric functions. For general $d$ this follows from a
first-order matrix ODE satisfied by a finite-length vector including the
conformal block and its low-order derivatives. Such an ODE exists for a
length-8 vector and can be built using the quadratic and quartic Casimir
equations {\cite{KravchukCB}}.}${}^{,}$\footnote{\label{noteMarc2}Also ``conformal
partial waves'' $\frac{g_{\Delta, l} (\rho, \bar{\rho})}{(x_{12}^2
x_{34}^2)^{\Delta_{\mathcal{O}}}}$ are tempered distributions in the
Minkowski space. Therefore their Fourier transforms are well defined. Explicit
expressions for these Fourier transforms were found recently in
{\cite{Gillioz:2020wgw}}. }

By the arguments like in Sec.\ \ref{power4-point}, $g (\rho, \bar{\rho})$,
the partial sums of the conformal block decomposition $g (\rho, \bar{\rho} ;
\Delta_{\ast}) = \underset{\Delta \leqslant \Delta_{\ast}, l}{\sum} C_{\Delta,
l}^2 g_{\Delta, l} (\rho, \bar{\rho})$, and the corresponding remainders satisfy in the forward
tube a uniform bound:
\begin{equation}
  | g (\rho, \bar{\rho} ; \Delta_{\ast}) |, | g (\rho, \bar{\rho}) - g (\rho,
  \bar{\rho} ; \Delta_{\ast}) | \leqslant C (1 - r)^{- 4 \Delta}, \qquad r =
  \max \{ | \rho |, | \bar{\rho} | \} . \label{powerlawbound:block}
\end{equation}
Consider the 4-point partial sums including the prefactor $G (c ; \Delta_{\ast}) =
\frac{1}{(x_{12}^2 x_{34}^2)^{\Delta_{\mathcal{O}}}} g (\rho, \bar{\rho} ;
\Delta_{\ast})$. By the powerlaw bound of $(1 - r (c))^{- 1}$, we have the
powerlaw bounds
\begin{eqnarray}
  | G (c ; \Delta_{\ast}) |, | G (c) - G (c ; \Delta_{\ast}) | & \leqslant &
  C_{}  \left( 1 + \max_k  \dfrac{1}{\epsilon_k - \epsilon_{k + 1}}
  \right)^{A}  (1 + \max_i  | x_i - x_{i + 1} |)^B  \label{powerlawbound:G}
\end{eqnarray}
for all $c \in \mathcal{D}_4$ and $\Delta_{\ast} \geqslant 0$. \ Consider the
boundary value of $G (c ; \Delta_{\ast})$, call it $G^M (x_1, x_2, x_3, x_4 ;
\Delta_{\ast})$, where $x_i \in \mathbb{R}^{1, d - 1}$; it is a tempered
distribution by Theorem \ref{ThVlad}. The following theorem establishes
distributional convergence of conformal block decomposition.

\begin{theorem}
  \label{theorem:districonverge}We have $G^M (x ; \Delta_{\ast}) \rightarrow
  G^M (x)$ in the sense of tempered distributions. 
\end{theorem}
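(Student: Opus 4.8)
The plan is to establish a ``Vladimirov theorem for convergent series,'' following the same logic that we use for the 4-point function itself but applied uniformly to the tail of the conformal block decomposition. The key observation is that the partial sums $G(c;\Delta_\ast)$, the full correlator $G(c)$, and the remainders $G(c)-G(c;\Delta_\ast)$ all satisfy the \emph{same} powerlaw bound \eqref{powerlawbound:G}, with constants $C,A,B$ that do not depend on $\Delta_\ast$. This uniformity is the crucial input, and it comes for free from the term-by-term bound \eqref{maj}: since all the $\rho,\bar\rho$-coefficients of the blocks are nonnegative and $|\rho|,|\bar\rho|<1$ in $\mathcal{T}_4$, the remainder of the $\rho,\bar\rho$ expansion is dominated by the corresponding remainder of the Euclidean configuration $\rho=\bar\rho=r(c)$, which in turn is bounded by the full Euclidean correlator $g_E(c_\ast)\leqslant\mathrm{const}/(1-r(c))^{4\Delta_\varphi}$, independently of where we truncate.

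First I would recall from the proof of Theorem \ref{ThVlad} (App.\ \ref{Vlad}) the explicit mechanism producing the boundary distribution: a function $G_n$ holomorphic in $\mathcal{D}_n$ and satisfying \eqref{powerlawbound} has a boundary value obtained by applying a fixed differential operator to the $L^\infty$-bounded, continuous function one gets after multiplying $G_n$ by a suitable polynomial in the $x_i^0$-differences and the inverse coordinates and integrating along the imaginary $\epsilon$-directions; concretely, the pairing $(G_n^M,\varphi)$ is computed as a fixed, $\Delta_\ast$-independent continuous linear functional of the datum $G_n$ regarded as a bounded holomorphic family. Because \eqref{powerlawbound:G} holds with uniform constants, the same construction applies simultaneously to $G(c;\Delta_\ast)$ for all $\Delta_\ast$, yielding $G^M(x;\Delta_\ast)$ as the image of $G(\,\cdot\,;\Delta_\ast)$ under this fixed functional. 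Then, for any fixed Schwartz test function $\varphi$, I would estimate
\begin{equation}
  |(G^M(\,\cdot\,)-G^M(\,\cdot\,;\Delta_\ast),\varphi)|\leqslant
  \sup_{c\in K}\,w(c)\,|G(c)-G(c;\Delta_\ast)|\times(\text{norm of the functional on }\varphi),
  \label{propest}
\end{equation}
where $w(c)$ is the fixed polynomial weight appearing in the reconstruction and $K$ is a compact region in $\mathcal{D}_4$ (one integrates over a bounded $\epsilon$-contour and the $t,\mathbf x$-supports are compact because $\varphi$ is Schwartz only up to controllable tails — more precisely one splits $\mathbb{R}^{4d}$ into a compact piece plus a tail handled by the powerlaw decay and the rapid decay of $\varphi$). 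On the compact piece, $G(c)-G(c;\Delta_\ast)\to 0$ uniformly by the uniform-on-compacts convergence of the $\rho,\bar\rho$ series established in Sec.\ \ref{anal4-point} (the majorant \eqref{maj} is a convergent numerical series, so the tail goes to zero uniformly for $r(c)$ bounded away from $1$); on the tail, \eqref{powerlawbound:G} gives a $\Delta_\ast$-uniform bound that is integrable against $\varphi$, so by dominated convergence that contribution also vanishes. This proves $(G^M(\,\cdot\,;\Delta_\ast),\varphi)\to(G^M(\,\cdot\,),\varphi)$ for every $\varphi$, which is the claim.

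The main obstacle is the interchange of the $\Delta_\ast\to\infty$ limit with the boundary-value ($\epsilon\to 0$) limit: pointwise uniform convergence of the series holds only on compact subsets of the \emph{open} forward tube, and degenerates precisely as $r(c)\to 1$, i.e.\ exactly where the Minkowski limit lives. The resolution, as sketched above, is that one never needs convergence \emph{at} the boundary: Vladimirov's reconstruction expresses the distributional pairing through values of the holomorphic function strictly inside $\mathcal{D}_4$ (with an $\epsilon$-contour that can be kept at a fixed small but nonzero height, the height being absorbed into the $\Delta_\ast$-independent constants via \eqref{powerlawbound:G}), so the relevant convergence is the uniform-on-compacts one, which we have. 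Making this rigorous requires being careful that the contour deformation and the weight $w(c)$ in the reconstruction are genuinely independent of $\Delta_\ast$ — which they are, since they depend only on the exponents $A,B$ in the powerlaw bound and not on the function being reconstructed — and that the tail estimate in \eqref{propest} is uniform in $\Delta_\ast$, which again follows from the uniformity of \eqref{powerlawbound:G}. I would phrase the final argument as a lemma: ``if $F_k\to F$ holomorphic on $\mathcal{D}_n$, uniformly on compacts, all satisfying \eqref{powerlawbound} with common constants, then the boundary values converge in $\mathcal{S}'$,'' and then apply it with $F_k=G(\,\cdot\,;\Delta_k)$ for any $\Delta_k\to\infty$.
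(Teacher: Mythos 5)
Your proposal is essentially the paper's proof: the paper likewise isolates the $\Delta_\ast$-uniform powerlaw bound \eqref{powerlawbound:G} on the remainder $H=G-G(\,\cdot\,;\Delta_\ast)$ as the key input, uses the Newton--Leibniz mechanism from the proof of Vladimirov's theorem to obtain a $\Delta_\ast$-uniform estimate $|L_f(0;\Delta_\ast)-L_f(\lambda;\Delta_\ast)|\leqslant D_1\lambda\,|f|_{\cdots}$, and then sends $\Delta_\ast\to\infty$ at fixed small $\lambda>0$ (where uniform-on-compacts convergence plus dominated convergence gives $L_f(\lambda;\Delta_\ast)\to 0$) before letting $\lambda\to0$. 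Your intermediate display bounding the boundary pairing directly by a sup over a fixed compact $K\subset\mathcal{D}_4$ is not quite right as written, but you correctly identify and repair this in the paragraph that follows via the $\lambda$-split, which coincides with the structure of the paper's argument (itself a direct application of Theorem~3.1 of Ref.~\cite{paper1}).
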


\begin{proof}
  Denote $H (c ; \Delta_{\ast}) = G (c) - G (c ; \Delta_{\ast})$. We have to
  show that, as $\Delta_{\ast}$ goes to infinity, the boundary value of $H (c
  ; \Delta_{\ast})$ converges to 0 in the sense of tempered distributions,
  i.e, for any Schwartz test function $f \in \mathcal{S} (\mathbb{R}^{4 d})$
  \begin{equation}
    \underset{\Delta_{\ast} \rightarrow \infty}{\lim} \lim_{\lambda
    \rightarrow 0^+}  \int H (\lambda \epsilon + i t, \mathbf{x};
    \Delta_{\ast}) f (t, \mathbf{x})\, d t\, d \mathbf{x}= 0, \qquad
    (\epsilon_1 > \epsilon_2 > \epsilon_3 > \epsilon_4), \label{Hf}
  \end{equation}
  where we write for brevity $t$ instead of $t_1, t_2, t_3, t_4$ etc. The
  proof is the same as in our paper {\cite{paper1}}, Theorem 3.1. We will
  retrace here the main steps for completeness and because we will need it to
  establish a stronger result below. Define $L_f (\lambda ; \Delta_{\ast})
  \assign \int H (\lambda \epsilon + i t, \mathbf{x}; \Delta_{\ast}) f (x)\,
  d x$ with $x = (t, \mathbf{x})$. Since $H$ is holomorphic in $\tau = \lambda
  \epsilon + i t$ we have
  \begin{equation}
  \begin{split}
      L_f^{(n)} (\lambda ; \Delta_{\ast}) =& \int \left( \left( \epsilon \cdummy
      \frac{\partial}{i \partial t} \right)^n H (\lambda \epsilon + i t,
      \mathbf{x}; \Delta_{\ast}) \right) f (x)\, d x \\
       =& \int H (\lambda \epsilon
      + i t, \mathbf{x}; \Delta_{\ast}) \left( \left( i \epsilon \cdummy
      \frac{\partial}{\partial t} \right)^n f (x) \right)\, d x,
  \end{split}    
  \end{equation}
  which by the powerlaw bound {\eqref{powerlawbound:G}} implies
  \begin{equation}
    L_f^{(n)} (\lambda ; \Delta_{\ast}) \leqslant \frac{C_n}{\lambda^A} | f
    |_{p_n}, \qquad \lambda \in (0, 1], \qquad p_n = \max \{ n, \lceil B
    \rceil + 4 d + 1 \} . \label{Lfn:bound1}
  \end{equation}
  These bounds blow up in the $\lambda \rightarrow 0$ limit, but by using the
  Newton-Leibniz repeatedly one can get bounds which do not blow up:
  \begin{equation}
    L_f^{(n)} (\lambda ; \Delta_{\ast}) \leqslant D_n | f |_{p_{n + [A] + 1}},
    \qquad \lambda \in (0, 1] . \label{Lfn:bound2}
  \end{equation}
  Using this for $n = 1$ one proves that the limit $L_f (0 ; \Delta_{\ast})
  = \lim_{\lambda \rightarrow 0^+} L_f (\lambda ; \Delta_{\ast})$ exists
  and
  \begin{equation}
    | L_f (0 ; \Delta_{\ast}) - L_f (\lambda ; \Delta_{\ast}) | \leqslant D_1
    \lambda | f |_{\max \{ [A] + 2, \lceil B \rceil + 4 d + 1 \}} .
  \end{equation}
  By Lebesgue's dominated convergence theorem, for any fixed $\lambda$ in $(0,
  1]$, $L_f (\lambda ; \Delta_{\ast})$ tends to zero as $\Delta_{\ast}
  \rightarrow + \infty$. Thus the previous bound implies
  \begin{eqnarray}
    \underset{\Delta_{\ast} \rightarrow \infty}{\overline{\lim}} | L_f (0,
    \Delta_{\ast}) | & \leqslant & \underset{\Delta_{\ast} \rightarrow
    \infty}{\overline{\lim}} | L_f (0, \Delta_{\ast}) - L_f (\lambda ;
    \Delta_{\ast}) | + \underset{\Delta_{\ast} \rightarrow
    \infty}{\overline{\lim}} | L_f (\lambda ; \Delta_{\ast}) | \nonumber\\
    & \leqslant & D_1 \lambda | f |_{\max \{ [A] + 2, \lceil B \rceil + 4 d +
    1 \}} .  \label{Lf0estimate}
  \end{eqnarray}
  Since $\lambda$ can be arbitrarily small, we get $\underset{\Delta_{\ast}
  \rightarrow \infty}{\overline{\lim}} | L_f (0, \Delta_{\ast}) | = 0$. This
  finishes the proof.
\end{proof}

\subsubsection{Convergence rate for compactly supported test
functions}\label{subsection:rateconvergence}

Because of the use of Lebesgue's theorem on dominated convergence, Theorem
\ref{theorem:districonverge} does not give the rate of convergence. We will
now give the rate in an important special case of compactly supported test
functions. This provides an explicit example for the remark in
{\cite{paper1}}, the last paragraph of Sec.\ 3.3.

The idea is that not only $H (c, \Delta_{\ast}) \rightarrow 0$ pointwise but
it does so exponentially fast. We will first derive the exponential
convergence bound, upgrading the Euclidean argument from
{\cite{Pappadopulo:2012jk}}, to the forward tube, and then use it. Let $F (t)$
be the Laplace transform of a positive measure $\mu (E)$ on $E \geqslant 0$:
\begin{equation}
  F (t) = \int_0^{\infty} \mu (E) e^{- E t}\, d E, \qquad \mu (E) \geqslant 0.
  \label{def:ft}
\end{equation}
We assume that the integral is convergent for $t > 0$ and $F (t) \sim t^{-
\alpha}$ as $t \rightarrow 0^+$, and $\alpha > 0$. Then by the
Hardy-Littlewood tauberian theorem we know that
\begin{equation}
  M (E) = \int_0^E \mu (E')\, d E' \sim \frac{E^{\alpha}}{\Gamma (\alpha + 1)}
  \tmop{as} E \rightarrow + \infty .
\end{equation}
We can now estimate the remainder $F_{E_{\ast}} (t) \assign
\int_{E_{\ast}}^{\infty} \mu (E) e^{- E t}\, d E,$ via
\begin{equation}
  F_{E_{\ast}} (t) = \int_{E_{\ast}}^{\infty} e^{- E t}\, d M (E) = - M
  (E_{\ast}) e^{- E_{\ast} t} + t \int_{E_{\ast}}^{\infty} M (E) e^{- E t}\, d
  E,
\end{equation}
which gives $\begin{array}{lll}
  | F_{E_{\ast}} (t) | & \leqslant & C_1 e^{- E_{\ast} t} E_{\ast}^{\alpha} +
  C_2 e^{- E_{\ast} t} \left( \frac{1 + E_{\ast} t}{t} \right)^{\alpha},
\end{array}$ and finally
\begin{equation}
  | F_{E_{\ast}} (t) | \leqslant \tmop{const} \times e^{- E_{\ast} t} (t^{- 1}
  + E_{\ast})^{\alpha} . \label{ft:remainderestimate}
\end{equation}
for any $E_{\ast} \geqslant 1$ (say).

Now let's go back to the 4-point function of four identical scalars $G (c) =
\langle \mathcal{O} (x_1) \mathcal{O} (x_2) \mathcal{O} (x_3)
\mathcal{O} (x_4) \rangle$, $c \in \mathcal{D}_4$. The remainder $H (c ;
\Delta_{\ast})$ can clearly be bounded by replacing $\rho, \bar{\rho}$ with $r
= \max \{ | \rho |, | \bar{\rho} | \}$:
\begin{equation}
  | H (c ; \Delta_{\ast}) | \leqslant \frac{1}{| x_{12}^2
  |^{\Delta_{\mathcal{O}}} | x_{34}^2 |^{\Delta_{\mathcal{O}}}} [g (r, r) - g
  (r, r ; \Delta_{\ast})] \label{CS:GN},
\end{equation}
By setting $t = \log (1 / r)$, $g (r, r)$ with its representation
{\eqref{g:rhoexpansion}} is in the same form as (\ref{def:ft}) and by Eq.
{\eqref{g4cstar}} we have that the corresponding $F (t) \sim t^{- \alpha}$
with $\alpha = 4 \Delta_{\mathcal{O}}$. Bounding the remainder $g (r, r) - g
(r, r ; \Delta_{\ast})$ by (\ref{ft:remainderestimate}), and using \
{\eqref{CS:GN}}, we get a bound on the remainder $| H (c ; \Delta_{\ast})
|$ for any $\Delta_{\ast} \geqslant 1$ (say):
\begin{equation}
  | H (c ; \Delta_{\ast}) | \leqslant \frac{\tmop{const}}{| x_{12}^2
  |^{\Delta_{\mathcal{O}}} | x_{34}^2 |^{\Delta_{\mathcal{O}}}} \times r
  (c)^{\Delta_{\ast}} \left( \frac{1}{\log (r (c)^{- 1})} + \Delta_{\ast}
  \right)^{4 \Delta_{\mathcal{O}}} . \label{g1221:bound}
\end{equation}
Let us now convert this into an explicit estimate for the distributional
convergence rate, improving on Theorem \ref{theorem:districonverge} for
compactly supported test functions. Recall that we have an upper bound on $r
(c) = \max \{ |\rho(c)|, |\bar{\rho}(c)| \}$ ($c = (x_1, x_2, x_3,
x_4)$), Eq.\ {\eqref{1-rbnd}}, which we copy here:
\begin{equation}
  \frac{1}{1 - r (c)} \leqslant 2 \left( 1 + \frac{1}{\epsilon_2 - \epsilon_3}
  \right) (1 + \max \{ | x_1 - x_2 |, | x_3 - x_4 | \}) . \label{copybnd}
\end{equation}
This bound tells us how much $r (c)$ is separated from 1. In turn, by
{\eqref{g1221:bound}} this translates into an explicit bound on $| H (c ;
\Delta_{\ast}) |$. Let us retrace the proof of Theorem
\ref{theorem:districonverge}, replacing Eq.\ {\eqref{Lf0estimate}} by
\begin{eqnarray}
  | L_f (0, \Delta_{\ast}) | & \leqslant & | L_f (0, \Delta_{\ast}) - L_f
  (\lambda ; \Delta_{\ast}) | + | L_f (\lambda ; \Delta_{\ast}) | \nonumber\\
  & \leqslant & D_1 \lambda | f |_{\max \{ [A] + 2, \lceil B \rceil + 4 d + 1
  \}} + | L_f (\lambda ; \Delta_{\ast}) | .  \label{Lf0estimate1}
\end{eqnarray}
We will now choose $\lambda$ small, as a function of $\Delta_{\ast}$, so that
the second term in the r.h.s.\ is smaller than the first one. Let us choose and
fix $\epsilon_1 > \epsilon_2 > \epsilon_3 > \epsilon_4$. For $x_k^{\lambda} =
(\lambda \epsilon_k + i t_k, \mathbf{x}_k)$, $x_k^M = (i t_k, \mathbf{x}_k)$,
bound {\eqref{copybnd}} gives
\begin{equation}
  \frac{1}{1 - r (c^{\lambda})} \leqslant \frac{C_{\epsilon}}{\lambda} (1 +
  \max \{ | x^M_1 - x^M_2 |, | x^M_3 - x^M_4 | \}) \qquad (0 < \lambda
  \leqslant 1), \label{rlambda:bound}
\end{equation}
where $C_{\epsilon}$ is a constant which depends only on $\epsilon_i$ but not
on $\lambda$ or $x_k^M$. If $f \in C^{\infty}_0 (\mathbb{R}^{4 d})$, a
compactly supported test function, then (\ref{rlambda:bound}) implies
\begin{equation}
  \frac{1}{1 - r (c^{\lambda})} \leqslant \frac{A_f}{\lambda}
  \label{rlambda:fbound}  \qquad (0 < \lambda \leqslant 1, (x_k^M) \in
  \tmop{supp} (f)),
\end{equation}
where $A_f = 2 C_{\epsilon} \underset{(x_k^M) \in \tmop{supp} (f)}{\sup} (1 +
\max \{ | x^M_1 - x^M_2 |, | x^M_3 - x^M_4 | \})$.

Now by {\eqref{g1221:bound}}, {\eqref{rlambda:fbound}}, and Lemma
\ref{x2bnd}(b) we have a bound for a compactly supported test function $f$:
\begin{equation}
  | L_f (\lambda ; \Delta_{\ast}) | \leqslant \frac{\tmop{const}}{\lambda^{4
  \Delta_{\mathcal{O}}}} e^{- \frac{\lambda}{A_f} \Delta_{\ast}} \left(
  \frac{A_f}{\lambda} + \Delta_{\ast} \right)^{4 \Delta_{\mathcal{O}}} \int |
  f |\, d x.
\end{equation}
By this and {\eqref{Lf0estimate1}} we have
\begin{equation}
  | L_f (0 ; \Delta_{\ast}) | \leqslant A_1 \lambda + \frac{A_2}{\lambda^{4
  \Delta_{\mathcal{O}}}} e^{- \frac{\lambda}{A_f} \Delta_{\ast}} \left(
  \frac{A_f}{\lambda} + \Delta_{\ast} \right)^{4 \Delta_{\mathcal{O}}},
\end{equation}
where all constants $A_1, A_2, A_f$ are $f$-dependent. If we choose $\lambda =
1 / \Delta_{\ast}^{\gamma}$ then for any $\gamma \in (0, 1)$ the first term
dominates (the second term decays exponentially fast for large
$\Delta_{\ast}$). It is easy to see that the first term still dominates for
$\lambda = \kappa \frac{\log \Delta_{\ast}}{\Delta_{\ast}}$ with
sufficiently large $\kappa$. We therefore obtain the following promised
strengthening of Theorem \ref{theorem:districonverge}.

\begin{theorem}
  \label{theorem:districonverge1}For any compactly supported $C^{\infty}$ test
  function $f$, we have $(G^M (\Delta_{\ast}), f) - (G^M, f) \rightarrow 0$ as
  $O \left( \left. \frac{\log \Delta_{\ast}}{\Delta_{\ast}} \right) \right.$.
\end{theorem}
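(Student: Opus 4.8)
The plan is to upgrade Theorem~\ref{theorem:districonverge} quantitatively by combining two ingredients: an \emph{exponential} pointwise bound on the conformal-block remainder $H(c;\Delta_*) = G(c) - G(c;\Delta_*)$ in the forward tube, and the optimal powerlaw bound \eqref{1-rbnd} on $1/(1-r(c))$ together with the compact support of $f$. For the first ingredient I would transport the Euclidean argument of~\cite{Pappadopulo:2012jk} to $\mathcal{D}_4$: bound $|H(c;\Delta_*)|$ by the diagonal Euclidean remainder $g(r,r)-g(r,r;\Delta_*)$ via the term-by-term majorization \eqref{maj}; then, setting $t=\log(1/r)$, write $g(r,r)$ with its expansion \eqref{g:rhoexpansion} as a Laplace transform $F(t)=\int_0^\infty \mu(E)e^{-Et}\,dE$ of a positive measure with $F(t)\sim t^{-4\Delta_{\mathcal{O}}}$ (which follows from the crossed-channel asymptotics \eqref{g4cstar}), and apply the Hardy--Littlewood tauberian theorem to estimate the tail $F_{E_*}(t)$. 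This yields the bound \eqref{g1221:bound}, namely $|H(c;\Delta_*)| \leqslant \tmop{const}\cdot |x_{12}^2|^{-\Delta_{\mathcal{O}}}|x_{34}^2|^{-\Delta_{\mathcal{O}}}\, r(c)^{\Delta_*}\bigl(\log(r(c)^{-1})^{-1}+\Delta_*\bigr)^{4\Delta_{\mathcal{O}}}$, valid for $\Delta_*\geqslant 1$ and all $c\in\mathcal{D}_4$.

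Next I would retrace the proof of Theorem~\ref{theorem:districonverge}, but keep $\lambda$ finite rather than sending it to $0$ at fixed $\Delta_*$. Writing $L_f(\lambda;\Delta_*) = \int H(\lambda\epsilon+it,\mathbf{x};\Delta_*)\,f(x)\,dx$, the Newton--Leibniz estimates leading to \eqref{Lfn:bound2} give $|L_f(0;\Delta_*)-L_f(\lambda;\Delta_*)| \leqslant D_1\lambda\,|f|_{\max\{[A]+2,\,\lceil B\rceil+4d+1\}}$, so it remains to make $L_f(\lambda;\Delta_*)$ small for a well-chosen $\lambda$. Here the compact support of $f$ is essential: applying \eqref{1-rbnd} to the shifted points $x_k^\lambda = (\lambda\epsilon_k + it_k,\mathbf{x}_k)$ and using $x_k^M=(it_k,\mathbf{x}_k)\in\tmop{supp}(f)$ gives the uniform estimate \eqref{rlambda:fbound}, $1/(1-r(c^\lambda)) \leqslant A_f/\lambda$ for $0<\lambda\leqslant 1$, with $A_f$ depending only on $\tmop{supp}(f)$ and the fixed $\epsilon_i$. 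Plugging this into \eqref{g1221:bound} and using Lemma~\ref{x2bnd}(b) for the prefactor produces $|L_f(\lambda;\Delta_*)| \leqslant \tmop{const}\cdot\lambda^{-4\Delta_{\mathcal{O}}}\,e^{-\lambda\Delta_*/A_f}\bigl(A_f/\lambda+\Delta_*\bigr)^{4\Delta_{\mathcal{O}}}\int|f|\,dx$.

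Finally I would optimise the free parameter. Combining the two estimates gives $|L_f(0;\Delta_*)| \leqslant A_1\lambda + A_2\lambda^{-4\Delta_{\mathcal{O}}}e^{-\lambda\Delta_*/A_f}(A_f/\lambda+\Delta_*)^{4\Delta_{\mathcal{O}}}$ with $f$-dependent constants; the second term decays super-polynomially as soon as $\lambda\Delta_*\to\infty$, so choosing $\lambda = \kappa\,\log\Delta_*/\Delta_*$ with $\kappa$ sufficiently large makes it negligible next to the first term, which is then $O(\log\Delta_*/\Delta_*)$. This gives the claimed rate. The main obstacle, and the place where the \emph{optimal} bound of Sec.~\ref{secondpass} is genuinely needed rather than the crude bound \eqref{rhobound}, is precisely the uniform estimate $1/(1-r(c^\lambda))\leqslant A_f/\lambda$: the quantity $S(c)$ entering \eqref{rhobound} blows up whenever points approach lightlike separation, so \eqref{rhobound} offers no uniform control on $\tmop{supp}(f)$, whereas \eqref{1-rbnd}, with its clean dependence on $\epsilon_2-\epsilon_3$ and on $|x_1-x_2|,|x_3-x_4|$ only, is finite on the whole support and degrades merely like $\lambda^{-1}$ --- which is exactly what the exponential factor $e^{-\lambda\Delta_*/A_f}$ can absorb.
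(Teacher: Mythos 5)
Your proposal is correct and follows essentially the same route as the paper: the Hardy--Littlewood tauberian estimate for the tail of $g(r,r)$ giving \eqref{g1221:bound}, the finite-$\lambda$ split \eqref{Lf0estimate1}, the uniform compact-support estimate \eqref{rlambda:fbound} from \eqref{1-rbnd}, and the final optimization $\lambda = \kappa\,\log\Delta_*/\Delta_*$. One minor imprecision in your closing remark: the crude bound \eqref{rhobound} does \emph{not} lose uniform control on $\tmop{supp}(f)$ --- on a compact support $S(c^\lambda)$ degrades cleanly like $\lambda^{-1}$, so \eqref{rhobound} still gives $1/(1-r(c^\lambda))\lesssim\lambda^{-12}$; the real issue is that this twelfth power, after the same optimization, would only yield a rate of order $(\log\Delta_*/\Delta_*)^{1/12}$, which is why the sharper exponent of \eqref{1-rbnd} is needed to obtain the claimed $O(\log\Delta_*/\Delta_*)$.
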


It is somewhat surprising that conformal block decomposition converges so
slowly in the Minkowski region, while it converges exponentially fast in the
Euclidean region.

\subsection{OPE convergence in the sense of
$\mathcal{H}^{\tmop{CFT}}$}\label{OPEHCFT}

We will now rephrase the question of OPE convergence from the point of view of
states generated by two operators (rather than 4-point functions which represent
inner products of such states). We already discussed these questions to some
extent in Sec.\ \ref{Hilbert} in the Euclidean region, and Sec.\ \ref{sec:Wpos} in the forward tube and in the Minkowski region. We will now
update that discussion.

Recall that in Sec.\ \ref{sec:Wpos} we defined states (see Eq.
{\eqref{Ocompl}} and Remark \ref{genrefl})
\begin{equation}
  \psi (x_1, x_2) = \left| \mathcal{O} (x_1) \mathcal{O} (x_2) \rangle, \quad
  x_k = (\epsilon_k + i t_k, \mathbf{x}_k + i\mathbf{y}_k) \right., \quad 0
  \succ (\epsilon_1, \mathbf{y}_2) \succ (\epsilon_1, \mathbf{y}_2),
  \label{f.t.states}
\end{equation}
as elements of a vector space with inner products {\eqref{complinner}}:
\begin{equation}
  \langle \mathcal{O} (x_1) \mathcal{O} (x_2) | \nobracket \mathcal{O} (x_3)
  \mathcal{O} (x_4) \rangle = G_4 (x^{\theta}_2, x^{\theta}_1, x_3, x_4),
\end{equation}
where $x^{\theta} = (- \epsilon + i t, \mathbf{x}- i\mathbf{y})$ for $x =
(\epsilon + i t, \mathbf{x}+ i\mathbf{y})$. We then proved that this inner
product was positive definite, and that these states could be approximated in
norm by Euclidean states, and so belong to the same Hilbert space. The
arguments of Sec.\ \ref{sec:Wpos} did not use conformal symmetry.

So, by arguments of Sec.\ \ref{sec:Wpos} we have a map $\psi (x_1, x_2)$
from $x_1, x_2$ as in {\eqref{f.t.states}} into $\mathcal{H}^{\tmop{CFT}}$. {We claim that this map is holomorphic.
To begin with, this map is continuous with respect to
the $\mathcal{H}^{\tmop{CFT}}$ norm, and in particular bounded on compact subsets. This follows easily from the continuity of $G_4$. To show holomorphicity, we will use Morera's theorem and Osgood's lemma (which
remain valid for Hilbert-space-valued functions of complex
variables). Morera's theorem says that a locally continuous
function of one complex variable is holomorphic if its integral over any small contour is zero. Let $C$ be a small 1d contour in the region of $\xi =
(x_1, x_2)$ where $\psi$ is defined (we assume that one of the components of $\xi$ goes around the contour while the others stay fix). We have
\begin{equation}
  \left\| \int_C d \xi\, \psi (\xi) \right\|^2 = \int_{\xi' \in C} d
  \overline{\xi'}\, \int_{\xi \in C} d \xi\, G_4 (\xi^{\prime \theta}, \xi) = 0\,,
  \label{Morera}
\end{equation}
the last integral being zero because $G_4$ is holomorphic in $\xi$.
Hence $\int_C d \xi\, \psi (\xi)=0$ and by Morera's theorem $\psi(x_1,x_2)$ is holomorphic in each component separately.
Finally, by Osgood's lemma \cite{Osgood} $\psi(x_1,x_2)$ is holomorphic in all variables jointly.}
%\JQ{I'm not sure if Morera's theorem of this kind exists for functions of several complex variables. For me one subtlety is that the definition of the contour integral $\int d\xi f(\xi)$ is not clear when $\xi=(z_1,z_2)$. One way around this subtlety is to use Morera's theorem for $z_1,z_2,\ldots,z_n$ separately, and then use Osgood's lemma. Another way is to use a higher-dimensional version of Morera's theorem, where $d\xi=dz_1\wedge\ldots\wedge dz_n$.}

Let us connect this discussion with the OPE. In the Euclidean region, OPE says
\begin{eqnarray}
  \psi (x_1, x_2) = | \mathcal{O} (x_1) \mathcal{O} (x_2) \rangle & = &
  \underset{k, \lambda}{\sum} f_{\mathcal{O}\mathcal{O}k} C_{k, \lambda} (x_1,
  x_2, x_S, \mathcal{D}) | (\mathcal{O}^{\dagger}_k)^{(\lambda)} (x_S)
  \rangle,  \label{OPE:Euclidean}\\
  C_{k, \lambda} (x_1, x_2, x_S, \mathcal{D}) & = & \sum_{\alpha}
  f_{\mathcal{O}\mathcal{O}k} C_{k, \lambda, \alpha} (x_1, x_2, x_S)
  \mathcal{D}^{\alpha}, \nonumber
\end{eqnarray}
where $x_1, x_2$ are two Euclidean points in the lower halfspace, $0 > x_1^0,
x_2^0$, $x_S = (- 1, 0, \ldots)$ is the south pole, and $\mathcal{D}
=\mathcal{D}_{x_S}$ is the image of $\partial / \partial x |_{x = 0}
\nobracket$ under a conformal transformation which maps $0, \infty$ to $x_S,
x_N$. We proved in Sec.\ \ref{Hilbert}, from the Euclidean CFT axioms, that
the series in the r.h.s.\ converges in $\mathcal{H}^{\tmop{CFT}}$. As discussed
in Sec.\ \ref{Hilbert}, convergence holds provided that the series is summed
in a certain manner: for each $\Lambda$ we define a partial sum
$\psi_{\Lambda} (x_1, x_2)$ over all terms with $\Delta_k + | \alpha | <
\Lambda$, and then tend $\Lambda \rightarrow \infty$. This procedure is needed
because, although the states $\mathcal{D}^{\alpha} |
(\mathcal{O}^{\dagger}_k)^{(\lambda)} (x_S) \rangle$ are orthogonal for
different $| \alpha |$ (because they are eigenvectors of $\frac{K^0 - P^0}{2}$
with different eigenvalues), they are not orthonormal. This can be corrected
as follows. For each $k$, let us orthonormalize the infinite multiplet of
states $\mathcal{D}^{\alpha} | (\mathcal{O}^{\dagger}_k)^{(\lambda)} (x_S)
\rangle$. Let $e_{k, n}$, $n \in \mathbb{Z}_{\geqslant 0}$, be the
corresponding orthonormal basis of states (there is obviously a lot of
arbitrariness in this basis). We then can write
\begin{equation}
  \psi (x_1, x_2) = | \mathcal{O} (x_1) \mathcal{O} (x_2) \rangle =
  \sum_k f_{\mathcal{O}\mathcal{O}k} \underset{n}{\sum} \tilde{C}_{k, n} (x_1,
  x_2) e_{k, n}, \label{psialtexp}
\end{equation}
where $\tilde{C}_{k, n}$'s are some finite linear combinations of $C_{k,
\lambda, \alpha} (x_1, x_2, x_S)$. Since $e_{k, n}$'s with different $k$ are
also orthogonal, this equation is an expansion of the state $\psi (x_1, x_2)$
in an orthonormal basis. Convergence of the series is now equivalent to the
finiteness of the norm of $\psi (x_1, x_2)$:
\begin{equation}
  \| \psi (x_1, x_2) \|_{\mathcal{H}^{\tmop{CFT}}} = \sum_{k, n} |
  f_{\mathcal{O}\mathcal{O}k} \tilde{C}_{k, n} (x_1, x_2) |^2 < \infty .
  \label{psinorm}
\end{equation}
Moreover, by definition, this norm is nothing but the 4-point function $\langle
\mathcal{O} (x_2^{\theta}) \mathcal{O} (x_1^{\theta}) \mathcal{O} (x_1)
\mathcal{O} (x_2) \rangle$.

Eqs.\ {\eqref{OPE:Euclidean}}-{\eqref{psinorm}} were all in the Euclidean
region, but we claim that they continue to make sense in the forward tube. The
argument is as follows. We know by the arguments around {\eqref{Morera}} that
$\psi (x_1, x_2)$ have analytic continuation to the region
{\eqref{f.t.states}}. The inner product $\langle e_{k, n} | \psi (x_1, x_2)
\rangle$ is then the analytic continuation of $f_{\mathcal{O}\mathcal{O}k}
\tilde{C}_{k, n} (x_1, x_2)$ from the Euclidean to the same region. (This
inner product is a finite linear combination of $x_N$-derivatives of the 3-point
function $\langle (\mathcal{O}_k)^{(\lambda)} (x_N) \mathcal{O} (x_1)
\mathcal{O} (x_2) \rangle$, hence holomorphic in the forward tube.) We thus
obtain the following fact:

\begin{theorem}
  Expansion {\eqref{psialtexp}}, analytically continued from the Euclidean
  region to the forward tube term by term, converges in the sense of
  $\mathcal{H}^{\tmop{CFT}}$ to the same states $\psi (x_1, x_2)$ in the
  region {\eqref{f.t.states}} that we defined in Sec.\ \ref{sec:Wpos}.
\end{theorem}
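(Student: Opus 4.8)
The plan is to assemble the theorem from pieces already established in the paper, the main new ingredient being a careful bookkeeping of \emph{which} series converges \emph{where}. First I would fix a configuration $(x_1,x_2)$ in the region \eqref{f.t.states} and recall from Sec.~\ref{sec:Wpos} (in particular Remark~\ref{genrefl}) that the state $\psi(x_1,x_2)=|\mathcal{O}(x_1)\mathcal{O}(x_2)\rangle$ is a well-defined element of $\mathcal{H}^{\mathrm{CFT}}$ there, and that by the Morera/Osgood argument around \eqref{Morera} the map $(x_1,x_2)\mapsto\psi(x_1,x_2)$ is holomorphic. Then I would observe that for each fixed basis vector $e_{k,n}$ the inner product $\langle e_{k,n}|\psi(x_1,x_2)\rangle$ is, in the Euclidean region, equal to $f_{\mathcal{O}\mathcal{O}k}\tilde C_{k,n}(x_1,x_2)$; since $e_{k,n}$ is a finite linear combination of the orthogonal descendant states $\mathcal{D}^\alpha|(\mathcal{O}^\dagger_k)^{(\lambda)}(x_S)\rangle$, this inner product equals a finite linear combination of $x_N$-derivatives of the Euclidean 3-point function $\langle(\mathcal{O}_k)^{(\lambda)}(x_N)\mathcal{O}(x_1)\mathcal{O}(x_2)\rangle$, which analytically continues holomorphically to the forward tube exactly as in Sec.~\ref{23-point}. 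By uniqueness of analytic continuation, $\langle e_{k,n}|\psi(x_1,x_2)\rangle$ equals the continued value of $f_{\mathcal{O}\mathcal{O}k}\tilde C_{k,n}(x_1,x_2)$ throughout \eqref{f.t.states}.

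Next I would address convergence of the continued series $\sum_{k,n}f_{\mathcal{O}\mathcal{O}k}\tilde C_{k,n}(x_1,x_2)\,e_{k,n}$ to $\psi(x_1,x_2)$ in the $\mathcal{H}^{\mathrm{CFT}}$ norm. Because $\{e_{k,n}\}$ is orthonormal, a Hilbert-space vector $v$ equals the sum of its expansion coefficients times $e_{k,n}$ if and only if $\sum_{k,n}|\langle e_{k,n}|v\rangle|^2=\|v\|^2$; this is just Parseval/Bessel. So the content reduces to the identity
\begin{equation}
  \sum_{k,n}\bigl|f_{\mathcal{O}\mathcal{O}k}\tilde C_{k,n}(x_1,x_2)\bigr|^2
  =\|\psi(x_1,x_2)\|^2_{\mathcal{H}^{\mathrm{CFT}}}
  =G_4(x_2^\theta,x_1^\theta,x_1,x_2),
\end{equation}
where the last equality is the definition of the norm via \eqref{complinner}. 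But the left-hand side is precisely the term-by-term continuation of the Euclidean series \eqref{psinorm} for the norm, which in the Euclidean region equals the Euclidean 4-point function $\langle\mathcal{O}(x_2^\theta)\mathcal{O}(x_1^\theta)\mathcal{O}(x_1)\mathcal{O}(x_2)\rangle$; and the right-hand side is the analytic continuation of that same 4-point function. Here is where I would invoke the analytic-continuation machinery of Sec.~\ref{anal4-point}: the series \eqref{psinorm}, rewritten in terms of $\rho,\bar\rho$, is dominated term-by-term (cf.\ \eqref{maj}) by a convergent positive Euclidean series because $|\rho|,|\bar\rho|<1$ in the forward tube (Lemma~\ref{bound}, Eq.~\eqref{3b}), so it converges uniformly on compacts to a holomorphic function; by uniqueness of analytic continuation it must agree with the continued 4-point function $G_4(x_2^\theta,x_1^\theta,x_1,x_2)$.

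The main obstacle I anticipate is not any single hard estimate but the \emph{matching of the two different ways of organizing the expansion}: in Sec.~\ref{Hilbert} the Euclidean OPE converges only if summed in a prescribed order (first over descendants of level $|\alpha|<\Lambda$, then $\Lambda\to\infty$), whereas the reorganized form \eqref{psialtexp} in the orthonormal basis $\{e_{k,n}\}$ converges in $\ell_2$ and is order-independent. One must check that the reorganization (grouping the $C_{k,\lambda,\alpha}$ into the $\tilde C_{k,n}$ and then continuing) is compatible with the dominated-convergence bound \eqref{maj}, i.e.\ that partial sums of the $e_{k,n}$-series are controlled by partial sums of the positive $\rho=\bar\rho=r$ series and hence converge uniformly on compact subsets of \eqref{f.t.states}. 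Once that compatibility is in place, the limit state is holomorphic in $(x_1,x_2)$, agrees with $\psi(x_1,x_2)$ on the Euclidean slice, and therefore equals $\psi(x_1,x_2)$ throughout \eqref{f.t.states} by the identity theorem for holomorphic Hilbert-space-valued functions (Osgood's lemma, as already used around \eqref{Morera}). A minor point worth a sentence is that one should confirm the region \eqref{f.t.states} is connected and contains a Euclidean configuration in its closure, so that the identity theorem applies; this is immediate from the description of the forward tube.
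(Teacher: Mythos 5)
Your proof is correct and follows the same route as the paper: identify $\psi(x_1,x_2)$ as a holomorphic $\mathcal{H}^{\mathrm{CFT}}$-valued function via the Morera/Osgood argument around \eqref{Morera}, recognize the inner products $\langle e_{k,n}|\psi(x_1,x_2)\rangle$ as finite linear combinations of $x_N$-derivatives of the 3-point function $\langle(\mathcal{O}_k)^{(\lambda)}(x_N)\mathcal{O}(x_1)\mathcal{O}(x_2)\rangle$---hence holomorphic and matching the Euclidean coefficients by uniqueness of analytic continuation---and then invoke the orthonormal expansion in $\mathcal{H}^{\mathrm{CFT}}$. Your Parseval check and the worry about reorganizing partial sums are sound but redundant: since $\{e_{k,n}\}$ is a complete orthonormal basis of $\mathcal{H}^{\mathrm{CFT}}$ and $\psi(x_1,x_2)\in\mathcal{H}^{\mathrm{CFT}}$ was already established in Sec.~\ref{sec:Wpos}, the expansion $\psi=\sum_{k,n}\langle e_{k,n}|\psi\rangle e_{k,n}$ converges unconditionally without any comparison of partial sums to the $\rho,\bar\rho$ series.
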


For the subsequent discussion, we also define the states
\begin{equation}
  \psi_k (x_1, x_2) = f_{\mathcal{O}\mathcal{O}k} \underset{n}{\sum}
  \tilde{C}_{k, n} (x_1, x_2) e_{k, n} .
\end{equation}
The norms of these states is given by conformal blocks (up to prefactor
$f_{\mathcal{O}\mathcal{O}k}^2 / (x_{12}^2)^{4 \Delta_{\mathcal{O}}}$). Just
as the state $\psi (x_1, x_2)$, each state $\psi_k (x_1, x_2)$ is an
$\mathcal{H}^{\tmop{CFT}}$-valued holomorphic function in the region
{\eqref{f.t.states}}, moreover in this region we have
\begin{equation}
  \psi (x_1, x_2) = \sum_k \psi_k (x_1, x_2), \label{psipsik}
\end{equation}
the series convergent in the sense of $\mathcal{H}^{\tmop{CFT}}$. The norm of
the tail of this series is given by the function $H (c, \Delta_{\ast})$ from
the proof of Theorem \ref{theorem:districonverge}:
\begin{equation}
  \Biggl\| \sum_{\Delta_k > \Delta_{\ast}} \psi_k (x_1, x_2)
  \Biggr\| = H (c, \Delta_{\ast}), \qquad c = (x_2^{\theta}, x_1^{\theta},
  x_1, x_2) . \label{tailbound1}
\end{equation}

Vladimirov's Theorem \ref{ThVlad} remains true for Hilbert-space-valued
holomorphic functions, whose norm satisfies a powerlaw bound in the forward tube.
Applying such a version of Theorem \ref{ThVlad}, as well as arguments from the
proof of Theorem \ref{theorem:districonverge} and from Sec.\ \ref{subsection:rateconvergence} (in particular using the bound
{\eqref{g1221:bound}}), it's easy to obtain the following result (we omit the
proof).

\begin{theorem}
  \label{th:distropeconvergence}(a) The boundary value $\tmop{bv} (\psi) =
  \lim_{\epsilon_i \rightarrow 0} \psi (x_1, x_2)$ exists as
  $\mathcal{H}^{\tmop{CFT}}$-valued tempered distributions, and similarly for
  each $\tmop{bv} (\psi_k)$: $\tmop{bv} (\psi), \tmop{bv} (\psi_k) \in
  \mathcal{S}' (\mathbb{R}^{2 d}) \otimes \mathcal{H}^{\tmop{CFT}}$.
  Explicitly, the limit
  \begin{equation}
    (\tmop{bv} (\psi), f) = \underset{\epsilon \rightarrow 0}{\lim} \int
    \psi (\epsilon + i t, \mathbf{x}) f (t, \mathbf{x})\, d^2 t\, d^{2 (d - 1)}
    \mathbf{x}
  \end{equation}
  exists as a vector in $\mathcal{H}^{\tmop{CFT}}$ for any Schwartz function
  $f \in \mathcal{S} (\mathbb{R}^{2 d})$, and is a continuous linear operator
  from $\mathcal{S} (\mathbb{R}^{2 d})$ to $\mathcal{H}^{\tmop{CFT}}$, and
  analogously for $\tmop{bv} (\psi_k)$;
  
  (b) (Distributional OPE convergence in $\mathcal{H}^{\tmop{CFT}}$) For each
  Schwartz test function $f \in \mathcal{S} (\mathbb{R}^{2 d})$, the series
  $\underset{k}{\sum} (\tmop{bv} (\psi_k), f)$ converges in
  $\mathcal{H}^{\tmop{CFT}}$ norm to $(\tmop{bv} (\psi), f)$;
  
  (c) (Convergence rate) For compactly supported test functions, the series in
  (b) summed over $\Delta_k \leqslant \Delta_{\ast}$ converges with rate $O
  \left( \left. \sqrt{\frac{\log \Delta_{\ast}}{\Delta_{\ast}}} \right)
  \right.$.
\end{theorem}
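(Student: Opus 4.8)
The plan is to run the arguments of Section \ref{power4-point} and of the proof of Theorem \ref{theorem:districonverge} with the scalar four‑point function replaced throughout by the $\mathcal{H}^{\tmop{CFT}}$‑valued maps $\psi(x_1,x_2)$ and $\psi_k(x_1,x_2)$, and the absolute value replaced by the $\mathcal{H}^{\tmop{CFT}}$‑norm. For part (a) I would first record the growth estimates. By construction $\|\psi(x_1,x_2)\|^2=G_4(x_2^\theta,x_1^\theta,x_1,x_2)$, and by \eqref{tailbound1} the norm of any tail $\sum_{\Delta_k>\Delta_*}\psi_k(x_1,x_2)$ is governed by the four‑point conformal‑block remainder $H(c,\Delta_*)$ at the reflection‑symmetric configuration $c=(x_2^\theta,x_1^\theta,x_1,x_2)$; in particular every partial sum has norm bounded by that of $\psi$. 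Combining the powerlaw bound \eqref{1-rbnd} on $1/(1-r(c))$, the estimate \eqref{gEcstar} for $g$, and Lemma \ref{x2bnd}(b) for the prefactor, one gets that $\|\psi\|$ and each truncation $\|\psi-\sum_{\Delta_k\leqslant\Delta_*}\psi_k\|$ satisfy a powerlaw bound of the form \eqref{powerlawbound} on $\mathcal{D}_4$, with $\Delta_*$‑independent constants. Since $\psi$ and each $\psi_k$ are $\mathcal{H}^{\tmop{CFT}}$‑valued holomorphic functions on the region \eqref{f.t.states} (the Morera–Osgood argument around \eqref{Morera}), the Hilbert‑space version of Vladimirov's Theorem \ref{ThVlad} applies and yields the existence of $\tmop{bv}(\psi)$ and each $\tmop{bv}(\psi_k)$ in $\mathcal{S}'(\mathbb{R}^{2d})\otimes\mathcal{H}^{\tmop{CFT}}$ together with the asserted continuity; this is (a).

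For part (b), fix a Schwartz $f$ and set $\Phi_{\Delta_*}(x_1,x_2)=\sum_{\Delta_k>\Delta_*}\psi_k(x_1,x_2)$, an $\mathcal{H}^{\tmop{CFT}}$‑valued holomorphic function obeying the $\Delta_*$‑uniform powerlaw bound from (a). Define the $\mathcal{H}^{\tmop{CFT}}$‑valued integral $L_f(\lambda;\Delta_*)=\int\Phi_{\Delta_*}(\lambda\epsilon+it,\mathbf{x})f(t,\mathbf{x})\,dx$ and retrace the proof of Theorem \ref{theorem:districonverge} verbatim: differentiating under the integral using holomorphicity in $\tau=\lambda\epsilon+it$, the powerlaw bound gives $\|L_f^{(n)}(\lambda;\Delta_*)\|\leqslant C_n\lambda^{-A}|f|_{p_n}$ as in \eqref{Lfn:bound1}, and iterating Newton--Leibniz removes the blow‑up to yield a bound $\|L_f^{(n)}(\lambda;\Delta_*)\|\leqslant D_n|f|_{p_{n+[A]+1}}$ uniform in $\lambda$ and $\Delta_*$ as in \eqref{Lfn:bound2}. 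Hence $L_f(0;\Delta_*)=\lim_{\lambda\to0^+}L_f(\lambda;\Delta_*)$ exists in $\mathcal{H}^{\tmop{CFT}}$ and $\|L_f(0;\Delta_*)\|\leqslant D_1\lambda|f|_{\ast}+\|L_f(\lambda;\Delta_*)\|$ with $\Delta_*$‑independent $D_1$. For each fixed $\lambda\in(0,1]$ one has $\|L_f(\lambda;\Delta_*)\|\leqslant\int\|\Phi_{\Delta_*}(c^\lambda)\||f|\,dx\to0$ as $\Delta_*\to\infty$ by dominated convergence, since $\|\Phi_{\Delta_*}(c^\lambda)\|\to0$ pointwise while the $\Delta_*$‑uniform powerlaw bound supplies an integrable majorant against $f$. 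Taking $\limsup_{\Delta_*\to\infty}$ and then $\lambda\to0$ gives $\|L_f(0;\Delta_*)\|\to0$, which is exactly the claimed $\mathcal{H}^{\tmop{CFT}}$‑convergence $\sum_k(\tmop{bv}(\psi_k),f)\to(\tmop{bv}(\psi),f)$.

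For part (c), with $f$ compactly supported I would observe that the quantity to control is $\bigl\|(\tmop{bv}(\psi),f)-\sum_{\Delta_k\leqslant\Delta_*}(\tmop{bv}(\psi_k),f)\bigr\|^2=\|(\tmop{bv}(\Phi_{\Delta_*}),f)\|^2$, and, by the mutual orthogonality of distinct conformal multiplets, this equals the double smearing $(H^M(\cdot\,;\Delta_*),F)$ of the Minkowski four‑point conformal‑block remainder against the compactly supported test function $F(y_1,y_2,y_3,y_4)=\overline{f(y_2,y_1)}\,f(y_3,y_4)$. Theorem \ref{theorem:districonverge1} then gives $(H^M(\cdot\,;\Delta_*),F)=O(\log\Delta_*/\Delta_*)$, so that taking the square root yields the asserted rate $O(\sqrt{\log\Delta_*/\Delta_*})$; the halving relative to the four‑point rate is precisely the norm‑versus‑norm‑squared passage. (Alternatively one can argue directly, as in Section \ref{subsection:rateconvergence}: on $\tmop{supp}(f)$ the bound \eqref{rlambda:fbound} together with \eqref{g1221:bound} gives $\|\Phi_{\Delta_*}(c^\lambda)\|\leqslant\mathrm{const}\cdot\lambda^{-2\Delta_{\mathcal{O}}}e^{-\lambda\Delta_*/(2A_f)}(A_f/\lambda+\Delta_*)^{2\Delta_{\mathcal{O}}}$, which, plugged into the estimate from (b) and optimized over $\lambda$, reproduces — in fact improves on — the stated rate.)

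I do not expect a genuinely hard step: all the analytic input — the powerlaw bound \eqref{1-rbnd}, the exponential remainder estimate \eqref{g1221:bound}, the Hilbert‑space Vladimirov theorem, and holomorphicity of $\psi,\psi_k$ via Morera and Osgood — is already available. The one point that requires care is the interchange of the boundary‑value limit $\epsilon_i\to0$ with the infinite conformal‑block sum in (b); this is exactly what the $\Delta_*$‑uniformity of the powerlaw bound on the partial sums buys, since it lets the contour‑shifting ($\lambda$‑trick) argument be run simultaneously for all truncations. The only other thing to verify carefully is the identity $\|\sum_{\Delta_k>\Delta_*}\psi_k\|^2=H(c,\Delta_*)$ up to the harmless prefactor — i.e. that the $\mathcal{H}^{\tmop{CFT}}$‑norm of the OPE tail equals the conformal‑block remainder of the reflection‑symmetric four‑point function — which follows from orthogonality of distinct multiplets together with the OPE expansion of the four‑point function.
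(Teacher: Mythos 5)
Your proposal assembles exactly the ingredients that the paper itself points to — a Hilbert-space version of Vladimirov's theorem, the proof of Theorem \ref{theorem:districonverge}, the orthogonality identity \eqref{tailbound1}, and the exponential remainder estimate \eqref{g1221:bound} from Sec.~\ref{subsection:rateconvergence} — so the structure matches the route the paper intends but leaves to the reader. The one thing worth flagging is that your ``alternative direct argument'' for part~(c) is not merely a check: the pointwise bound $\|\Phi_{\Delta_*}(c^\lambda)\|\lesssim\lambda^{-2\Delta_{\mathcal{O}}}\,e^{-\lambda\Delta_*/(2A_f)}\,(A_f/\lambda+\Delta_*)^{2\Delta_{\mathcal{O}}}$, obtained from the square root of \eqref{g1221:bound} with the reflection-symmetric configuration and \eqref{rlambda:fbound}, plugged into the $\lambda$-trick and optimized at $\lambda\sim\kappa\log\Delta_*/\Delta_*$, gives the rate $O(\log\Delta_*/\Delta_*)$ and genuinely improves on the stated $O\bigl(\sqrt{\log\Delta_*/\Delta_*}\bigr)$. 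The weaker stated rate is what one gets by bounding $\|\cdot\|^2$ via Theorem \ref{theorem:districonverge1} and taking a square root, which throws away the fact that the decay in $\eqref{g1221:bound}$ is already exponential (so halving the exponent in the exponential costs nothing after optimization, whereas taking $\sqrt{\,\cdot\,}$ of a polynomial rate does). Both routes you give are correct; only the second is sharp.
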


\begin{remark}
  The following more finegrained version of Theorem
  \ref{th:distropeconvergence}(b) also holds. Denote
  \begin{equation}
    E_{k, n} (x_{1,} x_2) = f_{\mathcal{O}\mathcal{O}k} \underset{}{}
    \tilde{C}_{k, n} (x_1, x_2) e_{k, n} .
  \end{equation}
  As explained above, $\tilde{C}_{k, n} (x_1, x_2)$ is a finite sum of terms
  like $(\mathcal{D}^{\theta})^{\alpha} \langle (\mathcal{O}_k)^{(\lambda)}
  (x_N) \mathcal{O} (x_1) \mathcal{O} (x_2) \rangle$ (all having the same $|
  \alpha |$), so $\tmop{bv} (E_{k, n})$ exists by Vladimirov's theorem. Then
  \tmtextit{for each Schwartz test function $f \in \mathcal{S} (\mathbb{R}^{2
  d})$, the series $\underset{k, n}{\sum} (\tmop{bv} (E_{k, n}), f)$ converges
  in $\mathcal{H}^{\tmop{CFT}}$ norm to $(\tmop{bv} (\psi), f)$.}
  
  It would be interesting to prove a version Theorem
  \ref{th:distropeconvergence}(c), truncating the series $\underset{k,
  n}{\sum} (\tmop{bv} (E_{k, n}), f)$ to $k, n$ such as the corresponding
  $\Delta_k + | \alpha | \leqslant \Delta_{\ast}$. To do so one would have to
  find an analogue of the bounds {\eqref{tailbound1}}, {\eqref{g1221:bound}}
  valid for such a truncation. This is not straightforward because the partial
  sums of the series {\eqref{psinorm}} truncated to $\Delta_k + | \alpha |
  \leqslant \Delta_{\ast}$ do not correspond to a simple truncation of the
  $\rho, \bar{\rho}$ series of the full 4-point function (basically because the
  transformation which maps $x_1, x_2, x_1^{\theta}, x_2^{\theta}$ to their
  $\rho, \bar{\rho}$ conformal frame does not necessarily map $x_S, x_N$ to
  $0, \infty$).
\end{remark}

\subsection{Comparison to Mack's work on OPE convergence}\label{MackComp}

To assume OPE convergence as an axiom in Euclidean CFT, and to derive
Minkowski physics from it, as we did in this paper, seems natural from the
modern perspective. On the contrary, in the early days of CFTs it was
considered natural to assume standard Minkowski physics (such as Wightman
axioms). The OPE convergence was not assumed at the time, but was something to
be derived.

This was the underlying philosophy of the works by L\"uscher and Mack
{\cite{Luscher:1974ez}} and of Mack {\cite{Mack:1975je,Mack:1976pa}}. Written
45 years ago, these papers remain widely cited, but not everyone is familiar
with what precisely has been derived there. Here we will present a short
review for the benefit of the modern audience.

These works make two main assumptions. \tmtextbf{First,} that we have a
unitary quantum field theory in the Minkowski signature which satisfies
Wightman axioms (in particular has a Hilbert space $\mathcal{H}$ on which the
Poincar\'e group acts unitarily and quantum fields are operator-valued
distributions). Correlators then have the usual analyticity properties of
Wightman functions, in particular they are real-analytic in the Euclidean. The
\tmtextbf{second} main assumption is that these Euclidean correlators are
invariant under the action of the Euclidean conformal group.

Using these two assumptions, L\"uscher and Mack {\cite{Luscher:1974ez}} proved
that the Hilbert space $\mathcal{H}$ carries a unitary representation not just
of the Poincar\'e but of the group $G^{\ast} =$universal cover of the Lorentzian
conformal group $\tmop{SO} (d, 2)$.\footnote{\label{LMcomplain}One also often
quotes L\"uscher and Mack {\cite{Luscher:1974ez}} for proving that CFT
correlation functions may be continued to the Minkowski cylinder $S^{d - 1}
\times \mathbb{R}$. This is a misquotation as they did not prove this, but
posed it as a conjecture. What they did prove was that CFT correlation
functions can be analytically continued to a domain of which $S^{d - 1} \times
\mathbb{R}$ is a real boundary. One still needs to establish a powerlaw bound
to take the boundary limit and obtain a distribution. We plan to derive this
fact in our future work {\cite{paper3}}, for 4-point functions, from the
Euclidean CFT axioms, using the $\rho, \bar{\rho}$ expansion.} \ Mack
{\cite{Mack:1975je}} then classified all unitary positive energy
representations of $G^{\ast}$. It should be mentioned that Refs.\ {\cite{Luscher:1974ez,Mack:1975je,Mack:1976pa}} only consider $d = 4$
spacetime dimensions. Many explicit group theoretic calculations are done only
for this value of $d$. The results should however generalize to arbitrary $d$
with appropriate modifications.

Continuing this program, Mack {\cite{Mack:1976pa}} studied distributional OPE
convergence in Minkowski CFT. Since we also have results of this kind (Sec.\ \ref{OPEHCFT}), it will be particularly interesting to compare with Mack's
discussion. So let us review his argument. Compared to
{\cite{Luscher:1974ez,Mack:1975je}}, Ref.\ {\cite{Mack:1976pa}} includes one
extra assumption: that the OPE $\varphi_i (x_1) \varphi_j (x_2)$ of two fields
acting on the Minkowski vacuum is valid in an asymptotic sense. Namely that
for some dense set of states $\psi$ we have\footnote{Mack assumes $x_2 = -
x_1$ but here for simplicity we will assume that this is valid for any $x_1,
x_2$.}
\begin{equation}
  \langle \psi | \nobracket \varphi_i (x_1) \varphi_j (x_2) \rangle \sim
  \sum_k C_{i j k} (x_1, x_2) \langle \psi | \nobracket \varphi_k (0) \rangle,
  \label{asOPE}
\end{equation}
as an asymptotic expansion for rescaling $x_1, x_2 \rightarrow \lambda x_1,
\lambda x_2$, where $C_{i j k} (x_1, x_2)$ are some $\psi$-independent
homogeneous functions: $C_{i j k} (\lambda x_1, \lambda x_2) =
\lambda^{\Delta_k - \Delta_i - \Delta_j} C_{i j k} (x_1, x_2)$.\footnote{In fact  $C_{i j k}$ is a distribution so homogeneity should be understood in the sense of pairing with a rescaled test function.} Asymptotic
means that if we truncate the expansion at $\Delta_k = \Delta_{\ast}$ and take
$\lambda \rightarrow 0$ limit for any fixed $x_1, x_2$ then the error is $o
(\lambda^{\Delta_{\ast} - \Delta_i - \Delta_j})$. Note that there are both
primaries and descendants among $\varphi_k$'s. The main result of
{\cite{Mack:1976pa}} is to convert this asymptotic expansion to an expansion
convergent in the Hilbert space sense.

The first step is to use a general result that any Hilbert space carrying a
unitary representation of a semisimple Lie group can be decomposed as a direct
integral of unitary irreducible representations {\cite{Mackey1}}. Since, by
the above-mentioned result of {\cite{Luscher:1974ez}}, $\mathcal{H}$ carries a
unitary representation of $G^{\ast}$, we thus have
\begin{equation}
  \mathcal{H}= \int^{\oplus} d \mu_{\chi}\, d \tilde{\mu}_{\nu}\,
  \mathcal{H}^{\chi \nu}, \label{directint}
\end{equation}
where $\chi$ labels different unitary irreps of $G^{\ast}$, $\chi =
(\Delta_{\chi}, \rho_{\chi})$ with $\Delta_{\chi}$ the scaling dimension and
$\rho_{\chi}$ a Lorentz group irrep, and $\nu$ labels different copies of the
same irrep. Only positive energy irreps may occur, since all states of
Wightman theory have positive energy. By definition, Eq.\ {\eqref{directint}}
identifies every vector $\psi \in \mathcal{H}$ with a Hilbert-space-valued
function $(\chi, \nu) \mapsto \psi_{\chi \nu} \in \mathcal{H}^{\chi}$, some
standard realization of the irrep $\chi$. It is assumed that $\langle \psi |
\psi \rangle < \infty$, inner products being given by the following integral:
\begin{equation}
  \langle \psi | \psi' \rangle \nobracket = \int d \mu_{\chi}\, d
  \tilde{\mu}_{\nu}\, \langle \psi_{\chi \nu} | \psi'_{\chi \nu}
  \rangle_{\mathcal{H}^{\chi}}, \label{psipsiprime}
\end{equation}
Also $G^{\ast}$ acts on $\psi$ by acting on each $\psi_{\chi \nu}$.
Furthermore, Ref.\ {\cite{Mack:1975je}} realized $\mathcal{H}^{\chi}$ as a
space of distributions $\psi (x)$ on Minkowski space\footnote{More properly
$\psi (x)$ is a distribution on the Lorentzian cylinder on which the group
$G^{\ast}$ acts naturally, but due to a periodicity condition it may be
reconstructed from its values on the Poincar\'e patch.} taking values in the
representation space of $\rho_{\chi}$, with Fourier transform supported in the
forward light cone, and for which the following inner product (defined
initially on a dense subset of smooth $\psi, \psi'$) is finite:
\begin{equation}
  \langle \psi | \psi' \rangle_{\mathcal{H}^{\chi}} = \int d x\, d y\,
  \overline{\psi (x)} I^{\chi} (x - y) \psi' (y),
  \label{Hchi}
\end{equation}
where $I^{\chi}$ is an ``intertwining kernel''. Physically, $I^{\chi}$
is the Minkowski CFT 2-point function of the primary in the ``shadow irrep'' of
$\chi$.

The above integration measure $d \mu_{\chi} d \tilde{\mu}_{\nu}$ depends on
the theory; from the abstract arguments alone it may be continuous or
discrete. Ref.\ {\cite{Mack:1976pa}} then proceeds to show that (a) this
measure is actually discrete (a sum of delta-functions), so that the direct
integral is a direct sum; (b) that the state $| \nobracket \mathcal{O}_i (x_1)
\mathcal{O}_j (x_2) \rangle$ produced by two Minkowski primary operators
acting on the Minkowski vacuum can be written as
\begin{equation}
  | \nobracket \mathcal{O}_i (x_1) \mathcal{O}_j (x_2) \rangle = \sum_{k, a}
  f^a_{i j k} \int d x\, B^a_{k, i j} (x, x_1, x_2) | \nobracket \mathcal{O}_k
  (x) \rangle, \label{OMack}
\end{equation}
where $B_{i j k}^a (x, x_1, x_2)$, $a = 1 \ldots N_{i j k},$ are some
kinematically determined distributions, the convergence is in the Hilbert
space sense after integrating with an arbitrary test functions $f (x_1, x_2)$,
and the local primary operators $\mathcal{O}_k$ occurring in this sum have
quantum numbers in the discrete set where the integration measure $d
\mu_{\chi} d \tilde{\mu}_{\nu}$ is supported.

To show how this comes about, let us focus on the case of scalar identical
$\mathcal{O}_i =\mathcal{O}_j =\mathcal{O}$ for simplicity. In this case
expansion {\eqref{OMack}} will end up being precisely our expansion
{\eqref{psipsik}} (although derived under very different assumptions), with
$B (x, x_1, x_2)$ related to the OPE kernel in {\eqref{OPE:Euclidean}}.

{Applying \eqref{psipsiprime} with 
$|\psi'\rangle =| \nobracket \mathcal{O} (x_1) \mathcal{O}(x_2) \rangle$ gives} (Eq.\ (2.6) in {\cite{Mack:1976pa}}):
\begin{equation}
  \langle \psi | \mathcal{O} (x_1) \mathcal{O} (x_2) \rangle \nobracket =
  \int d \mu_{\chi}\, d \tilde{\mu}_{\nu}\, c_{\chi \nu} \int d x\,
  \overline{\psi_{\chi \nu} (x)} B_{\chi} (x, x_1, x_2), \label{psiO}
\end{equation}
where we denoted $\int d y\, I^{\chi} (x-y) \psi_{\chi \nu}' (y) =
c_{\chi \nu} B_{\chi} (x, x_1, x_2)$ where $c_{\chi \nu}$ is a proportionality
factor, and $B_{\chi} (x, x_1, x_2)$ is a kinematically determined
distribution (it is basically the amputated Minkowski 3-point function $\langle
\mathcal{O}_{\chi} (x) \mathcal{O} (x_1) \mathcal{O} (x_2) \rangle$). Mack then undertakes a meticulous study of $B_{\chi} (x, x_1, x_2)$ and of its
Fourier transform with respect to the first argument $\hat{B}_{\chi} (p, x_1,
x_2)$. This actually takes most of his paper, and involves many explicit
nontrivial calculations (e.g.\ it involves the first ever explicit
characterization of the most general 3-point function of CFT primaries in
arbitrary irreps). One of the main results is that $\hat{B}_{\chi} (p, x_1,
x_2)$ are entire functions of $p$:
\begin{equation}
  \hat{B}_{\chi} (p, x_1, x_2) = \sum_{| \alpha | \geqslant 0}
  b^{\alpha}_{\chi} (x_1, x_2) (- i p)_{\alpha}, \label{Bhatexp}
\end{equation}
where $b^{\alpha}_{\chi} (x_1, x_2) = (x^2_{12})^{- \Delta_{\mathcal{O}} +
\Delta_{\chi} / 2}$ times a polynomial in $x_1, x_2$ of degree $| \alpha |$,
in particular $b^{\alpha}_{\chi} (\lambda x_1, \lambda x_2) =
\lambda^{\Delta_{\chi} + | \alpha | - 2 \Delta_{\mathcal{O}}}
b^{\alpha}_{\chi} (x_1, x_2)$.

Let us now specialize to states $\psi$ for which the function $\psi_{\chi \nu}
(x)$ has Fourier transform of compact support (one can show that such
states are dense in $\mathcal{H}$). Then the previous equations imply the
following convergent expansion for the integrand in {\eqref{psiO}}:
\begin{equation}
  \int d x\, \overline{\psi_{\chi \nu} (x)} B_{\chi} (x, x_1, x_2) =
  \sum_{\alpha} b^{\alpha}_{\chi} (x_1, x_2)  \overline{\partial^{\alpha}
  \psi_{\chi} (0)}, \label{Bhatint:exp}
\end{equation}
Mack then claims (before Eq.\ (2.11$'$)) that if, for each $\chi$, this
convergent expansion is truncated at $\Delta_{\chi} + | \alpha | =
\Delta_{\ast}$, and inserted back into {\eqref{psiO}}, this results in an
asymptotic expansion for the l.h.s.\ of {\eqref{psiO}}. I.e.\ for any
$\Delta_{\ast}$ (Mack does not write this equation explicitly):
\begin{equation}
  \langle \psi | \mathcal{O} (x_1) \mathcal{O} (x_2) \rangle \nobracket =
  \left\{ \int d \mu_{\chi}\, d \tilde{\mu}_{\nu}\, c_{\chi \nu}
  \sum_{\Delta_{\chi} + | \alpha | \leqslant \Delta_{\ast}} b^{\alpha}_{\chi}
  (x_1, x_2)  \overline{\partial^{\alpha} \psi_{\chi} (0)} \right\} + E
  (x_1, x_2 ; \Delta_{\ast}), \label{gap?}
\end{equation}
where the error term $E (\lambda x_1, \lambda x_2 ; \Delta_{\ast}) = O
(\lambda^{\Delta_{\ast} - 2 \Delta_{\mathcal{O}}})$ as $\lambda \rightarrow 0$
for any fixed $x_1$. Unfortunately, Mack does not give any justification of
this claim, which to us does not appear self-evident. The difficulty is that
although for every $\chi, \nu$ the truncated series has error $O
(\lambda^{\Delta_{\ast} - 2 \Delta_{\mathcal{O}}})$, the constant will
certainly depend on $\chi, \nu$. How do we know that the error estimate
survives after the integration in $\chi, \nu$? It might be possible to close
this omission in Mack's reasoning using normalizability of $| \psi \rangle
\nobracket$, but this needs extra arguments compared to what is given in his
paper, and we have not investigated this in detail.\footnote{We also tried, but unfortunately we did not manage, to get feedback from Prof. Gerhard Mack concerning this matter.}
Researchers relying on Mack's result should keep this caveat in mind.

Assuming that {\eqref{gap?}} is true, the argument is completed as follows. We
now have two asymptotic expansions for the l.h.s.\ of $\langle \psi |
\mathcal{O} (x_1) \mathcal{O} (x_2) \rangle \nobracket$, one coming from
{\eqref{gap?}}, and another from {\eqref{asOPE}}. The second one is discrete
(by assumption), so the first one also must be discrete. This establishes that
the measure $d \mu_{\chi} d \tilde{\mu}_{\nu}$ is discrete, a sum of delta
functions, hence we can write {\eqref{psiO}} with the r.h.s.\ as a sum, not an
integral:
\begin{equation}
  \langle \psi | \mathcal{O} (x_1) \mathcal{O} (x_2) \rangle \nobracket =
  \sum_n c_{\chi_n} \int d x\, \overline{\psi_{\chi_n} (x)} B_{\chi_n} (x,
  x_1, x_2) . \label{psiO1}
\end{equation}
A more detailed comparison of this equation with {\eqref{asOPE}} leads us to
conclude that (a)
\begin{equation}
  c_{\chi_n}  \overline{\psi_{\chi_n} (x)} = f_n \langle \psi |
  \mathcal{O}_{\chi_n} (x) \rangle, \label{psiO2}
\end{equation}
where $\mathcal{O}_{\chi_n}$ are primary operators related by rescaling to a
subset of the local operators $\varphi_k$, we choose them unit-normalized
(hence a coefficient $f_n$); (b) that all the other operators $\varphi_k$ are
the descendants $\mathcal{O}_{\chi_n}$'s; and (c) that all coefficients $C_k
(x_1, x_2)$ are basically the expansion coefficients $b^{\alpha}_{\chi_n}
(x_1, x_2)$ in {\eqref{Bhatexp}}. From {\eqref{psiO1}} and {\eqref{psiO2}}, we
have
\begin{equation}
  \langle \psi | \mathcal{O} (x_1) \mathcal{O} (x_2) \rangle \nobracket =
  \sum_n f_n \int d x\, \langle \psi | \mathcal{O}_{\chi_n} (x) \rangle
  B_{\chi_n} (x, x_1, x_2), \label{psiO3}
\end{equation}
for a dense set of states $\psi$. Because of the orthogonality of different $|
\nobracket \mathcal{O}_{\chi_n} (x) \rangle$'s, this implies that
\begin{equation}
   | \mathcal{O} (x_1) \mathcal{O} (x_2) \rangle \nobracket = \sum_n f_n
  \int d x\, B_{\chi_n} (x, x_1, x_2) | \nobracket \mathcal{O}_{\chi_n} (x)
  \rangle, \label{psiO4}
\end{equation}
the sum convergent in the Hilbert space sense after integrating out with any
test function $f (x_1, x_2)$. This is Eq.\ {\eqref{OMack}} in the considered
case $\mathcal{O}_i =\mathcal{O}_j =\mathcal{O}$.

\subsubsection{Relating Mack's kernel $B$ to the Euclidean OPE kernel $C$}

Now we would like to relate Mack's OPE kernel $B$ to our OPE kernel $C_{a,
(\lambda)}^{(\mu) (\nu)} (x_1, x_2, x_0, \partial_0)$ defined by Eqs.~(\ref{OPEgeneral}) and (\ref{Cexp}). We only consider the OPE kernel for the
scalar external operators for simplicity, i.e.\ $C_{\chi} (x_1, x_2, x_0,
\partial_0)$, $\chi = (\Delta, \ell = 0)$; similar remarks apply in the
general case. We first give the conclusion:
\begin{equation}
  C_{\chi} (x_1, x_2, x_0, \partial_0) = \underset{\mu}{\sum} b_{\chi}^{\mu}
  (x_1 - x_0, x_2 - x_0) \partial_0^{\mu}, \label{ope:relation}
\end{equation}
where the coefficient functions $b_{\chi}^{\mu}$ are the same as in
(\ref{Bhatexp}). One could ``derive'' this by using (\ref{OMack}) and formally
manipulating the integral in the momentum space:\footnote{Since here we are
only interested in the OPE kernels $B_{\chi}$ and $C_{\chi}$, we set $f_{\chi}
= 1$ (the overall coefficient) for convenience.}
\begin{eqnarray}
  | \mathcal{O} (x_1) \mathcal{O} (x_2) \rangle_{\chi} & = & \int d x\, B_{\chi}
  (x, x_1, x_2) | \nobracket \mathcal{O}_{\chi} (x) \rangle \nonumber\\
  & = & \int d p\, \hat{B}_{\chi} (p, x_1, x_2) | \nobracket
  \hat{\mathcal{O}}_{\chi} (p) \rangle = \underset{\mu}{\sum} b_{\chi}^{\mu}
  (x_1, x_2) | \partial^{\mu} \mathcal{O}_{\chi} (0) \rangle, 
\end{eqnarray}
which shows (\ref{ope:relation}) in the case when $x_0 = 0$. The general $x_0$
case follows by translation invariance. This derivation is not rigorous for
various reasons: (a) we did not clarify the meaning of
$\hat{\mathcal{O}}_{\chi} (p)$; (b) why can we swap the order of summation and
integration? (c) the above derivation is done in Minkowski region, how do we
match the coefficients $b_{\chi}^{\mu} (x_1, x_2)$ with the Euclidean
coefficients in (\ref{Cexp})?

Below we will give a rigorous justification of (\ref{ope:relation}), using
only the two- and 3-point functions which are kinematically determined by
conformal invariance. Recall that on the Euclidean side, the formal power
series of $C_{\chi}$ (the scalar version of (\ref{Cexp}))
\begin{equation}
  C_{\chi} (x_1, x_2, x_0, \partial_0) =
  \frac{1}{(x_{12}^2)^{\Delta_{\mathcal{O}} - \Delta_{\chi} / 2}}
  \underset{\mu}{\sum} c_{\chi}^{\mu} (x_{10}, x_{20}) \partial_0^{\mu}
  \label{Cexp:scalar}
\end{equation}
is determined by the Euclidean two- and 3-point functions:
\[ \langle \mathcal{O}_{\chi}^{\dag} (y) \mathcal{O} (x_1) \mathcal{O} (x_2)
   \rangle_E = \frac{1}{(x_{12}^2)^{\Delta_{\mathcal{O}} - \Delta_{\chi} / 2}}
   \underset{\alpha}{\sum} c_{\chi, \alpha} (x_{10}, x_{20}) \langle
   \mathcal{O}_{\chi}^{\dag} (y) \partial^{\alpha} \mathcal{O}_{\chi} (x_0)
   \rangle_E . \label{c:Euclidean} \]
Here we already used translation invariance, which implies $C_{a, (\lambda),
\alpha}^{(\mu) (\nu)} (x_1, x_2, x_0) = c_{a, (\lambda), \alpha}^{(\mu) (\nu)}
(x_{10}, x_{20})$ on the r.h.s.~of (\ref{Cexp}). One can match the coefficients
$c_{\chi}^{\mu} (x_{10}, x_{20})$ with the Taylor expansion of $\langle
\mathcal{O}_{\chi}^{\dag} (y) \mathcal{O} (x_1) \mathcal{O} (x_2) \rangle_E$
around $x_1 = x_2 = x_0$. In Euclidean one can always find a proper region for
the matching: let $y$ be sufficiently far from the $(x_0, x_1, x_2)$ cluster,
so that the Taylor expansions of $[(y - x_1)^2]^{\#}$ around $x_1 = x_0$ and
$[(y - x_2)^2]^{\#}$ around $x_2 = x_0$ are convergent.

On the Minkowski side, the OPE kernel $B_{\chi}$ is kinematically determined
by the equality
\begin{equation}
  \hat{G}_{\mathcal{O}_{\chi} \mathcal{O}\mathcal{O}} (p, x_1, x_2) =
  \hat{B}_{\chi} (p, x_1, x_2) \hat{G}_{\chi} (p), \label{Mackope:3-pointFT}
\end{equation}
where $\hat{G}_{\mathcal{O}_{\chi} \mathcal{O}\mathcal{O}} (p, x_1, x_2) =
\int d y\, \langle \mathcal{O}_{\chi}^{\dag} (y) \nobracket \mathcal{O} (x_1)
\mathcal{O} (x_2) \rangle_M e^{- i p \cdummy y}$ and $\hat{G}_{\chi} (p) =
\int d y\, \langle \mathcal{O}_{\chi}^{\dag} (y) \nobracket
\mathcal{O}_{\chi} (0) \rangle_M e^{- i p \cdummy y}$ (see
{\cite{Mack:1976pa}}, Eq.\ (8.2)).\footnote{In the unitary CFTs,
$\hat{G}_{\bar{n} \mathcal{O}\mathcal{O}} (p, x_1, x_2)$ and $\hat{G}_{\chi_n}
(p)$ vanish unless $p \in \overline{V_+}$, so the behavior of
$\hat{B}_{\chi_n} (p, x_1, x_2)$ outside the forward light cone is not
important.} All Fourier transforms here are in the sense of distributions. To
get an equation valid in the sense of functions we pick a test function
$\varphi$ with compactly supported Fourier transform, and integrate
{\eqref{Mackope:3-pointFT}} against $\hat{\varphi}$, which gives:
\begin{equation}
  \int \langle \mathcal{O}_{\chi}^{\dag} (x) \nobracket \mathcal{O} (x_1)
  \mathcal{O} (x_2) \rangle_M \varphi (x)\, d x = \int d p\, \hat{\varphi} (p)
  \hat{B}_{\chi} (p, x_1, x_2) \hat{G}_{\chi} (p) . \label{3-point:exp10}
\end{equation}
The variable $x$ ranges over the Minkowski space, while we will pick $x_1,
x_2$ complex, in the forward tube region
\begin{equation}
  \tmop{Im} (x_1), \tmop{Im} (x_2) \prec 0. \label{region:matching}
\end{equation}
Then the 3-point function $\langle \mathcal{O}_{\chi}^{\dag} (x) \nobracket
\mathcal{O} (x_1) \mathcal{O} (x_2) \rangle_M$ is nonsingular as a function of
$x$ and the l.h.s.\ of {\eqref{3-point:exp10}} is a finite number. To transform the
r.h.s.\ of {\eqref{3-point:exp10}} we will use the fact that $\hat{B}_{\chi}$ has
the following form (a more detailed version than (\ref{Bhatexp})):
\begin{equation}
  \hat{B}_{\chi} (p, x_1, x_2) = \frac{e^{- i p \cdummy
  x_1}}{(x_{12}^2)^{\Delta_{\mathcal{O}} - \Delta_{\chi} / 2}} E_{\chi}
  (x_{12} \cdummy p, x_{12}^2 p^2), \label{Bhat:entire}
\end{equation}
where $E_{\chi} (z_1, z_2)$ is some entire function on $\mathbb{C}^2$. Hence
as long as $x_{12}^2 \neq 0$ (not necessarily real), $\hat{B}_{\chi} (p, x_1,
x_2)$ has the following convergent expansion:
\begin{equation}
  \hat{B}_{\chi} (p, x_1, x_2) = \frac{e^{- i p \cdummy
  x_1}}{(x_{12}^2)^{\Delta_{\mathcal{O}} - \Delta_{\chi} / 2}}
  \underset{\alpha}{\sum} a_{\chi}^{\alpha} (x_{21}) (- i p)_{\alpha},
  \label{Bhatexp2}
\end{equation}

where $a_{\chi}^{\alpha} (x)$ is some $\tmop{SO} (1, d - 1)$-covariant,
homogeneous, symmetric polynomial of degree $| \alpha |$. Plugging this into
{\eqref{3-point:exp10}}, using that the expansion {\eqref{Bhatexp2}} converges
uniformly on the support of $\hat{\varphi}$ (assumed compact), and the fact
that $\hat{G}_{\chi}$ is a tempered measure,\footnote{This is a consequence of
the Bochner-Schwartz theorem: any positive tempered distribution is the
Fourier transform of some positive tempered measure (see {\cite{Vladimirov2}},
Sec.\ 8.2).} we obtain:
\begin{equation}
  \int \langle \mathcal{O}_{\chi}^{\dag} (x) \nobracket \mathcal{O} (x_1)
  \mathcal{O} (x_2) \rangle_M \varphi (x)\, d x =
  \frac{1}{(x_{12}^2)^{\Delta_{\mathcal{O}} - \Delta_{\chi} / 2}}
  \underset{\alpha}{\sum} a_{\chi}^{\alpha} (x_{21}) \int \langle
  \mathcal{O}_{\chi}^{\dag} (x) \partial_{\alpha} \mathcal{O}_{\chi} (x_1)
  \rangle_M \varphi (x)\, d x. \label{3-point:exp1}
\end{equation}
At this stage we have established that for any $x_1, x_2$ as in
{\eqref{region:matching}}, and for any test $\varphi$ with compact
$\hat{\varphi}$, the series in the r.h.s.\ converges to the l.h.s.

Now the key point is that the r.h.s.\ of {\eqref{3-point:exp1}} is a convergent
power series in $x_{21}$, while the l.h.s.\ can be expanded in such a
convergent power series. Indeed we know the explicit form of $\langle
\mathcal{O}_{\chi}^{\dag} (x) \nobracket \mathcal{O} (x_1) \mathcal{O} (x_2)
\rangle_M$:
\begin{equation}
  \langle \mathcal{O}_{\chi}^{\dag} (x) \nobracket \mathcal{O} (x_1)
  \mathcal{O} (x_2) \rangle_M = \frac{1}{(x_{12}^2)^{\Delta_{\mathcal{O}} -
  \Delta_{\chi} / 2}}  \frac{1}{[(x - x_1)^2 (x - x_2)^2]^{\Delta_{\chi} / 2}}
  . \label{3-point:explicit}
\end{equation}
For all $x$ in the Minkowski space, the function $[(x - x_1)^2 (x - x_2)^2]^{-
\Delta_{\chi} / 2}$ is holomorphic in $x_1, x_2$ as long as $\tmop{Im} (x_1),
\tmop{Im} (x_2) \prec 0$. It's easy to show that this remains true after
integration in $\varphi$. At this point we can match the expansions for the
two sides of {\eqref{3-point:exp1}}, and get
\begin{equation}
  \sum_{| \alpha | = n} \int d x\, \varphi (x) \left\{ \frac{
  (x_{21})_{\alpha}}{\alpha !} \partial_{x_2}^{\alpha} [(x - x_1)^2 (x -
  x_2)^2]^{- \Delta_{\chi} / 2}  | \nobracket_{x_2 = x_1} -
  a_{\chi}^{\alpha} (x_{21}) \langle \mathcal{O}_{\chi}^{\dag} (x)
  \partial_{\alpha} \mathcal{O}_{\chi} (x_1) \rangle_M \right\} = 0 .
  \label{phiint}
\end{equation}
Up to this point it was crucial to keep the function $\varphi$ in the game to
keep convergence issues under control, but now we can get rid of it. Indeed
$\underset{| \alpha | = n}{\sum}$ is a finite sum, also $\langle
\mathcal{O}_{\chi}^{\dag} (x) \partial_{\alpha} \mathcal{O}_{\chi} (x_1)
\rangle_M$ is a holomorphic function in the forward tube $\mathcal{T}_2$. For
any Minkowski point $x_0$, we choose a sequence of test functions $\varphi_k$
of compact support $\hat{\varphi}_k$ such that $\varphi_k$ tends to $\delta (x
- x_0)$. Passing to the limit, {\eqref{phiint}} implies the same equality for
the integrand. I.e.\ for any fixed $n \in \mathbb{N}$, and any Minkowski $x$,
\begin{equation}
  \frac{ (x_{21})_{\alpha}}{\alpha !} \partial_{x_2}^{\alpha} [(x - x_1)^2
  (x - x_2)^2]^{- \Delta_{\chi} / 2}  | \nobracket_{x_2 = x_1} -
  a_{\chi}^{\alpha} (x_{21}) \langle \mathcal{O}_{\chi}^{\dag} (x)
  \partial_{\alpha} \mathcal{O}_{\chi} (x_1) \rangle_M = 0. \label{ope:match}
\end{equation}
Now as promised we are reduced to an equation which only involves 2-point
and 3-point functions which are holomorphic. E.g.\ we can take $x = 0$ and
$x_1, x_2$ in Euclidean. Then this equation is the same one as the equation
which determines the Euclidean OPE kernel for $x_0 = x_1$, i.e.\ $C_{\chi}
(x_1, x_2, x_1, \partial)$. For convenience in this discussion we use
Minkowski coordinates for Euclidean correlators (i.e.\ we write the Euclidean
correlators as $\langle \nobracket \langle \mathcal{O} (- i \tau, \mathbf{x})
\ldots \rangle_M$). Under this convention we have
\begin{equation}
  C_{\chi} (x_1, x_2, x_1, \partial) =
  \frac{1}{(x_{12}^2)^{\Delta_{\mathcal{O}} - \Delta_{\chi} / 2}}
  \underset{(\mu)}{\sum} a_{\chi}^{\mu} (x_{21}) \partial_{\mu} =
  \underset{(\mu)}{\sum} b_{\chi}^{\alpha} (0, x_{21}) \partial_{\alpha} .
  \label{ope:kernelmatch1}
\end{equation}

This establishes (\ref{ope:relation}) for $x_0 = x_1$. The general case
reduces to this one by noticing that $c_{\chi}^{\alpha}$ satisfies the
relation:
\begin{equation}
  c_{\chi}^{\alpha} (x_{10}, x_{20}) = \underset{}{\underset{\beta
  \leqslant \alpha}{\sum}} \frac{1}{\beta !} c_{\chi}^{\alpha - \beta} (0,
  x_{21}) x_{10}^{\beta},
\end{equation}
where $\beta \leqslant \alpha$ means $\beta_i \leqslant \alpha_i$ for all $i$;
and $b_{\chi}$ satisfies the identical relation with $c_{\chi} \rightarrow
b_{\chi}$. For $c_{\chi}^{\alpha}$ this follows by translation invariance and
analyticity of CFT two- and 3-point functions, and for $b_{\chi}^{\alpha}$
from $\hat{B}_{\chi} (p, x_{10}, x_{20}) = e^{- i p \cdummy x_{10}}
\hat{B}_{\chi} (p, 0, x_{21})$.

\section{Review of Osterwalder-Schrader theorem}\label{OS}

In this section we review the results of
{\cite{osterwalder1973,osterwalder1975}} and, in particular, discuss the
linear growth condition and why it was necessary for establishing Wightman
axioms in {\cite{osterwalder1975}}.

In {\cite{osterwalder1973}} Osterwalder and Schrader formulated an equivalence
theorem which stated that a set of axioms for Euclidean correlation functions
(a version of the Osterwalder-Schrader axioms described in Sec.\ \ref{OSaxioms}) is
equivalent to Wightman axioms for Euclidean correlation functions.
Unfortunately, later a technical error was discovered in their proof, and in
{\cite{osterwalder1975}} Osterwalder and Schrader gave two new results.

The first result of {\cite{osterwalder1975}} is a revised equivalence
theorem, which shows that a stronger version of Euclidean axioms is in fact
equivalent to Wightman axioms. The proof of this theorem is rather simple.
However, as we will review, this is at the expense of the new version of
Euclidean axioms being rather hard to verify.

The second result of {\cite{osterwalder1975}} shows that the original OS
axioms, plus a ``linear growth condition,'' imply Wightman axioms and a growth
condition on Wightman distributions. A partial result in the reverse direction
is also valid. It assumes a stronger growth condition on the Wightman
distributions than follows from the direct result, and it yields a growth
condition on Euclidean correlators which is weaker than the linear growth
condition. Therefore, these latter results do not establish an equivalence of
any two systems of axioms, but they do allow to establish Wightman axioms from
OS axioms in some situations.

In what follows we will review the general structure of the arguments of
{\cite{osterwalder1973,osterwalder1975}}. For our purposes it will suffice to
ignore the space coordinates and focus only on the time arguments of the
fields. We will not completely reproduce all arguments of
{\cite{osterwalder1973,osterwalder1975}}, and in some of the omitted steps the
space arguments and Lorentz symmetry are important. We will also work with
correlation functions involving a single hermitian scalar field $\phi$,
similarly to {\cite{osterwalder1973,osterwalder1975}}. In CFT applications we
are interested in correlation functions of all local operators. It should be
relatively straightforward to adapt the discussion of
{\cite{osterwalder1973,osterwalder1975}} to this more general setup.

Our main goal is to construct an analytic continuation of the Euclidean
correlation functions
\begin{equation}
  G^E_n (t_1, \ldots, t_n) \equiv \langle \phi (t_1) \ldots \phi (t_n) \rangle
\end{equation}
from real to complex $t_k$, and to establish that the Wightman functions
recovered in the limit of pure imaginary $t_k$ (real Lorentzian times) are
tempered distributions. This is the most non-trivial part of the argument.
Other Wightman axioms such as positivity, spectrum condition, etc., follow
relatively easily and have been reviewed in Sec.\ \ref{sec:4-point}.

\subsection{The argument of {\cite{osterwalder1973}}}

Physically, the analyticity of position-space correlation functions is due to
positivity of energy. More concretely, the Euclidean evolution operator $e^{-
H t}$ is well-defined and holomorphic in $t$ for $\tmop{Re} t > 0$ due to the
spectrum of $H$ being non-negative. The first step to establishing analyticity
is thus to construct the operator $H$, and for this we first need to construct
a Hilbert space on which it acts.

The Hilbert space $\mathcal{H}^{\tmop{OS}}$ is constructed, as we discussed in
Sec.\ \ref{OSaxioms}, by considering the vector space
$\mathcal{H}^{\tmop{OS}}_0$ of formal linear combinations of
states\footnote{In Sec.\ \ref{OSaxioms} the states are introduced as
integrals of these quantities. This is also what is done in
{\cite{osterwalder1973,osterwalder1975}}, since they assume only that the
Euclidean correlators are distributions. Here, for simplicity of discussion,
we use the knowledge that correlators are functions and use states evaluated
at points. The arguments easily generalize to distributions and smeared
states, but become more technical.}
\begin{equation}
  | \phi (t_1) \phi (t_2) \ldots \phi (t_n) \rangle \label{basisstate}
\end{equation}
with $0 > t_1 > t_2 > \cdots > t_n .$ A Hermitian inner product is introduced
on $\mathcal{H}^{\tmop{OS}}_0$ by
\begin{equation}
  \langle \phi (s_1) \phi (s_2) \ldots \phi (s_m) \nobracket | \phi (t_1) \phi
  (t_2) \ldots \phi (t_n) \rangle \equiv G_n^E (- s_m, \ldots, - s_1, t_1,
  \ldots, t_n) .
\end{equation}
By OS reflection positivity-axiom this inner product is positive-semidefinite.
The Hilbert space $\mathcal{H}^{\tmop{OS}}$ is obtained from
$\mathcal{H}^{\tmop{OS}}_0$ by modding out null states and completing the
resulting quotient space with respect to the above inner product. We can
naturally think of $| \phi (t_1) \phi (t_2) \ldots \phi (t_n) \rangle$ as
states in $\mathcal{H}^{\tmop{OS}}$.

Physically, to construct the Hamiltonian $H$, we first define it by its action
on {\eqref{basisstate}}. Then we note that $H$ has to be positive, otherwise
the correlation functions would grow exponentially at large distances.
Formally, one first defines for $t > 0$ an operator $U_t$ on
$\mathcal{H}^{\tmop{OS}}_0$ by
\begin{equation}
  U_t | \phi (t_1) \phi (t_2) \ldots \phi (t_n) \rangle \equiv
  {| \phi (t_1 - t) \phi (t_2 - t) \ldots \phi (t_n - t)
  \rangle} .
\end{equation}
The usual care must be taken to ensure that this defines an operator on
$\mathcal{H}^{\tmop{OS}}$. For this one notes that for any $\Psi \in
\mathcal{H}^{\tmop{OS}}_0$ we have $| \nobracket \langle \Psi | U_t | \Psi
\rangle | \nobracket \leqslant P (t)$ for some polynomial $P (t)$ since the
Euclidean correlation functions are assumed to be powerlaw-bounded when groups
of points are separated to infinity. Then a simple estimate gives
\begin{equation}
  | \nobracket \langle \Psi | U_t | \Psi \rangle | \nobracket \leqslant \|
  \Psi \| \| U_t \Psi \| = \| \Psi \| | \nobracket \langle \Psi | U_{2 t} |
  \Psi \rangle | \nobracket^{1 / 2} \leqslant \cdots \leqslant \| \Psi \|^{1 +
  1 / 2 + \cdots + 1 / 2^{n - 1}} | \nobracket \langle \Psi | U_{2^n t} | \Psi
  \rangle | \nobracket^{1 / 2^n} .
\end{equation}
Using $| \langle \Psi | U_t | \Psi \rangle | \leqslant P (t)$ we get in the
limit $n \rightarrow \infty$
\begin{equation}
  | \nobracket \langle \Psi | U_t | \Psi \rangle | \nobracket \leqslant \|
  \Psi \|^{1 + 1 / 2 + \cdots + 1 / 2^{n - 1}} (P (2^n t))^{1 / 2^n}
  \rightarrow \| \Psi \|^2 . \label{contractive}
\end{equation}
This shows that $U_t$ maps null states to null states and thus is defined on
(a dense subset of) $\mathcal{H}^{\tmop{OS}}$. By the above, it is also a
bounded operator, so it extends in a unique way to all of
$\mathcal{H}^{\tmop{OS}}$. Furthermore, noting that it is symmetric, of norm
at most 1, and we have the semigroup law $U_t U_s = U_{t + s}$, we find that
$U_t = e^{- H t}$ for a non-negative self-adjoint Hamiltonian $H$ (see, e.g.,
{\cite{funcan}} Sec.\ 141).

Since the domain in which we need to construct the analytic continuation of
$G_n^E$ is awkward to define in $t_k$ variables, we introduce the difference
variables $y_k \equiv t_k - t_{k + 1} .$ Due to translation invariance, $G_n^E
(t_1 \ldots t_n)$ can be rewritten as
\begin{equation}
  G_n^E (t_1 \ldots t_n) = S_{n - 1} (y_1 \ldots y_{n - 1})
\end{equation}
for some functions $S_n$. Similarly, we will use the following notation for
states in terms of $y_k$ variables,
\begin{equation}
  | \Psi_n (- t_1 ; y_1 \ldots y_{n - 1}) \rangle \equiv | \phi (t_1) \phi
  (t_2) \ldots \phi (t_n) \rangle .
\end{equation}
Note that
\begin{equation}
  \langle \nobracket \Psi_m (t' ; y'_1 \ldots y'_{m - 1}) | \Psi_n (t ;
  y_1 \ldots y_{n - 1}) \rangle = S_{m + n - 1} (y_{m - 1}', \ldots, y_1', t'
  + t, y_1, \ldots, y_{n - 1}) .
\end{equation}
In terms of $S_{n - 1} (y_1 \ldots y_{n - 1})$, our goal is to construct an
analytic continuation to $\tmop{Re} y_k > 0$ and show that the limit of all
$\tmop{Re} y_k \rightarrow 0^+$ exists in the sense of tempered distributions.

With a positive $H$ now constructed, we can define a holomorphic family of
bounded operators $U_{\tau} = e^{- H \tau}$ for $\tmop{Re} \tau > 0$, which
will be our main tool for analytically continuing the correlation functions
$S_n$. In particular, we can now consider the matrix elements
\begin{equation}
  \langle \nobracket \Psi_m (t' ; y'_1 \ldots y'_{m - 1}) | U_{\tau} |
  \nobracket \Psi_n (t ; y_1 \ldots y_{n - 1}) \rangle = S_{m + n - 1}
  (y_{m - 1}', \ldots, y_1', t' + t + \tau, y_1, \ldots, y_{n - 1}),
\end{equation}
which are analytic for $\tmop{Re} \tau>0$. This establishes the desired
analyticity of $S_{n - 1} (y_1 \ldots y_{n - 1})$ in each variable $y_k$
separately. In {\cite{osterwalder1973}} they additionally establish some
growth conditions on these individual holomorphic functions which then imply that
for fixed $y_k, y'_k$ and $\tmop{Re} \tau > 0$ the above function can be
represented as the Fourier-Laplace transform
\begin{equation}
  S_{m + n - 1} (y_{m - 1}', \ldots, y_1', \tau, y_1, \ldots, y_{n - 1}) =
  \int d \alpha\, e^{- \alpha \tau}  \check{S} (\alpha)
\end{equation}
for some tempered distribution $\check{S} (\alpha)$. In other words, $S_{m + n
- 1}$ can be extended to a holomorphic function in the right-half plane in each
variable separately, and each such holomorphic function can be represented as a
Fourier-Laplace transform of a tempered distribution. The erroneous Lemma 8.8
of {\cite{osterwalder1973}} states that under these conditions, the full
function $S_{m + n - 1}$ is a simultaneous Fourier-Laplace transform in all
its variables of a tempered distribution,
\begin{equation}
  S_{m + n - 1} (\tau_1 \ldots \tau_{m + n - 1}) = \int d \alpha\, e^{- \alpha_1
  \tau_1 - \cdots - \alpha_{m + n - 1} \tau_{m + n - 1}}  \check{S}_{m + n
  - 1} (\alpha_1 \ldots \alpha_{m + n - 1}) . \label{FLtransform}
\end{equation}
From this, the tempered Wightman distributions are obtained immediately by
setting $\tmop{Re} \tau_k \rightarrow 0^+$ in which case the Fourier-Laplace
transform above becomes a Fourier transform of a tempered distribution.
Fourier transform of a tempered distribution is, of course, itself tempered.

\subsection{The argument of {\cite{osterwalder1975}}}

\subsubsection{Fixing the equivalence theorem}

Unfortunately, Lemma 8.8 of {\cite{osterwalder1973}} is wrong. As explained in
{\cite{osterwalder1975}}, the function $S_2 (y_1, y_2) = e^{- y_1 y_2}$ gives
a simple counter-example. For fixed $y_2 > 0,$we find that $S_2 (y_1, y_2)$ is
holomorphic for $\tmop{Re} y_1 > 0$ and is there the Fourier-Laplace transform of
the tempered distribution $\delta (\alpha - y_2)$. The same statements hold
with $y_1$ and $y_2$ exchanged. However, $S_2 (y_{1,} y_2)$ is not a
Fourier-Laplace transform of a tempered distribution in both variables
simultaneously. For if this were the case, the corresponding Wightman function
$S_2 (i x_1, i x_2) = e^{x_1 x_2}$ would be a tempered distribution, which it
is not since it grows faster than any power in some directions.

The first result of {\cite{osterwalder1975}} (see also the review in \cite{simon1974}) 
rescues Lemma 8.8 by making a
stronger assumption about $S_n (y_1 \ldots y_n)$ which they denote by
$\check{E 0}$. Concretely, let $\mathbb{R}^n_+$ be the set of points $(y_1,
\ldots, y_n)$ with $y_k > 0.$ Let $\mathcal{S} (\mathbb{R}_+^n)$ be the
subspace of the space of Schwartz functions, consisting of functions supported
on $\mathbb{R}^n_+$ with the induced topology. The functions $S_n (y_1 \ldots
.y_n)$ can be viewed as distributions in the continuous dual space
$\mathcal{S}' (\mathbb{R}_+^n)$ defined by, for $f \in \mathcal{S}
(\mathbb{R}_+^n)$
\begin{equation}
  S_n (f) \equiv \int d y_1 \ldots d y_n\, S_n (y_1 \ldots y_n) f (y_1 \ldots
  y_n) .
\end{equation}
Note that smoothness of $f$ together with its support properties ensures that
$f (y_1 \ldots y_n)$ vanishes with all derivatives whenever $y_k = y_j$ for $k
\neq j$. The assumption that $S_n$ has at most powerlaw singularities at
coincident points and at infinity means that $S_n (f)$ is continuous in $f$ in
the topology of $\mathcal{S} (\mathbb{R}_+^n)$. The additional assumption
$\check{E 0}$ is that it is also continuous in $f$ in a weaker topology. This
weaker topology is defined by the usual Schwartz norms on
$\overline{\mathbb{R}_+^n}$
\begin{equation}
  | g |_{p, +} = \sup_{x \in \overline{\mathbb{R}_+^n}, | \alpha | \leqslant
  p} | (1 + x^2)^{p / 2} \partial^{(\alpha)} g (x) |
\end{equation}
but applied not to $f$ and instead to its Fourier-Laplace\footnote{As written,
this is a Laplace transform. It is a Fourier transform in the spatial
variables which we are omitting.} transform $\check{f}$
\begin{equation}
  \check{f} (q_1 \ldots q_n) \equiv \int d y_1 \ldots d y_n\, e^{- q_1 y_1 -
  \cdots - q_n y_n} f (y_1 \ldots y_n) .
\end{equation}
One establishes that $\check{f} = 0$ iff $f = 0$ (injectivity) and that the
set of all images $\check{f}$ is dense in an appropriate space of Schwartz
functions (denseness). The proof of {\eqref{FLtransform}} then becomes
straightforward: one first defines $\check{S}_n$ by $\check{S}_n (\check{f}) =
S_n (f)$. This definition makes sense due to the injectivity property. The
assumption $\check{E 0}$ ensures that $\check{S}_n$ is continuous. The
denseness property just mentioned then allows to extend $\check{S}_n$ to an
appropriate space of Schwartz functions by continuity, establishing
temperedness of $\check{S}_n$ and allowing one to define tempered Wightman
distributions as Fourier transforms of $\check{S}_n$. It is similarly not
difficult to show that Wightman axioms imply $\check{E 0} .$

\subsubsection{Wightman axioms from linear growth condition}

As we can see, the axiom $\check{E 0}$ is not very different from directly
assuming temperedness of Wightman distributions, even though it is formulated
for Euclidean correlators. It is also unclear how to verify this axiom in
practice.\footnote{We would also like to mention related work by Zinoviev
{\cite{Zinoviev1995}}. Zinoviev replaces axiom $\check{E 0}$ by an axiom E5
which imposes that certain limits exist which allow to compute the inverse
Laplace transform of $S_n$. While E5 may look like a more constructive version
of $\check{E 0}$, in practice its verification appears just as hard as
assuming outright that $S_n$ is a Laplace transform (which is what $\check{E
0}$ essentially does). We are grateful to David Brydges for an enlightening
explanation of Zinoviev's construction, and in particular for pointing out
that it represents a generalization of Post's Laplace transform inversion
formula {\cite{Post}} to the case of distributions.} For this reason,
{\cite{osterwalder1975}} introduced an alternative ``linear growth condition''
on the correlation functions $S_n$ which is easier to verify and has been
established in some models (see below), yet is also sufficient to establish
temperedness of Wightman functions (though this condition is not known to follow
from Wightman axioms). The construction of the analytic continuation of the
functions $S_n$ as well the proof of the temperedness of the resulting
Wightman distributions is much more complicated than using $\check{E 0}$.
Therefore, our review of these arguments will be even more schematic than the
above, and we will only try to illustrate the key ideas and explain why and
how the linear growth condition is used. We are not aware of any previous attempt to review this part of \cite{osterwalder1975}.

First of all, let us state the linear growth condition of
{\cite{osterwalder1975}}. Note that the correlation functions $G_n^E$ can be
viewed as distributions in $({}^0 \mathcal{S})' (\mathbb{R}^{d \cdummy n})$,
where ${}^0 \mathcal{S} (\mathbb{R}^{d \cdummy n})$ is the space of Schwartz
functions of $n$ arguments in $\mathbb{R}^d$ which vanish with all derivatives
at coincident points, by
\begin{equation}
  G_n^E (f) = \int d^d x_1 \ldots d^d x_n\, f (x_1, \ldots, x_n) G_n^E (x_{1,}
  \ldots, x_n) .
\end{equation}
Here we have temporarily restored the spatial coordinates. In fact,
{\cite{osterwalder1973,osterwalder1975}} do not assume that $G_n^E$ are
functions, and only that they are distributions in $({}^0 \mathcal{S})'
(\mathbb{R}^{d \cdummy n})$. It follows, however, from the OS axioms (without
the linear growth condition) that $G_n^E$ are real-analytic functions, as
shown in {\cite{Glaser1974,osterwalder1975}}.

Note that the assumption that $G_n^E \in ({}^0 \mathcal{S})' (\mathbb{R}^{d
\cdummy n})$ means $G_n^E$ is sufficiently continuous as a linear functional or, equivalently, is sufficiently bounded. That is,
\begin{equation}
  | G_n^E (f) | \leqslant \sigma_n | f |_{q_n} \label{J0distr}
\end{equation}
for all $f \in {}^0 \mathcal{S} (\mathbb{R}^{d \cdummy n})$ and some $\sigma_n
> 0$ and $q_n \in \mathbb{Z}_{\geqslant 0}$, where $| f |_p$ denotes the
Schwartz norms on ${}^0 \mathcal{S} (\mathbb{R}^{d \cdummy n})$. The linear
growth condition requires $q_n$ to grow at most linearly, and $\sigma_n$ at
most as a power of a factorial. In other words, the linear growth condition is
the statement that there exists a positive integer $s$ and a sequence
$\sigma_n$ such that
\begin{equation}
  | G_n^E (f) | \leqslant \sigma_n | f |_{n \cdummy s} \label{lineargrowth}
\end{equation}
for any $n$ and $f \in {}^0 \mathcal{S} (\mathbb{R}^{d \cdummy n})$, and
$\sigma_n \leqslant \alpha (n!)^{\beta}$ for some constants $\alpha \comma
\beta$.

The unusual feature of the linear growth condition is that this is a condition
on all $n$-point correlation functions $G_n^E .$ It has to hold for all $n$ in
order for the result of {\cite{osterwalder1975}} to imply, say, even just the
temperedness of $3$-point Wightman distribution. In order to understand why
this is required, below we will review the basic strategy behind the proof of
{\cite{osterwalder1975}}. There are two steps in the argument. In the first
step, one establishes analyticity of $S_n (y_1, \ldots ., y_n)$ in the region
$\tmop{Re} y_k > 0.$ This does not require the linear growth condition
{\cite{Glaser1974,osterwalder1975}}. In the second step, which does use the
linear growth condition, one proves a bound on $S_n$ in this region, which
allows the application of Vladimirov's theorem and thus the construction of
tempered Wightman distributions.

We conclude this section with additional comments about the linear growth
condition. First of all, Appendix of {\cite{osterwalder1975}} shows that the
linear growth condition follows from requiring that $G_n^E \in \mathcal{S}'
(\mathbb{R}^{d n})$ and imposing
\begin{equation}
  | G_n^E (f_1 \otimes \ldots \otimes f_n) | \leqslant \sigma_n \prod_{i =
  1}^n | f_i |_r, \label{E0pp}
\end{equation}
for any $n$, where $f_i \in \mathcal{S} (\mathbb{R}^d)$, $| \cdot |_r$ is some
fixed Schwartz space norm, and $\sigma_n \leqslant \alpha (n!)^{\beta}$. In
other words, while in {\eqref{lineargrowth}} the $n$-point function variables
are smeared jointly, here each variable is smeared separately. Note that the
total smearing $f_1 \otimes \ldots \otimes f_n$ does not necessarily exclude
coincident points, that's why we need to assume $G_n^E \in \mathcal{S}'
(\mathbb{R}^{d n})$ and not $G_n^E \in ({}^0 \mathcal{S})' (\mathbb{R}^{d
\cdummy n})$ as above.

Although {\eqref{E0pp}} is stronger than {\eqref{lineargrowth}}, it is easier
to verify in particular models. E.g.\ it holds for any gaussian scalar field
$\mathcal{O}$ with a two point function $G_2$ having a powerlaw asymptotics in
the UV.\footnote{Then $G_n^E (f_1 \otimes \ldots \otimes f_n)$ is a sum of $(n - 1) !!$
terms, products of Wick contractions $G_2  (f_i \otimes f_j)$, which can be
bounded by $A | f_i |_r | f_j |_r$ where $r$ depends on the UV dimension of
$\mathcal{O}$. We thus get {\eqref{E0pp}} with $\sigma_n = (n - 1) !!A^{n/2}$.}
It has been also established in some non-gaussian models.\footnote{See e.g.~\cite{GJS_os_axiom},Theorem 1.1.8, which establishes Eq.~\eqref{E0pp} for Schwinger functions of arbitrarily high normal-ordered powers $:\!\phi^n\!:$ of the fundamental field  $\phi$ in weakly coupled $P(\phi)_2$ theories.}

More generally, bound {\eqref{E0pp}} is natural for field theories realizable
as random distributions.\footnote{See~\cite{Glimm:1981xz}, Sec.\ 6. This book introduced axioms for random distributions, numbered OS0-OS5. This chosen name is a bit unfortunate because these axioms are quite different in spirit from the original Osterwalder-Schrader axioms  described in Sec.~\ref{OSaxioms}, and appear much stronger. E.g.~they make the recovery of Wightman axioms a relatively trivial task. We don't know how to derive the axioms of \cite{Glimm:1981xz} from the Euclidean CFT axioms.} Imagine that there
is a measure $d \mu$ in the space of distributions $\phi \in \mathcal{S}'
(\mathbb{R}^d)$ such that for every test function $f \in \mathcal{S}
(\mathbb{R}^d)$ the following expectation value is finite:
\begin{equation}
  S (f) = \int d \mu \, e^{\phi (f)}\,. \label{genf}
\end{equation}
Such measures make rigorous sense of the Feynman path integral. Eq.
{\eqref{genf}} is a rigorous version of generating functional, and
differentiating with respect to $f$ one defines correlation functions $\langle
\phi (x_1) \ldots \phi (x_n) \rangle$ which are in this framework
automatically distributions in $\mathcal{S}' (\mathbb{R}^{d n})$. Bound
{\eqref{E0pp}} in this case can be reduced to an estimate on the growth of $S
(f)$. The Osterwalder-Schrader and Wightman axioms then follow.

{The field $\phi$ in \eqref{genf} is naturally a ``fundamental field'' of some model, such as $P(\phi)_2$ \cite{Glimm:1981xz} or $(\phi^4)_3$ (see \cite{Glimm:1981xz}, Sec. 23.1 for references). Sometimes this framework can be extended to generating functionals $\int d \mu \, e^{\phi' (f)}$ where $\phi'$ is a composite operator. E.g. $\phi'=\,:\!\phi^n\!:$, $n<\deg P$, in $P(\phi)_2$ is treated in \cite{Glimm:1981xz}. See also 
\cite{Abdesselam:2016npc} for the general problem to construct $:\!\phi^2\!:$ as a random distribution given $\phi$.}

\subsubsection{Analytic continuation}

There are three tricks used together to construct the analytic continuation of $S_n$.
The first trick was already used above: it is the observation that if the
states $| \Psi_n (t ; y_{1,} \ldots, y_n) \rangle$ and $| \Psi_m (t' ; y'_{1,}
\ldots, y'_m) \rangle$ are defined for some values of $t, y_k$ and $t'$,
$y'_k$, then we can compute the matrix elements
\begin{equation}
  \langle \nobracket \Psi_m (t' ; y'_1 \ldots y'_{m - 1}) | U_{\tau} |
  \nobracket \Psi_n (t ; y_1 \ldots y_{n - 1}) \rangle = S_{m + n - 1}
  (\overline{y_{m - 1}'}, \ldots, \overline{y_1'}, t' + t + \tau, y_1, \ldots,
  y_{n - 1}) \label{trick1}
\end{equation}
with $\mathrm{Re}\,\tau>0$, thus potentially extending the domain of analyticity of $S_{m + n - 1}$.

The second trick, intuitively, says that we can write
\begin{equation}
  \langle \Psi_n (t ; y_1 \ldots y_{n - 1}) | \nobracket \Psi_n (t ; y_1
  \ldots y_{n - 1}) \rangle = S_{2 n - 1} (\overline{y_{n - 1}}, \ldots,
  \overline{y_1}, 2 t, y_1, \ldots, y_{n - 1}), \label{trick2}
\end{equation}
and so the state $| \Psi_n (t ; y_1 \ldots y_{n - 1}) \rangle$, whose norm
appears in the left-hand side, should be well-defined as long as the
correlation function in the right-hand side is well-defined. That is, while we
start with the states $| \Psi_n (t ; y_1 \ldots y_{n - 1}) \rangle$ defined
for positive real $y_k$, we should be able to analytically continue them in
$y_k$ if we manage to analytically continue the correlators $S_{2 n - 1} .$ Of
course, this is not a proof that $| \Psi_n (t ; y_1 \ldots y_{n - 1})
\rangle$ is well-defined. We will give the proof below, after we get more
information about the domain in which we wish to construct it.

The final trick is the idea of analytic completion for functions of several
complex variables. Recall that for $n > 1$ not every domain in $\mathbb{C}^n$
is the domain of holomorphy of some holomorphic function: there exist domains
$\mathcal{D} \subset \mathbb{C}^n$ such that any $f$ holomorphic in
$\mathcal{D}$ can be extended to a function holomorphic in a strictly larger
domain $\mathcal{D}' \supset \mathcal{D}$. For our applications the relevant
theorem is Bochner's tube theorem, which states that any holomorphic function
in a tube domain of the form $\mathcal{D}=\mathbb{R}^n + i X$, where $X$ is a
connected open subset of $\mathbb{R}^n$, can be extended to a holomorphic
function on $\mathcal{D}' = \tmop{ch} (\mathcal{D}) =\mathbb{R}^n + i
\tmop{ch} (X),$ where $\tmop{ch}$ denotes the convex hull. Note that since
$\mathcal{D}'$ is a convex set, it is a domain of holomorphy\footnote{To see
this, it suffices to show that for any point $z$ on the boundary of
$\mathcal{D}'$ there exists a function holomorphic in $\mathcal{D}'$ but
singular at $z$. In general, such functions might not exist since the set of
singularities of a holomorphic function cannot be arbitrary. However, it is
easy to construct a function singular on any complex codimension-1 hyperplane
in $\mathbb{C}^n$ (take the reciprocal of an affine-linear function). For a
convex $\mathcal{D}'$ one can always find such a hyperplane passing through a
given boundary point but staying away from the interior of $\mathcal{D}'$.}
and so $f$ cannot be extended any further by analytic completion alone. The
requirement that $X$ is open is a bit too restrictive and we'll need also a
degenerate case of this theorem, as described below.

These three tricks are applied one by one infinitely many times in order to
construct the full analytic continuation of $S_n .$ Instead of setting up the
procedure in its full glory, we will only follow the first steps to see how it
works in principle. The full analysis is performed in
{\cite{osterwalder1975}}.

First, it helps to introduce new variables $w_i$ by
\begin{equation}
  e^{w_i} = y_i .
\end{equation}
Our domains of analyticity in terms of $w_i$ will always be tubes of the form
$(w_1 \ldots w_n) \in \mathcal{D} (X) \equiv \mathbb{R}^n + i X$ for various
$X \subset \mathbb{R}^n$, and so we'll often just describe $X$. For example,
we start with $S_n$ and $\Psi_n$ defined for real positive $y_i$, which
corresponds to the domain $\mathcal{D} (\{ 0 \})$ in $w_i$.

Consider the 2-point function $S_1 (y_1)$. We start with the domain
$\mathcal{D} (\{ 0 \}) =\mathbb{R}$ in $w_1$, corresponding to real positive
$y_1 .$ Next, we apply the first trick. Specifically, we write
\begin{equation}
  \langle \nobracket \Psi_1 (t') | U_{\tau} | \nobracket \Psi_1 (t)
  \rangle = S_1 (t' + t + \tau),
\end{equation}
and since we are free to choose $t > 0$ and $t' > 0$ arbitrarily small, while
$U_{\tau}$, as discussed above, is a well-defined bounded operator for
$\tmop{Re} \tau > 0$, we obtain an analytic continuation of $S_1 (y_1)$ to the
right half-plane.

We are now done with the analytic continuation of $S_1$, since our goal was to
continue all $y_k$ to the right-half plane. In terms of $w_i$, this
corresponds to the domain $w_1 \in \mathcal{D} \left( \left( - \frac{\pi}{2},
+ \frac{\pi}{2} \right) \right)$, i.e.\ a strip. For higher-point functions, in
terms of $w_i$, we should stop when our domain of analyticity is $\mathcal{D}
\left( \left( - \frac{\pi}{2}, + \frac{\pi}{2} \right) \times \cdots \times
\left( - \frac{\pi}{2}, + \frac{\pi}{2} \right) \right)$.

Consider now the 3-point function $S_2 (y_1, y_2)$. We can again use the
first trick and define it on $(w_1, w_2) \in \mathcal{D} (X_2)$, where $X_2 =
\{ 0 \} \times \left( - \frac{\pi}{2}, + \frac{\pi}{2} \right) \cup \left( -
\frac{\pi}{2}, + \frac{\pi}{2} \right) \times \{ 0 \}$ (see Fig.\ \ref{extend1}, left). In more detail, we write the following two equations for
$S_2 (y_1, y_2)$, representing it as an inner product in two ways, and
inserting a $U_{\tau},$
\begin{equation}
  \langle \Psi_2 (t' ; y'_1) | U_{\tau} | \Psi_1 (t) \rangle = S_2
  (y_1', t' + t + \tau), \label{s2trick11}
\end{equation}
\begin{equation}
  \langle \Psi_1 (t') | U_{\tau} | \Psi_2 (t, y_1) \rangle = S_2 (t' + t
  + \tau, y_1), \label{s2trick12}
\end{equation}
where the left-hand sides are well-defined (at this point) for $t, t', y_1,
y_1'>0$ and $\tmop{Re} \tau > 0$. We see that the first equation defines $S_2
(y_1, y_2)$ for real $y_1 > 0$ and $\tmop{Re} y_2 > 0$ as a holomorphic function
of $y_2$. The second equation does the same, but with $y_1$ and $y_2$
exchanged. In terms of $(w_1, w_2)$ this corresponds to the ``analyticity
domain'' $\mathcal{D} (X_2)$ described above. We write ``analyticity domain''
in quotes because $\mathcal{D} (X_2)$ is not open (and has empty interior),
and thus is not a domain. Correspondingly, we cannot say that $S_2$ is an
holomorphic function of two variables on $\mathcal{D} (X_2)$. We will deal with
this problem momentarily.

\begin{figure}[h]\centering
  \raisebox{-0.5\height}{\includegraphics[width=4.27990948445494cm,height=4.16486291486291cm]{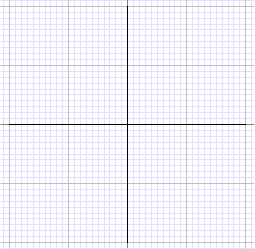}}
  $\Longrightarrow$
  \raisebox{-0.5\height}{\includegraphics[width=4.27990948445494cm,height=4.16486291486291cm]{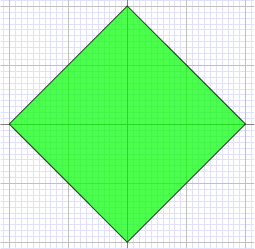}}
  \caption{\label{extend1}Left: set $X_2$. Right: domain $X_2'$ which defines
  the envelope of holomorphy $\mathcal{D} (X_2')$ of $\mathcal{D} (X_2)$.}
\end{figure}

To proceed with the analytic continuation of $S_2 (y_1, y_2)$, we want to use
the third trick, the tube theorem, to extend the analyticity domain from
$\mathcal{D} (X_2)$ to $\mathcal{D} (X_2')$, with $X_2' \equiv \tmop{ch}
(X_2)$ (Fig.\ \ref{extend1}, right).

The problem with this is that $X_2$ is not open, as mentioned above, so the
tube theorem does not apply. Instead, for this step one has to use
Malgrange-Zerner theorem {\cite{Epstein:1966yea}}, which allows $X_2$ to be a
union of intervals, with $S_2 (y_1, y_2)$ separately holomorphic in one variable
on each of these intervals, as is the case in our setup. The conclusion is
still that $S_2 (y_1, y_2)$ can be analytically continued to $\mathcal{D}
(X_2')$.

Note that the domain $\mathcal{D} (X_2')$ is not yet the full analyticity
domain $\mathcal{D} \left( \left( - \frac{\pi}{2}, + \frac{\pi}{2} \right)
\times \left( - \frac{\pi}{2}, + \frac{\pi}{2} \right) \right)$ that we are
aiming for. In particular, $X_2'$ is a proper subset of the square $\left( -
\frac{\pi}{2}, + \frac{\pi}{2} \right) \times \left( - \frac{\pi}{2}, +
\frac{\pi}{2} \right)$, see the right panel of Fig.\ \ref{extend1}.
Importantly, it doesn't approach the corners $\left( \pm \frac{\pi}{2}, \pm
\frac{\pi}{2} \right)$, which correspond to pure imaginary $y_1, y_2$. Pure
imaginary $y_1, y_2$ is, in turn, where we want to recover the Wightman
distributions.

To extend the domain of analyticity of $S_2 (y_1, y_2)$ even further, we need
to first extend the domain of $\Psi_2 (t, y_1)$, which can be done by the
second trick above --- via the equality
\begin{equation}
  \langle \Psi_2 (t, y_1) | \Psi_2 (t, y_1) \rangle = S_3 (\overline{y_1}, 2
  t, y_1) .
\end{equation}
Note that we are not interested in the analytic continuation in $t$ here ---
it is automatic when we act on $\Psi_2$ with $e^{- H t}$ --- so we can assume
$t$ is real. For $S_3$ we can run the same argument as we just did for $S_2$
and conclude that it is holomorphic in $\mathcal{D} (X_3')$, where $X_3'$ is the
convex hull of three intersecting intervals on coordinate axes (an
octahedron). As discussed above, we expect that $\Psi_2 (t, y_1)$ is defined
whenever $t$ and $y_1$ are such that the arguments of $S_3$ above are in its
analyticity domain. This happens whenever
\[ (\overline{w_1}, \log 2 t, w_1) \in \mathcal{D} (X_3'), \]
which is equivalent to
\begin{equation}
  (\tmop{Im} \overline{w_1}, \tmop{Im} \log 2 t, \tmop{Im} w_1) \in X_3' .
\end{equation}
Since we take $t$ to be real and positive, we have $\tmop{Im}
\log 2 t = 0$ and so $t$ is otherwise unconstrained. By construction of $X_3'$
and $X_2'$, $(\tmop{Im} \overline{w_1}, 0, \tmop{Im} w_1) \in X_3'$ is
equivalent to $(\tmop{Im} \overline{w_1}, \tmop{Im} w_1) \in X_2'$. Using
$\tmop{Im} \overline{w_1} = - \tmop{Im} w_1$, we conclude that $w_1$ is
constrained by
\begin{equation}
  (- \tmop{Im} w_1, \tmop{Im} w_1) \in X_2' .
\end{equation}
This is equivalent to $| \tmop{Im} w_1 | < \frac{\pi}{4}$, which is the same
as $w_1 \in \mathcal{D} \left( \left( - \frac{\pi}{4}, + \frac{\pi}{4} \right)
\right) .$ To conclude, we expect $\Psi_2 (t, y_1)$ to be defined and holomorphic
in $y_1$ for $t > 0$ and $w_1 \in \mathcal{D} \left( \left( - \frac{\pi}{4}, +
\frac{\pi}{4} \right) \right)$.

We can now apply the first trick to $S_2 (y_1, y_2)$ again, writing it as
inner product of $\Psi_1$ and $\Psi_2$ in the two ways {\eqref{s2trick11}} and
{\eqref{s2trick12}}. However, this time we can use $\Psi_2 (t, y_1)$ in a
wider domain of $y_1$, as computed above, equivalent to $w_1 \in \mathcal{D}
\left( \left( - \frac{\pi}{4}, + \frac{\pi}{4} \right) \right)$. From
{\eqref{s2trick11}} we conclude that $S_2 (y_1, y_2)$ is analytic for
\begin{equation}
  (w_1, w_2) \in \mathcal{D} \left( \left( - \frac{\pi}{4}, + \frac{\pi}{4}
  \right) \times \left( - \frac{\pi}{2}, + \frac{\pi}{2} \right) \right),
\end{equation}
where the domain of analyticity in $w_1$ comes from that of $\Psi_2 (t, y_1)$,
and in $w_2$ from $e^{- H \tau}$. Similarly, {\eqref{s2trick12}} now implies
analyticity in the domain
\begin{equation}
  (w_1, w_2) \in \mathcal{D} \left( \left( - \frac{\pi}{2}, + \frac{\pi}{2}
  \right) \times \left( - \frac{\pi}{4}, + \frac{\pi}{4} \right) \right) .
\end{equation}
Combining the two together, we find that $S_2 (y_1, y_2)$ is analytic for
$(w_1, w_2) \in \mathcal{D} (X_2'')$, where
\begin{equation}
  X_2'' \equiv \left( - \frac{\pi}{4}, + \frac{\pi}{4} \right) \times \left( -
  \frac{\pi}{2}, + \frac{\pi}{2} \right) \cup \left( - \frac{\pi}{2}, +
  \frac{\pi}{2} \right) \times \left( - \frac{\pi}{4}, + \frac{\pi}{4}
  \right),
\end{equation}
see the left panel of Fig.\ \ref{extend2}.

\begin{figure}[h]\centering
	\centering
  \raisebox{-0.5\height}{\includegraphics[width=4.27990948445494cm,height=4.16486291486291cm]{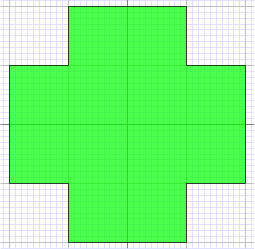}}
  $\Longrightarrow$
  \raisebox{-0.5\height}{\includegraphics[width=4.27990948445494cm,height=4.16486291486291cm]{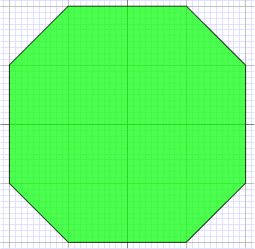}}
  \caption{Left: set $X_2''$. Right: domain $X_2'''$ which defines the
  envelope of holomorphy $\mathcal{D} (X_2''')$ of $\mathcal{D}
  (X_2'')$.\label{extend2}}
\end{figure}

Using the tube theorem, we can now extend the analyticity domain from
$\mathcal{D} (X_2'')$ further to $\mathcal{D} (X_2''')$, where $X_2''' \equiv
\tmop{ch} (X_2'')$ is the convex hull of $X_2''$ shown in the right panel of
Fig.\ \ref{extend2}.

We see that in order to analytically continue the 3-point function $S_2$,
it was useful to split it into an inner product of one-operator and
two-operator states $\Psi_1$ and $\Psi_2$, and use the information about the
latter that is provided by its norm, the 4-point function $S_3 .$ Still, we
have not yet managed to analytically continue $S_2$ to the entire region of
interest (we still have the corners missing in the right panel of Fig.\ \ref{extend2}). The only way to fix this is to extend the region of
analyticity of $S_3 .$ For that, we have to split it into a product of two
states, and extend the region of analyticity of these states. It is useless to
split it as a product of two $\Psi_2$ states, since their norm is computed by
$S_3$ itself and we won't learn anything new in this way. Instead, we have to
split it as a product of $\Psi_1$ and $\Psi_3$. This will lead us to consider
the norm of $\Psi_3$, which is computed by the six-point function $S_5$.
Following this logic, eventually, we will be forced to consider $S_n$ with
arbitrarily high $n$ just in order to construct the analytic continuation of
$S_2$. Fortunately, it can be shown that this procedure converges to the
desired domain $\mathcal{D} \left( \left( - \frac{\pi}{2}, + \frac{\pi}{2}
\right) \times \cdots \times \left( - \frac{\pi}{2}, + \frac{\pi}{2} \right)
\right)$ for all $S_n$, see {\cite{osterwalder1975}} for details.

To finish the discussion of the analytic continuation of $S_n$, let us justify
the second trick, which constructs the states $\Psi_n$ based on analyticity of
their norm $S_{2 n - 1} .$ Let $C$ be the domain of analyticity of $S_{2 n -
1} (y_1 \ldots y_{2 n - 1})$ known to us, expressed in terms of $w_i$, and let
$D$ be the domain of the arguments $t, w_1 \ldots w_{n - 1}$ of $\Psi_n (t ;
y_1 \ldots y_{n - 1})$ for which the arguments of $S_{2 n - 1}$ in the
right-hand side of
\begin{equation}
  \langle \Psi_n (t ; y_1 \ldots y_{n - 1}) | \nobracket \Psi_n (t ; y_1
  \ldots y_{n - 1}) \rangle = S_{2 n - 1} (\overline{y_{n - 1}}, \ldots,
  \overline{y_1}, 2 t, y_1, \ldots, y_{n - 1}),
\end{equation}
belong to $C$. As is clear from the above discussion, $C$ (expressed in terms
of $w_i$) is always of the form $C =\mathcal{D} (X)$ for some $X$. We
similarly have $D =\mathcal{D} (Y)$ for some $Y$. By the tube theorem (or
Malgrange-Zerner theorem), we can assume that $X$ (and thus also $Y$)
is open, non-empty, and convex. Furthermore, it is easy to convince oneself
that $X$, and thus $Y$, is invariant under reflections along any of the
coordinate real axes (i.e.\ sending $w_i$ to $\overline{w_i}$ for some $i$).

Suppose now that we have a point $\tmmathbf{} (t ; w_1^0 \ldots w_{n - 1}^0)
\in D$. Then by definition of $D$ we have
\begin{equation}
  p \equiv \tmmathbf{} (\overline{w_{n - 1}^0} \ldots \overline{w_1^0}, \log 2
  t, w_1^0 \ldots w_{n - 1}^0) \in C.
\end{equation}
The above properties imply that there are $r_i > 0$ such that the polydisk
\begin{equation}
  P = \{ (w_1 \ldots w_{2 n - 1})  |  | w_i | < r_i \} \nobracket + \tmop{Re}
  p
\end{equation}
is contained in $C$, $P \subset C$, and moreover $p \in P.$ Indeed, since $C
=\mathcal{D} (X),$ this will be true if $\tmop{Im} P \subset X$ and $\tmop{Im}
p \in \tmop{Im} P$.\footnote{The latter is because $\tmop{Im}$ of the section
of $P$ by $\tmop{Re} x = \tmop{Re} p$ is $\tmop{Im} P$.} By construction,
$\tmop{Im} P$ is a box with sides $2 r_i$ centered at 0. On the other hand,
the properties of $X$ imply that together with any point $x$, $X$ contains
such a box with $x$ being one of its vertices. We can then find an
$\varepsilon > 0$ such that $(1 + \varepsilon) \tmop{Im} p \in X$, and take
$\tmop{Im} P$ to be the box defined by the vertex $x = (1 + \varepsilon)
\tmop{Im} p$. See Fig.\ \ref{CPfig} for an intuitive picture.

\begin{figure}[h]\centering
  \raisebox{-0.5\height}{\includegraphics[width=8.94149285058376cm,height=5.35478158205431cm]{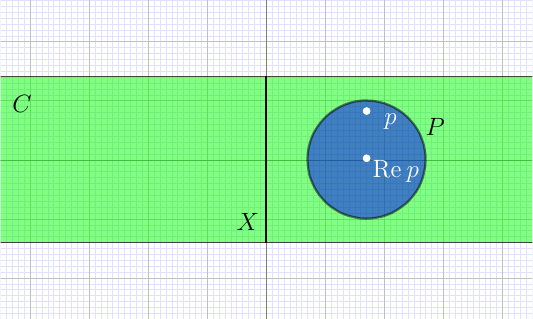}}
  \caption{Schematic picture of the tube $C$ and polydisk $P$.\label{CPfig}}
\end{figure}

Writing temporarily the state $\Psi_n$ as a function of $w_k$ instead of
$y_k$, we define it at $w_k$ by the Taylor series
\begin{equation}
  | \Psi_n (t ; w_1 \ldots w_{n - 1}) \rangle \equiv \sum_{\alpha} \frac{(w -
  \tmop{Re} w^0)^{\alpha}}{\alpha !} \partial^{\alpha} | \Psi_n (t ; \tmop{Re}
  w^0_1 \ldots \tmop{Re} w^0_{n - 1}) \rangle \label{stateext}
\end{equation}
($\alpha$ is a multiindex so $w^{\alpha} = w_1^{\alpha_1} w_2^{\alpha_2}
\ldots$ etc.). Note that the state in the right-hand side is well defined since
the corresponding $y_k = e^{\tmop{Re} w^0_k} > 0$. To check whether this
Taylor series converges, we look at its remainder
\begin{equation}
  \sum_{| \alpha | > N} \frac{(w - \tmop{Re} w^0)^{\alpha}}{\alpha !}
  \partial^{\alpha} | \Psi_n (t ; \tmop{Re} w^0_1 \ldots \tmop{Re} w^0_{n -
  1}) \rangle,
\end{equation}
whose norm squared is
\begin{equation}
  \sum_{| \alpha | > N} \sum_{| \beta | > N} \frac{(w - \tmop{Re}
  w^0)^{\alpha}}{\alpha !} \frac{(\bar{w} - \tmop{Re} w^0)^{\beta}}{\beta !}
  \partial^{\alpha} \partial^{\beta} S_{2 n - 1} (\tmop{Re} w^0_{n - 1} \ldots
  \tmop{Re} w^0_1, \log 2 t, \tmop{Re} w^0_1 \ldots \tmop{Re} w^0_{n - 1}),
\end{equation}
where $\beta$-derivatives act on the first $n - 1$ arguments of $S_{2 n - 1}$,
while $\alpha$-derivatives act on the last $n - 1$ arguments. Here we also
temporarily write $S_{2 n - 1}$ as function of $w_k$. This norm is clearly
just the tail of the Taylor series of $S_{2 n - 1}$ expanded around the point
$\tmop{Re} p$, and evaluated at $(\overline{w_{n - 1}}, \ldots,
\overline{w_1}, \log 2 t, w_1, \ldots, w_{n - 1})$. (We are not expanding in
$t$.) Since $S_{2 n - 1}$ is holomorphic in the polydisk $P$ centered at
$\tmop{Re} p$, this Taylor series converges in $P$ and thus this remainder
tends to $0$ there.

Since $p = \tmmathbf{} (\overline{w_{n - 1}^0} \ldots \overline{w_1^0}, \log
2 t, w_1^0 \ldots w_{n - 1}^0) \in P$, the remainder tends to 0 at $p$, and
thus {\eqref{stateext}} converges at $\tmmathbf{} (t ; w_1^0 \ldots w_{n -
1}^0)$. Furthermore, since $P$ is open, it follows that {\eqref{stateext}}
converges in some neighborhood of $(t ; w_1^0 \ldots w_{n - 1}^0)$, defining
$| \Psi_n (t ; w_1 \ldots w_{n - 1}) \rangle$ as a holomorphic
$\mathcal{H}^{\tmop{OS}}$-valued function in that neighborhood. Since the
choice of $\tmmathbf{} (t ; w_1^0 \ldots w_{n - 1}^0) \in D$ was arbitrary, we
have defined $| \Psi_n (t ; w_1 \ldots w_{n - 1}) \rangle$ as a holomorphic
function of $w_i$ for all points in $D$.

\subsubsection{Temperedness bound}

Now that the correlation functions $S_n (y_1 \ldots .y_n)$ have been
analytically continued from $y_k > 0$ to $\tmop{Re} y_k > 0$, we only need to
establish a bound on their growth as $\tmop{Re} y_i \rightarrow 0$ in order to
construct tempered Wightman distributions by an application of Vladimirov's
theorem. The logic proceeds by establishing a bound on $S_n (y_1 \ldots y_n)$
for real $y_k$, and then repeating the analytic continuation described above,
while keeping track of this bound. We will only sketch this rather technical
argument in very general terms.

The final temperedness bound that we want to establish is
\begin{equation}
  | S_n (y_1 \ldots y_n) | \leqslant c_n \left( \left( 1 + \sum_k | y_k |
  \right) \left( 1 + \sum_k (\tmop{Re} y_k)^{- 1} \right) \right)^{p_n},
  \label{tempbound}
\end{equation}
for some sequences $c_n$ and $p_n$.\footnote{Here, for simplicity, we again
ignore spatial arguments of the correlation functions, although they need to
be taken care of at this step in order to establish ``temperedness in spatial
directions.'' Furthermore, note that Osterwalder and Schrader establish
additional bounds on $c_k$, etc., which are not important for the application
of Vladimirov's theorem.} We would like {\eqref{tempbound}} to hold for all
$y_k,$ $\tmop{Re} y_k > 0.$ For real positive $y_k$ (i.e.\ in the Euclidean)
this holds as a consequence of {\eqref{OSmod}}. As discussed in Remark
\ref{OSnewVSold}, the original OS axioms did not include {\eqref{OSmod}}, so
their first step was to derive {\eqref{tempbound}} for $y_k > 0$ using
{\eqref{J0distr}}.

In principle at fixed $n$, {\eqref{tempbound}} looks reasonable given
{\eqref{J0distr}}: both say, intuitively, that the correlation functions
cannot be too singular at coincident points or grow too fast at infinity.
However, {\eqref{tempbound}} imposes this in a much more direct way. It turns
out that in general one cannot derive direct bounds such as
{\eqref{tempbound}} from averaged statements such as {\eqref{J0distr}}, even
if we know that $S_n$ is real analytic.

Consider the real-analytic function $\sin (e^x)$, $x \in \mathbb{R}$. It is a
bounded function, hence a tempered distribution. Thus its first derivative
$e^x \cos (e^x)$ is also a tempered distribution. This is an example of a
real-analytic tempered distribution which is not polynomially bounded. So some
further assumptions are needed beyond real analyticity.\footnote{Incidentally,
our example shows that the Corollary of Lemma 1 in {\cite{Glaser1974}} is
wrong.} \

In our case, the functions $S_n (y_1 \ldots y_n)$ are real-analytic and
satisfy {\eqref{J0distr}}. In addition, they satisfy OS positivity. We already
used OS positivity to show real-analyticity, and we will now have to invoke it
again to prove {\eqref{tempbound}} for $y_k > 0$. The full argument is rather
technical; we will explain the main idea on the example of $S_1 (y)$. Since
we know that $S_1$ is holomorphic, in particular harmonic, by the mean value
theorem for harmonic functions we can write it as a radially symmetric average
\begin{eqnarray}
  S_1 (y) & = & \int d x\, d t\, S_1 (y + x + i t) k_{\rho}
  (x, t) \nonumber\\
  & = & \int_{| t |, | t' | < \rho} d t\, d t'\, T (t | \nobracket g_{\rho}
  (\cdot, t + t'), g_{\rho} (\cdot, t')),  \label{SfromT}\\
  T (t | \nobracket \varphi_1, \varphi_2) & \assign & \int d x\, d x'\, S_1 (y
  + x + i t) \varphi_1 (x + x') \varphi_2 (x'), \nonumber
\end{eqnarray}
where $k_{\rho}$ is a $C_0^{\infty}$ radial function supported in a ball of
radius $\rho$ and of integral 1, and we choose $\rho$ sufficiently small so
that all points under the integral sign are where $S_2$ is analytic. We also
chose
\begin{equation}
  k_{\rho} (x, t) = \int d x'\, d t'\, g_{\rho} (x + x', t + t') g_{\rho} (x',
  t'),
\end{equation}
a convolution of another radial $C_0^{\infty}$ function with itself (and hence
a radial function). The point of this construction is that, for generic
$\varphi_1, \varphi_2$, $T (0 | \nobracket \varphi_1, \varphi_2)$ is an inner
product $\langle \Psi_1 | \Psi_2 \rangle$ of two OS states:
\begin{equation}
  \langle \Psi_1 | \nobracket = \int d x\, \mathcal{O} (y / 2 + x) \varphi_1
  (x), \qquad | \nobracket \Psi_2 \rangle = \int d x\, \mathcal{O} (- y / 2 + x)
  \varphi_2 (x) .
\end{equation}
The norm of these states, and hence their inner product, can be bounded using
{\eqref{J0distr}}. Furthermore $T (t | \nobracket \varphi_1, \varphi_2) =
\langle \Psi_1 | e^{- i H t} | \nobracket \Psi_2 \rangle$ satisfies the same
bound. Using this bound for $\varphi_1 = g_{\rho} (\cdot, t + t'), \varphi_2 =
g_{\rho} (\cdot, t')$, Eq.\ {\eqref{SfromT}} gives a bound on $S_1 (y)$. The
same idea works for higher point functions. We first have to estimate the norm
of some states using {\eqref{J0distr}}.\footnote{Note that the linear growth
condition is not needed at this point: Eq.\ {\eqref{J0distr}} with some
$\sigma_n$ and $q_n$ suffices to establish {\eqref{tempbound}} with some $c_n$
and $p_n$. The linear growth condition gives in addition $c_n$ of factorial
growth and $p_n$ growing at most linearly. This turns out important later in
the proof, see below.} We then analytically continue separately in each time,
and then use Malgrange-Zerner theorem to extend the bound on $T$ to an open
set. A single use of Malgrange-Zerner theorem suffices here, like in Fig.\ \ref{extend1}. We refer the reader to Sec.\ VI.1 of {\cite{osterwalder1975}}
for full details.

Once {\eqref{tempbound}} is established for $y_k > 0$, one repeats the
analytic continuation procedure that we described above, keeping track of the
implications of {\eqref{tempbound}}. The analytic continuation used three
tricks: (1) analytically continuing $S_n$ by representing it in the form
{\eqref{trick1}} (as $e^{- H \tau}$ inserted between two states), (2)
expressing the norms of these states in terms of higher-point $S_n$ as in
{\eqref{trick2}}, and (3) analytic completion.

The bound {\eqref{tempbound}} propagates through the tricks (1) and (2) by the
use of Cauchy-Schwarz inequality, as well as by using the fact that the norm
of $e^{- H \tau}$ is bounded from above by 1 (i.e.\ Eq.
{\eqref{contractive}}).

To propagate the bound through trick (3), the following simple idea is used.
Suppose we have domains $\mathcal{D}' \supset \mathcal{D}$ such that any
holomorphic function $f$ on $\mathcal{D}$ can be extended to a holomorphic
function on $\mathcal{D}'$. Then we have the equality of images
\begin{equation}
  f (\mathcal{D}') = f (\mathcal{D}),
\end{equation}
and in particular
\begin{equation}
  \sup_{z \in \mathcal{D}'} | f (z) | = \sup_{z \in \mathcal{D}} | f (z) | .
\end{equation}
To see this, suppose $a \in \mathbb{C}$ is a value which $f$ assumes in
$\mathcal{D}'$ but not in $\mathcal{D}$. Then the function $(f (z) - a)^{- 1}$
is holomorphic in $\mathcal{D}$ but has a singularity in $\mathcal{D}'$, which
is a contradiction. This shows that if we have a bound on $f$ in
$\mathcal{D}$, it is also valid in $\mathcal{D}'$.

Finally, recall that in order to construct the analytic continuation of
$S_{n_0}$ for some fixed $n_0$, we had to use $S_n$ with arbitrarily high $n$
in the process. This means that in order to establish the bound
{\eqref{tempbound}} on $S_{n_0}$ for all $\tmop{Re} y_k$, we have to use
{\eqref{tempbound}} for $y_k > 0$ for $S_n$ with arbitrarily high $n$. These
bounds need to combine in a way that is strong enough to establish
{\eqref{tempbound}} for $S_{n_0}$. For this, it is important that $c_n$ is of
factorial growth and $p_n$ grows at most linearly. This requires the same of
the sequence $\sigma_n$ and the index of the seminorm in {\eqref{J0distr}},
explaining the need for the linear growth condition.

\section{Conclusions}\label{conclusions}

In this paper we studied the relationship between the modern Euclidean CFT
axioms (which we formulated in Sec.\ \ref{ECFTax}) and the more traditional
Osterwalder-Schrader and Wightman axioms. We showed that at least for $(n
\leqslant 4)$-point functions, both OS and Wightman axioms follow from the
Euclidean CFT axioms. Our Euclidean CFT axioms are quite modest. In
particular, beyond the minimal assumptions of regularity of correlators and
the standard constraints of unitarity, we assumed only a very weak form of the
convergent OPE.

Our derivation of Wightman axioms is of particular importance: it shows that
the conformal Wightman 4-point functions are well-defined tempered
distributions for arbitrary configurations of the 4 points, even when no OPE
channel is convergent in the sense of functions. We have furthermore shown
that these tempered distributions can always be computed by a conformal block
expansion which is convergent in the sense of distributions, generalizing
our previous results in cross-ratio space {\cite{paper1}}, and giving a
derivation of Mack's results {\cite{Mack:1976pa}} from Euclidean CFT axioms.

\begin{figure}[h]\centering
  \raisebox{-0.481498738926148\height}{\includegraphics[width=7.92247146792601cm,height=5.45475862521317cm]{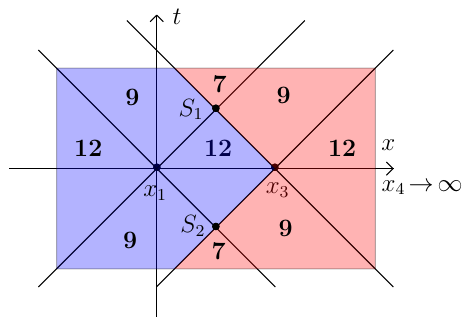}}
  \caption{\label{134}Minkowski configurations with $x_1 = 0, x_3 = \hat{e}_1,
  x_4 = \infty \hat{e}_1$ and $x_2 = t \hat{e}_0 + x \hat{e}_1$. Blue:
  configurations where $| \rho |, | \bar{\rho} | < 1$ and the 4-point
  functions is analytic. Red: configurations where $| \rho |$ and/or $|
  \bar{\rho} | =1$. Boldface numbers $X =\tmmathbf{7}, \tmmathbf{9},
  \tmmathbf{1}\tmmathbf{2}$ denote the causal type of the configuration
  according to {\cite{Qiao:2020bcs}} (excluding lightlike separations). $S_{1,
  2}$ are double light-cone singularities.}
\end{figure}

For example, consider the configuration in Fig.\ \ref{134}, where the
operators in a 4-point function are inserted at $x_1 = 0, x_3 = \hat{e}_1,
x_4 = \infty \hat{e}_1$, while $x_2 = t \hat{e}_0 + x \hat{e}_1$ is allowed to
move in a plane parametrized by $(t, x) .$ The cross-ratios for this
configuration are $z, \bar{z} = x \pm t$. It is then easy to see that for
$x_2$ in the blue region of Fig.\ \ref{134} $| \rho |, | \bar{\rho} | < 1$
and the s-channel OPE converges in the sense of functions. Our results imply
that the s-channel OPE also converges in the red region where $| \rho |$ and/or
$| \bar{\rho} | = 1$, but now the convergence is in the sense of
distributions. In particular, the 4-point function is at least a
distribution for all values of $x_2$. Of course, in some regions of Fig.\ \ref{134} this was obviously true -- for example, in the red part of the
regions $\tmmathbf{9}, \tmmathbf{1}\tmmathbf{2}$ (labeling according to the
classification in {\cite{Qiao:2020bcs}}), one can show that the 4-point
function is real-analytic using the convergent t-channel OPE. One may hope
to establish real-analyticity also in the region \tmtextbf{7} using
u-channel OPE. {This would indeed be the case for the ordering
$\langle \mathcal{O} (x_2) \mathcal{O} (x_1) \mathcal{O} (x_3) \mathcal{O}
(x_4) \rangle$. However, for the ordering \ $\langle \mathcal{O} (x_1)
\mathcal{O} (x_2) \mathcal{O} (x_3) \mathcal{O} (x_4) \rangle$ that we are
discussing here}, it turns out that no OPE channel converges in region
\tmtextbf{7} in the sense of functions.\footnote{For a reader comfortable with
cuts in $z, \bar{z}$ plane the intuitive argument is simple: we have $z < 0$
(on what we'll call s-channel cut), $\bar{z} > 1$ (on t-channel cut).
Furthermore according to the operator ordering, when $z$ crosses $0$ we need
to make $\tmop{Im} t$ slightly negative (and thus $\tmop{Im} z$ slightly
negative), because $x_2$ at this point crosses the null cone of $x_1$, and
when $\bar{z}$ crosses 1 we need to make $\tmop{Im} t$ slightly positive (and
thus $\tmop{Im} \bar{z}$ slightly negative) because it corresponds to $x_2$
crossing the null cone of $x_3$. Thus both $z$ and $\bar{z}$ end on lower
sides of their respective cuts, and so one of them must have crossed the
$u$-channel cut at $(0, 1)$ when analytically continuing from a Euclidean
configuration. We conclude that s- and t- channel OPEs are only
distributionally convergent, while $u$-channel is badly divergent.} Therefore,
before our work it was not at all clear whether this correlator makes any
sense in region \tmtextbf{7} if we assume only the Euclidean CFT axioms.

While we have shown that the correlator is \tmtextit{at least }distributional
in region \tmtextbf{7}, we have not excluded the possibility of it being
real-analytic there. For example, in 2 dimensions Virasoro symmetry implies
that the 4-point function is analytic everywhere away from light-cone
singularities {\cite{Maldacena:2015iua}}. This is perhaps too much to expect
in higher dimensions, but one can still ask whether analyticity can be
established in a larger domain. One approach is to ask for the envelope of
holomorphy of the known domain of analyticity. {Since the 4-point function is
essentially only a function of two cross-ratios, this might be a tractable
question \cite{PetrEnvelope}. We leave working out the full consequences of this idea for future work.\footnote{Another approach could be via alternative representations of the 4-point function having an extended region of analyticity, e.g.~\cite{Caron-Huot:2020nem}.}}

%We leave it for future work.

{In an upcoming paper {\cite{paper2a}}, we will generalize our
results to external operators with spins.} In addition, there are many other
fundamental open questions which we believe are important to understand. First
of all, this paper is concerned with properties of CFT Wightman functions in
Minkowski space. However, it is expected that Lorentzian CFTs should be
naturally defined on Minkowski cylinder {\cite{Luscher:1974ez}}, which is the
smallest physically-sensible space on which finite conformal transformations
can act. Yet, it is not known whether CFT Wightman functions can be defined as
tempered distributions on Minkowski cylinder (see note \ref{LMcomplain}).
Answering this question in the positive for CFT $(n \leqslant 4)$-point
functions is the main goal of our forthcoming paper {\cite{paper3}}.

An important problem is to extend our results to $(n > 4)$-point functions. As
we discuss in App.\ \ref{OShigher}, even deriving the OS
axioms might require some strengthening of the OPE axiom. Another interesting
possibility is to formulate Euclidean CFT axioms as OS axioms supplemented
with a very weak form of the OPE (for example, asymptotic OPE in Euclidean
space). This is perhaps less attractive, since it is desirable to formulate
CFT axioms directly in terms of the CFT data (scaling dimensions and OPE
coefficients). However, it will still be interesting to establish an
equivalence between OS+(weak OPE axiom) and (possibly a stronger version of)
our Euclidean CFT axioms, perhaps using arguments similar to those of
{\cite{Mack:1976pa}}. Once OS axioms are established, it is likely that a
strategy similar to that of the present paper can be pursued to establish
Wightman axioms, using a comb-like OPE channel.

In this paper we only considered Wightman functions, but in
practice one often needs time-ordered Minkowski correlators. Textbook definition
of time-ordered correlators involves multiplying Wightman functions by
$\theta$-functions implementing time ordering. Since Wightman functions are in
general distributions, this definition does not make rigorous sense at
coincident points. As a matter of fact, time-ordered correlators have not
been rigorously defined just from Wightman axioms alone (see e.g.\ {\cite{bogolubov2012general}}, p.505) in a general QFT. In a general QFT
setting, it is known that defining time-ordered Minkowski correlators is
closely related to defining Euclidean correlators at coincident points
{\cite{Eckmann:1979vq}}. In the future, it would be interesting to define
time-ordered CFT Minkowski correlators as distributions just from Euclidean
CFT axioms.\footnote{Time-ordered Minkowski correlators are expected to be to some extent ambiguous at coincident points, just as the Euclidean correlators. For external CFT operators of non-integer external dimension one may choose to fix this ambiguity imposing scale invariance. Ambiguity will remain for external fields of integer dimensions, like the stress tensor.}

A more ambitious goal is to understand the relationship of CFT axioms to
Haag-Kastler axioms. This appears to be considerably harder since these axioms
deal with operator algebras rather than local correlation functions, and some
qualitatively new ideas seem to be required.

\section*{Acknowledgements}

Some of our results were first presented in a talk at the Simons Foundation
{\cite{talkSimons}}, in lectures at the IPHT Saclay {\cite{lecturesSaclay}},
and in an online seminar {\cite{seminarAntti}}. SR thanks Riccardo Guida and
Antti Kupiainen for organizing the last two events.

SR thanks Gian Michele Graf for mentioning Zinoviev's work {\cite{Zinoviev1995}} and to David Brydges for the explanation of this work. We thank Marc Gillioz and Tom Hartman for communications concerning their work. We are grateful to Malek Abdesselam, Simon Caron-Huot, Tom Hartman and David Simmons-Duffin for comments on the draft.

PK is supported by DOE grant DE-SC0009988 and by the Adler Family Fund at the
Institute for Advanced Study. The work of SR and JQ is supported by the Simons
Foundation grant 488655 and 733758 (Simons Collaboration on the
Nonperturbative Bootstrap). SR is supported by Mitsubishi Heavy Industries as
an ENS-MHI Chair holder.

\appendix\section{Lorentzian CFT literature}\label{literature}

Recent years have seen an explosion of the uses of Lorentzian CFT, motivated
in particular by the conformal bootstrap applications. In this appendix we
will mention some of these works, and comment on their underlying assumptions.
{See also \cite{DSDLorentzian} for a modern pedagogical introduction to Lorentzian CFT.}

\tmtextbf{Conformal collider bounds.} One of the first ``modern'' Lorentzian
CFT results was obtained in {\cite{Hofman:2008ar}}. This work considered a
thought experiment, creating a CFT state via a (smeared) local operator and
measuring energy coming out at null infinity in a particular direction,
integrated over time. On physical grounds, one expects $\<\Psi| \int d x^-\, T_{- -} | \Psi \> \geqslant 0$ for any
state (``averaged null energy condition'' - ANEC). One interesting case is of
3-point functions $\langle \mathcal{O}^{\dagger} T_{\mu \nu} \mathcal{O} \rangle$
where $\mathcal{O}$ has nontrivial spin, when there are several independent
OPE coefficients multiplying different tensor structures allowed by conformal
symmetry. In this case ANEC implies that certain linear combinations of these
OPE coefficients must be non-negative (``conformal collider bounds''). Interference effects can be used to strengthen conformal collider bounds to provide
explicit lower bounds~\cite{Cordova:2017zej}, while combining conformal collider bounds
 with stress-tensor Ward identities leads to 
constraints on operator dimensions which are sometimes stronger than standard unitarity bounds~\cite{Cordova:2017dhq}. See below for work aiming to justify ANEC, or to derive conformal conformal bounds
directly without using ANEC.

\tmtextbf{Light-cone bootstrap.} Refs.\ {\cite{Fitzpatrick:2012yx,Komargodski:2012ek}} were the first to notice that
some bootstrap constraints become more visible in the Lorentzian signature.
These references pioneered the ``analytic light-cone bootstrap'' which studies
conformal four point functions in the regime of $0 < z, \bar{z} < 1$ real,
i.e.\ in the kinematics of Fig.\ \ref{134} when the point $x_2$ is spacelike
with respect to $x_1, x_3$. By studying the light cone limit $z \rightarrow 0$
at fixed $\bar{z}$ of one OPE channel and requiring that it should be
reproduced by the crossed channel, they argued that, in any CFT for $d > 2$,
the OPE should contain a series of operators of arbitrarily large spin and
twist asymptoting to a particular value. The original argument has some
caveats (see the discussion in {\cite{Qiao:2017xif}}, App.\ F) and a
mathematically rigorous proof is lacking. It would be nice to provide such a
proof, given the extreme importance of the light-cone bootstrap in the modern
bootstrap literature. There is little doubt that the light-cone bootstrap
results are correct. Numerical bootstrap studies of the critical 3d Ising and
the $O (2)$ models {\cite{Simmons-Duffin:2016wlq,Liu:2020tpf}} have found the
series of operators predicted by the light-cone bootstrap {see also \cite{Caron-Huot:2020ouj})}.
Ref.\ {\cite{Hofman:2016awc}} used the light-cone bootstrap to derive the conformal
collider bounds of {\cite{Hofman:2008ar}} without using ANEC.

\tmtextbf{Causality constraints.} Refs.\ {\cite{Hartman:2015lfa,Hartman:2016dxc,Hartman:2016lgu}} pioneered the study
of causality constraints for CFT 4-point functions. In particular Ref.\ {\cite{Hartman:2015lfa}} pointed out that the $z, \bar{z}$ and $\rho,
\bar{\rho}$ expansions are sufficient to construct Lorentzian 4-point functions
for many kinematic configurations and show local commutativity (i.e.\ that
spacelike-separated operators commute). See also note \ref{noteShock}. These
techniques led to a proof of ANEC {\cite{Hartman:2016lgu}}. As mentioned in
footnote \ref{caveats}, some steps in these papers are not completely
rigorous. See App.\ \ref{Tom} below for a more detailed review of
{\cite{Hartman:2015lfa}}.

\tmtextbf{Bulk point singularity.} Ref.\ {\cite{Maldacena:2015iua}} studied the
CFT 4-point function on the Lorentzian cylinder focusing on ``bulk-point''
configurations which correspond to scattering events in AdS/CFT
{\cite{Polchinski:1999yd,Gary:2009ae,Heemskerk:2009pn,Penedones:2010ue,Okuda:2010ym}}.
Using a local AdS dual description, one may suspect that the 4-point
function should be singular at such configurations. However, on the boundary
CFT side, one does not see this singularity in perturbation theory in $d = 2$
and $d = 3$ dimensions {\cite{Maldacena:2015iua}}. In $d = 2$, Ref.\ {\cite{Maldacena:2015iua}} showed non-perturbatively that the CFT 4-point
function is analytic everywhere away from light cones (in particular regular at
bulk-point configurations). This assumes Virasoro symmetry and unitarity and
uses Zamolodchikov's $q$-variables {\cite{zamolodchikov1987conformal}}. What
happens non-perturbatively in $d \geqslant 3$ (or in $d = 2$ in the absence of
the local stress tensor) is still an open problem. Note that at bulk-point
configurations, the $\rho$-expansion of the CFT 4-point function does not
absolutely converge in s-channel (as $| \rho | = | \bar{\rho} | = 1$ there)
and diverges in t-,u-channels {\cite{Qiao:2020bcs}}. In this paper we only
considered the CFT 4-point functions in flat space, but by the same
strategy we will show in {\cite{paper3}} that the Wightman axioms also hold
for CFT 4-point functions on Lorentzian cylinder. In particular, this will
show that the CFT 4-point functions are well defined at bulk-point
configurations in the sense of tempered distributions (but it will not settle
the question of their analyticity there).

\tmtextbf{Lorentzian inversion formula.} Ref.\ {\cite{Caron-Huot:2017vep}}
introduced an analogue of Froissart-Gribov formula in the context of conformal
field theory, which is now known as the Lorentzian inversion formula (LIF).
This formula computes the OPE data of a scalar 4-point function in terms of
a Lorentzian integral of this 4-point function. The OPE data is extracted
in the form of a function $C (\Delta, \ell)$. For integer $\ell$, the function
$C (\Delta, \ell)$ encodes the scaling dimensions of exchanged primary
operators of spin $\ell$ in the positions of poles in $\Delta$, and the
corresponding OPE coefficients are encoded in residues. LIF has many
interesting properties, such as analyticity in $\ell$, and suppression of
double-twist operators when a cross-channel conformal block expansion is used
under the integral. The original derivation of {\cite{Caron-Huot:2017vep}} was
done in cross-ratio space. The formula was re-derived in position space in
{\cite{Simmons-Duffin:2017nub}}. The derivation was further simplified and
generalized in {\cite{Kravchuk:2018htv}}.

Among other applications, LIF has been used to systematize and extend many of
the results of light-cone bootstrap (see, e.g.,
{\cite{Liu:2018jhs,Cardona:2018dov,Albayrak:2019gnz,Cardona:2018qrt,Liu:2020tpf,Iliesiu:2018zlz,Iliesiu:2018fao,Albayrak:2020rxh}}).
Similarly to light-cone bootstrap, this application is not completely rigorous
simply due to the fact that LIF expresses $C (\Delta, \ell)$ in terms of an
integral, and the local operators correspond to singularities of $C (\Delta,
\ell)$. In other words, the integral has no chance of converging near the
values of $\Delta, \ell$ relevant to local operators, except perhaps for
leading-twist operators {(see~\cite{Caron-Huot:2020ouj} for steps in this direction)}. This necessarily makes any conclusions about
anomalous dimensions of local operators reliant on additional assumptions.
These are easy to justify in some perturbative expansions, but in
non-perturbative setting do not appear to have been solidly understood.

\tmtextbf{Light-ray operators.} Ref.\ {\cite{Kravchuk:2018htv}} generalized LIF
to external operators with spin and uncovered an interesting relation to
Knapp-Stein intertwining operators, especially to what they called the light
transform. They interpreted the analyticity of LIF in $\ell$ in terms of
families of non-local non-integer-spin operators, the light-ray operators. These
operators are defined for generic complex $\ell$ and reduce to
light-transforms (null integrals) of local operators for integer spins. More
recently, light-ray operators have been used to understand an OPE for
event-shape observables such as energy-energy correlators in CFT
{\cite{Kologlu:2019mfz,Kologlu:2019bco,Chang:2020qpj}} (see also
{\cite{Dixon:2019uzg,Korchemsky:2019nzm}}). The light-ray operators correspond
to poles in $\Delta$ of $C (\Delta, \ell),$ and the issues with convergence of
LIF described above prevent a simple rigorous proof of their non-perturbative
existence. (E.g., for generic $\ell$, $C (\Delta, \ell)$ could have cuts or a
natural boundary of analyticity in $\Delta$.) It would be interesting to find
such a proof. In addition to clarifying the nature of light-ray operators, it
would probably also have a bearing on the light-cone bootstrap results
discussed above.

\tmtextbf{Conformal Regge theory} provides a way to understand Minkowski
correlators in Regge limit, and was developed in Refs.\ {\cite{Brower:2006ea,Cornalba:2007fs,Cornalba:2008qf,Costa:2012cb}}. Regge
limit in CFT is a limit of a 4-point function in Lorentzian signature in
which $\mathcal{O}_2$ approaches the ``image of $\mathcal{O}_3$ in the next
Poincar\'e patch,'' in 4-point function with the ordering
\begin{equation}
  \langle \mathcal{O}_4 \mathcal{O}_3 \mathcal{O}_2 \mathcal{O}_1 \rangle .
\end{equation}
The operators $\mathcal{O}_1$ and $\mathcal{O}_4$ are kept spacelike
separated, with $\mathcal{O}_1$ in past of $\mathcal{O}_2$ and $\mathcal{O}_4$
in the future of $\mathcal{O}_3$.\footnote{In a symmetric version of the
limit, which is related to the one described here by a conformal
transformation, the operators $\mathcal{O}_1$ and $\mathcal{O}_4$ approach
each other's images in the same way as $\mathcal{O}_2$ and $\mathcal{O}_3$
do.} The image of $\mathcal{O}_3$ in the next Poincar\'e patch is the first
point on Minkowski cylinder where all future-directed null geodesics from
$\mathcal{O}_3$ meet. A lot of interest in Regge limit comes from its
interpretation as bulk high-energy scattering through AdS/CFT. Kinematically,
this limit is somewhat similar to the $\mathcal{O}_2 \rightarrow
\mathcal{O}_3$ limit because $\mathcal{O}_3$ and its image in the next
Poincar\'e patch transform in the same way under conformal group. For example,
the cross-ratios $z_t, \bar{z}_t \rightarrow 0$ in Regge limit. (Here by $z_t,
\bar{z}_t$ we mean the cross-ratios for t-channel $\mathcal{O}_2 \times
\mathcal{O}_3$.) However, they do so after $\bar{z}_t$ crosses the cut $[1,
\infty)$, and so in terms of $\rho_t, \bar{\rho}_t$ we have $\rho_t
\rightarrow 0$ and $\bar{\rho}_t \rightarrow \infty$. Therefore, the
$\mathcal{O}_2 \times \mathcal{O}_3$ OPE is divergent. Conformal Regge theory
gives a way to resum the $\mathcal{O}_2 \times \mathcal{O}_3$ OPE in a way
that exhibits a dominant contribution from a ``Reggeon'' exchange, which is an
example of a light-ray operator. Justification for this resummation, which
involves analytic continuation of OPE data in spin, comes from LIF (which
historically was understood after Conformal Regge theory was established). In
the context of our paper, it would be interesting to understand whether such
resummations can be made rigorous enough (in axiomatic sense) and used to
prove that Minkowski correlators are functions in regions where so far only
temperedness has been proven.\footnote{In the classic Regge limit there is a
channel in which the OPE converges regularly, but it is possible that some
causal orderings can be relaxed while keeping the resummation procedure
valid.} For this it might not be necessary to understand the Reggeon or more
general light-ray operators, since the resummation procedure can be stopped at
a point where the correlator is expressed as an integral of $C (\Delta, \ell)$
over a region where LIF converges. See~\cite{Caron-Huot:2020nem} for progress on these questions.

\tmtextbf{Works of Gillioz, Luty et al.} Papers by this group of authors are
characterized by the systematic use of momentum space in Lorentzian CFT. So,
Refs.\ {\cite{Gillioz:2019lgs,Gillioz:2020wgw}} computed Lorentzian momentum
space 3-point functions (3 scalars and scalar-scalar-spin $\ell$) by solving the
conformal Ward identities. In momentum space, it's also possible to form
conformal blocks by gluing 3-point functions {\cite{Gillioz:2020wgw}}. See also
notes \ref{noteMarc1}, \ref{noteMarc2}.

Ref.\ {\cite{Gillioz:2019iye}} carried out this program quite explicitly in 2d
CFT, with an eye towards eventual conformal bootstrap applications. They
stressed that the momentum conformal block expansion generally converges only
in the sense of distributions---one of the first mentions of distributional
convergence in the modern CFT literature. For some momenta configurations,
they argued that the momentum conformal blocks can be pointwise bounded by the
position conformal blocks with an appropriately chosen real $z \in (0, 1)$.
For such configurations the momentum expansion converges in the ordinary sense
of functions. The same work also proposes a bootstrap equation in the momentum
space, obtained by transforming the local commutativity constraint multiplied
by a test function selecting configurations with a spacelike pair of points
(however, examples of test functions chosen in {\cite{Gillioz:2019iye}} may be
too singular).

Refs.\ {\cite{Gillioz:2016jnn,Gillioz:2018kwh,Gillioz:2020mdd} studied the Fourier transform of the time-ordered Minkowski 4-point function in
relation to various interesting physics questions. Note that, as mentioned in
the conclusions, time-ordered Minkowski CFT 4-point functions have not yet been
rigorously defined as a distributions. The Fourier transform depends on 4
momenta $p_i$, and to reduce functional complexity it is interesting to take
some or all of these momenta lightlike, $p_i^2 \rightarrow 0$. \

So, Ref.\ {\cite{Gillioz:2016jnn}} considered the Fourier transform of the
connected time-ordered 4-point function $\langle \mathcal{T} \{
\mathcal{O}_1 \mathcal{O}_2 \mathcal{O}_3 \mathcal{O}_4 \} \rangle_c$. Here
they worked with operators of scaling dimension $\Delta_i > d / 2$ for which
the Fourier transform is expected to have a finite limit as $p_i^2 \rightarrow
0$.\footnote{We thank Marc Gillioz for explanations of his work and in
particular of the distinction between the high dimension case discussed here
and the low dimension case below.} Ref.\ {\cite{Gillioz:2016jnn}} proposed a
Lorentzian CFT analogue of the optical theorem:
\begin{equation}
  \tmop{Im} \mathcal{M}_{1234} (s, t) = \underset{\mathcal{O} \neq 1}{\sum}
  f_{12\mathcal{O}} f_{\bar{4} \bar{3} \mathcal{O}}^{\ast}
  \mathcal{N}_{\mathcal{O}} (q) \langle \bar{\mathcal{T}} \{
  \hat{\mathcal{O}}_1 (p_1) \hat{\mathcal{O}}_2 (p_2) \} \mathcal{O}^{\dag}
  (0) \rangle \langle \mathcal{O} (0) \mathcal{T} \{ \hat{\mathcal{O}}_3 (p_3)
  \hat{\mathcal{O}}_4 (p_4) \} \rangle, \label{CFToptical}
\end{equation}
where $\mathcal{M}_{1234}$ is proportional to the Fourier transform of
$\langle \mathcal{T} \{ \mathcal{O}_1 \mathcal{O}_2 \mathcal{O}_3
\mathcal{O}_4 \} \rangle_c$, $f_{i j k}$ is the same as in (\ref{3-pointGeneral}),
$\mathcal{N}_{\mathcal{O}} (q)$ is some normalization factor at $q = p_1 + p_2
(= - p_3 - p_4)$, and $\hat{\mathcal{O}}$ denotes the Fourier transform of
$\mathcal{O}$. Eq.\ (\ref{CFToptical}) is supposed to hold in the following
kinematic region in the momentum space:
\[ p_i^2 = 0, \qquad s = (p_1 + p_2)^2 > 0, \qquad t = (p_1 + p_3)^2 \leqslant
   0, \]
and was derived from a combinatorial operator identity
\begin{equation}
  \underset{k = 0}{\overset{n}{\sum}} (- 1)^k \underset{\sigma \in S_n}{\sum}
  \frac{1}{k! (n - k) !} \bar{\mathcal{T}} \{ \mathcal{O}_{\sigma_1}
  (x_{\sigma_1}) \ldots \mathcal{O}_{\sigma_k} (x_{\sigma_k}) \} \mathcal{T}
  \{ \mathcal{O}_{\sigma_{k + 1}} (x_{\sigma_{k + 1}}) \ldots
  \mathcal{O}_{\sigma_n} (x_{\sigma_n}) \} = 0, \label{identity:comb}
\end{equation}
summing over all permutations, with $\mathcal{T}$($\bar{\mathcal{T}}$) time
ordering (anti-time ordering). Note that the use of this identity may be not
fully safe in the distributional context, as it arises from a non-smooth
partition of unity.

The CFT optical theorem (\ref{CFToptical}) was used in Ref.\ {\cite{Gillioz:2016jnn}} to study the scale anomalies that appear in a
specific class of CFT correlation functions. In fact, unlike Wightman
functions which are conformally invariant distributions, time-ordered
correlator distributions may, for certain scaling dimensions, contain pieces
which violate scale invariance. Thus the scale anomaly describes the violation
of dilatation Ward identities, and in position space it is an ultralocal term,
located at coincident points. In Fourier space, scale anomaly translates into
a nonzero imaginary part of $\mathcal{M}_{1234} (s, t)$ at $t = 0$. Eq.
(\ref{CFToptical}) then computes the scale anomaly coefficient through a
positive definite sum rule (in particular predicts that it is positive). These
scale anomalies also appear in the Euclidean signature, and a similar sum rule
for anomaly coefficients can also be found in Ref.\ {\cite{Gillioz:2016jnn}}.
However, the Euclidean sum rule is not positive definite unlike the Lorentzian
case.

Ref.\ {\cite{Gillioz:2016jnn}} tested the above ideas for the scalar 4-point
function of external dimensions $\Delta = 3 d / 4$. Ref.\ {\cite{Gillioz:2018kwh}} then studied the more interesting case of the stress
tensor 4-point function $\langle \mathcal{T} \{ T_{\mu_1 \nu_1} T_{\mu_2
\nu_2} T_{\mu_3 \nu_3} T_{\mu_4 \nu_4} \} \rangle$ whose scale anomaly is
proportional to the stress-tensor 2-point function coefficient $c_T$.
Assuming that the $t \rightarrow 0$ limit is finite, the CFT optical theorem
expresses $c_T$ as a sum of positive contributions of all operators in the $T
\times T$ OPE apart from the identity (the stress tensor contribution is
known, proportional to $c_T$, and can be moved to the l.h.s.). The
contributions from the scalars and the spin-2 operators are computed
explicitly in {\cite{Gillioz:2018kwh}}.

The more recent Ref.\ {\cite{Gillioz:2020mdd}} studied instead the Fourier
transform of {{the time-ordered 4-point function}} (or Euclidean
4-point function) in the opposite case of the low external dimensions
$\Delta_{\phi} < d / 2$. Unlike in {\cite{Gillioz:2016jnn,Gillioz:2018kwh}},
in this low dimension case the Fourier transform is singular as $p_i^2
\rightarrow 0$, and one obtains a finite quantity multiplying it by
$(p_i^2)^{d / 2 - \Delta_{\phi}}$ before taking the limit, a CFT analogue of
LSZ reduction. Doing so, they defined a ``CFT scattering amplitude'' $A (s, t,
u)$ ($p_i^2 \rightarrow 0$ for all $i$) and a closely related ``form factor''
$F (s, t, u)$ where $p_i^2 \rightarrow 0$ only for $i = 1, 2, 3$. Because the
limit $p_i^2 \rightarrow 0$ has to be taken one momentum at a time, crossing
symmetry is not obvious. Ref.\ {\cite{Gillioz:2020mdd}} also gave an
alternative derivation, starting from the Mellin representation of the CFT
4-point function, where crossing symmetry of $F (s, t, u)$ and $A (s, t,
u)$ follows from the crossing symmetry of the Mellin amplitude. In the future,
crossing symmetric quantities $A (s, t, u)$ and $F (s, t, u)$ may turn out
useful in a bootstrap analysis. It should be stressed that the results of
{\cite{Gillioz:2020mdd}} in no way contradict the usual lore that there are no
S-matrices in interacting CFTs. In spite of the name adopted in
{\cite{Gillioz:2020mdd}}, the existence of the quantity $A (s, t, u)$ does not
imply that we can set up a wave-packet scattering experiment in a CFT.
Wave-packets would quickly diffuse before reaching the interaction region, the
singularity of $(p_i^2)^{\Delta_{\phi} - d / 2}$ being a cut rather than a
pole.}

\subsection{Review of Hartman et al {\cite{Hartman:2015lfa}}}\label{Tom}

\tmtextbf{Relating different orderings via analytic continuations.} Here we
will comment on some of the results of {\cite{Hartman:2015lfa}} in more
details. The first part of this paper considers the Lorentzian CFT 4-point
functions with operators $\mathcal{O}_1$, $\mathcal{O}_3$, $\mathcal{O}_4$ fixed
at zero time and the spatial positions 0, $\hat{e}_1$ and $\infty$, while the operator $\mathcal{O}_2$ is inserted at Minkowski position $t_2\hat e_0+y_2\hat e_1$. 
They consider
four different operator orderings
\begin{equation}
  \langle \mathcal{O}_2 \mathcal{O}_1 \mathcal{O}_3 \mathcal{O}_4 \rangle,
  \langle \mathcal{O}_1 \mathcal{O}_2 \mathcal{O}_3 \mathcal{O}_4 \rangle,
  \langle \mathcal{O}_3 \mathcal{O}_2 \mathcal{O}_1 \mathcal{O}_4 \rangle,
  \langle \mathcal{O}_1 \mathcal{O}_3 \mathcal{O}_2 \mathcal{O}_4 \rangle
\end{equation}
in the region of $0 < y_2 < 1 / 2$ and $t_2$ positive. As
$t_2$ is increased from zero, the operator $\mathcal{O}_2$, initially
spacelike with respect to all other insertions, crosses the light cone first of
$\mathcal{O}_1$ and then of $\mathcal{O}_3$. With $z, \bar{z} = y_2 \pm t_2$,
denoting $G (z, \bar{z}) = \langle \mathcal{O}_2 \mathcal{O}_1 \mathcal{O}_3
\mathcal{O}_4 \rangle$, they give the following prescription to compute the
correlators for the other orderings (Ref.~\cite{Hartman:2015lfa}, Eq.\ (3.22)):
\begin{eqnarray}
  \langle \mathcal{O}_1 \mathcal{O}_2 \mathcal{O}_3 \mathcal{O}_4 \rangle & =
  & G (z, \bar{z})_{z \rightarrow e^{- 2 \pi i} z}, \label{Tomrecipe} \\
  \langle \mathcal{O}_3 \mathcal{O}_2 \mathcal{O}_1 \mathcal{O}_4 \rangle & =
  & G (z, \bar{z})_{(\bar{z} - 1) \rightarrow e^{- 2 \pi i} (\bar{z} - 1)},
  \nonumber\\
  \langle \mathcal{O}_1 \mathcal{O}_3 \mathcal{O}_2 \mathcal{O}_4 \rangle & =
  & G (z, \bar{z})_{z \rightarrow e^{- 2 \pi i} z, (\bar{z} - \bar{z}_0)
  \rightarrow e^{- 2 \pi i} (\bar{z} - \bar{z}_0)} . \nonumber
\end{eqnarray}
Their justification of this prescription relied on some presumed analyticity
properties of $G (z, \bar{z})$ which, to our knowledge, have never been shown
in a general QFT context. Nevertheless we will see below that for CFTs Eq.
{\eqref{Tomrecipe}} turns out to be true (with $\bar{z}_0 = 1$).

The real parameter $\bar{z}_0$ was introduced in {\cite{Hartman:2015lfa}} as
the position of the first $\bar{z}$ singularity of $G (z, \bar{z})_{z
\rightarrow e^{- 2 \pi i} z}$. Their goal was to show that $\bar{z}_0
\geqslant 1$, which using {\eqref{Tomrecipe}} then implies local commutativity
$\langle \mathcal{O}_1 [\mathcal{O}_2, \mathcal{O}_3] \mathcal{O}_4 \rangle =
0$ for $\bar{z} < 1$ i.e.\ when $\mathcal{O}_2$ is spacelike to
$\mathcal{O}_3$. In our paper (Sec.\ \ref{local-comm}) we presented a
different way to understand and derive local commutativity which is closer to
the classic literature: as we have reviewed there, it is a robust consequence
of the existence of the analytic continuation to the forward tube, which we
constructed.

Let's see what it would take to justify {\eqref{Tomrecipe}}. We define the
function $G (z, \bar{z})$ by the $\mathcal{O}_2 \times \mathcal{O}_1$ OPE
expansions, which converges for $| \rho |, | \bar{\rho} | < 1$, i.e.\ as long
as $z, \bar{z}$ stay away from $z, \bar{z} \in (+ 1, \infty)$. The points $z =
0$ and $\bar{z} = 0$ are branch points singularities with cuts which we put
along the negative real axis. Note that the contours in {\eqref{Tomrecipe}}
are all within the analyticity domain of $G (z, \bar{z})$. So the prescription
{\eqref{Tomrecipe}} is meaningful. We still have to see if it agrees with the
rigorous definition which computes the Minkowski 4-point function by analytically
continuing from the Euclidean region staying in the forward tube corresponding
to the chosen operator ordering. We will see that it will indeed agree, but
showing it for the last ordering will be subtle.

For definiteness we will focus on the region $t_2 > 1 - y_2$ \ i.e.\ $\bar{z}
> 1$, where $\mathcal{O}_2$ crossed both light cones. The end point of the
analytic continuation contour is always the same while the initial point
depends on the operator ordering. E.g.\ for the ordering $\langle \mathcal{O}_2
\mathcal{O}_1 \mathcal{O}_3 \mathcal{O}_4 \rangle$ we have to pick initial
Euclidean times $\varepsilon_2 > \varepsilon_1 > \varepsilon_3$, while for
$\langle \mathcal{O}_1 \mathcal{O}_2 \mathcal{O}_3 \mathcal{O}_4 \rangle$ we
have $\varepsilon_1 > \varepsilon_2 > \varepsilon_3$ etc. For any of these
orderings, we denote
\begin{equation}
  z_1, \bar{z}_1 = \pm i \varepsilon_1, \quad z_2, \bar{z}_2 = y_2 \pm i
  (\varepsilon_2 + i t_2), \quad z_3, \bar{z}_3 = 1 \pm i \varepsilon_3
\end{equation}
and compute (in the limit $z_4, \bar{z}_4 = \infty$)
\begin{equation}
  z = \frac{z_1 - z_2}{z_1 - z_3}, \quad \bar{z} = \frac{\bar{z}_1 -
  \bar{z}_2}{\bar{z}_1 - \bar{z}_3} .
\end{equation}
We are interested in the curves which $z, \bar{z}$ trace as the Euclidean
times are scaled to zero and the Minkowski time $t_2$ from 0 to its final
value. For the first three orderings the resulting curves are shown in Fig.\ \ref{first3orderings}.

\begin{figure}[h]\centering
  \raisebox{-0.437612502733711\height}{\includegraphics[width=16.2870097074643cm,height=4.87365538501902cm]{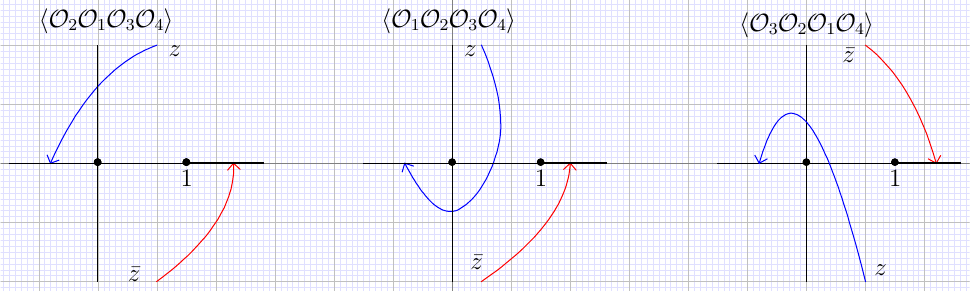}}
  \caption{\label{first3orderings}$z, \bar{z}$ curves for the analytic
  continuation from the Euclidean; the first 3 orderings.}
\end{figure}

We see that in all these cases, the curves lie in the analyticity domain of $G
(z, \bar{z})$ i.e.\ they don't cross $(1, + \infty)$. For the first two cases
this was guaranteed by our results that $| \rho |, | \bar{\rho} | < 1$ for the
s-channel OPE expansion. For the third case it was not guaranteed but it also
turns out to be true, by inspection. We also see that in all these 3 cases,
the curves go around $z = 0$ and $\bar{z} = 1$ in agreement with
{\eqref{Tomrecipe}}. \footnote{To see this more clearly in the third case, deform the
curves continuously moving the initial $z$ into the upper half plane and the
initial $\bar{z}$ into the lower half plane.}

For the fourth ordering $\langle \mathcal{O}_1 \mathcal{O}_3 \mathcal{O}_2
\mathcal{O}_4 \rangle$ when we have to assign $\varepsilon_1 > \varepsilon_3 >
\varepsilon_2$, the analytic continuation inside the forward tube gives the
$z, \bar{z}$ curves shown in Fig.\ \ref{last0}, while prescription
{\eqref{Tomrecipe}} would correspond to Fig.\ \ref{last1}.

\begin{figure}[h]\centering
  \raisebox{-0.4738733350422\height}{\includegraphics[width=5.97432113341204cm,height=3.45444706808343cm]{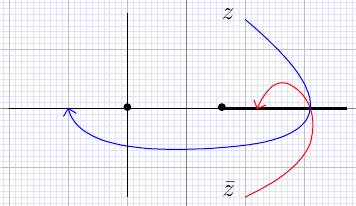}}
  \caption{\label{last0}$z, \bar{z}$ curves for the analytic continuation from
  the Euclidean for the $\langle \mathcal{O}_1 \mathcal{O}_3 \mathcal{O}_2
  \mathcal{O}_4 \rangle$ ordering.}
\end{figure}

\

\begin{figure}[h]\centering
  \raisebox{-0.4738733350422\height}{\includegraphics[width=5.97432113341204cm,height=3.45444706808343cm]{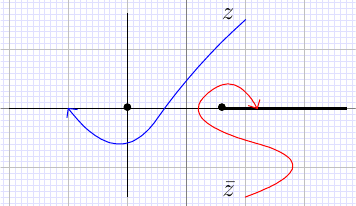}}
  \caption{\label{last1}$z, \bar{z}$ curves for computing the $\langle
  \mathcal{O}_1 \mathcal{O}_3 \mathcal{O}_2 \mathcal{O}_4 \rangle$ ordering
  via {\eqref{Tomrecipe}}.}
\end{figure}

The two figures are clearly not the same. Moreover the curves in the first
figure cross $(+ 1, \infty)$ where the definition of $G (z, \bar{z})$ via the
$\mathcal{O}_1 \times \mathcal{O}_2$ channel OPE expansion stops converging.
Can we show that the analytic continuation in Fig.\ \ref{last0} exists and that
it agrees with the one in Fig.\ \ref{last1}?

For this, let us bring in the $\mathcal{O}_2 \times \mathcal{O}_3$ OPE
expansion, which correspond to expanding in $z_t = 1 - z$, $\bar{z}_t = 1 -
\bar{z}$ or in the corresponding $\rho_t$, $\bar{\rho}_t$. In the Euclidean
region the two expansions agree. The $\mathcal{O}_2 \times \mathcal{O}_3$
expansion converges away from $z_t, \bar{z}_t \in (+ 1, \infty)$ i.e.\ $z,
\bar{z} \in (- \infty, 0)$. Thus the curves in both Figs.\ \ref{last0},
\ref{last1} lie within the range of analyticity of the $\mathcal{O}_2 \times
\mathcal{O}_3$ expansion, so we can compare the analytic continuations. Since
only integer spins occur in the expansion, the analytic continuation does not
change under $\rho_t \rightarrow e^{2 \pi i} \rho_t$, $\bar{\rho}_t
\rightarrow e^{- 2 \pi i} \bar{\rho}_t$ (such arguments were systematically
exploited in {\cite{Qiao:2020bcs}}).\footnote{Sometimes this property is called ``Euclidean single-valuedness''.} So let us add extra loops to the blue and
the red curves in the opposite directions around $1$, see Fig.\ \ref{last2}.
Adding the loops and deforming the curves continuously (the first step is
shown in Fig.\ \ref{last2}) we can bring them to those in Fig.\ \ref{last1}.
This finishes the proof that the prescription {\eqref{Tomrecipe}} is correct
also for the fourth ordering.

\begin{figure}[h]\centering
  \raisebox{-0.4738733350422\height}{\includegraphics[width=5.97432113341204cm,height=3.45444706808343cm]{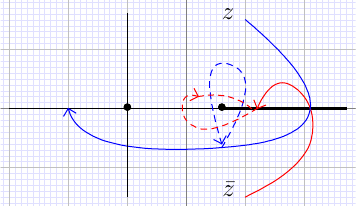}}$\Longrightarrow$\raisebox{-0.4738733350422\height}{\includegraphics[width=5.97432113341204cm,height=3.45444706808343cm]{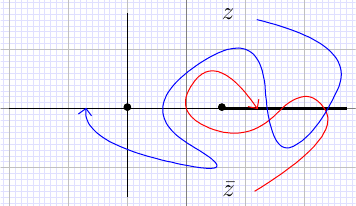}}
  \caption{\label{last2}Deforming the $z, \bar{z}$ curves for the $\langle
  \mathcal{O}_1 \mathcal{O}_3 \mathcal{O}_2 \mathcal{O}_4 \rangle$ ordering.}
\end{figure}

\tmtextbf{Positivity constraints.} We wish to comment on another result of
{\cite{Hartman:2015lfa}}: an argument for positivity of certain conformal
block expansion coefficients. We present the argument exchanging the role of s
an t channels w.r.t. {\cite{Hartman:2015lfa}}. Let $G (z, \bar{z}) = 1 +
\ldots$ be the holomorphic function defined by the s-channel OPE expansion (i.e.\ $(z \bar{z})^{\Delta_1 + \Delta_2}$ times the $G (z, \bar{z})$ discussed
above). We will define a certain analytic continuation of the function $G (z,
\bar{z})$. Let us start with $\bar{z}$ and $z$ close to zero, $\bar{z}$ in the
upper half plane and $z$ in the lower half plane. In this range all three
channels s,t,u converge. We wish to analytically continue $G (z, \bar{z})$ by
taking $z$ through $(1, + \infty)$ and bring it back close to zero, in the
upper half plane (see Fig.\ \ref{Gtilde}), while we don't touch $\bar{z}$. This
analytic continuation can be performed using the t-channel or u-channel
expansions, with the same result (but not the s-channel since it stops
converging on $(1, + \infty)$). We denote the result of this analytic
continuation by $\hat{G} (z, \bar{z})$, with $\tmop{Im} z, \tmop{Im} \bar{z} >
0$.

\begin{figure}[h]\centering
  \raisebox{-0.485281603306646\height}{\includegraphics[width=5.02620359438541cm,height=2.83259543486816cm]{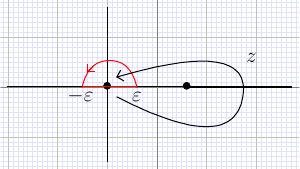}}
  \caption{\label{Gtilde}Definition of $\hat{G} (z, \bar{z})$. Red:
  integration contour in Eq.\ {\eqref{contour}}.}
\end{figure}

Although $\hat{G} (z, \bar{z})$ is so defined with both $z, \bar{z}$ in the
upper half plane, it has continuous limits when they both approach positive
real axis, or both approach negative real axis, since in the first case the
t-channel and in the second case the u-channel remains convergent. We will be
interested in the situation when \ $\mathcal{O}_1 =\mathcal{O}_2$,
$\mathcal{O}_3 =\mathcal{O}_4$. In this case all expansion coefficients in the
t and u channels are positive. This implies that the Euclidean correlator $G_E
(z, \bar{z})$ will be positive for real $z, \bar{z} > 0$ (using t-channel) and
for real $z, \bar{z} < 0$ (using u-channel). The difference between $G_E (z,
\bar{z})$ and $\hat{G} (z, \bar{z})$, for real $z, \bar{z} > 0$ or $z, \bar{z}
< 0$ is that in the first case $z, \bar{z}$ approach the real axis from the
opposite sides while in the second case from the same one. When we take $z$
through $(1, + \infty)$ cut, this only brings in some phases in the t and
u-channel expansion of $\hat{G} (z, \bar{z})$ with respect to $G_E (z,
\bar{z})$. This implies that we have a bound for real $z, \bar{z} > 0$ or $z,
\bar{z} < 0$:
\begin{equation}
  | \hat{G} (z, \bar{z}) | \leqslant G_E (z, \bar{z}) \label{GtildeG}
\end{equation}
In what follows $\hat{G} (z, \bar{z})$ will be used as a holomorphic function
with $z, \bar{z}$ in the upper half plane satisfying the bound
{\eqref{GtildeG}} on its boundary, while $G_E (z, \bar{z})$ will be used only
with real $z, \bar{z}$.

In particular, since $G_E \approx 1$ for $z, \bar{z}$ near zero up to small
corrections, Eq.\ {\eqref{GtildeG}} says that $\hat{G}$ is bounded, for small
real $z, \bar{z} > 0$ or $z, \bar{z} < 0$, by 1 up to small corrections. This
argument can be generalized to show that $\hat{G} (z, \eta z)$ for $\eta > 0$
real and $z$ near zero in the upper half plane is bounded by a
constant.\footnote{Let $z = r e^{i \varphi}$, $r \ll 1$. We consider $0
\leqslant \varphi \leqslant \pi / 2$, when the argument uses the t-channel,
the case $\pi / 2 \leqslant \varphi \leqslant \pi$ is analogous using the
u-channel. The key point is that the $\rho$ variable in the t-channel $\rho_t
\approx 1 - \sqrt{r} e^{i \varphi / 2}$, $| \rho_t | \approx 1 - \sqrt{r} \cos
(\varphi / 2)$. This allows to compare the function $\hat{G} (z, \eta z)$ to
$G_E (z', \eta z')$ with real $z' = r \cos^2 (\varphi / 2)$, times a factor
$\sim (z \bar{z})^{\Delta_1} / (z' \bar{z}')^{\Delta_1}$ from the crossing
kernel, which is bounded by a constant.}

We now pass to the non-rigorous part of the argument. Although the s-channel
stops converging when crossing $(1, + \infty)$, Ref.\ {\cite{Hartman:2015lfa}}
proposed that, in the regime $| \bar{z} | \ll | z | \ll 1$, the behavior
$\hat{G} (z, \bar{z})$ can nevertheless be predicted from the s-channel
expansion, by organizing it in $\bar{z}^{\tau / 2}$ where $\tau = \frac{1}{2}
(\Delta - \ell)$ is the twist. The typical term is
\begin{equation}
  \bar{z}^{\tau / 2} k_{\frac{1}{2} (\Delta + \ell)} (z),
\end{equation}
where $k_h (x) =_2 F_1 (h, h, 2 h, x)$ is the collinear conformal block. This
is the same expansion as used in the light-cone bootstrap, which has its own
problems of rigor, but here it is proposed to use it after $z - 1 \rightarrow
e^{2 \pi i} (z - 1)$. Under this continuation the collinear conformal block,
which has a $\log (1 - z)$ behavior near $z = 1$, picks up an imaginary piece
which, for $z$ small, behaves as $\sim i z^{1 - \frac{1}{2} (\Delta + \ell)}$
(see {\cite{Hartman:2015lfa}}, Eq.\ (4.28)). Considering $\bar{z} = \eta z$,
$\eta \ll 1$, $| z | \ll 1$, we then have, according to the proposal of Ref.\ {\cite{Hartman:2015lfa}},
\begin{equation}
  \hat{G} (z, \eta z) \approx 1 - B (\Delta, \ell) p_{\Delta, \ell} \times i
  \frac{\eta^{\tau / 2}}{z^{\ell - 1}}, \label{pred4}
\end{equation}
where $\Delta, \ell$ are the dimension and spin of the leading twist operator
(which may e.g.\ be the stress tensor), $p_{\Delta, \ell}$ its conformal block
coefficient, and $B (\Delta, \ell) \geqslant 0$ some explicitly known
constant. The spin $\ell$ is even since we are assuming $\mathcal{O}_1
=\mathcal{O}_2$. Eq.\ {\eqref{pred4}} assumes that the limit $\eta \rightarrow
0$ is taken before $z \rightarrow 0$.\footnote{In fact, in the opposite limit
$z \rightarrow 0$ for fixed $\eta$, Eq.\ {\eqref{pred4}} would violate the
discussed above rigorous bound that $\hat{G} (z, \eta z)$ is bounded by a
constant. There is no paradox because that's not the limit we are supposed to
be taking.}

Now, let us consider the holomorphic function $f (z) = 1 - \hat{G} (z, \eta z)$,
and integrate $z^{\ell - 2} f (z)$ along the contour shown in Fig.\ \ref{Gtilde}.\footnote{We can also take an intermediate step of adding a small
semicircle of radius $\varepsilon'$ around zero, but since $\hat{G}$ is
bounded for small $z$, the limit $\varepsilon' \rightarrow 0$ is not
problematic.} We have
\begin{equation}
  \int_{\tmop{arc}} z^{\ell - 2} f (z)\, d z + \int_{-
  \varepsilon}^{\varepsilon} x^{\ell - 2} f (x)\, d x = 0 . \label{contour}
\end{equation}
Using {\eqref{pred4}} and that the integral over the arc of $1 / z$ is $\pi
i$, we get in particular:\footnote{Note that
	the quantity $\tmop{Re} [G_E (z, \eta z) - \hat{G} (z, \eta z)]$ appearing in~\eqref{result4} is essentially
	the double discontinuity considered in~\cite{Caron-Huot:2017vep}. Similarly, Eq.~\eqref{result4} can
	be formally obtained from the Lorentzian inversion formula of~\cite{Caron-Huot:2017vep} by expanding the integrand
	in a light-cone limit. We thank Tom Hartman for pointing this out.
}
\begin{equation}
  \pi B (\Delta, \ell) p_{\Delta, \ell} \approx \eta^{- \tau / 2} \int_{-
  \varepsilon}^{\varepsilon} x^{\ell - 2} \tmop{Re} [1 - \hat{G} (z, \eta z)]\,
  d x \approx \eta^{- \tau / 2} \int_{- \varepsilon}^{\varepsilon} x^{\ell
  - 2} \tmop{Re} [G_E (z, \eta z) - \hat{G} (z, \eta z)]\, d x, \label{result4}
\end{equation}
%\SR{ Tom says: It is perhaps worth mentioning that $Re(G_E - \hat{G}) $appearing in (A.12) is by definition equal to dDisc G, since that is what everyone calls it now. ((A.12) is what you get immediately from the LIF if you expand the integrand in the LIF in the light-cone limit.)}
where in the final step we replaced 1 by $G_E (z, \eta z)$. Since $G_E (z,
\eta z)$ has a rigorously convergent expansion for small $z$, it satisfies the
bound:
\begin{equation}
  G_E (z, \bar{z}) = 1 + O (\eta^{\tau / 2}) .
\end{equation}
So the last replacement was legitimate if e.g.\ $\ell \geqslant 2$. By
{\eqref{GtildeG}}, the r.h.s.\ of {\eqref{result4}} is a positive quantity. \
This equation then implies that $p_{\Delta, \ell}$ must be positive as well.

As already mentioned, the weak point of this argument is that the s-channel
expansion stops converging when we cross $(1, + \infty)$. It is therefore not
at all obvious that analytic continuations of the individual conformal block
expansion terms have anything to do with the asymptotics of $\hat{G} (z,
\bar{z})$. Ref.\ {\cite{Hartman:2015lfa}} was of course aware of this, and
provided some arguments, inspired by the light-cone bootstrap, why nevertheless
the asymptotics from the leading twist terms can be trusted. We don't know how
to make those arguments rigorous. It would be interesting to understand if
asymptotics {\eqref{pred4}} can be justified using just Euclidean CFT axioms
of Sec.\ \ref{ECFTax} or requires additional assumptions. The same question
also looms over the proofs of ANEC {\cite{Hartman:2016lgu}} {and ANEC commutativity \cite{Kologlu:2019bco} which involved similar ``light-cone limit
on the second sheet'' considerations.}

\section{OS axioms for higher-point functions}\label{OShigher}

In this appendix we discuss the modifications necessary to derive from the
Euclidean CFT axioms the OS axioms (positivity and cluster property) for
$n$-point functions with $n > 4$, compared to the $n \leqslant 4$ case
considered in Sec.\ \ref{OSfromCFT}. As we explain below, it appears that
there is no simple proof of OS positivity for $n > 4$ from the Euclidean CFT
axioms of Sec.\ \ref{ECFTax}. Since the reason for this is rather technical,
let us first discuss the conceptual implications of this.

Ideally, one would like to have a set of Euclidean CFT axioms that would imply
Wightman axioms (and therefore also OS axioms) and also be powerful enough to
derive all the usual CFT lore such as OPEs, radial quantization,
operator-state correspondence, crossing symmetry, etc. These statements, as
we saw in the main text, make sense and can be non-trivial even when we
restrict our attention to $n$-point functions with bounded $n$.

In particular, we have found that the axioms we formulated in Sec.\ \ref{ECFTax} achieve the above goal for $n \leqslant 4$. Extending our results
to $n > 4$ using the same strategy would require a solution to two problems:
first, we need to derive OS axioms (specifically, positivity and cluster
property) for $n > 4$, and, second, we need to prove that OS axioms together
with the OPE imply Wightman axioms.

Conceptually, it seems plausible that OS axioms + OPE imply Wightman axioms
for $n > 4$ because we expect that for $n > 4$ there is again an OPE channel
which is convergent in the entire forward tube (i.e.\ the one given by taking
the OPE in the same order as the operators appear in the Wightman ordering).
This question clearly merits further study but is beyond the scope of this
paper.

However, it is less clear to us how to even attempt a derivation of OS
positivity for $n > 4$ from Euclidean CFT axioms of Sec.\ \ref{ECFTax}. Let
us first explain why this is the case, and then we will discuss the possible
modifications to these CFT axioms.

Suppose we want to prove the positivity
\begin{equation}
  \langle \Psi | \nobracket \Psi \rangle \geqslant 0,
\end{equation}
where $\Psi$ is a state created by a product of three local operators, $| \Psi
\rangle = | \varphi_1 (x_1) \varphi_2 (x_2) \varphi_3 (x_3) \rangle$.

To prove this positivity the natural idea would be to use the OPE expansion
repeatedly for the two copies of $\Psi$ and then use the positivity of the
2-point function. However, for this we need our OPE approximation for
$\langle \nobracket \Psi | \nobracket$ to be conjugate to our approximation
for $| \Psi \rangle$. This is non-trivial to achieve because we have to
perform the OPEs one at a time. For example, we can first construct an
approximation of $| \Psi \rangle$ in terms of a state $| \Psi' \rangle$,
created by single operator insertions, such that
\begin{equation}
  | \langle \Psi | \nobracket \Psi \rangle - \langle \Psi | \nobracket \Psi'
  \rangle | < \varepsilon .
\end{equation}
Similarly, we can construct a state $\langle \Psi'' |$ such that
\begin{equation}
  | \langle \Psi'' | \nobracket \Psi' \rangle - \langle \Psi | \nobracket
  \Psi' \rangle | < \varepsilon
\end{equation}
and thus
\begin{equation}
  | \langle \Psi'' | \nobracket \Psi' \rangle - \langle \Psi | \nobracket \Psi
  \rangle | < 2 \varepsilon .
\end{equation}
These approximations are possible by the repeated use of the OPE
{\eqref{OPEplanar}}. Note, however, that since the OPE axiom is formulated for
correlation functions, the number of terms we have to include in the OPE for a
given $\varepsilon$ depends on the correlation function in which the OPE is
performed. It then follows that the state $\langle \Psi'' |$ depends on $|
\Psi' \rangle$ (because in order to construct it we use the OPE in the
correlation function $\langle \Psi | \nobracket \Psi' \rangle$) and is in
general different from it. It is therefore not obvious that $\langle \Psi'' |
\nobracket \Psi' \rangle \geqslant 0$, which is what we would like to use in
order to prove $\langle \Psi | \nobracket \Psi \rangle \geqslant 0$ with the
help of the above inequalities.

In the case when $n = 4$ and $| \Psi \rangle$ is created by 2 operators we
were able to solve this difficulty. This was because in this case the only
difference between $| \Psi' \rangle$ and $\langle \Psi'' |$ can be in the
number of OPE terms included in the approximation, and we were able to use an
orthogonality property of the 2-point function to show $\langle \Psi'' |
\nobracket \Psi' \rangle = \langle \Psi' | \nobracket \Psi' \rangle$ by
arguing that we can assume that $\langle \Psi'' |$ contains more terms than $|
\Psi' \rangle$ and that those terms which are in $\langle \Psi'' |$ but not in
$| \Psi' \rangle$ do not contribute to the product $\langle \Psi'' |
\nobracket \Psi' \rangle$.

This argument does not work in the case at hand, $| \Psi \rangle = | \varphi_1
(x_1) \varphi_2 (x_2) \varphi_3 (x_3) \rangle$. The reason for this is that in
order to construct $| \Psi' \rangle$ or $\langle \Psi'' |$ we need to perform
two OPE's in each case. For example, the first one can be $\varphi_1 \times
\varphi_2 = \sum_k \mathcal{O}_k$ and the second one can be $\varphi_3 \times
\mathcal{O}_k$. Both OPE's have to be truncated at some point, and while the
truncation of the second OPE affects only the set of terms that are present in
$| \Psi' \rangle$ or $\langle \Psi'' |$, where we truncate the first
$\varphi_1 \times \varphi_2$ OPE affects the \tmtextit{coefficients} of these
terms. Since now $| \Psi' \rangle$ and $\langle \Psi'' |$ contain terms with
differing coefficients, we cannot use orthogonality to argue $\langle \Psi'' |
 \Psi' \rangle = \langle \Psi' | \Psi' \rangle $anymore.
There is no way to ensure that $\varphi_1 \times \varphi_2$ OPEs are truncated
in the same way in the construction of both states because the truncation in
$\langle \Psi'' |$ depends, through our OPE axiom, on $| \Psi' \rangle,$ and
thus might happen to be always at a higher order than the truncation used for
$| \Psi' \rangle .$

This all is to say that due to a rather technical reason it appears that there
is no \tmtextit{simple} proof of OS positivity of higher-point functions from
the Euclidean CFT axioms as stated in Sec.\ \ref{ECFTax}. Importantly, this
doesn't mean that there is no proof at all. Indeed, the Euclidean CFT axioms
are sufficient to derive the standard crossing-symmetry equations for
4-point functions. It could happen that in all solutions to these
crossing-symmetry equations the OPE coefficients have such asymptotics that a
stronger form of the OPE axiom holds and allows us to prove the OS positivity
for $n > 4$. However, it is not clear how to implement this line of reasoning
in practice.

It is therefore interesting to look for a stronger version of Euclidean CFT
axioms. We discuss below some simple modifications of the OPE axiom which
avoid the above problem and allow to prove OS positivity for higher-point
functions.

Morally, we want some kind of statement of uniformity for the convergence rate of the OPE: 
it should make $|\Psi''\>$ above independent of the truncation made in $|\Psi'\>$, as long as this truncation
is done at a sufficiently high order. This would allow us to make both truncations at a high order and
ensure $\langle \Psi'' |\Psi' \rangle = \langle \Psi' | \Psi' \rangle \geq 0$.

One option is to assume a stronger form of the OPE, which allows us to perform
two OPE's simultaneously Specifically, we can assume that the double sum
\begin{equation}
  \langle \mathcal{O}_1 \mathcal{O}_2 \mathcal{O}_3 \mathcal{O}_4 \ldots
  \rangle = \sum_{k, l} \langle \mathcal{O}_k \mathcal{O}_l \ldots \rangle,
\end{equation}
is convergent, where we wrote the two OPEs schematically as $\mathcal{O}_1
\mathcal{O}_2 = \sum_k \mathcal{O}_k$ and $\mathcal{O}_3 \mathcal{O}_4 =
\sum_l \mathcal{O}_l$. Convergence of the double sum means that
\begin{equation}
  \left| \langle \mathcal{O}_1 \mathcal{O}_2 \mathcal{O}_3 \mathcal{O}_4
  \ldots \rangle - \sum_{k, l} \langle \mathcal{O}_k \mathcal{O}_l \ldots
  \rangle \right| < \varepsilon,
\end{equation}
when the sums are truncated in a way that includes some
$\varepsilon$-dependent finite set of terms, but is otherwise arbitrary. In
particular, both sums can be truncated in the same way, and this solves the
problem that we encountered above. A disadvantage of this approach is that it
is unclear how to derive this axiom from OS axioms and the usual single OPE axiom (however,
a heuristic argument based on cutting the Euclidean path integral can be
made). This is somewhat subtle and is related to the question of whether the path integral over a spherical layer $(r_1<r<r_2)$ with operator insertions in the interior represents a bounded operator. We can't say with confidence whether or not this is the case.

Another option is to assume resumed repeated OPE, i.e.\ that the following sum
converges, schematically,
\begin{equation}
  \langle \mathcal{O}_1 \ldots \mathcal{O}_m \mathcal{O}_{m + 1} \ldots
  \mathcal{O}_n \rangle = \sum_k c_k \langle \mathcal{O}_k \mathcal{O}_{m + 1}
  \ldots \mathcal{O}_n \rangle,
\end{equation}
where the coefficients $c_k$ are chosen so that
\begin{equation}
  \langle \mathcal{O}_1 \ldots \mathcal{O}_m \mathcal{O}_k^{\theta} \rangle =
  c_k \langle \mathcal{O}_k \mathcal{O}_k^{\theta} \rangle,
\end{equation}
assuming $\langle \mathcal{O}_k \mathcal{O}_l^{\theta} \rangle \propto
\delta_{k l} .$ This version of the axiom is essentially the statement that
one-operator states form a basis of the CFT Hilbert space, formulated without
explicitly introducing the Hilbert space. In other words, above we are
approximating the state $\langle \mathcal{O}_1 \ldots \mathcal{O}_m |$ in
terms of an orthonormal basis of states $\langle \mathcal{O}_k |$, and the
coefficients are computed by inner products. This form of the axiom is easy to
derive from OS + convergent OPE, and also easily allows us to solve our
problem by using the same strategy as in the case $n = 4$. However, it does
appear to be an overly strong assumption, making our axioms not very different
from assuming OS + convergent OPE outright.

Finally, an interesting prospect might be, instead of formulating an entirely
new set of axioms, to add an {\tmem{asymptotic}} OPE axiom (and conformal
invariance) to OS axioms. It is likely that using logic very similar to that
of Mack {\cite{Mack:1976pa}}, which we reviewed in Sec.\ \ref{MackComp}, one
can prove that (OS axioms)+(asymptotic OPE)+(conformal invariance) imply
convergent OPE.

\section{Details on Vladimirov's theorem}\label{Vlad}

\subsection{Limit in the sense of distributions}

Let us start with a reminder of what the limit in the sense of tempered
distributions means. Let $f (u)$, $u = (t_k, \mathbf{x}_k) \equiv (t_1,
\mathbf{x}_1, \ldots, t_n, \mathbf{x}_n) \in \mathbb{R}^{n d}$, be a Schwartz
test function, i.e.\ an infinitely differentiable function decreasing at
infinity faster than any power together with all its derivatives. This can be
also stated as finiteness of all Schwartz norms:
\begin{equation}
  | f |_N = \sup_{u \in \mathbb{R}^{n d}, | \alpha | \leqslant N} (1 + u^2)^{N
  / 2} | \partial^{\alpha}_u f | < \infty \qquad \forall N \geqslant 0 .
  \label{semin}
\end{equation}
That the limit {\eqref{limit}} exists in the sense of distributions means two
requirements. First, that the r.h.s.\ of {\eqref{limit}} has a finite limit
integrated against any $f$ as above:
\begin{equation}
  (G_n^M, f) \assign \lim_{\epsilon_k \rightarrow 0} \int d t\, d\mathbf{x}\, G_n
  (\epsilon_k + i t_k, \mathbf{x}_k) f (t_k, \mathbf{x}_k) \quad \text{exists
  for any Schwartz} f. \label{intGn}
\end{equation}
The $G_n^M$ defined by this equation is a linear functional on the Schwartz
space. The second requirement is that this functional should be continuous
(and thus is itself a tempered distribution). Continuity means that it should
be bounded by one of the norms {\eqref{semin}} with a sufficiently large $N$,
i.e.:
\begin{equation}
  | (G_n^M, f) | \leqslant C | f |_{N_{\ast}}, \label{GMcont}
\end{equation}
with $f$-independent $C$ and $N_{\ast}$.

Note that by Eq.\ {\eqref{GMcont}}, $G_n^M$ can be extended from the
Schwartz space to a larger space of test functions, which are required to be
differentiable only $N_{\ast}$ times and have a finite $| f
|_{N_{\ast}}$. Parameter $N_{\ast}$ thus characterizes regularity of the
distribution $G_n^M$. The proof of Theorem \ref{ThVlad} will determine
$N_{\ast}$ in terms of $A_n$ and $B_n$, see Eq.\ {\eqref{GMreg}}.

\subsection{Proof of Theorem \ref{ThVlad}}\label{Proof1}

Unfortunately, we do not know a reference where Theorem \ref{ThVlad} is stated
and proved succinctly in the form we need. Such results are considered
standard in the theory of several complex variables. For similar statements
see {\cite{Vladimirov}}, Chapter 5, and {\cite{Streater:1989vi}}, Theorem
2-10. For the convenience of the reader, we present here a proof based on
these sources.

The usefulness of Vladimirov's theorems for establishing distributional
properties of CFT correlators was recognized in our recent work
{\cite{paper1}}. There, we considered expansions of the CFT 4-point function $g
(\rho, \bar{\rho})$ in terms of conformally invariant cross-ratios $\rho$,
$\bar{\rho}$. It is well known that such expansions converge in the interior
of the unit disk $| \rho |, | \bar{\rho} | < 1$. Using Vladimirov's theorems,
we showed in \ {\cite{paper1}} that they also converge on the boundary of this
disk, in the sense of distributions. In this paper we are interested in CFT
correlators as functions of positions $x_k$, not of cross-ratios, but the
basic principle is the same as in {\cite{paper1}}: a powerlaw bound on an
holomorphic function near a boundary implies temperedness of the limiting
distribution.

By translation invariance it's enough to study the function $G_n$ expressed
in terms of the differences $y_k = x_k - x_{k + 1}$ which we denote by
$\mathcal{G} (y)$, $y = (y_1, \ldots, y_{n - 1})$. We also denote $y_k =
(y_k^0, \mathbf{y}_k),$ $y_k^0 = \varepsilon_k + i s_k$, $\varepsilon_k > 0$,
$\mathbf{y}_k \in \mathbb{R}^{d - 1}$.

Consider first the case when all $\varepsilon_k$ go to zero together along a
fixed direction: $\varepsilon_k = r v_k$ where $r \rightarrow 0$ and $v =
(v_k)$ is a vector with positive components. Later on we will show that the
limit continues to exist if $\varepsilon_k \rightarrow 0$ independently (as
well as the more general statement about the limit from inside the forward
tube).

So, let us prove that $\mathcal{G} (y)$ has a limit as $r \rightarrow 0$ which
is a tempered distribution in variables $s_k$, $\mathbf{y}_k$. As in
{\eqref{intGn}}, we fix a Schwartz test function $f$ and consider the integral
(we will omit index $k$ on $\varepsilon, s, v, \mathbf{y}$ if no confusion may
arise)
\begin{equation}
  h (r) = \int d s\, d\mathbf{y}\,\mathcal{G} (r v + i s, \mathbf{y}) f (s,
  \mathbf{y}) .
\end{equation}
Te problem is analogous to theorems used in {\cite{paper1}}, so we will be
brief. As in {\cite{paper1}}, Sec.\ 3.3 and App.\ C, using analyticity in
$y^0$, integration by parts, and the powerlaw bound one can show that
derivatives of $h$ in $r$ satisfy the bound:
\begin{equation}
  | \partial_r^j h (r) | \leqslant \frac{C}{r^{A_n}}  | f |_N,
  \label{jbound}
\end{equation}
where $| f |_N$ is a Schwartz norm {\eqref{semin}} of a sufficiently large
order $N$ depending on $j$ and $B_n$. The constant $A_n$ is the same as in
{\eqref{powerlawbound}}, in particular the same $A_n$ works for all $j$. In
what follows we only need this equation for finitely many $j$ (up to $[A_n] +
1$). Using the Newton-Leibniz formula in the $r$ direction several times, one
then proves that the same bound as {\eqref{jbound}} holds in fact without $1 /
r^{A_n}$ singularity in the r.h.s. It then follows that, first of all,
$\lim_{r \rightarrow 0} h (r)$ exists, and second, it is a continuous linear
functional of $f$, that is, a distribution. The limit holds uniformly when the
components $v_k$ vary on any fixed compact interval contained in $(0, +
\infty)$. Its $v$-independence is shown exactly as in {\cite{paper1}}, Eq.
(C.7). Let us denote the limiting distribution $\mathcal{G} (i s, \mathbf{y})
\equiv \mathcal{G}^M (s, \mathbf{y})$.

It is of some interest to know the precise regularity of the distribution
$\mathcal{G}^M$ (i.e.\ how many derivatives the test function must have to be
pairable with $\mathcal{G}^M$) and the rate of its growth at infinity.
Following the above argument in detail, one can show the following bound which
contains this information:
\begin{equation}
  | (\mathcal{G}^M, f) | \leqslant \tmop{Const} . \int d s\, d\mathbf{y}\, (1 +
  | s | + | \mathbf{y} |)^{B_n} \max_{| \alpha | \leqslant [A_n] + 1} |
  \partial^{\alpha}_s f (s, \mathbf{y}) | . \label{GMreg}
\end{equation}
This in particular implies {\eqref{GMcont}} with $N_{\ast} = \max ([A_n] + 1,
B_n + n d + 1)$.

Parts 2,3 of Theorem \ref{ThVlad} are new compared to {\cite{paper1}}, since
such questions do not arise in the cross-ratio space.

Lorentz invariance is easy to show, as follows. Rotation invariance of
$G^E_n$ implies that $\mathcal{G} (y)$ satisfies for real $y$ the
differential equations
\begin{equation}
  \{ y^a \partial_{y^b} - y^b \partial_{y^a} \} \mathcal{G} (y) = 0,
  \qquad a, b \in \{ 0, 1, \ldots, d - 1 \} \label{DErot}
\end{equation}
(as usual $y = (y_k)$, summation in $k$ understood). By the uniqueness of
analytic continuation, these equations continue to hold for complex $y_0$.
That's the only place where we use real-analyticity in the spatial
direction.\footnote{With some extra tricks, it's possible to replace it by the
assumption of mere continuity in $\mathbf{y}$, as in {\cite{osterwalder1975}},
Theorem 4.3. In the CFT applications we have in mind, real analyticity appears
a more natural assumption.} By taking the limit $\varepsilon \rightarrow 0$ in
{\eqref{DErot}}, we recover precisely the differential equations expressing
the Lorentz invariance of $\mathcal{G}^M$. Let us explain in more detail how
the limit is taken and why it exists. Consider for definiteness $a = 0$, $b =
1$, other cases being similar. Eq.\ {\eqref{DErot}} then says \ $\{
(\varepsilon + i t) \partial_{y^1} + i y^1 \partial_t \} \mathcal{G}
(\varepsilon + i t, \mathbf{y}) = 0$, in the sense of functions, and hence
integrating by parts in the sense of distributions acting on test functions
$\varphi (t, \mathbf{y})$:
\begin{equation}
  (\mathcal{G}_{\varepsilon}, \{ (\varepsilon + i t) \partial_{y^1} + i y^1
  \partial_t \} \varphi) = 0, \label{Gepsdistr}
\end{equation}
where we denoted $\mathcal{G}_{\varepsilon} (t, \mathbf{y}) =\mathcal{G}
(\varepsilon + i t, \mathbf{y})$. Now we take the limit $\varepsilon
\rightarrow 0$. We know that (a) $\mathcal{G}_{\varepsilon} \rightarrow
\mathcal{G}^M$ in the sense of distributions, and also that (b) $|
(\mathcal{G}_{\varepsilon}, \varphi) |$ is uniformly bounded as $\varepsilon
\rightarrow 0$ by some Schwartz norm of $\varphi$. By (b) the term
$(\mathcal{G}_{\varepsilon}, \varepsilon \varphi)$ in {\eqref{Gepsdistr}}
drops out when $\varepsilon \rightarrow 0$, and by (a) the rest tends to
$(\mathcal{G}^M, \{ i t \partial_{y^1} + i y^1 \partial_t \} \varphi)$.
So we conclude that \ $(\mathcal{G}^M, \{ i t \partial_{y^1} + i y^1
\partial_t \} \varphi) = 0$ which expresses invariance of $\mathcal{G}^M$
under the $01$ Lorentz transformation.

Let us proceed to show the rest of Parts 2,3. It will be crucial that
$\mathcal{G}$ can be written as a ``Fourier-Laplace transform'':
\begin{equation}
  \mathcal{G} (\varepsilon + i s, \tmmathbf{\mathbf{y}}) = \int \frac{d E\,
  d\mathbf{p}}{(2 \pi)^{d (n - 1)}}\, g (E, \mathbf{p}) e^{- (\varepsilon + i s)
  E - i\mathbf{p}\mathbf{y}} \label{FL},
\end{equation}
where $g (E, \mathbf{p}), E \in \mathbb{R}^{n - 1}, \mathbf{p} \in
(\mathbb{R}^{d - 1})^{n - 1}$ is a tempered distribution, called ``spectral
function'', supported at $E \geqslant 0$ (by which we mean all $E_k \geqslant
0$) [later this will be improved to $E \geqslant | \mathbf{p} |$]. We are
omitting the indices, thus $\varepsilon E = \sum_k \varepsilon_k E_k$, etc.
The equality in {\eqref{FL}} is understood in the sense of distributions, with
the r.h.s.\ being the inverse Fourier transform of the tempered distribution $g
(E, \mathbf{p}) e^{- \varepsilon E}$. In other words, what this means is that
\begin{equation}
  \int d s\, d\mathbf{y} \, \mathcal{G} (\varepsilon + i s,
  \tmmathbf{\mathbf{y}}) f (- s, -\mathbf{y}) = \int \frac{d E\, d\mathbf{p}}{(2
  \pi)^{d (n - 1)}}  \, g (E, \mathbf{p}) e^{- \varepsilon E} \hat{f} (E,
  \mathbf{p}), \label{FTmeans}
\end{equation}
for any Schwartz test function $f$, and $\hat{f}$ its Fourier transform.

Let us show {\eqref{FL}}. Notice first that for every $\varepsilon > 0$ we can
write
\begin{equation}
  \mathcal{G} (\varepsilon + i s, \tmmathbf{\mathbf{y}}) = \int \frac{d E\,
  d\mathbf{p}}{(2 \pi)^{d (n - 1)}}\, g_{\varepsilon} (E, \mathbf{p}) e^{- i s E
  - i\mathbf{p}\mathbf{y}} \label{gx},
\end{equation}
where $g_{\varepsilon}$ is the Fourier transform of $\mathcal{G} (\varepsilon
+ i s, \tmmathbf{\mathbf{y}})$ with respect to $s, \mathbf{y}$. This Fourier
transform exists as a tempered distribution, since $\mathcal{G} (\varepsilon +
i s, \tmmathbf{\mathbf{y}})$ is itself a tempered distribution in $s,
\mathbf{y}$ (being a real-analytic function, bounded by a power at infinity).
In addition, $\mathcal{G} (\varepsilon + i s, \tmmathbf{\mathbf{y}})$ is
differentiable in $\varepsilon$ and $s$ and satisfies the Cauchy-Riemann
equations. From here it's easy to show that $g_{\varepsilon}$ as a
distribution is differentiable in $\varepsilon$ and satisfies the differential
equations:
\begin{equation}
  \frac{\partial g_{\varepsilon}}{\partial \varepsilon_k} + E_k
  g_{\varepsilon} = 0 \qquad (k = 1, \ldots, d - 1) .
\end{equation}
From here we conclude that
\begin{equation}
  g (E, \mathbf{p}) : = g_{\varepsilon} (E, \mathbf{p}) e^{\varepsilon E}
  \label{gge}
\end{equation}
is an $\varepsilon$-independent distribution. Substituting $g_{\varepsilon}
(E, \mathbf{p}) = g (E, \mathbf{p}) e^{- \varepsilon E}$ into {\eqref{gx}},
we obtain {\eqref{FL}}. Note that since $g$ and $g_{\varepsilon}$ are related
by an exponential factor, we can so far only claim that $g$ is defined as a
distribution on test functions of compact support. Let us show next that it is
in fact tempered (i.e.\ extends to Schwartz test functions).

To this end, consider the inverse of {\eqref{gx}}:
\begin{equation}
  g_{\varepsilon} (E, \mathbf{p}) = g (E, \mathbf{p}) e^{- \varepsilon E} =
  \int d s\, d\mathbf{y}\, \mathcal{G} (\varepsilon + i s, \tmmathbf{\mathbf{y}})
  e^{i s E + i\mathbf{p}\mathbf{y}},
\end{equation}
and integrate it against a compactly supported test function $\varphi (E,
\mathbf{p})$. We get (compare {\eqref{FTmeans}}):
\begin{equation}
  \int d E\, d\mathbf{p}\, g (E, \mathbf{p}) e^{- \varepsilon E} \varphi (E,
  \mathbf{p}) = \int d s\, d\mathbf{y}\, \mathcal{G} (\varepsilon + i s,
  \tmmathbf{\mathbf{y}}) \hat{\varphi} (- s, -\mathbf{y}) .
\end{equation}
As $\varepsilon \rightarrow 0$, the l.h.s.\ tends to the pairing $(g,
\varphi)$. Using Part 1 of the theorem, the r.h.s.\ tends in the same limit to
$\int d s\, d\mathbf{y}\, \mathcal{G}^M (s, \tmmathbf{\mathbf{y}}) \hat{\varphi} (-
s, -\mathbf{y})$ which exists in the sense of tempered distributions and so is
bounded by some Schwartz-space norm $| \hat{\varphi} |_N$. We get
\begin{equation}
  | (g, \varphi) | \leqslant \tmop{const} . | \hat{\varphi} |_N \leqslant
  \tmop{const} . | \varphi |_{N'},
\end{equation}
where in the second inequality we used that the Fourier transform is
continuous in the Schwartz space. This inequality, valid for any compactly
supported $\varphi$, means that $g$ extends to a tempered distribution on the
whole Schwartz space. The representation {\eqref{FL}} is thus established.

Next let us show that $g$ is supported at $E \geqslant 0$. For this we will
pass to the large $\varepsilon$ limit in {\eqref{gge}}. Supposing that $E_k <
0$ for some $k$, the factor $e^{E \varepsilon}$ in {\eqref{gge}} decreases
exponentially as the corresponding $\varepsilon_k \rightarrow + \infty$. On
the other hand $g_{\varepsilon} (E, \mathbf{p})$ is bounded in the same limit
by a power of $\varepsilon$, because it's the Fourier transform of
$\mathcal{G} (\varepsilon + i s, \tmmathbf{\mathbf{y}})$ which satisfies a
powerlaw bound.\footnote{This is the only place where we use the powerlaw
bound on $\mathcal{G} (\varepsilon + i s, \tmmathbf{\mathbf{y}})$ for large
rather than small $\varepsilon$.} This implies that $g (E, \mathbf{p}) = 0$
unless $E \geqslant 0$.\footnote{If unhappy with this intuitive reasoning, the
argument may be made more rigorous in its integrated version: show that $g$
vanishes on test functions supported in the complement of $E \geqslant 0$.}

Consider then the following lemma, proven analogously to, and easier than,
Lemma \ref{tubeLemma} below.

\begin{lemma}
  \label{gEp}Let $g (E, \mathbf{p})$ be a tempered distribution supported at
  $E \geqslant 0$, and consider the distribution $g (E, \mathbf{p}) e^{-
  \varepsilon E}$ ($\varepsilon > 0$). This distribution, being initially
  defined by this formula on compactly supported test functions, extends to a
  tempered distribution, and moreover $g (E, \mathbf{p}) e^{- \varepsilon E}
  \rightarrow g (E, \mathbf{p})$ as $\e\to 0$, in the sense of tempered distributions.
\end{lemma}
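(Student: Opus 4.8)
The plan is to dispose of the extension by a cutoff-plus-multiplier construction and of the $\varepsilon\to 0$ limit by reducing it to convergence of test functions in the topology of $\mathcal{S}$. First I would fix a cutoff of product form $\chi(E)=\prod_k\tilde\chi(E_k)$, where $\tilde\chi\in C^\infty(\mathbb{R})$ satisfies $\tilde\chi(s)=1$ for $s\geqslant-1/2$ and $\tilde\chi(s)=0$ for $s\leqslant-1$. Then $\chi\equiv 1$ on the open set $\{E_k>-1/2\text{ for all }k\}$, which is a neighborhood of the support region $\{E_k\geqslant 0\text{ for all }k\}$, and the function $\chi(E)\,e^{-\varepsilon E}=\prod_k\tilde\chi(E_k)e^{-\varepsilon E_k}$ is smooth with all partial derivatives bounded uniformly in $\varepsilon\in(0,1]$: each factor equals $e^{-\varepsilon s}$ on $\{s\geqslant-1/2\}$, where it and all its derivatives are bounded by $e^{\varepsilon/2}$, and it vanishes identically for $s\leqslant-1$. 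Hence $\chi(E)\,e^{-\varepsilon E}$ is a multiplier of $\mathcal{S}$, so $f\mapsto\chi(E)\,e^{-\varepsilon E}f$ is continuous from $\mathcal{S}$ to $\mathcal{S}$, and I would \emph{define}
\[ (g\,e^{-\varepsilon E},\,f):=(g,\ \chi(E)\,e^{-\varepsilon E}f),\qquad f\in\mathcal{S}. \]
Because $g$ is supported inside the open set where $\chi\equiv 1$, this is independent of the choice of $\chi$ and agrees with the naive formula $(g,\,e^{-\varepsilon E}f)$ whenever $f$ has compact support (the two differ by $(g,\,(\chi-1)e^{-\varepsilon E}f)$, and $(\chi-1)e^{-\varepsilon E}f$ — a compactly supported smooth function — vanishes on a neighborhood of $\mathrm{supp}\,g$). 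Since $g$ is continuous on $\mathcal{S}$, the right-hand side is a tempered distribution.

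For the convergence, note that likewise $(g,f)=(g,\chi f)$, so it is enough to show $(g,\,\chi(E)(e^{-\varepsilon E}-1)f)\to 0$, and by continuity of $g$ it suffices to show $\chi(E)(e^{-\varepsilon E}-1)f\to 0$ in $\mathcal{S}$ as $\varepsilon\to 0$. Here the key observation is that $\chi(E)(e^{-\varepsilon E}-1)$ together with all its partial derivatives is bounded uniformly in $\varepsilon\in(0,1]$ (by the multiplier bound above) and tends to $0$ uniformly on every compact set (on a compact set $E$ is bounded, so $e^{-\varepsilon E}\to 1$ in the $C^\infty$ topology, hence $\chi(e^{-\varepsilon E}-1)\to 0$ there with all derivatives). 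Estimating an arbitrary Schwartz seminorm of $\chi(E)(e^{-\varepsilon E}-1)f$ by the Leibniz rule, every term is a product of a derivative of $\chi(E)(e^{-\varepsilon E}-1)$ and a derivative of $f$, weighted by $(1+|E|^2+|\mathbf{p}|^2)^N$; splitting the supremum over $(E,\mathbf{p})$ into a ball $|(E,\mathbf{p})|\leqslant R$, where the $\varepsilon$-dependent factor goes to $0$ uniformly while the rest stays bounded, and its complement, where the Schwartz decay of $f$ beats $(1+|E|^2+|\mathbf{p}|^2)^N$ times a uniformly bounded factor, one finds that the seminorm is at most $O(R^{-2})$ in the limit $\varepsilon\to 0$; letting $R\to\infty$ finishes the argument.

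I do not expect a genuine obstacle: the content is the standard fact that a tempered distribution supported in a closed cone can be multiplied by an exponential that decays on that cone, and the only point needing care is the uniform-in-$\varepsilon$ multiplier bound for $\chi(E)e^{-\varepsilon E}$, which is arranged by the explicit product form of $\chi$ and underlies both the well-definedness and the Schwartz-space convergence. This is the same mechanism, in a simplified setting, as in the proof of Lemma \ref{tubeLemma}, where the factor $e^{-\varepsilon E}$ is replaced by the one adapted to the forward light cone.
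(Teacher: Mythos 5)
Your proof is correct and follows essentially the same route as the paper, which does not spell out this lemma explicitly but asserts it is ``proven analogously to, and easier than, Lemma~\ref{tubeLemma}''; that proof uses precisely the same cutoff-times-exponential multiplier device (a smooth $\omega\equiv 1$ near the support, with uniformly bounded derivatives) to define the extension by duality and then shows that multiplication by $\omega\,e^{-\varepsilon E}$ converges to multiplication by $\omega$ in the topology of $\mathcal{S}$. The only difference is presentational: you make the cutoff an explicit product $\prod_k\tilde\chi(E_k)$ adapted to the octant and spell out the compact-ball/complement split, whereas the paper leaves those routine Schwartz-space estimates implicit.
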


Let us now take the $\varepsilon \rightarrow 0$ limit on both sides of
{\eqref{FL}} (or, which is the same, {\eqref{FTmeans}}). The l.h.s.\ has a
limit by Part 1, while the r.h.s.\ has a limit by Lemma \ref{gEp}. We obtain
that $\mathcal{G}^M (s, \tmmathbf{\mathbf{y}})$ and $g (E,
\mathbf{p})$ are related by the Fourier transform:
\begin{equation}
  \mathcal{G}^M (s, \tmmathbf{\mathbf{y}}) = \int \frac{d E\, d\mathbf{p}}{(2
  \pi)^{d (n - 1)}}\, g (E, \mathbf{p}) e^{- i s E - i\mathbf{p}\mathbf{y}} .
  \label{gxM}
\end{equation}
We can now complete the proof of Part 2, namely to show the spectral
condition. Above we proved that $\mathcal{G}^M (s, \tmmathbf{\mathbf{y}})$ is
Lorentz invariant. Since $g (E, \mathbf{p})$ is its Fourier transform, it is
also Lorentz invariant, and in particular its support must be a
Lorentz-invariant set. We also know that $\tmop{supp} g \subset \{ E \geqslant
0 \}$. These two facts together imply that $\tmop{supp} g$ must be contained
in the product of the forward null cones, i.e.\ $g (E, \mathbf{p}) = 0$ unless
each $E_k \geqslant | \mathbf{p}_k |$, which is the spectral condition.

Part 3 follows by standard Wightman theory arguments. Namely, let us write $(i
y_k^0, \mathbf{y}_k) = \xi_k + i \eta_k $ where $\xi_k, \eta_k \in
\mathbb{R}^{1, d - 1}$ and $\eta_k = (\tmop{Re} y_k^0, \tmop{Im}
\mathbf{y}_k) \succ 0$. The extension to the forward tube is given by the
equation (with $p=(E,\mathbf{p})$)
\begin{equation}
  \int d p\, g (p) e^{i (p, \xi)} e^{(p, \eta)}, \label{FText}
\end{equation}
which reduces to {\eqref{FL}} for real $\mathbf{y}_k$. It is holomorphic by Part
(c) of the following lemma, while Parts (a,b) imply that this extension has
the same limit as {\eqref{FL}}.

\begin{lemma}
  \label{tubeLemma}Let $g (p)$ be a tempered supported at $p \succeq 0$
  (closed forward light cone). Consider the distribution $g_{\eta} (p) = g (p)
  e^{- (p, \eta)}$, initially defined by this formula on compactly supported
  test functions. Then
  
  (a) $g_{\eta}$ for $\eta \succ 0$ extends to a tempered distribution;
  
  (b) $g_{\eta} \rightarrow g$ as $\eta \rightarrow 0$ from inside the forward
  light cone $\eta \succ 0$, in the sense of tempered distributions;
  
  (c) The Fourier transform $\widehat{g_{\eta}} (\xi)$ of the distribution
  $g_{\eta} (p)$ is a holomorphic function of $\xi + i \eta$ {for $\eta\succ 0$}.
\end{lemma}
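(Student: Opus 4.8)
The plan is to transplant the standard Paley--Wiener--Schwartz argument for the positive orthant --- the one behind Lemma~\ref{gEp} --- to the forward light cone. The only geometric input needed is a quantitative version of Lemma~\ref{xij2h}/Lemma~\ref{Petr}: for $\eta$ in the open forward cone there is $c_\eta>0$ such that $(p,\eta)\geqslant c_\eta|p|$ for all $p\in\overline{V_+}$, with $c_\eta$ bounded below as $\eta$ ranges over any compact subset of the open cone (in particular along a ray $\eta=t\eta_0$, $t\in(0,1]$). Equivalently $e^{-(p,\eta)}$ decays like $e^{-c_\eta|p|}$ on $\overline{V_+}$ --- consistently with the factor $e^{-\varepsilon E}$ in \eqref{FL} --- and the same holds on a small conical neighbourhood of $\overline{V_+}$.

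\emph{Part (a).} First I would fix a smooth cutoff $\chi$, homogeneous of degree $0$ away from the origin, equal to $1$ on a conical neighbourhood of $\overline{V_+}$ and supported in a slightly larger cone on which $(p,\eta)\geqslant c'_\eta|p|$ still holds. Then $\widetilde h_\eta:=\chi(p)\,e^{-(p,\eta)}$ is smooth with all derivatives bounded (in fact exponentially small on its support), so multiplication by $\widetilde h_\eta$ is continuous $\mathcal{S}\to\mathcal{S}$. Define $(g_\eta,\varphi):=(g,\widetilde h_\eta\varphi)$ for $\varphi\in\mathcal{S}$; this is continuous in $\varphi$, hence $g_\eta\in\mathcal{S}'$, and it coincides with the a priori definition $(g,e^{-(p,\eta)}\varphi)$ on compactly supported $\varphi$ because $\chi\equiv1$ near $\mathrm{supp}(g)\subseteq\overline{V_+}$. (Alternatively, writing $g=\sum_{|\alpha|\leqslant N}\partial^\alpha F_\alpha$ with $F_\alpha$ continuous, polynomially bounded and supported in the enlarged cone by the structure theorem for tempered distributions supported in a cone \cite{Vladimirov}, Leibniz's rule exhibits $g_\eta$ as a finite sum of derivatives of continuous polynomially bounded functions.)

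\emph{Part (b).} With the same $\widetilde h_\eta$ we have $(g_\eta,\varphi)=(g,\widetilde h_\eta\varphi)$ and $(g,\varphi)=(g,\chi\varphi)$, so it suffices to show $\widetilde h_\eta\varphi\to\chi\varphi$ in $\mathcal{S}$ as $\eta\to0$ (along a ray; independence of the limit on the direction then follows as in Part~1 of Theorem~\ref{ThVlad}, or from Part~(c)). Here $\widetilde h_\eta\varphi-\chi\varphi=\chi\,(e^{-(p,\eta)}-1)\,\varphi$; on $\mathrm{supp}(\chi)$ one has $0\leqslant 1-e^{-(p,\eta)}\leqslant(p,\eta)\leqslant Ct|p|$ for $\eta=t\eta_0$, and $|\partial^\beta e^{-(p,\eta)}|\leqslant Ct^{|\beta|}(1+|p|)^{|\beta|}$. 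Expanding by Leibniz, the fixed Schwartz factor $\varphi$ supplies arbitrary polynomial decay, and splitting $p$ into the regions $|p|\leqslant1/t$ and $|p|>1/t$ one finds that every seminorm of $\chi\,(e^{-(p,\eta)}-1)\,\varphi$ is $O(t)$. The point to be careful about --- and what I expect to be the only real obstacle --- is that multiplication by $\chi(e^{-(p,\eta)}-1)$ does \emph{not} converge to $0$ in the operator topology of $\mathcal{S}$; one must use the rapid decay of the \emph{fixed} $\varphi$ to kill the large-$|p|$ region where $e^{-(p,\eta)}-1$ is not uniformly small, and one must keep $c_\eta$ under control as $\eta\to0$ (automatic for the radial approach).

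\emph{Part (c).} For $\zeta=\xi+i\eta$ with $\eta\succ0$ set $F(\zeta):=(g,\chi(p)\,e^{i(p,\zeta)})$; the pairing makes sense since $|\chi(p)e^{i(p,\zeta)}|\leqslant\chi(p)e^{-c'_\eta|p|}$, with matching bounds on derivatives, so $p\mapsto\chi(p)e^{i(p,\zeta)}$ is Schwartz. Differentiating under the pairing --- legitimate because $g$ has finite order and the $\zeta$-difference quotients of $e^{i(p,\zeta)}$ converge in $\mathcal{S}$ --- and using that $e^{i(p,\zeta)}$ is entire in $\zeta$ gives $\partial_{\bar\zeta}F=0$, so $F$ is holomorphic for $\eta\succ0$. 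Finally, representing $g$ again as derivatives of continuous polynomially bounded functions turns $(\widehat{g_\eta},\psi)=(g,\widetilde h_\eta\widehat\psi)$ into absolutely convergent ordinary integrals; Fubini then identifies $\widehat{g_\eta}(\xi)=F(\xi+i\eta)$, i.e.\ $\widehat{g_\eta}$ is the restriction of the holomorphic $F$. Together with Part~(b) this also reproves, in Fourier-transformed form, that $F(\xi+i\eta)\to\widehat g(\xi)$ as $\eta\to0$, $\eta\succ0$ --- which is precisely how this lemma feeds into the representation \eqref{FText} of $G_n$ on the forward tube.
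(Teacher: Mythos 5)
Your proof is correct and follows essentially the same approach as the paper: both define $g_\eta$ by pairing $g$ against the test function multiplied by a fixed smooth cutoff (equal to $1$ on a neighbourhood of $\overline{V_+}$) times the exponential, show that this multiplier converges in the Schwartz topology as $\eta\to0$, and obtain holomorphy in (c) by differentiating $e^{i(p,\zeta)}$ under the pairing and verifying the Cauchy--Riemann equations. The only cosmetic difference is the choice of cutoff --- the paper's $\omega$ is supported within a bounded Euclidean distance of $\overline{V_+}$ (and hence works uniformly for all $\eta\succ0$), while your $\chi$ is supported in a slightly wider cone chosen so that the exponential still decays there.
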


\begin{proof}
  Let $\omega (p)$ be a $C^{\infty}$ function which is identically 1 on the
  forward light cone $\overline{V_+}$, and zero as soon as $\tmop{dist} (p,
  \overline{V_+}) \geqslant 1$ where $\tmop{dist}$ is the Euclidean distance.
  We can choose this function so that all its derivatives are uniformly
  bounded by a constant depending only on the derivative order: $|
  \omega^{(\alpha)} (p) | \leqslant C_{\alpha}$ for any $p$.
  
  Consider the family of $C^{\infty}$ functions parametrized by $\xi, \eta \in
  \mathbb{R}^{1, d - 1}$:
  \begin{equation}
    \Omega_{\xi, \eta} (p) = e^{i (p, \xi)} e^{(p, \eta)} \omega (p) .
  \end{equation}
  It is not hard to check that $\Omega_{\xi, \eta}$ is a Schwartz function for
  $\eta \succ 0$ and any $\xi$.
  
  Let us define $g_{\eta}$ paired with a Schwartz function $\varphi (p)$ via
  \begin{equation}
    (g_{\eta}, \varphi) = (g, \Omega_{0, \eta} \varphi) .
  \end{equation}
  We know that $\Omega_{0, \eta} \varphi$ is a Schwartz function for $\eta
  \succ 0$, so this definition makes sense. Furthermore it is not hard to
  check that $\Omega_{0, \eta} \varphi \rightarrow \omega \varphi$ in the
  Schwartz space topology as $\eta \rightarrow 0, \eta \succ 0$. This proves
  Parts (a),(b).
  
  Next, let us define
  \begin{equation}
    F (\xi, \eta) = (g, \Omega_{\xi, \eta}), \qquad \xi, \eta \in
    \mathbb{R}^{1, d - 1} .
  \end{equation}
  We know that $\Omega_{\xi, \eta}$ is a Schwartz function for $\eta \succ 0$,
  so $F (\xi, \eta)$ is a function. Moreover it is not hard to show that the
  family $\Omega_{\xi, \eta}$ is continuous and continuously differentiable in
  the Schwartz space topology. It also obviously satisfies the Cauchy-Riemann
  equations: $(\partial_{\xi} + i \partial_{\eta}) \Omega_{\xi, \eta} = 0$.
  This implies that $F (\xi, \eta)$ is a holomorphic function in $\xi + i \eta$.
  It remains to show that $F (\xi, \eta) = \widehat{g_{\eta}} (\xi)$. It's
  enough to check this integrated against a compactly supported test function
  $\chi (\xi)$:
  \begin{equation}
    \int F (\xi, \eta) \chi (\xi)\, d \xi = \int (g, \Omega_{\xi, \eta}) \chi
    (\xi)\, d \xi = \left( g, \int d \xi\, \chi (\xi) \Omega_{\xi, \eta} \right) =
    (g, \Omega_{0, \eta} \hat{\chi} ) = (g_{\eta}, \hat{\chi} ) =
    (\widehat{g_{\eta}}, \chi) .
  \end{equation}
  The proof is complete.
\end{proof}

\section{Intuition about Lemma \ref{lemma:fcheckdense}}\label{IntLem1}

The proof of Lemma \ref{lemma:fcheckdense} in Sec.\ \ref{MinkFromEucl} was
by contradiction. To help intuition, we will give here a constructive argument
of a special case of Lemma \ref{lemma:fcheckdense}, namely $d = 1$ and $n =
2$. I.e.\ we will show how any Schwartz function $f \in \mathcal{S}
(\mathbb{R})$ can be approximated by Schwartz functions $g$ which for $E
\geqslant 0$ agree with Laplace transform:
\begin{equation}
  \mathcal{L} (\varphi) (E) = \int_0^{\infty} d t\, \varphi (t) e^{- E t},
\end{equation}
$\varphi \in C_0^{\infty} (\mathbb{R}_+)$ (compactly supported with support
strictly inside $(0, + \infty)$), while for $E < 0$, $g (E)$ is extended
arbitrarily. Recall that the Schwartz space topology is given by the family of
norms
\begin{equation}
  | f |_n = \sup_{E \in \mathbb{R}, m \leqslant n} (1 + E^2)^{n / 2} | f^{(m)}
  (E) |, \label{fnnorm}
\end{equation}
and we need to find a sequence $\{ g_r \}_{r = 1}^{\infty}$ such that $| f -
g_r |_n \rightarrow 0$ as $r \rightarrow \infty$ for any $n$ (we stress that
one sequence $g_r$ should work for any $n$).

We will also consider the Schwartz space $\mathcal{S}
(\overline{\mathbb{R}_+})$, consisting of $C^{\infty}$ functions on $E
\geqslant 0$ (not necessarily vanishing at $E = 0$) with topology given by the
family of norms $| f |_{n, +}$ defined by the same equations as
{\eqref{fnnorm}} but with sup taken over $E \geqslant 0$. It will be
sufficient to arrange that for any $n$
\begin{equation}
  | f -\mathcal{L} (\varphi_r) |_{n, +} \rightarrow 0 \qquad (r \rightarrow
  \infty) \label{suffSRp} .
\end{equation}
This is because there exists an extension operator which takes a function $h
\in \mathcal{S} (\overline{\mathbb{R}_+})$ and provides a function
$\mathcal{E} (h) \in \mathcal{S} (\mathbb{R})$ such that $\mathcal{E}
(h) = h$ for $E \geqslant 0$ (which is why it called an extension operator),
and in addition
\begin{equation}
  | \mathcal{E} (h) |_n \leqslant C_n | h |_{n, +} \label{Seeley}
\end{equation}
for all $n$ with some finite constants $C_n$ independent of $h$. E.g.,
Seeley's linear extension operator
{\cite{seeley_1964,wiki:Whitney_extension_theorem}} has this property. Then,
given {\eqref{suffSRp}}, we put
\begin{equation}
  g_r = f +\mathcal{E} (\mathcal{L} (\varphi_r) - f),
\end{equation}
which, on the one hand satisfies $g_r (E) =\mathcal{L} (\varphi_r) (E)$ for $E
\geqslant 0$ and on the other hand by {\eqref{suffSRp}} and {\eqref{Seeley}}
has $| g_r - f |_n \leqslant C_n | \mathcal{L} (\varphi_r) - f |_{n, +}
\rightarrow 0$ which is what we need.

So let us focus on satisfying {\eqref{suffSRp}}. By a map $x = \frac{1}{1 +
E}$ the half-line $[0, + \infty)$ is mapped to the interval $(0, 1]$ and the
function $f (E)$ is mapped to a function $F (x) = f \left( \frac{1}{x} - 1
\right)$ which is a $C^{\infty}$ function vanishing at $x = 0$ faster than any
power of $x$. For any $\varepsilon$ and any $N$ we can find, by the
Weierstrass theorem, a polynomial $Q (x)$ such that
\begin{equation}
  | F^{(N)} (x) - Q (x) | \leqslant \varepsilon \qquad (0 \leqslant x
  \leqslant 1) .
\end{equation}
Let $P (x)$ be the polynomial such that $P^{(N)} (x) = Q (x)$ and $P (0) =
\cdots = P^{(N - 1)} (0) = 0$. Then $P (x) = O (x^N)$ and it is not hard to
see that
\begin{equation}
  | F^{(n)} (x) - P^{(n)} (x) | \leqslant \varepsilon x^{N - n} \qquad (0
  \leqslant x \leqslant 1) . \label{FPbound}
\end{equation}
We also put $p (E) = P \left( \frac{1}{1 + E} \right)$. From \ $f (E) = F
\left( \frac{1}{1 + E} \right)$ we know that
\begin{equation}
  | f^{(n)} (E) | \leqslant B_n \max_{m \leqslant n} \left| F^{(m)} \left(
  \frac{1}{1 + E} \right) \right| .
\end{equation}
So combining this with Eq.\ {\eqref{FPbound}}, and going up to $n = N / 2$ we
may conclude that
\begin{equation}
  | f - p |_{N / 2, +} \leqslant B'_N \varepsilon . \label{f-p}
\end{equation}
Now, by construction $p$ has the form
\begin{equation}
  p (E) = \underset{N \leqslant n \leqslant M}{\sum} a_n \frac{1}{(1 + E)^n} .
  \label{def:g}
\end{equation}
Since $\frac{1}{(1 + E)^n} = \frac{1}{(n - 1) !} \int_0^{\infty} t^{n - 1}
e^{- (1 + E) t}\, d t$, we see that $p (E)$ is the Laplace transform of a
function $\psi (t)$:
\begin{equation}
  p =\mathcal{L} (\psi), \quad \psi (t) = \underset{N \leqslant n \leqslant
  M}{\sum} \frac{a_n}{(n - 1) !} t^{n - 1} e^{- t} .
\end{equation}

Now we can finish the argument as follows. For $r = 1, 2, 3, \ldots$ we apply
the above argument with $N = 2 r$ and $\varepsilon = 1 / (B_N' r)$ to find
$\psi_r$ such that, by {\eqref{f-p}},
\begin{equation}
  | f -\mathcal{L} (\psi_r) |_{r, +} \leqslant 1 / r . \label{fpsir}
\end{equation}
The function $\psi_r$ is not in $C_0^{\infty} (0, \infty)$ although
$\psi_r^{(k)} = 0$ for $k = 0 \ldots 2 r - 2$, and it vanishes at $\infty$
exponentially. We can therefore approximate $\psi_r$ by a $C_0^{\infty} (0,
\infty)$ function $\varphi_r$ so that $| \psi_r - \varphi_r |_{2 r - 2, +}$ is
arbitrarily small, where the order $2 r - 2$ of the norm is related to the
order of the vanishing of $\psi_r$ at $t = 0$. Furthermore we have the
following lemma:

\begin{lemma}
  Let $\chi$ be a $C^{\infty}$ function on $[0, + \infty)$ which exponentially
  vanishes at infinity and
  \begin{equation}
    \chi^{(k)} (0) = 0, \quad k = 0 \ldots n - 1 \label{chik} .
  \end{equation}
  Then, with some constant $D_n$ independent of $\chi$,
  \begin{equation}
    | \mathcal{L} (\chi) |_{n, +} \leqslant D_n | \chi |_{n + 2, +} .
    \label{further}
  \end{equation}
\end{lemma}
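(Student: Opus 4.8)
The plan is a direct estimate of $|\mathcal{L}(\chi)|_{n,+}$, handling the two ingredients of the Schwartz norm --- the number of $E$-derivatives ($m\leqslant n$) and the polynomial weight $(1+E^2)^{n/2}$ --- one at a time. First I would use the elementary fact that differentiating the Laplace transform in $E$ pulls down a power of $t$: $\frac{d^m}{dE^m}\mathcal{L}(\chi)(E)=(-1)^m\mathcal{L}(t^m\chi)(E)$, valid because $\chi$ decays exponentially so all these integrals converge and differentiation under the integral sign is justified. This reduces the claim to bounding $(1+E^2)^{n/2}\,|\mathcal{L}(t^m\chi)(E)|$ uniformly in $E\geqslant 0$ and in $m\leqslant n$. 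Since $(1+E^2)^{n/2}\leqslant 2^{n/2}(1+E^n)$, it is enough to bound $|\mathcal{L}(t^m\chi)(E)|$ and $E^n|\mathcal{L}(t^m\chi)(E)|$ separately by a constant times $|\chi|_{n+2,+}$.

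The bound on $|\mathcal{L}(t^m\chi)(E)|$ is immediate: for $E\geqslant 0$ we have $|\mathcal{L}(t^m\chi)(E)|\leqslant\int_0^\infty |t^m\chi(t)|\,dt$, and since $m\leqslant n$ and $|\chi(t)|\leqslant|\chi|_{n+2,+}(1+t^2)^{-(n+2)/2}$, the integrand is bounded by $|\chi|_{n+2,+}\,t^m(1+t^2)^{-(n+2)/2}\leqslant|\chi|_{n+2,+}(1+t^2)^{-1}$, which is integrable with integral $\pi/2$. For the factor $E^n$, the key step is integration by parts: because $\chi$ vanishes to order $n$ at the origin ($\chi^{(k)}(0)=0$ for $k=0,\dots,n-1$), the function $t^m\chi(t)$ vanishes to order $n+m\geqslant n$ at $0$, and it decays exponentially at infinity together with all derivatives; hence we may integrate by parts $n$ times with no boundary contributions, obtaining
\begin{equation}
E^n\mathcal{L}(t^m\chi)(E)=\int_0^\infty\frac{d^n}{dt^n}\bigl(t^m\chi(t)\bigr)\,e^{-Et}\,dt.
\end{equation}
By the Leibniz rule, $\frac{d^n}{dt^n}(t^m\chi)$ is a finite sum of terms $\binom{n}{j}\frac{m!}{(m-j)!}\,t^{m-j}\chi^{(n-j)}(t)$ with $0\leqslant j\leqslant\min(n,m)$, each of which is bounded, using $m-j\leqslant n$ and $n-j\leqslant n\leqslant n+2$, by $|\chi|_{n+2,+}\,t^{m-j}(1+t^2)^{-(n+2)/2}\leqslant|\chi|_{n+2,+}(1+t^2)^{-1}$. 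Integrating $e^{-Et}\leqslant 1$ against this gives $E^n|\mathcal{L}(t^m\chi)(E)|\leqslant C_n|\chi|_{n+2,+}$. Combining the two bounds and taking the supremum over $m\leqslant n$ and $E\geqslant 0$ yields $|\mathcal{L}(\chi)|_{n,+}\leqslant D_n|\chi|_{n+2,+}$ with $D_n$ depending only on $n$.

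There is no serious obstacle here; the only point that requires care is the bookkeeping of orders. The vanishing hypothesis $\chi^{(k)}(0)=0$ for $k\leqslant n-1$ is exactly what makes the $n$-fold integration by parts clean (this is why the lemma assumes it), and the appearance of $n+2$ rather than $n$ on the right-hand side is forced by the need for the integrals $\int_0^\infty t^{m-j}(1+t^2)^{-\ell/2}\,dt$ to converge in the worst case $m-j=n$, which requires the weight exponent $\ell$ to exceed $n+1$. I would simply remark on these two points and otherwise present the estimate as above.
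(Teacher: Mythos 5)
Your proof is correct and takes essentially the same approach as the paper: differentiation in $E$ pulls down factors of $t$, the weight $E^n$ is transferred to an $n$-th $t$-derivative via $n$-fold integration by parts (justified by the vanishing conditions and exponential decay), and the resulting integrals are then bounded pointwise in $t$ by powers of $|\chi|_{n+2,+}$, the exponent $n+2$ being exactly what is needed for integrability of $t^m(1+t^2)^{-(n+2)/2}$ with $m\leqslant n$. You have simply written out the Leibniz expansion and the final integral estimate more explicitly than the paper, which packages these as the elementary bound $|\mathcal{L}(f)(E)|\lesssim\sup_{t\geqslant0}|(1+t^2)f(t)|$.
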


\begin{proof}
  We use the following elementary properties of Laplace transform:
  \begin{eqnarray}
    & \left( \frac{d}{d E} \right)^m \mathcal{L} (\chi) (E) =\mathcal{L}
    [\chi (t) (- t)^m] (E), &  \nonumber\\
    & E^n \mathcal{L} (\chi) (E) =\mathcal{L} [\chi^{(n)} (t)] (E), & 
  \end{eqnarray}
  where the second equation is derived by integration by parts and is valid
  under {\eqref{chik}} and exponential decay. So we have (where $\lesssim$
  denotes $\leqslant$ with some $n$-dependent but function-independent
  constant)
  \begin{equation}
    | \mathcal{L} (\chi) |_{n, +} \lesssim \sum_{m = 0}^n \sup_{E \geqslant 0}
    (1 + E^n) | \mathcal{L} (\chi)^{(m)} (E) | \leqslant \sum_{m = 0}^n
    \sup_{E \geqslant 0} | \mathcal{L} [\chi (t) t^m] (E) | + | \mathcal{L}
    [(\chi (t) t^m)^{(n)}] (E) |,
  \end{equation}
  Using further the elementary bound $| \mathcal{L} (f) (E) | \lesssim \sup_{t
  \geqslant 0} | (1 + t^2) f (t) |$ we deduce {\eqref{further}}.
\end{proof}

We use this lemma with $n = r$ and $\chi = \psi_r - \varphi_r$, which
satisfies $\chi^{(k)} = 0$ up to $k = 2 r - 2 \geqslant r - 1$, so
{\eqref{chik}} is satisfied. By {\eqref{further}}, we have
\begin{equation}
  | \mathcal{L} (\psi_r) -\mathcal{L} (\varphi_r) |_{r, +} \leqslant D_r |
  \psi_r - \varphi_r |_{r + 2, +} \leqslant D_r | \psi_r - \varphi_r |_{2 r -
  2, +}
\end{equation}
as long as $r \geqslant 4$ so that $2 r - 2 \geqslant r + 2$. As mentioned
above $| \psi_r - \varphi_r |_{2 r - 2, +}$ can be made arbitrarily small.
Combining with {\eqref{fpsir}}, we can arrange so that $| f -\mathcal{L}
(\varphi_r) |_{r, +} \leqslant 2 / r \rightarrow 0$ as $r \rightarrow \infty$,
which in particular implies {\eqref{suffSRp}}.

\small
\bibliography{lorentz}
\bibliographystyle{utphys}

\end{document}